\documentclass[10pt]{article}
\usepackage{amsmath,amsfonts,amssymb,amsthm,hyperref,enumerate,multicol,bm}
\usepackage{indentfirst,lettrine}
\usepackage{caption}
\usepackage{relsize}
\usepackage{eqnarray}
\usepackage{cancel}
\usepackage{array}
\usepackage{longtable}
\usepackage{cite}
\usepackage{tablefootnote}

\setlength{\parskip}{0pt}
\usepackage{titlesec}
\titlespacing*{\section}{0pt}{*1.5}{*1.5}
\titlespacing*{\subsection}{0pt}{*1.2}{*1.2}

\usepackage{enumitem}
\setlist{itemsep=4pt, topsep=2pt, parsep=0pt, partopsep=0pt}
\makeatletter

\usepackage{etoolbox}
\makeatletter
\patchcmd{\@thm}{\thm@preskip}{0.1\baselineskip}{}{}
\patchcmd{\@thm}{\thm@postskip}{0.1\baselineskip}{}{}
\makeatother

\AtBeginEnvironment{tabular}{\small}

% \usepackage{setspace}
% \setstretch{0.95} % slightly tighter line spacing 
% \setlength{\parskip}{0pt}      % no extra space between paragraphs
% \setlength{\parindent}{1em}
% \usepackage{titlesec}
% \titlespacing*{\section}{0pt}{*0.8}{*0.5}   % {left}{before}{after}
% \titlespacing*{\subsection}{0pt}{*0.7}{*0.4}
% \usepackage{enumitem}
% \setlist{nosep}            % removes extra space in all lists
% % Or fine-tune:
% % \setlist{itemsep=0pt, topsep=2pt, parsep=0pt, partopsep=0pt}
% \makeatletter
% \g@addto@macro\normalsize{%
%  \setlength\abovedisplayskip{4pt}
% \setlength\belowdisplayskip{4pt}
% \setlength\abovedisplayshortskip{0pt}
% \setlength\belowdisplayshortskip{2pt}
% }
% \makeatother

% \setlength{\textfloatsep}{6pt plus 2pt minus 2pt}  % space between floats (tables/figures) and text
% \setlength{\floatsep}{6pt plus 2pt minus 2pt}      % space between two floats
% \setlength{\intextsep}{6pt plus 2pt minus 2pt}     % space above/below in-text floats
% \usepackage[skip=4pt]{caption}  % reduce space before/after all captions
% \everymath{\small}
% \everydisplay{\small}
% \newcommand{\sfloor}[1]{{\scriptstyle \lfloor #1 \rfloor}}
% \newcommand{\sceil}[1]{{\scriptstyle \lceil #1 \rceil}}

\DeclareFontFamily{U}{mathc}{}
\DeclareFontShape{U}{mathc}{m}{it}
{<->s*[1.03] mathc10}{}

\addtolength{\hoffset}{-0.55cm}\setlength
{\hoffset}{-0.5in}
\addtolength{\voffset}{-0.45cm}\setlength{\voffset}{-0.5in}
\addtolength{\textwidth}{1.5cm}\addtolength{\textheight}{1.7cm}
\setlength{\textwidth}{6.8in}\setlength{\textheight}{9in}
\setlength{\topmargin}{0in}\setlength{\evensidemargin}{0.45in}
\setlength{\oddsidemargin}{0.45in}\setlength{\columnsep}{6mm}
\usepackage{comment}
\setcounter{MaxMatrixCols}{50}
\DeclareMathOperator{\supp}{supp}
\usepackage{mathtools}
\usepackage{graphicx}
\usepackage{ulem}
\usepackage{subfigure}
\usepackage{xcolor}
\usepackage{color, soul}
\usepackage{mathrsfs}
\usepackage{float}
\linespread{1.05}

\DeclareMathAlphabet{\mathpzc}{OT1}{pzc}{m}{it}
\DeclarePairedDelimiter\ceil{\lceil}{\rceil}

\allowdisplaybreaks
\numberwithin{equation}{section}
\setcounter{MaxMatrixCols}{50}
\numberwithin{equation}{section}

\usepackage{upgreek}
\newtheorem{thm}{Theorem}[section]
\newcommand{\etal}{\textit{et al.}}

\newtheorem{cor}{Corollary}[section]
\newtheorem{lemma}{Lemma}[section]
\newtheorem{example}{Example}[section]
\newtheorem{remark}{Remark}[section]
\newcommand{\ie}{\textit{i.e.,}}

\begin{document}

\title{Linear codes over a mixed-alphabet ring and their Gray images with applications to projective and locally repairable codes}

\author{\normalsize Leijo Jose{\footnote{Email address:~\urlstyle{same}\href{mailto:leijoj@iiitd.ac.in}{leijoj@iiitd.ac.in}}} ,~~Lavanya G.{\footnote{Email address:~\urlstyle{same}\href{mailto:lavanyag@iiitd.ac.in}{lavanyag@iiitd.ac.in}}}~ and ~Anuradha Sharma{\footnote{Corresponding Author, Email address:~\urlstyle{same}\href{mailto:anuradha@iiitd.ac.in}{anuradha@iiitd.ac.in}} } \\
 {\normalsize Department of  Mathematics, IIIT-Delhi}\\{\normalsize New Delhi 110020, India}}
\date{}
\maketitle
\date{}
	\maketitle
\begin{abstract}\label{Abstract}
Let $m \geq 2$ be an integer, and let $\mathbb{F}_q$ be the finite field of prime power order $q.$  Let $\mathcal{R}=\frac{\mathbb{F}_q[u]}{\langle u^2 \rangle}\times \mathbb{F}_q$ be the mixed-alphabet ring, where $\frac{\mathbb{F}_q[u]}{\langle u^2 \rangle}$ is the quasi-Galois ring with maximal ideal $\langle u\rangle$ of nilpotency index $2$ and residue field $\mathbb{F}_q.$  
In this paper, we construct four infinite families of linear codes over the ring $\frac{\mathbb{F}_q[u]}{\langle u^2 \rangle}$ whose defining sets  are certain non-empty subsets of $\mathcal{R}^m$  associated with three simplicial complexes of $\mathbb{F}_q^m,$ each possessing a single maximal element. We explicitly determine the parameters and Lee weight distributions of these codes. We also study their Gray images and identify several infinite families of few-weight codes over $\mathbb{F}_q,$  as well as an infinite family of minimal, near-Griesmer and distance-optimal codes over $\mathbb{F}_q.$ We also observe that their Gray images are self-orthogonal codes for $q=2$ or $3.$   Furthermore, for any subset $\mathcal{D} \subseteq \mathcal{R}^m,$  we determine a spanning matrix of a linear code over $\frac{\mathbb{F}_q[u]}{\langle u^2 \rangle}$ with  defining set $\mathcal{D},$ as well as that of its Gray image. Leveraging this result, we provide two constructions of infinite families of projective few-weight codes over $\mathbb{F}_q$ with new parameters, and observe that these codes are self-orthogonal for $q=2$ or $3.$ 
Additionally, we study the duals of these projective codes and explicitly determine  their parameters. As a result, we obtain two infinite families of binary distance-optimal projective codes. Apart from this, we construct an infinite family of quaternary projective $3$-weight codes whose non-zero Hamming weights sum to $\frac{9}{4}$ times the code length, which give rise to strongly walk-regular graphs.
As an application of our newly constructed minimal codes over $\mathbb{F}_q,$ we examine the  minimal access structures of Massey’s secret sharing schemes based on their duals and determine the number of dictatorial participants in these schemes. Finally, we investigate the locality properties of our newly constructed projective codes and show that these codes have locality either $2$ or $3.$ As a consequence, we obtain four infinite families of $q$-ary locally repairable codes (LRCs) with locality $2,$  and two infinite families of binary LRCs with locality $3.$
\end{abstract}
{\bf Keywords:}   Projective codes with few weights; Distance-optimal codes;  Lee weight distributions; Minimal codes. 
\\{\bf 2020 Mathematics Subject
 Classification:} 11T71, 94B60.

\section{Introduction}

A fundamental objective in coding theory is the construction of linear codes that are optimal with respect to established bounds, along with the precise determination of their Hamming weight distributions. Notably, \textit{distance-optimal codes} maximize error detection and correction capabilities. The \textit{Hamming weight distribution} (or equivalently, the \textit{Hamming weight enumerator}) of a code provides crucial information about structure of the code and plays a key role in assessing its error performance. A prevalent approach to achieve this objective involves studying codes over rings equipped with suitable metrics and examining their Gray images.

Hammons \etal~\cite{Hammons1994} were the first to study linear codes over the ring $\mathbb{Z}_4$ of integers modulo $4,$ with respect to the Lee weight function on $\mathbb{Z}_4,$ and showed that many important binary non-linear codes can be viewed as the Gray images of such codes. This seminal contribution catalyzed extensive research on codes over finite commutative chain rings (see \cite{Greferath1999,Jose2024} and references therein). Subsequently, Rif\'a  and Pujol \cite{Rifa1997} generalized the study of linear codes over  rings to the mixed-alphabet ring $\mathbb{Z}_4 \times \mathbb{Z}_2,$ interpreting these as abelian translation-invariant propelinear codes. Further, Aydogdu \etal~\cite{Siap2016} explored cyclic and constacyclic codes over the mixed-alphabet ring $\frac{\mathbb{F}_2[u]}{\langle u^2 \rangle} \times \mathbb{F}_2,$ deriving binary codes with good parameters as Gray images of cyclic codes over this ring. In a subsequent study, Aydogdu \etal~\cite{Siap2017} examined linear and cyclic codes over the mixed-alphabet ring $\frac{\mathbb{F}_2[u]}{\langle u^3 \rangle} \times \mathbb{F}_2,$ yielding numerous optimal binary linear codes as Gray images of cyclic codes over the specified ring. Since these developments, codes over mixed-alphabets of finite commutative chain rings have attracted considerable attention \cite{Bajalan2023,Borges2018,Jose2025}.

\textit{Projective codes with few weights} and \textit{self-orthogonal codes} constitute two  significant classes of linear codes. Projective codes with few weights have applications in the construction of strongly regular graphs and association schemes \cite{Cheng2022,Calderbank1984} and exhibit strong connections with combinatorial designs \cite{Olmez2018}. Meanwhile, self-orthogonal codes play a fundamental role in the construction of pure quantum stabilizer codes and are closely related to the theory of unimodular lattices and modular forms \cite{Ling2010,Rains1998}.
Another important family of codes is that of \textit{locally repairable codes} (LRCs). These are erasure-correcting codes used in distributed storage systems, designed to recover the information stored on a failed node by accessing only a small number of other nodes, unlike classical MDS codes \cite{Gopalan2012, Luo2022}.

In a parallel direction, \textit{minimal codes} have garnered substantial interest due to their applications in Cryptography  \cite{Yuan2005,Ding2003}. More specifically, in \textit{Massey’s secret sharing schemes} based on the duals of minimal codes, the minimal access structures can be fully characterized in terms of the parameters and supports of codewords in the original minimal codes \cite{Massey1993}. Ashikhmin and Barg \cite{Ashikhmin1998} established a sufficient condition under which a linear code is minimal. Subsequently, Chang and Hyun \cite{Chang2018} constructed the first infinite family of binary minimal codes that violate this sufficient condition by choosing the defining set as $\Delta \setminus \{\mathbf{0}\},$ where $\Delta$ is a simplicial complex of $\mathbb{F}_2^m$ with two maximal elements. This breakthrough has inspired a proliferation of research focused on constructing linear codes with novel parameters via simplicial complexes (see \cite{Cheng2022,Hu2022} and references therein).

Building upon these foundations, Wu \etal~\cite{Wu2020} investigated linear codes over the ring $\frac{\mathbb{F}_2[u]}{\langle u^2 \rangle}$ with defining sets of the forms $\Delta_{\mathpzc{P}} + u\Delta_{\mathpzc{Q}}^c$ and $\Delta_{\mathpzc{P}}^c + u\Delta_{\mathpzc{Q}}^c,$ where $\Delta_{\mathpzc{P}}$ and $\Delta_{\mathpzc{Q}}$ denote the simplicial complexes of $\mathbb{F}_2^m$ with supports $\mathpzc{P}$ and $\mathpzc{Q},$ respectively. They determined the Lee weight distributions of these codes. They also studied their Gray images and obtained an infinite family of binary codes attaining the Griesmer bound as well as an infinite family of binary distance-optimal codes. Subsequently, Li and Shi \cite{Li2021} examined a linear code over the ring $\frac{\mathbb{F}_2[u]}{\langle u^3 \rangle}$ with defining set $\Delta_{\mathpzc{P}} + u\Delta_{\mathpzc{Q}}^c + u^2\Delta_{\mathpzc{R}}$ and determined its Lee weight distribution, where $\Delta_{\mathpzc{P}},$ $\Delta_{\mathpzc{Q}}$ and $\Delta_{\mathpzc{R}}$ are simplicial complexes of $\mathbb{F}_2^m$ with supports $\mathpzc{P},$ $\mathpzc{Q}$ and $\mathpzc{R},$ respectively. They also obtained an infinite family of minimal and distance-optimal codes via the Gray images of these codes.

Several additional studies have focused on constructing linear codes over rings — not exclusively finite commutative chain rings — with defining sets specified via simplicial complexes and on determining their Lee weight distributions (see \cite{Wu2024S,Shi2022,Wu2024} and references therein). In a recent work, Mondal and Lee \cite{Mondal2024} constructed linear codes over the ring $\frac{\mathbb{F}_2[u]}{\langle u^2 \rangle}$ whose defining sets are subsets of $\big( \frac{\mathbb{F}_2[u]}{\langle u^2 \rangle} \times \mathbb{F}_2 \big)^m$ of the forms $(\Delta_{\mathpzc{P}} + u\Delta_{\mathpzc{Q}}) \times \Delta_{\mathpzc{R}},$ $(\Delta_{\mathpzc{P}} + u\Delta_{\mathpzc{Q}}^c) \times \Delta_{\mathpzc{R}}$ and  $(\Delta_{\mathpzc{P}}^c + u\Delta_{\mathpzc{Q}}) \times \Delta_{\mathpzc{R}} ,$ where $\Delta_{\mathpzc{P}},$ $\Delta_{\mathpzc{Q}}$ and $\Delta_{\mathpzc{R}}$ are simplicial complexes of $\mathbb{F}_2^m$ with supports $\mathpzc{P},$ $\mathpzc{Q}$ and $\mathpzc{R},$ respectively. They determined the parameters and Lee weight distributions of these codes. They also studied their Gray images and obtained two infinite families of binary distance-optimal codes. Moreover, they derived a sufficient condition for these codes to be minimal and observed that these codes are always  self-orthogonal. Additionally, by employing the Pless  power moment identities \cite[Sec. 7.2]{HuffF}, they obtained an infinite family of binary projective $3$-weight codes with non-zero Hamming weights summing to $\frac{3}{2}$ times the code length,  which give rise to strongly $\ell$-walk-regular graphs with new parameters for all odd integers $\ell \geq 3.$

Motivated by the aforementioned developments, in this paper, we will consider a mixed-alphabet ring $\mathcal{R} = \frac{\mathbb{F}_q[u]}{\langle u^2 \rangle}\times \mathbb{F}_q,$  where $\mathbb{F}_q$ is the finite field of prime power order $q$  and $\frac{\mathbb{F}_q[u]}{\langle u^2 \rangle}$ is the quasi-Galois ring with maximal ideal $\langle u \rangle$ of nilpotency index $2$ and residue field $\mathbb{F}_q.$ Here, we will construct four infinite families of linear codes over the ring $\frac{\mathbb{F}_q[u]}{\langle u^2 \rangle}$ whose defining sets are subsets of $ \mathcal{R}^m$ of the forms: \begin{eqnarray}\label{S12}\mathcal{S}_1= (\Delta_\mathpzc{A}+u\Delta_\mathpzc{B})\times (\mathbb{F}_{q}^m \setminus \Delta_\mathpzc{C}), & &\mathcal{S}_2= \big((\mathbb{F}_q^m \setminus \Delta_\mathpzc{A})+u\Delta_\mathpzc{B}\big)\times\Delta_\mathpzc{C},\\ \label{S34}\mathcal{S}_3= \big(\Delta_\mathpzc{A}+u(\mathbb{F}_q^m \setminus \Delta_\mathpzc{B})\big)\times\Delta_\mathpzc{C} &\text{ and } & \mathcal{S}_4= \big((\Delta_\mathpzc{A}\setminus \{\mathbf{0}\})+u\Delta_\mathpzc{B}\big)\times (\mathbb{F}_q^m \setminus \Delta_\mathpzc{C}),\end{eqnarray} 
where $m \geq 2$ is an integer and $\Delta_\mathpzc{A},$ $\Delta_\mathpzc{B}$ and $\Delta_\mathpzc{C}$ are simplicial complexes of $\mathbb{F}_q^m$ with support $\mathpzc{A},$ $\mathpzc{B}$ and $\mathpzc{C},$ respectively. We will explicitly determine the parameters and Lee weight distributions of these codes by extending the techniques employed in \cite[Sec. III]{Mondal2024}. 
We will also study their Gray images and identify several infinite families of few-weight codes as well as an infinite family of minimal, near-Griesmer and distance-optimal codes over $\mathbb{F}_q.$ We will show that the codes belonging to these families are self-orthogonal for  $q = 2$ or $ 3.$
Furthermore, for an arbitrary defining set $\mathcal{D} \subseteq \mathcal{R}^m,$ we determine a spanning matrix of the linear code over $\frac{\mathbb{F}_q[u]}{\langle u^2 \rangle}$ with defining set $\mathcal{D},$ and use it  to construct a spanning matrix of its Gray image over $\mathbb{F}_q.$ With the help of this result, we will construct two infinite families of projective few-weight codes over $\mathbb{F}_q$  by directly analyzing the spanning matrices of the Gray images of  linear codes over $\frac{\mathbb{F}_q[u]}{\langle u^2 \rangle}$ with  defining sets $\mathcal{S}_2$ and $\mathcal{S}_4,$ while Mondal and Lee \cite[Rem. 6]{Mondal2024} employed the Pless power moment identities \cite[Sec. 7.2]{HuffF} to obtain a family of binary projective codes.   We will then study the duals of these projective codes and explicitly determine their parameters. As a consequence, we will identify two infinite families of binary distance-optimal codes.  Additionally, we will construct an infinite family of quaternary projective $3$-weight codes, with non-zero Hamming weights summing to $\frac{9}{4}$ times the code length. This  addresses an open question posed by Shi and Sol\'{e} \cite[Sec. 6]{Shi2019}  concerning the construction of new projective $3$-weight codes over $\mathbb{F}_q,$  with non-zero Hamming weights   summing to $\frac{3  (q-1)}{q}$ times the code length, in the special case $q=4.$ 
% Our quaternary projective $3$-weight codes have new parameters and give rise to strongly $\ell$-walk-regular graphs with new parameters for all odd integers $\ell \geq 3.$ We compare the parameters of our strongly $\ell$-walk-regular graphs  with those of existing constructions presented in Tables 1 and 2 of Kiermaier \etal\ \cite{Kiermaier2023}, Section 4.3 of Shi and Sol{\'e} \cite{Shi2019} and Corollary 1 of Mondal and Lee \cite{Mondal2024}. 
We will examine the minimal access structures of Massey’s secret-sharing schemes based on the duals of our newly constructed minimal codes over $\mathbb{F}_q$ and obtain the number of dictatorial participants in these schemes. Finally, we will investigate the locality properties of our newly constructed projective codes, and demonstrate that these codes have locality either $2$ or $3.$ This yields four infinite families of $q$-ary locally repairable codes (LRCs) with locality $2$ and two infinite families of binary LRCs with locality $3.$

We also compare the parameters of the linear codes over $\mathbb{F}_q$ obtained in this work with those listed in Table II of Hu \etal\ \cite[Sec. 5]{Hu2024}, which presents a comparison of the parameters of distance-optimal codes constructed using various defining sets. We also compare our codes with the existing codes with similar parameters and find that our codes  are new, except in certain special cases (see Remarks \ref{comp1} - \ref{comp3.2}).  Furthermore, this paper addresses Open Problem 3 proposed in the aforementioned survey by  Wu \etal\ \cite[p. 14]{Wu2024S}, which calls for the construction of more optimal codes over different finite rings and the determination of their weight distributions with respect to various metrics. Our work not only addresses this open problem, but also provides several constructions of distance-optimal, few-weight, minimal, near-Griesmer, self-orthogonal, and projective codes over finite fields, together with an explicit determination of their Hamming weight distributions. 

In a subsequent work \cite{Josehull}, we show that the Gray images of the linear codes  with defining sets $\mathcal{S}_1,$ $\mathcal{S}_2$ and $\mathcal{S}_3$ are Galois self-orthogonal for every automorphism of $\mathbb{F}_q$ over $\mathbb{F}_p,$ where $q$ is a power of the prime $p.$  Leveraging this Galois self-orthogonality, we construct several  families of entanglement-assisted quantum error-correcting codes (EAQECCs). In addition, we identify three classes of EAQECCs  that achieve the Griesmer-type bound on the lengths of EAQECCs constructed from linear codes over finite fields. Apart from this, we derive three infinite families of intersecting codes over $\mathbb{F}_q$ and explicitly determine their trellis complexities.

The remainder of this paper is organized as follows: In Section \ref{Prelim}, we present some preliminaries needed to derive our main results. In Section \ref{Basiclem},   
we establish two key lemmas needed to derive our main results  (Lemmas \ref{Lem3} and \ref{Lem4}).  In Section \ref{sec4}, we first obtain a spanning matrix of a linear code over $\frac{\mathbb{F}_q[u]}{\langle u^2 \rangle}$ with an arbitrary defining set $\mathcal{D} \subseteq \mathcal{R}^m$ (Theorem \ref{Th5}), and subsequently use it to construct a spanning matrix for its  Gray image over $\mathbb{F}_q$ (Theorem \ref{Th6}). In Section \ref{Newcodes}, we construct four new infinite families of linear codes over $\frac{\mathbb{F}_q[u]}{\langle u^2 \rangle}$ with  defining sets $\mathcal{S}_1,$ $\mathcal{S}_2,$ $\mathcal{S}_3$ and $\mathcal{S}_4$ as defined in \eqref{S12} and \eqref{S34}, and 
 explicitly determine their parameters and Lee weight distributions   (Theorems \ref{Th1} - \ref{Th4}). In Section \ref{GI}, we study the Gray images of the codes with defining sets $\mathcal{S}_1,$ $\mathcal{S}_2,$ $\mathcal{S}_3$ and $\mathcal{S}_4,$ and obtain several infinite families of few-weight codes, binary and ternary self-orthogonal codes, as well as an infinite family of minimal, near-Griesmer and distance-optimal codes over $\mathbb{F}_q$  (Theorem \ref{ThN1} - \ref{ThN4}). We also present examples to illustrate these results (Examples \ref{Ex4.1} – \ref{Ex4.4}).
In Section \ref{Sec5}, 
we construct two families of projective few-weight codes over $\mathbb{F}_q$ with new parameters, and observe that these codes are self-orthogonal for $q = 2$ or $3$  (Theorems \ref{Th7} and \ref{Th8}). Furthermore, we study the duals of these projective codes and obtain two infinite families of binary distance-optimal codes (Theorems \ref{Th9} and \ref{Th10}). Additionally, we construct an infinite family of quaternary projective $3$-weight codes  with new parameters and non-zero Hamming weights   summing to $ \frac{9}{4}$ times the code length. This  addresses an open question posed by   Shi and Sol{\'e}  \cite[Sec. 6]{Shi2019} pertaining to the construction of projective $3$-weight codes over $\mathbb{F}_q,$  with non-zero Hamming weights   summing to $\frac{3  (q-1)}{q}$ times the code length,  in the particular case $q=4$ (Corollary \ref{SHI}). In Section \ref{add}, we explore two additional applications of the results derived in Sections \ref{GI} and \ref{Sec5}. In Section \ref{Sec7},  we study the minimal access structures of Massey’s secret sharing schemes based on the duals of  minimal  codes constructed in Theorem \ref{ThN3}, and determine the number of dictatorial participants in these schemes (Theorem \ref{TA3}). In Section \ref{Sec8}, we  study the locality properties of the projective codes studied in Theorems \ref{Th7} and \ref{Th8}, and show that these codes have locality either $2$ or $3$ (Theorems \ref{Thm5.12} and \ref{Thm5.13}). This gives rise to four infinite families of $q$-ary LRCs with locality $2,$ and two infinite families of binary LRCs with locality $3.$  In Section \ref{Conclusion}, we conclude with a brief summary and outline possible directions for future work. In the appendix, we  provide an elementary proof of the result that the coset graph of a linear code over $\mathbb{F}_q$ is connected. We also construct an infinite family of strongly $\ell$-walk-regular graphs for all odd integers $\ell \geq 3,$ using the quaternary projective $3$-weight codes constructed in Corollary \ref{SHI} (Theorem \ref{Thm5.8}).  Notably, the parameters of these strongly $\ell$-walk-regular graphs match with those obtained in Corollary 1 of Mondal and Lee\cite{Mondal2024}, upon substituting $k = 2m$ in their result.

\section{Some preliminaries}\label{Prelim}
In this section, we will first present some fundamental definitions and results related to linear codes over finite fields. We will then define a specific mixed-alphabet ring, constructed from a quasi-Galois ring with maximal ideal of nilpotency index two and its residue field, along with  the Euclidean bilinear form over this ring. Subsequently, we will recall the definition of a Gray map on a quasi-Galois ring with  maximal ideal of nilpotency index two and  investigate the Gray images of linear codes defined over this quasi-Galois ring.
\subsection{Linear codes over finite fields}\label{Prelim1}
Throughout this paper, let $q$ be a power of a prime $p,$ and let $\mathbb{F}_q$ denote the finite field of order $q.$ Let $\mathrm{n}$ be a positive integer, and let $[\mathrm{n}]$ denote  the set $\{1, 2, \ldots, \mathrm{n}\}.$ Let $\mathbb{F}_q^\mathrm{n}$ denote the $\mathrm{n}$-dimensional vector space consisting of all $\mathrm{n}$-tuples over $\mathbb{F}_q.$
For a word $v \in \mathbb{F}_q^{\mathrm{n}},$ let $(v)_i$ denote the $i$-th coordinate of $v$ for all $i \in [\mathrm{n}].$ For a vector  $v \in \mathbb{F}_q^{\mathrm{n}}$ and a non-empty subset $A$ of $[\mathrm{n}],$ let $(v)_A$ denote the vector of length $|A|$ obtained by deleting the coordinates of $v$ indexed by $A^c:=[\mathrm{n}] \setminus A.$
The support of $v,$ denoted by $\supp(v),$ is defined as the set consisting of all its non-zero coordinate positions, \textit{i.e.,}
$\supp(v) = \{i \in [\mathrm{n}] : (v)_i \neq 0\}.$ 
Further, the Hamming weight of $v,$ denoted by $wt_H(v),$ is defined as $wt_H(v) = |\supp(v)|,$ where $|\cdot|$ denotes the cardinality function. Clearly, for a vector  $v \in \mathbb{F}_q^{\mathrm{n}}$ and a non-empty subset $A$ of $[\mathrm{n}],$ we have $\supp(v) \cap A =\emptyset$ if and only if $(v)_A=\mathbf{0},$ or equivalently, $wt_H((v)_A)=0.$

A linear code $\mathcal{C}$ of length $\mathrm{n}$ and dimension $\mathrm{k}$ over $\mathbb{F}_q$ is defined as a $\mathrm{k}$-dimensional subspace of $\mathbb{F}_q^\mathrm{n}.$ We refer to elements of $\mathcal{C}$ as codewords. A spanning matrix $G$ of the code $\mathcal{C}$ is a matrix over $\mathbb{F}_q$ whose rows span $\mathcal{C}$ as a vector space over $\mathbb{F}_q.$ The set of all rows of $G$ is called a spanning set of $\mathcal{C}.$  The Hamming distance of the code $\mathcal{C},$ denoted by $d_H(\mathcal{C}),$ is given by
\begin{equation*}
    d_H(\mathcal{C}) = \min\{wt_H(c) : c \in \mathcal{C} \text{ and }\ c \neq \mathbf{0} \}.
\end{equation*} 
Now, let us define $A_i = \left| \{c \in \mathcal{C} : wt_H(c) = i \} \right|$ for all $i \in \{0\} \cup [\mathrm{n}].$ The sequence $A_0 = 1, A_1, A_2, \ldots, A_\mathrm{n}$ is called the Hamming weight distribution of the code $\mathcal{C}$ and the polynomial $W_\mathcal{C}(Z)=1+A_1Z+A_2Z^2+\cdots+A_\mathrm{n}Z^{\mathrm{n}}$ is called the Hamming weight enumerator of the code $\mathcal{C}.$
Furthermore, if $t = \left|\{i \in [\mathrm{n}] : A_i \neq 0\} \right|,$ then the code $\mathcal{C}$ is called a $t$-weight code. A few-weight code is defined as a $t$-weight code with a small value of $t.$

Henceforth, we will refer to a linear code of length $\mathrm{n},$ dimension $\mathrm{k}$ and Hamming distance $\mathrm{d}$ over $\mathbb{F}_q$ as a linear $[\mathrm{n},\mathrm{k},\mathrm{d}]$-code over $\mathbb{F}_q.$
A linear $[\mathrm{n},\mathrm{k},\mathrm{d}]$-code over $\mathbb{F}_q$ is said to be (i) distance-optimal if there does not exist a linear $[\mathrm{n},\mathrm{k},\mathrm{d}+1]$-code over $\mathbb{F}_q,$ and (ii)  almost distance-optimal if there exists a distance-optimal linear $[\mathrm{n},\mathrm{k},\mathrm{d}+1]$-code over $\mathbb{F}_q.$
A well-known bound for linear codes over finite fields is the Griesmer bound, which gives a lower bound on the length of a code for a given dimension, Hamming distance and alphabet size. The Griesmer bound for a linear $[\mathrm{n},\mathrm{k},\mathrm{d}]$-code over $\mathbb{F}_q$ (see \cite[Th. 2.7.4]{HuffF}) is given by
\begin{equation}\label{GB}
\sum\limits_{i=0}^{\mathrm{k}-1} \left\lceil \frac{\mathrm{d}}{q^i} \right\rceil \leq \mathrm{n},
\end{equation}
where $\lceil \cdot \rceil$ denotes the ceiling function. A linear $[\mathrm{n},\mathrm{k},\mathrm{d}]$-code over $\mathbb{F}_q$ is said to be a Griesmer code if its parameters satisfies
$\sum\limits_{i=0}^{\mathrm{k}-1} \left\lceil \frac{\mathrm{d}}{q^i} \right\rceil = \mathrm{n},$ while it is said to be a near-Griesmer code if $\sum\limits_{i=0}^{\mathrm{k}-1} \left\lceil \frac{\mathrm{d}}{q^i} \right\rceil = \mathrm{n} - 1.$ A near-Griesmer code is distance-optimal if $q$ divides $\mathrm{d}$ \cite[Lem. 2]{Hu2022}. Another well-known bound for linear codes over finite fields is the Sphere-packing bound. For a linear $[\mathrm{n}, \mathrm{k}, \mathrm{d}]$-code over $\mathbb{F}_q,$ the Sphere-packing bound (see \cite[Th. 11.1.4]{HuffF}) is given by
\begin{equation}\label{SPB}
  \sum\limits_{i=0}^{\left\lfloor \frac{\mathrm{\mathrm{d}-1}}{2} \right\rfloor}\binom{\mathrm{n}}{i}(q-1)^i \leq q^{\mathrm{n}-\mathrm{k}},
\end{equation}where $\binom{\cdot}{\cdot}$ denotes the binomial coefficient and $\lfloor \cdot \rfloor$ denotes the floor function. 

Minimal codes form another important class of linear codes.
A linear code $\mathcal{C}$ of length $\mathrm{n}$ over $\mathbb{F}_q$ is said to be minimal if all its codewords are minimal, \textit{i.e.,} if $\supp(c') \subseteq \supp(c)$ for any two codewords $c, c' \in \mathcal{C},$  then we must have $c' = \alpha c$ for some $\alpha \in \mathbb{F}_q.$ Below, we state Lemma 2.1(3) of Ashikhmin and Barg \cite{Ashikhmin1998}, which provides a sufficient condition under which a linear code over $\mathbb{F}_q$ is minimal.

\begin{lemma}\label{Lem1}\cite[Lem. 2.1(3)]{Ashikhmin1998}
Let $\mathcal{C}$ be a linear code over $\mathbb{F}_q,$ and let $w_0$ and $w_\infty$ denote the minimum and maximum among the Hamming weights of  non-zero codewords of  $\mathcal{C},$ respectively. If  $\frac{w_0}{w_\infty}>\frac{q-1}{q},$ then the code $\mathcal{C}$ is minimal. 
\end{lemma}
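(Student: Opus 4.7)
The plan is to argue by contradiction. Assume $\mathcal{C}$ is not minimal, so there exist codewords $c, c' \in \mathcal{C}$ with $\supp(c') \subseteq \supp(c)$ but $c' \neq \alpha c$ for every $\alpha \in \mathbb{F}_q.$ I intend to squeeze a weight inequality out of this configuration that contradicts the hypothesis $w_0 / w_\infty > (q-1)/q.$

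The core device is to partition $\supp(c)$ according to the $q$ possible values of the ratio $(c')_i/(c)_i$ at positions $i \in \supp(c).$ Concretely, for each $\alpha \in \mathbb{F}_q,$ set $N_\alpha = |\{ i \in \supp(c) : (c')_i = \alpha (c)_i \}|.$ The containment $\supp(c') \subseteq \supp(c)$ forces $c' - \alpha c$ to vanish on every coordinate outside $\supp(c);$ inside $\supp(c),$ the zero coordinates of $c' - \alpha c$ are precisely those counted by $N_\alpha.$ This yields the identity $wt_H(c' - \alpha c) = wt_H(c) - N_\alpha$ together with the partition relation $\sum_{\alpha \in \mathbb{F}_q} N_\alpha = wt_H(c).$

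Under the contradiction hypothesis every $c' - \alpha c$ is a non-zero codeword, so $wt_H(c' - \alpha c) \geq w_0,$ giving $N_\alpha \leq wt_H(c) - w_0$ for each of the $q$ values of $\alpha.$ Summing over $\alpha$ and invoking the partition relation, I obtain
\begin{equation*}
wt_H(c) \;=\; \sum_{\alpha \in \mathbb{F}_q} N_\alpha \;\leq\; q\bigl(wt_H(c) - w_0\bigr),
\end{equation*}
which rearranges to $q w_0 \leq (q-1)\, wt_H(c) \leq (q-1)\, w_\infty,$ in direct conflict with the hypothesis $w_0/w_\infty > (q-1)/q.$

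The argument is short, and the only subtlety lies in verifying that coordinates outside $\supp(c)$ contribute nothing to $wt_H(c' - \alpha c);$ this is precisely where the inclusion $\supp(c') \subseteq \supp(c)$ is used. Beyond this, no deeper obstacle appears, since the passage from $w_0 / wt_H(c)$ to $w_0 / w_\infty$ is immediate from $wt_H(c) \leq w_\infty.$
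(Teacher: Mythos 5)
Your proof is correct, and it is exactly the standard counting argument behind Ashikhmin--Barg's sufficient condition: partition $\supp(c)$ by the ratios $(c')_i/(c)_i,$ note each $c'-\alpha c$ is a non-zero codeword, and sum the resulting bounds $N_\alpha \leq wt_H(c)-w_0$ over all $q$ scalars. The paper itself gives no proof of this lemma (it is quoted from \cite[Lem. 2.1(3)]{Ashikhmin1998}), so there is nothing to compare against beyond noting that your argument faithfully reproduces the proof in that reference.
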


The dual  of a linear code $\mathcal{C}$ of length $\mathrm{n}$ over $\mathbb{F}_q,$ denoted by $\mathcal{C}^\perp,$ is defined as
\begin{equation*}
  \mathcal{C}^\perp = \{v \in \mathbb{F}_q^\mathrm{n} : v \cdot c = 0 \text{ for all } c \in \mathcal{C} \},  
\end{equation*}
where the map $\cdot: \mathbb{F}_q^{\mathrm{n}} \times  \mathbb{F}_q^{\mathrm{n}} \rightarrow \mathbb{F}_q$ is given by \begin{equation}\label{INP1}v.w=v_1w_1+v_2w_2+\cdots+v_{\mathrm{n}}w_{\mathrm{n}}\end{equation} 
for all $v=(v_1,v_2, \ldots,v_{\mathrm{n}}), w=(w_1,w_2,\ldots,w_{\mathrm{n}}) \in \mathbb{F}_q^{\mathrm{n}}.$ The map $\cdot$ is called the Euclidean bilinear form on $\mathbb{F}_q^\mathrm{n}.$
Note that $\mathcal{C}^\perp$ is a linear code of length $\mathrm{n}$ and dimension $\mathrm{n} - \dim(\mathcal{C})$ over $\mathbb{F}_q,$ where $\dim(\cdot)$ denotes the dimension of a code (see \cite[Th. 7.3]{Hill1986}). The code $\mathcal{C}$ is said to be self-orthogonal if it satisfies $\mathcal{C}\subseteq \mathcal{C}^\perp.$ 

Projective codes constitute another important class of linear codes.  A linear code $\mathcal{C}$ over $\mathbb{F}_q$ is said to be projective if $d(\mathcal{C}^\perp) \geq 3.$   We next state the following well-known result.
% One can easily see that for any vector $d \in \mathbb{F}_q^m,$ we have $d \in \mathcal{C}^\perp$ if and only if $d G^T = \mathbf{0},$ where $G^T$ denotes the transpose of the matrix $G.$ 

\begin{lemma}\label{Lem2}\cite[Th. 8.4]{Hill1986}
Let $\mathcal{C}$ be a linear code over $\mathbb{F}_q$ with a spanning matrix $G.$ The following hold.
% The dual $\mathcal{C}^\perp$ has Hamming distance $\mathrm{d}$ if and only if any $(\mathrm{d}-1)$ columns of $\mathcal{G}$ are linearly independent over $\mathbb{F}_q,$ and there exist $\mathrm{d}$ columns of $\mathcal{G}$ that are linearly dependent over $\mathbb{F}_q.$
\begin{itemize}
    \item [(a)] Any $\mathrm{d}-1$ columns of $G$ are linearly independent over $\mathbb{F}_q$ if and only if $d(\mathcal{C}^\perp) \geq \mathrm{d}.$
    \item[(b)] There are $\mathrm{d}$ linearly dependent columns of $G$  over $\mathbb{F}_q$ if and only if $d(\mathcal{C}^\perp) \leq \mathrm{d}.$ 
\end{itemize}
Consequently, we have $d(\mathcal{C}^\perp) = \mathrm{d}$ if and only if any $\mathrm{d}-1$ columns of $G$ are linearly independent over $\mathbb{F}_q,$ and there are $\mathrm{d}$ linearly dependent columns of $G$ over $\mathbb{F}_q.$
\end{lemma}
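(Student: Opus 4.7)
The plan is to reduce both parts to the standard observation that the generator matrix $G$ of $\mathcal{C}$ simultaneously serves as a parity-check matrix for the dual code $\mathcal{C}^\perp.$ More precisely, I would first establish the bridge lemma that for any $v \in \mathbb{F}_q^{\mathrm{n}},$ the condition $v \in \mathcal{C}^\perp$ is equivalent to $G v^{\top} = \mathbf{0},$ since the rows of $G$ span $\mathcal{C}$ and hence $v \cdot c = 0$ for all $c \in \mathcal{C}$ if and only if $v$ is orthogonal to each row of $G.$ Reading this coordinate-wise, a non-zero vector $v \in \mathbb{F}_q^{\mathrm{n}}$ with $\supp(v) = I$ belongs to $\mathcal{C}^\perp$ if and only if the columns of $G$ indexed by $I$ admit a non-trivial linear dependence, with the coefficients given by the non-zero entries of $v.$

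For part (a), I would argue both implications by contrapositive using this dictionary. If $d(\mathcal{C}^\perp) \leq \mathrm{d} - 1,$ then there is a non-zero $v \in \mathcal{C}^\perp$ with $wt_H(v) \leq \mathrm{d} - 1,$ and the bridge lemma yields a linear dependence among at most $\mathrm{d} - 1$ columns of $G$; extending (if necessary) to any superset of size $\mathrm{d} - 1$ keeps the set linearly dependent, contradicting the assumed independence of every $(\mathrm{d}-1)$-subset of columns. Conversely, if some $\mathrm{d} - 1$ columns of $G$ are linearly dependent, the coefficients of a non-trivial dependence produce a non-zero $v \in \mathcal{C}^\perp$ with $wt_H(v) \leq \mathrm{d} - 1,$ so $d(\mathcal{C}^\perp) \leq \mathrm{d} - 1 < \mathrm{d}.$

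Part (b) follows by the analogous translation. If $d(\mathcal{C}^\perp) \leq \mathrm{d},$ pick a non-zero $v \in \mathcal{C}^\perp$ with $wt_H(v) \leq \mathrm{d}$; the columns of $G$ indexed by $\supp(v)$ are linearly dependent, and I would then enlarge $\supp(v)$ arbitrarily, if needed, to a set of exactly $\mathrm{d}$ column indices, noting that any superset of a linearly dependent set of columns is still linearly dependent. Conversely, $\mathrm{d}$ linearly dependent columns of $G$ yield a non-zero $v \in \mathcal{C}^\perp$ with $wt_H(v) \leq \mathrm{d},$ hence $d(\mathcal{C}^\perp) \leq \mathrm{d}.$ Combining (a) and (b) gives the final equivalence stated in the lemma.

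There is no real obstacle here beyond bookkeeping: the only subtle point is the difference between "at most $\mathrm{d}$" and "exactly $\mathrm{d}$" linearly dependent columns in part (b), which is handled by the trivial remark that appending columns preserves linear dependence. The argument uses nothing beyond the definition of $\mathcal{C}^\perp$ via the Euclidean form \eqref{INP1} and elementary linear algebra over $\mathbb{F}_q.$
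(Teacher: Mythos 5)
Your argument is correct: the key observation that a spanning matrix $G$ of $\mathcal{C}$ acts as a parity-check matrix for $\mathcal{C}^\perp$ (so nonzero dual codewords of weight $w$ correspond exactly to nontrivial dependences among $w$ columns of $G$, and supersets of dependent column sets stay dependent) is exactly the standard proof of this fact. The paper itself offers no proof — it simply cites Hill, Theorem 8.4 — and your write-up reproduces that textbook argument faithfully, so there is nothing to add.
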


 Further, for a non-empty subset $A$  of $[\mathrm{n}],$ the set
\begin{equation*}
    \Delta_A = \{ v \in \mathbb{F}_q^\mathrm{n} : \supp(v) \subseteq A \}
\end{equation*}
is called a simplicial complex of $\mathbb{F}_q^\mathrm{n}$ with support $A$ (see \cite{Chang2018}).
Note that $\Delta_A$ is an $\mathbb{F}_q$-linear subspace of  $\mathbb{F}_q^{\mathrm{n}}$ with dimension $|A|.$  Furthermore,  we define $\Delta_A^c=\mathbb{F}_q^{\mathrm{n}}\setminus \Delta_A$ and $\Delta_A^\ast=\Delta_A\setminus \{\mathbf{0}\}.$ Note that 
\begin{equation}\label{Scard}
    |\Delta_A|=q^{|A|},~~|\Delta_A^c|=q^{\mathrm{n}}-q^{|A|} ~~\text{ and }~~|\Delta_A^\ast|=q^{|A|}-1.
\end{equation}
\subsection{A mixed-alphabet ring of a quasi-Galois ring and its residue field, and the associated Euclidean bilinear form}\label{Prelim2}
Here, we first recall that the quotient ring $\frac{\mathbb{F}_q[u]}{\langle u^2 \rangle},$ also known as a quasi-Galois ring,  is a finite commutative chain ring with maximal ideal $\langle u \rangle$ of nilpotency index $2$ and residue field $\mathbb{F}_q.$   For a positive integer $\mathrm{n},$ let $\big(\frac{\mathbb{F}_q[u]}{\langle u^2 \rangle}\big)^\mathrm{n}$ denote the set of all $\mathrm{n}$-tuples over $\frac{\mathbb{F}_q[u]}{\langle u^2 \rangle}.$ One can easily see that $\big(\frac{\mathbb{F}_q[u]}{\langle u^2 \rangle}\big)^\mathrm{n}=\{d+ue: d,e \in \mathbb{F}_q^\mathrm{n}\}.$ Note that $\big(\frac{\mathbb{F}_q[u]}{\langle u^2 \rangle}\big)^\mathrm{n}$ is a free module of rank $\mathrm{n}$ over $\frac{\mathbb{F}_q[u]}{\langle u^2 \rangle}.$
For our convenience, we will denote the Euclidean bilinear form on $\big(\frac{\mathbb{F}_q[u]}{\langle u^2 \rangle}\big)^\mathrm{n}$ by $\cdot$ itself. That is, the Euclidean bilinear form on $\big(\frac{\mathbb{F}_q[u]}{\langle u^2 \rangle}\big)^\mathrm{n}$  is a map $\cdot: \big(\frac{\mathbb{F}_q[u]}{\langle u^2 \rangle}\big)^\mathrm{n} \times \big(\frac{\mathbb{F}_q[u]}{\langle u^2 \rangle}\big)^\mathrm{n} \rightarrow \big(\frac{\mathbb{F}_q[u]}{\langle u^2 \rangle}\big) ,$ defined as \begin{equation}\label{INP2}r\cdot s=r_1s_1+r_2s_2+\cdots+r_{\mathrm{n}}s_{\mathrm{n}}\end{equation} for all $r=(r_1,r_2,\ldots,r_{\mathrm{n}}),$ $s=(s_1,s_2,\ldots, s_{\mathrm{n}}) \in \big(\frac{\mathbb{F}_q[u]}{\langle u^2 \rangle}\big)^\mathrm{n}.$

Now, we define a mixed-alphabet ring $\mathcal{R}$ as follows (see \cite[p. 6522]{Borges2018}):
\begin{equation*}
    \mathcal{R}:=\frac{\mathbb{F}_q[u]}{\langle u^2 \rangle}\times \mathbb{F}_q=\{(x+uy,z):x,y,z \in \mathbb{F}_q\}.
\end{equation*} Note that the ring $\mathcal{R}$ can also be viewed as a module over $\frac{\mathbb{F}_q[u]}{\langle u^2 \rangle}.$ The set of all $\mathrm{n}$-tuples over $\mathcal{R},$ denoted by
$\mathcal{R}^\mathrm{n},$ is given by $\mathcal{R}^\mathrm{n}=\{(d+ue,f) : d,e,f \in \mathbb{F}_q^\mathrm{n}\}.$ Further, the set $\mathcal{R}^\mathrm{n}$ can be naturally viewed as a module over  $\frac{\mathbb{F}_q[u]}{\langle u^2 \rangle}.$  Now, the Euclidean bilinear form on $\mathcal{R}^\mathrm{n}$ (see \cite[p. 6523]{Borges2018}) is a map $\langle \cdot , \cdot \rangle : \mathcal{R}^\mathrm{n} \times \mathcal{R}^\mathrm{n} \rightarrow \frac{\mathbb{F}_q[u]}{\langle u^2 \rangle},$ defined as 
\begin{equation}\label{IP}
\langle (d_1+ue_1,f_1),(d_2+ue_2,f_2) \rangle = (d_1+ue_1) \cdot (d_2+ue_2) + uf_1 \cdot f_2
\end{equation} for all $(d_1+ue_1,f_1),(d_2+ue_2,f_2)\in \mathcal{R}^\mathrm{n}$ with $d_1,d_2,e_1,e_2,f_1,f_2 \in \mathbb{F}_q^{\mathrm{n}},$ where $\cdot$ denotes the Euclidean bilinear form on  $\mathbb{F}_q^\mathrm{n}$ and $\big(\frac{\mathbb{F}_q[u]}{\langle u^2 \rangle}\big)^\mathrm{n},$ as defined by equations \eqref{INP1} and \eqref{INP2}, respectively.
One can easily see that the Euclidean bilinear form $\langle \cdot , \cdot \rangle$ on $\mathcal{R}^\mathrm{n}$ is a  non-degenerate and symmetric bilinear form. 
\subsection{Gray images of linear codes over the quasi-Galois ring $\frac{\mathbb{F}_q[u]}{\langle u^2 \rangle}$}\label{Prelim3}

A Gray map on the quasi-Galois ring $\frac{\mathbb{F}_q[u]}{\langle u^2 \rangle}$  (see \cite[p. 2522]{Greferath1999}) is a map $\Phi : \frac{\mathbb{F}_q[u]}{\langle u^2 \rangle} \rightarrow \mathbb{F}_q^2,$ defined as $\Phi(x + uy) = (y, x + y)$ for all $x + uy \in \frac{\mathbb{F}_q[u]}{\langle u^2 \rangle}$ with $x,y \in \mathbb{F}_q.$
The map $\Phi$ can be naturally extended component-wise to a map from $ \big(\frac{\mathbb{F}_q[u]}{\langle u^2 \rangle}\big)^\mathrm{n} $ onto $ \mathbb{F}_q^{2\mathrm{n}}$ as
\begin{equation*}
d + ue \mapsto (e, d + e) \quad \text{for all } d + ue \in \bigg(\frac{\mathbb{F}_q[u]}{\langle u^2 \rangle}\bigg)^\mathrm{n} \text{ ~with } d,e \in \mathbb{F}_q^{\mathrm{n}},
\end{equation*}
which we shall denote by  $\Phi$ itself for our convenience. Further,  the Lee weight of a word $d + ue \in \left(\frac{\mathbb{F}_q[u]}{\langle u^2 \rangle}\right)^\mathrm{n}$ with $d,e \in \mathbb{F}_q^{\mathrm{n}},$  denoted by $wt_L(d + ue),$ is defined as the Hamming weight of its Gray image, \textit{i.e.,}
\begin{equation}\label{GI}
  wt_L(d + ue) := wt_H(\Phi(d + ue)) = wt_H(e) + wt_H(d + e).  
\end{equation}
Thus, the map $\Phi$ is an $\mathbb{F}_q$-linear isometry from $\big(\big(\frac{\mathbb{F}_q[u]}{\langle u^2 \rangle}\big)^\mathrm{n}, wt_L(\cdot)\big)$ onto $\big(\mathbb{F}_q^{2\mathrm{n}}, wt_H(\cdot)\big).$

Now, a linear code $\mathscr{C}$ of length $\mathrm{n}$ over $\frac{\mathbb{F}_q[u]}{\langle u^2 \rangle}$ is defined as a submodule of $\big(\frac{\mathbb{F}_q[u]}{\langle u^2 \rangle}\big)^\mathrm{n}$ over $\frac{\mathbb{F}_q[u]}{\langle u^2 \rangle}.$  
 A spanning matrix $\mathscr{G}$ of the code $\mathscr{C}$ is a matrix over $\frac{\mathbb{F}_q[u]}{\langle u^2 \rangle}$ whose rows generate $\mathscr{C}$ as a module over $\frac{\mathbb{F}_q[u]}{\langle u^2 \rangle}.$  The set of all rows of $\mathscr{G}$ is called a spanning set of $\mathscr{C}.$ We will refer to the cardinality of the code $\mathscr{C}$ as its size.
The Lee distance of $\mathscr{C},$ denoted by $d_L(\mathscr{C}),$ is given by
$d_L(\mathscr{C}) = \min\{wt_L(c) : c \in \mathscr{C} \text{ and }\ c \neq \mathbf{0}\}.$
Further, let us define $\mathsf{A}_i = \left| \{c \in \mathscr{C} : wt_L(c) = i \} \right|$ for all $i \in \{0\} \cup [2\mathrm{n}],$ where $[2\mathrm{n}]=\{1,2,\ldots, 2\mathrm{n}\}.$ The sequence $\mathsf{A}_0 = 1, \mathsf{A}_1, \mathsf{A}_2, \ldots, \mathsf{A}_{2\mathrm{n}}$ is called the Lee weight distribution of the code $\mathscr{C}.$ Moreover, if $t = \left|\{i \in [2\mathrm{n}] : \mathsf{A}_i \neq 0\} \right|,$ then the code $\mathscr{C}$ is called a $t$-weight code. A linear code $\mathscr{C}$ of length $\mathrm{n},$ size $\mathrm{K}$ and Lee distance $\mathrm{D}$ over $\frac{\mathbb{F}_q[u]}{\langle u^2 \rangle}$ is referred to as a linear code over $\frac{\mathbb{F}_q[u]}{\langle u^2 \rangle}$ with parameters $(\mathrm{n}, \mathrm{K}, \mathrm{D}),$ or simply a linear $(\mathrm{n}, \mathrm{K}, \mathrm{D})$-code over $\frac{\mathbb{F}_q[u]}{\langle u^2 \rangle}.$ We further observe the following:
\begin{remark}\label{RRk}
Since the map $\Phi$ is an $\mathbb{F}_q$-linear isomorphism, the Gray image $\Phi(\mathscr{C}) = \{ \Phi(c) : c \in \mathscr{C} \}$ of a linear code $\mathscr{C}$ of length $\mathrm{n}$ over $\frac{\mathbb{F}_q[u]}{\langle u^2 \rangle}$ is a linear code of length $2\mathrm{n}$ over $\mathbb{F}_q$ with  $ d_H(\Phi(\mathscr{C}))=d_L(\mathscr{C})$ and $|\Phi(\mathscr{C})|= |\mathscr{C}| .$
Additionally, the Hamming weight distribution of $\Phi(\mathscr{C})$ coincides with the Lee weight distribution of $\mathscr{C}.$ \end{remark}

From this point on, we will use the same notations as introduced in Section \ref{Prelim}. In the following section, we will  establish some key lemmas needed to derive our main results.
\section{Some basic lemmas}\label{Basiclem}
In this section, we will prove two lemmas needed to establish our main results. First of all, let $m \geq 2$ be an integer, and let $\mathpzc{P}$ and $\mathpzc{Q}$ be non-empty subsets of $[m].$ Let us define the following subsets of $\mathbb{F}_q^m$:
\begin{eqnarray}
\mathpzc{X}_{\mathpzc{P}} &=& \{v \in \mathbb{F}_q^m : \supp(v) \cap \mathpzc{P} = \emptyset  \}, \label{XP}\\
\mathpzc{Y}_{\mathpzc{P},\mathpzc{Q}} &=& \{v \in \mathbb{F}_q^m : \supp(v) \cap \mathpzc{P} = \emptyset \text{ and }\supp(v) \cap \mathpzc{Q} \neq \emptyset \}, \text{ and }\label{YPQ}\\
\mathpzc{Z}_{\mathpzc{P},\mathpzc{Q}} &=& \{v \in \mathbb{F}_q^m : \supp(v) \cap \mathpzc{P} \neq \emptyset \text{ and }\supp(v) \cap \mathpzc{Q} \neq \emptyset \}.\label{ZPQ}
\end{eqnarray}
Additionally, let us define $\mathpzc{X}_{\mathpzc{P}}^c=\mathbb{F}_q^m\setminus \mathpzc{X}_{\mathpzc{P}}=\{v \in \mathbb{F}_q^m : \supp(v) \cap \mathpzc{P} \neq \emptyset\}.$ 
   In the following lemma, we determine the cardinalities of the sets $\mathpzc{X}_{\mathpzc{P}},$ $\mathpzc{X}_{\mathpzc{P}}^c,$ $\mathpzc{Y}_{\mathpzc{P},\mathpzc{Q}}$ and $\mathpzc{Z}_{\mathpzc{P},\mathpzc{Q}}.$ 
   \begin{lemma}\label{Lem3} For non-empty subsets $\mathpzc{P}$ and $\mathpzc{Q}$ of $[m],$ we have \begin{eqnarray*}|\mathpzc{X}_{\mathpzc{P}}| = q^{m-|\mathpzc{P}|}, && |\mathpzc{X}_{\mathpzc{P}}^c| = q^{m}-q^{m-|\mathpzc{P}|}, \\
      |\mathpzc{Y}_{\mathpzc{P},\mathpzc{Q}}| = q^{m-|\mathpzc{P}|} - q^{m-|\mathpzc{P} \cup \mathpzc{Q}|} &\text{ and } & |\mathpzc{Z}_{\mathpzc{P},\mathpzc{Q}}| = q^{m}-q^{m-|\mathpzc{P}|}-q^{m-|\mathpzc{Q}|}+q^{m-|\mathpzc{P} \cup \mathpzc{Q}|}.\end{eqnarray*}
   % \begin{itemize}
   %     \item[(a)] We have $|\mathpzc{X}_{\mathpzc{P}}| = q^{m-|\mathpzc{P}|}$ and $|\mathpzc{X}_{\mathpzc{P}}^c| = q^{m}-q^{m-|\mathpzc{P}|}.$
   %     \item[(b)] We have $|\mathpzc{Y}_{\mathpzc{P},\mathpzc{Q}}| = q^{m-|\mathpzc{P}|} - q^{m-|\mathpzc{P} \cup \mathpzc{Q}|}$ and $|\mathpzc{Z}_{\mathpzc{P},\mathpzc{Q}}| = q^{m}-q^{m-|\mathpzc{P}|}-q^{m-|\mathpzc{Q}|}+q^{m-|\mathpzc{P} \cup \mathpzc{Q}|}.$
   % \end{itemize} 
   \end{lemma}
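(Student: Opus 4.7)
The plan is to compute all four cardinalities by reducing everything to the observation that the condition $\supp(v) \cap \mathpzc{S} = \emptyset$ simply forces the coordinates of $v$ indexed by $\mathpzc{S}$ to be zero while leaving the remaining $m - |\mathpzc{S}|$ coordinates free. This gives the baseline identity $|\mathpzc{X}_{\mathpzc{S}}| = q^{m - |\mathpzc{S}|}$ for any non-empty $\mathpzc{S} \subseteq [m]$, from which the formula for $|\mathpzc{X}_{\mathpzc{P}}|$ is immediate, and $|\mathpzc{X}_{\mathpzc{P}}^c| = q^m - q^{m-|\mathpzc{P}|}$ follows by complementation in $\mathbb{F}_q^m$.

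For $\mathpzc{Y}_{\mathpzc{P},\mathpzc{Q}}$, I would write it as a set difference: a vector $v$ lies in $\mathpzc{Y}_{\mathpzc{P},\mathpzc{Q}}$ if and only if $\supp(v) \cap \mathpzc{P} = \emptyset$ and $v \notin \mathpzc{X}_{\mathpzc{P} \cup \mathpzc{Q}}$, since the joint condition $\supp(v) \cap \mathpzc{P} = \emptyset$ together with $\supp(v) \cap \mathpzc{Q} = \emptyset$ is exactly $\supp(v) \cap (\mathpzc{P} \cup \mathpzc{Q}) = \emptyset$. Because $\mathpzc{X}_{\mathpzc{P} \cup \mathpzc{Q}} \subseteq \mathpzc{X}_{\mathpzc{P}}$, I get $|\mathpzc{Y}_{\mathpzc{P},\mathpzc{Q}}| = |\mathpzc{X}_{\mathpzc{P}}| - |\mathpzc{X}_{\mathpzc{P} \cup \mathpzc{Q}}| = q^{m-|\mathpzc{P}|} - q^{m-|\mathpzc{P} \cup \mathpzc{Q}|}$.

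For $\mathpzc{Z}_{\mathpzc{P},\mathpzc{Q}}$, the cleanest route is inclusion--exclusion on the complementary events. Setting $A = \mathpzc{X}_{\mathpzc{P}}$ and $B = \mathpzc{X}_{\mathpzc{Q}}$, I observe that $\mathpzc{Z}_{\mathpzc{P},\mathpzc{Q}} = \mathbb{F}_q^m \setminus (A \cup B)$ and that $A \cap B = \mathpzc{X}_{\mathpzc{P} \cup \mathpzc{Q}}$ by the same remark as above. Applying the cardinality identity $|A \cup B| = |A| + |B| - |A \cap B|$ and then subtracting from $q^m$ gives the stated formula $q^m - q^{m-|\mathpzc{P}|} - q^{m-|\mathpzc{Q}|} + q^{m-|\mathpzc{P} \cup \mathpzc{Q}|}$.

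There is no real obstacle here; the lemma is essentially a bookkeeping exercise, and the only point deserving a line of justification is the identification $A \cap B = \mathpzc{X}_{\mathpzc{P} \cup \mathpzc{Q}}$ (equivalently, that the two disjointness conditions on $\mathpzc{P}$ and $\mathpzc{Q}$ combine into a single disjointness condition on $\mathpzc{P} \cup \mathpzc{Q}$). Once that is stated, the four formulas drop out in the order $\mathpzc{X}_{\mathpzc{P}}, \mathpzc{X}_{\mathpzc{P}}^c, \mathpzc{Y}_{\mathpzc{P},\mathpzc{Q}}, \mathpzc{Z}_{\mathpzc{P},\mathpzc{Q}}$ with no case analysis needed.
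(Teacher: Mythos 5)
Your proof is correct, and since the paper dismisses this lemma with ``its proof is an easy exercise,'' your counting argument (free coordinates outside the prescribed support, set difference for $\mathpzc{Y}_{\mathpzc{P},\mathpzc{Q}},$ and inclusion--exclusion via $\mathpzc{X}_{\mathpzc{P}} \cap \mathpzc{X}_{\mathpzc{Q}} = \mathpzc{X}_{\mathpzc{P}\cup\mathpzc{Q}}$ for $\mathpzc{Z}_{\mathpzc{P},\mathpzc{Q}}$) is exactly the intended routine verification. No gaps.
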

   \begin{proof}
       Its proof is an easy exercise.
   \end{proof}
   Next, let us define the following subsets of $\mathbb{F}_q^{2m}:$
   \begin{eqnarray}\label{M}
  \mathpzc{M}_{\mathpzc{P},\mathpzc{Q}} &=& \{ (e,f) \in (\mathbb{F}_q^{m})^2 : \supp(f) \cap \mathpzc{P} \neq \emptyset,~\supp(e) \cap \mathpzc{P} \neq \emptyset \text{ and } \supp(e) \cap \mathpzc{Q} = \emptyset \},\\
     \label{N}
  \mathpzc{N}_{\mathpzc{P},\mathpzc{Q}} &=& \{ (e,f) \in (\mathbb{F}_q^{m})^2 : \supp(f) \cap \mathpzc{P} \neq \emptyset,~\supp(e) \cap \mathpzc{P} \neq \emptyset \text{ and } \supp(e) \cap \mathpzc{Q} \neq \emptyset \}, \\
       \label{M1}
       \widehat{\mathpzc{M}}_{\mathpzc{P},\mathpzc{Q}} &=& \{(e,f) \in \mathpzc{M}_{\mathpzc{P},\mathpzc{Q}} : \supp(e+f) \cap \mathpzc{P} = \emptyset \}, \text{ and }\\
       \label{N1}
       \widehat{\mathpzc{N}}_{\mathpzc{P},\mathpzc{Q}} &=& \{(e,f) \in \mathpzc{N}_{\mathpzc{P},\mathpzc{Q}} : \supp(e+f) \cap \mathpzc{P} = \emptyset \}.
   \end{eqnarray} Further, let us define the sets $\widetilde{\mathpzc{M}}_{\mathpzc{P},\mathpzc{Q}}=\mathpzc{M}_{\mathpzc{P},\mathpzc{Q}}\setminus \widehat{\mathpzc{M}}_{\mathpzc{P},\mathpzc{Q}}$ and $\widetilde{\mathpzc{N}}_{\mathpzc{P},\mathpzc{Q}}=\mathpzc{N}_{\mathpzc{P},\mathpzc{Q}}\setminus \widehat{\mathpzc{N}}_{\mathpzc{P},\mathpzc{Q}}.$
In the following lemma, we  determine the cardinalities of the sets $\mathpzc{M}_{\mathpzc{P},\mathpzc{Q}},$ $\widehat{\mathpzc{M}}_{\mathpzc{P},\mathpzc{Q}},$ $\widetilde{\mathpzc{M}}_{\mathpzc{P},\mathpzc{Q}},$ $\mathpzc{N}_{\mathpzc{P},\mathpzc{Q}},$ $\widehat{\mathpzc{N}}_{\mathpzc{P},\mathpzc{Q}}$ and $\widetilde{\mathpzc{N}}_{\mathpzc{P},\mathpzc{Q}}.$
   \begin{lemma}\label{Lem4}   
   For non-empty subsets $\mathpzc{P}$ and $\mathpzc{Q}$ of $[m],$ we  have \begin{eqnarray*} |\mathpzc{M}_{\mathpzc{P},\mathpzc{Q}}| &=& (q^m-q^{m-|\mathpzc{P}|})(q^{m-|\mathpzc{Q}|}-q^{m-|\mathpzc{P} \cup \mathpzc{Q}|}),\\|\widehat{\mathpzc{M}}_{\mathpzc{P},\mathpzc{Q}}|& = &  (q^{|\mathpzc{P}|-|\mathpzc{P} \cap \mathpzc{Q}|}-1)q^{2m-|\mathpzc{P}|-|\mathpzc{P} \cup \mathpzc{Q}|}, \\ |\widetilde{\mathpzc{M}}_{\mathpzc{P},\mathpzc{Q}}| &= & \big((q^{|\mathpzc{P}|}-1)(q^{|\mathpzc{P} \cup \mathpzc{Q}|-|\mathpzc{Q}|}-1)-(q^{|\mathpzc{P}|-|\mathpzc{P} \cap \mathpzc{Q}|}-1)\big)q^{2m-|\mathpzc{P}|-|\mathpzc{P} \cup \mathpzc{Q}|},\\ 
         |\mathpzc{N}_{\mathpzc{P},\mathpzc{Q}}| &=& (q^m-q^{m-|\mathpzc{P}|})(q^m - q^{m-|\mathpzc{P}|} - q^{m-|\mathpzc{Q}|} + q^{m-|\mathpzc{P} \cup \mathpzc{Q}|}),\\ |\widehat{\mathpzc{N}}_{\mathpzc{P},\mathpzc{Q}}| &= & q^{m-|\mathpzc{P}|}(q^m-q^{m-|\mathpzc{P}|}-q^{m-|\mathpzc{Q}|}+q^{m-|\mathpzc{P} \cup \mathpzc{Q}|}), \text{ and}\\  |\widetilde{\mathpzc{N}}_{\mathpzc{P},\mathpzc{Q}}| &=& (q^m-2q^{m-|\mathpzc{P}|})(q^m - q^{m-|\mathpzc{P}|} - q^{m-|\mathpzc{Q}|} + q^{m-|\mathpzc{P} \cup \mathpzc{Q}|}).
       \end{eqnarray*}
       % \begin{itemize}
       %     \item[(a)] We have $|\mathpzc{M}_{\mathpzc{P},\mathpzc{Q}}| = (q^m-q^{m-|\mathpzc{P}|})(q^{m-|\mathpzc{Q}|}-q^{m-|\mathpzc{P} \cup \mathpzc{Q}|}),$ $|\widehat{\mathpzc{M}}_{\mathpzc{P},\mathpzc{Q}}| = (q^{|\mathpzc{P}|-|\mathpzc{P} \cap \mathpzc{Q}|}-1)q^{2m-|\mathpzc{P}|-|\mathpzc{P} \cup \mathpzc{Q}|}$ and $|\widetilde{\mathpzc{M}}_{\mathpzc{P},\mathpzc{Q}}| = \big((q^{|\mathpzc{P}|}-1)(q^{|\mathpzc{P} \cup \mathpzc{Q}|-|\mathpzc{Q}|}-1)-(q^{|\mathpzc{P}|-|\mathpzc{P} \cap \mathpzc{Q}|}-1)\big)q^{2m-|\mathpzc{P}|-|\mathpzc{P} \cup \mathpzc{Q}|}.$
       %     \item[(b)] We have $|\mathpzc{N}_{\mathpzc{P},\mathpzc{Q}}| = (q^m-q^{m-|\mathpzc{P}|})(q^m - q^{m-|\mathpzc{P}|} - q^{m-|\mathpzc{Q}|} + q^{m-|\mathpzc{P} \cup \mathpzc{Q}|}),$ $|\widehat{\mathpzc{N}}_{\mathpzc{P},\mathpzc{Q}}| = q^{m-|\mathpzc{P}|}(q^m-q^{m-|\mathpzc{P}|}-q^{m-|\mathpzc{Q}|}+q^{m-|\mathpzc{P} \cup \mathpzc{Q}|})$ and $|\widetilde{\mathpzc{N}}_{\mathpzc{P},\mathpzc{Q}}| = (q^m-2q^{m-|\mathpzc{P}|})(q^m - q^{m-|\mathpzc{P}|} - q^{m-|\mathpzc{Q}|} + q^{m-|\mathpzc{P} \cup \mathpzc{Q}|}).$
       % \end{itemize}
   \end{lemma}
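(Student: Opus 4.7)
My plan is to compute the six cardinalities by exploiting, in each case, the fact that the constraints defining the set either decouple between $e$ and $f$, or can be decoupled after a substitution. The sizes of $\mathpzc{M}_{\mathpzc{P},\mathpzc{Q}}$ and $\mathpzc{N}_{\mathpzc{P},\mathpzc{Q}}$ fall out immediately from Lemma \ref{Lem3}. In the definition of $\mathpzc{M}_{\mathpzc{P},\mathpzc{Q}}$, the condition on $f$ is $f\in \mathpzc{X}_{\mathpzc{P}}^{c}$ while the condition on $e$ is $e\in \mathpzc{Y}_{\mathpzc{Q},\mathpzc{P}}$ (using the roles of $\mathpzc{P}$ and $\mathpzc{Q}$ swapped in \eqref{YPQ}), and these are independent, giving $|\mathpzc{M}_{\mathpzc{P},\mathpzc{Q}}|=|\mathpzc{X}_{\mathpzc{P}}^{c}|\cdot|\mathpzc{Y}_{\mathpzc{Q},\mathpzc{P}}|$. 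Similarly, for $\mathpzc{N}_{\mathpzc{P},\mathpzc{Q}}$ the constraints give $f\in \mathpzc{X}_{\mathpzc{P}}^{c}$ and $e\in \mathpzc{Z}_{\mathpzc{P},\mathpzc{Q}}$, so $|\mathpzc{N}_{\mathpzc{P},\mathpzc{Q}}|=|\mathpzc{X}_{\mathpzc{P}}^{c}|\cdot|\mathpzc{Z}_{\mathpzc{P},\mathpzc{Q}}|$. Substituting the formulas from Lemma \ref{Lem3} yields the two claimed products directly.

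The heart of the proof is the treatment of the hatted sets. The key observation is that the condition $\supp(e+f)\cap \mathpzc{P}=\emptyset$ is equivalent to the coordinate identity $(f)_{\mathpzc{P}}=-(e)_{\mathpzc{P}}$. Since any pair $(e,f)\in \mathpzc{M}_{\mathpzc{P},\mathpzc{Q}}\cup \mathpzc{N}_{\mathpzc{P},\mathpzc{Q}}$ already satisfies $\supp(e)\cap\mathpzc{P}\neq\emptyset$, i.e.\ $(e)_{\mathpzc{P}}\neq \mathbf{0}$, the forced equality $(f)_{\mathpzc{P}}=-(e)_{\mathpzc{P}}$ automatically implies $(f)_{\mathpzc{P}}\neq \mathbf{0}$, so the constraint $\supp(f)\cap\mathpzc{P}\neq\emptyset$ becomes redundant. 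Thus the $f$-coordinates on $\mathpzc{P}$ are completely pinned down by $e$, while those on $[m]\setminus \mathpzc{P}$ remain free, contributing a factor of $q^{m-|\mathpzc{P}|}$ in both cases. For $\widehat{\mathpzc{N}}_{\mathpzc{P},\mathpzc{Q}}$ the remaining count of valid $e$ is exactly $|\mathpzc{Z}_{\mathpzc{P},\mathpzc{Q}}|$, which immediately gives the stated formula.

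The slightly delicate part is $\widehat{\mathpzc{M}}_{\mathpzc{P},\mathpzc{Q}}$, where I must count $e\in \mathbb{F}_{q}^{m}$ satisfying both $\supp(e)\cap\mathpzc{Q}=\emptyset$ and $\supp(e)\cap\mathpzc{P}\neq\emptyset$. The first condition forces $(e)_{i}=0$ for all $i\in \mathpzc{P}\cap\mathpzc{Q}$, so the second reduces to $(e)_{\mathpzc{P}\setminus\mathpzc{Q}}\neq \mathbf{0}$. Hence $e$ is determined by choosing $(e)_{\mathpzc{P}\setminus\mathpzc{Q}}\in \mathbb{F}_{q}^{|\mathpzc{P}|-|\mathpzc{P}\cap\mathpzc{Q}|}\setminus\{\mathbf{0}\}$ and $(e)_{[m]\setminus(\mathpzc{P}\cup\mathpzc{Q})}\in \mathbb{F}_{q}^{m-|\mathpzc{P}\cup\mathpzc{Q}|}$ arbitrarily, yielding $(q^{|\mathpzc{P}|-|\mathpzc{P}\cap\mathpzc{Q}|}-1)\,q^{m-|\mathpzc{P}\cup\mathpzc{Q}|}$ admissible $e$. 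Multiplying by the $q^{m-|\mathpzc{P}|}$ free choices for $f$ produces the asserted formula for $|\widehat{\mathpzc{M}}_{\mathpzc{P},\mathpzc{Q}}|$. The step I expect to require the most care is bookkeeping the disjoint decomposition $[m]=(\mathpzc{P}\cap\mathpzc{Q})\sqcup(\mathpzc{P}\setminus\mathpzc{Q})\sqcup(\mathpzc{Q}\setminus\mathpzc{P})\sqcup([m]\setminus(\mathpzc{P}\cup\mathpzc{Q}))$ to make sure the exponents line up, since the answer mixes $|\mathpzc{P}|$, $|\mathpzc{Q}|$, $|\mathpzc{P}\cap\mathpzc{Q}|$, and $|\mathpzc{P}\cup\mathpzc{Q}|$.

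Finally, $|\widetilde{\mathpzc{M}}_{\mathpzc{P},\mathpzc{Q}}|$ and $|\widetilde{\mathpzc{N}}_{\mathpzc{P},\mathpzc{Q}}|$ follow by subtraction from the previously computed pairs, using the set-theoretic definitions $\widetilde{\mathpzc{M}}_{\mathpzc{P},\mathpzc{Q}}=\mathpzc{M}_{\mathpzc{P},\mathpzc{Q}}\setminus\widehat{\mathpzc{M}}_{\mathpzc{P},\mathpzc{Q}}$ and $\widetilde{\mathpzc{N}}_{\mathpzc{P},\mathpzc{Q}}=\mathpzc{N}_{\mathpzc{P},\mathpzc{Q}}\setminus\widehat{\mathpzc{N}}_{\mathpzc{P},\mathpzc{Q}}$. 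For $\widetilde{\mathpzc{M}}$, I would first factor $|\mathpzc{M}_{\mathpzc{P},\mathpzc{Q}}|$ as $(q^{|\mathpzc{P}|}-1)(q^{|\mathpzc{P}\cup\mathpzc{Q}|-|\mathpzc{Q}|}-1)\,q^{2m-|\mathpzc{P}|-|\mathpzc{P}\cup\mathpzc{Q}|}$ so that the common factor $q^{2m-|\mathpzc{P}|-|\mathpzc{P}\cup\mathpzc{Q}|}$ can be pulled out before subtracting, and for $\widetilde{\mathpzc{N}}$ the common factor $|\mathpzc{Z}_{\mathpzc{P},\mathpzc{Q}}|$ comes out cleanly, leaving $(q^{m}-q^{m-|\mathpzc{P}|})-q^{m-|\mathpzc{P}|}=q^{m}-2q^{m-|\mathpzc{P}|}$. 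This matches the stated expressions.
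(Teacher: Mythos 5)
Your proposal is correct and follows essentially the same coordinate-wise counting strategy as the paper: Lemma \ref{Lem3} for $|\mathpzc{M}_{\mathpzc{P},\mathpzc{Q}}|$, pinning $(f)_{\mathpzc{P}}=-(e)_{\mathpzc{P}}\neq\mathbf{0}$ to count the hatted sets, and subtraction for the tilde sets. The only (harmless) variation is that you compute $|\mathpzc{N}_{\mathpzc{P},\mathpzc{Q}}|=|\mathpzc{X}_{\mathpzc{P}}^c||\mathpzc{Z}_{\mathpzc{P},\mathpzc{Q}}|$ and $|\widehat{\mathpzc{N}}_{\mathpzc{P},\mathpzc{Q}}|=q^{m-|\mathpzc{P}|}|\mathpzc{Z}_{\mathpzc{P},\mathpzc{Q}}|$ directly, whereas the paper obtains these by first computing $|\mathpzc{M}_{\mathpzc{P},\mathpzc{Q}}\cup\mathpzc{N}_{\mathpzc{P},\mathpzc{Q}}|$ and $|\widehat{\mathpzc{M}}_{\mathpzc{P},\mathpzc{Q}}\cup\widehat{\mathpzc{N}}_{\mathpzc{P},\mathpzc{Q}}|$ and subtracting.
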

\begin{proof}
To prove the result,  we first note, by \eqref{M}, that $(e,f) \in \mathpzc{M}_{\mathpzc{P},\mathpzc{Q}}$ if and only if 
$$\supp(f) \cap \mathpzc{P} \neq \emptyset,~ \supp(e) \cap \mathpzc{P} \neq \emptyset \text{ and } \supp(e) \cap \mathpzc{Q} = \emptyset.$$
We see, by \eqref{XP} and \eqref{YPQ}, that $f$ and $e$ have $|\mathpzc{X}_{\mathpzc{P}}^c|$ and  $|\mathpzc{Y}_{\mathpzc{Q},\mathpzc{P}}|$ choices, respectively. From this and by Lemma \ref{Lem3}, it follows that $$|\mathpzc{M}_{\mathpzc{P},\mathpzc{Q}}|=|\mathpzc{X}_{\mathpzc{P}}^c||\mathpzc{Y}_{\mathpzc{Q},\mathpzc{P}}| = (q^m-q^{m-|\mathpzc{P}|})(q^{m-|\mathpzc{Q}|}-q^{m-|\mathpzc{P} \cup \mathpzc{Q}|}).$$ 

Now, to determine the cardinality of the set $\widehat{\mathpzc{M}}_{\mathpzc{P},\mathpzc{Q}},$ 
% we first observe that $(e)_{\mathpzc{P}} = -(f)_{\mathpzc{P}}\neq \mathbf{0}$ for all $(e,f) \in \widehat{\mathpzc{M}}_{\mathpzc{P},\mathpzc{Q}}.$ 
% Now, to determine the cardinality of the set $\widehat{\mathpzc{M}}_{\mathpzc{P},\mathpzc{Q}},$ we first assert that $\supp(e) \cap \mathpzc{P} = \supp(f) \cap \mathpzc{P}$ for all $(e,f) \in \widehat{\mathpzc{M}}_{\mathpzc{P},\mathpzc{Q}}.$ To prove this assertion, we suppose, on the contrary, that there exists $(e',f') \in \widehat{\mathpzc{M}}_{\mathpzc{P},\mathpzc{Q}}$ satisfying $\supp(e') \cap \mathpzc{P} \neq \supp(f') \cap \mathpzc{P}.$ Here, we assume, without any loss of generality, that there exists $\kappa \in \supp(e') \cap \mathpzc{P}$ such that $\kappa \notin \supp(f') \cap \mathpzc{P}.$ We further note that $(e'+f')_\kappa \neq 0$ and $\kappa \in \mathpzc{P},$ which gives $\supp(e'+f') \cap \mathpzc{P}\neq \emptyset.$ This, by \eqref{M1}, contradicts the assumption that $(e',f') \in \widehat{\mathpzc{M}}_{\mathpzc{P},\mathpzc{Q}}.$ This proves the assertion.
we first observe, by \eqref{M} and \eqref{M1},  that  $(e,f)\in \widehat{\mathpzc{M}}_{\mathpzc{P},\mathpzc{Q}}$ if and only if 
\begin{equation}\label{A1}
   (f)_{\mathpzc{P}\cap\mathpzc{Q}}=\mathbf{0}, ~
 (e)_{\mathpzc{Q}}=\mathbf{0} \text{~ and ~} (e)_{\mathpzc{P}\setminus (\mathpzc{P}\cap \mathpzc{Q})}=-(f)_{\mathpzc{P}\setminus (\mathpzc{P}\cap \mathpzc{Q})} \neq \mathbf{0} .
\end{equation}
One can easily see that $[m]=(\mathpzc{P}\setminus (\mathpzc{P}\cap \mathpzc{Q}))\cup (\mathpzc{P}\cap \mathpzc{Q})\cup ([m]\setminus \mathpzc{P})$ (a disjoint union). We will first count the vectors $f\in \mathbb{F}_q^m$ satisfying $(e,f)\in \widehat{\mathpzc{M}}_{\mathpzc{P},\mathpzc{Q}}$ for some $e\in \mathbb{F}_q^m.$ To do this, we see, 
 by \eqref{A1}, that we need to count the vectors $f\in \mathbb{F}_q^m$ satisfying 
 \begin{equation}\label{A2}
   (f)_{\mathpzc{P}\cap\mathpzc{Q}}=\mathbf{0} \text{ ~ and ~ }  (f)_{\mathpzc{P}\setminus (\mathpzc{P}\cap \mathpzc{Q})}\neq \mathbf{0}.
 \end{equation}
 One can easily see that such a vector $f\in \mathbb{F}_q^m$ has precisely $(q^{|\mathpzc{P}|-|\mathpzc{P}\cap\mathpzc{Q}|}-1)q^{m-|\mathpzc{P}|}$ distinct choices.  Further, for a given choice of $ f \in \mathbb{F}_q^m$ satisfying \eqref{A2}, we need to count the vectors $e \in \mathbb{F}_q^m$ such that $(e,f) \in\widehat{\mathpzc{M}}_{\mathpzc{P},\mathpzc{Q}}.$ For this, we see, by \eqref{A1}, that the desired vector $e \in \mathbb{F}_q^m$ must satisfy $
 (e)_{\mathpzc{Q}}=\mathbf{0} \text{~ and ~} (e)_{\mathpzc{P}\setminus (\mathpzc{P}\cap \mathpzc{Q})}=-(f)_{\mathpzc{P}\setminus (\mathpzc{P}\cap \mathpzc{Q})}.$
Since $[m] = \mathpzc{Q} \cup (\mathpzc{P}\setminus (\mathpzc{P}\cap \mathpzc{Q})) \cup ([m] \setminus (\mathpzc{P} \cup \mathpzc{Q}))$ (a disjoint union), we  see that such a vector $e \in \mathbb{F}_q^m$ has precisely $q^{m - |\mathpzc{P} \cup \mathpzc{Q}|}$ distinct choices. From this, it follows that $|\widehat{\mathpzc{M}}_{\mathpzc{P},\mathpzc{Q}}| = (q^{|\mathpzc{P}|-|\mathpzc{P} \cap \mathpzc{Q}|}-1)q^{2m-|\mathpzc{P}|-|\mathpzc{P} \cup \mathpzc{Q}|}.$ 

Furthermore, as $\widetilde{\mathpzc{M}}_{\mathpzc{P},\mathpzc{Q}} = \mathpzc{M}_{\mathpzc{P},\mathpzc{Q}} \setminus \widehat{\mathpzc{M}}_{\mathpzc{P},\mathpzc{Q}},$ we get $$|\widetilde{\mathpzc{M}}_{\mathpzc{P},\mathpzc{Q}}| = |\mathpzc{M}_{\mathpzc{P},\mathpzc{Q}}|-|\widehat{\mathpzc{M}}_{\mathpzc{P},\mathpzc{Q}}|=\big((q^{|\mathpzc{P}|}-1)(q^{|\mathpzc{P} \cup \mathpzc{Q}|-|\mathpzc{Q}|}-1)-(q^{|\mathpzc{P}|-|\mathpzc{P} \cap \mathpzc{Q}|}-1)\big)q^{2m-|\mathpzc{P}|-|\mathpzc{P} \cup \mathpzc{Q}|}.$$

We next observe, by
\eqref{M} and \eqref{N}, that $\mathpzc{M}_{\mathpzc{P},\mathpzc{Q}}\cap \mathpzc{N}_{\mathpzc{P},\mathpzc{Q}}=\emptyset$ and \begin{eqnarray*}\mathpzc{M}_{\mathpzc{P},\mathpzc{Q}} \cup \mathpzc{N}_{\mathpzc{P},\mathpzc{Q}} &= &\{ (e,f) \in (\mathbb{F}_q^{m})^2 : \supp(f) \cap \mathpzc{P} \neq \emptyset \text{ and } \supp(e) \cap \mathpzc{P} \neq \emptyset \}\\&=&\{ (e,f) \in (\mathbb{F}_q^{m})^2 : (f)_{\mathpzc{P}} \neq \mathbf{0} \text{ and } (e)_{\mathpzc{P}}\neq \mathbf{0}\}.\end{eqnarray*} By Lemma \ref{Lem3}, one can easily see that $|\mathpzc{M}_{\mathpzc{P},\mathpzc{Q}} \cup \mathpzc{N}_{\mathpzc{P},\mathpzc{Q}}|=(q^m-q^{m-|\mathpzc{P}|})^2.$
From this, we get $$|\mathpzc{N}_{\mathpzc{P},\mathpzc{Q}}| = |\mathpzc{M}_{\mathpzc{P},\mathpzc{Q}} \cup \mathpzc{N}_{\mathpzc{P},\mathpzc{Q}}| - |\mathpzc{M}_{\mathpzc{P},\mathpzc{Q}}|=(q^m-q^{m-|\mathpzc{P}|})(q^m - q^{m-|\mathpzc{P}|} - q^{m-|\mathpzc{Q}|} + q^{m-|\mathpzc{P} \cup \mathpzc{Q}|}).$$

Finally, to determine $|\widehat{\mathpzc{N}}_{\mathpzc{P},\mathpzc{Q}}|,$ we note, by \eqref{M1} and \eqref{N1}, that $\widehat{\mathpzc{M}}_{\mathpzc{P},\mathpzc{Q}}\cap \widehat{\mathpzc{N}}_{\mathpzc{P},\mathpzc{Q}}=\emptyset$ and \begin{eqnarray*}\widehat{\mathpzc{M}}_{\mathpzc{P},\mathpzc{Q}} \cup \widehat{\mathpzc{N}}_{\mathpzc{P},\mathpzc{Q}}& =& \{ (e,f) \in (\mathbb{F}_q^m)^2 : \supp(f) \cap \mathpzc{P} \neq \emptyset,~ \supp(e) \cap \mathpzc{P} \neq \emptyset \text{ and } \supp(e+f) \cap \mathpzc{P} = \emptyset  \}\\&=&\{ (e,f) \in (\mathbb{F}_q^m)^2 : (e)_{\mathpzc{P}}=-(f)_{\mathpzc{P}} \neq \mathbf{0} \}.\end{eqnarray*} Here, working as above, we obtain $|\widehat{\mathpzc{M}}_{\mathpzc{P},\mathpzc{Q}} \cup \widehat{\mathpzc{N}}_{\mathpzc{P},\mathpzc{Q}}| = (q^{|\mathpzc{P}|}-1)q^{2m-2|\mathpzc{P}|}.$ From this,  one can easily see that $$|\widehat{\mathpzc{N}}_{\mathpzc{P},\mathpzc{Q}}| = |\widehat{\mathpzc{M}}_{\mathpzc{P},\mathpzc{Q}} \cup \widehat{\mathpzc{N}}_{\mathpzc{P},\mathpzc{Q}}| - |\widehat{\mathpzc{M}}_{\mathpzc{P},\mathpzc{Q}}|=q^{m-|\mathpzc{P}|}(q^m-q^{m-|\mathpzc{P}|}-q^{m-|\mathpzc{Q}|}+q^{m-|\mathpzc{P} \cup \mathpzc{Q}|}).$$ Moreover, since $\widetilde{\mathpzc{N}}_{\mathpzc{P},\mathpzc{Q}} = \mathpzc{N}_{\mathpzc{P},\mathpzc{Q}} \setminus \widehat{\mathpzc{N}}_{\mathpzc{P},\mathpzc{Q}},$ we get $$|\widetilde{\mathpzc{N}}_{\mathpzc{P},\mathpzc{Q}}|=|\mathpzc{N}_{\mathpzc{P},\mathpzc{Q}}|-|\widehat{\mathpzc{N}}_{\mathpzc{P},\mathpzc{Q}}| = (q^m-2q^{m-|\mathpzc{P}|})(q^m - q^{m-|\mathpzc{P}|} - q^{m-|\mathpzc{Q}|} + q^{m-|\mathpzc{P} \cup \mathpzc{Q}|}).$$ This proves the lemma.
\end{proof}

\section{Spanning matrices of linear codes over $\frac{\mathbb{F}_q[u]}{\langle u^2 \rangle}$ with defining sets in $\mathcal{R}^m$ and their Gray images}\label{sec4}
Throughout this paper, let $m \geq 2$ be a fixed  integer. We will represent elements of $\mathcal{R}^{m}$ as $(d+ue,f),$ where $d,e,f \in \mathbb{F}_q^{m}.$ For a non-empty subset $\mathcal{D}$ of $\mathcal{R}^m,$  let us define 
\begin{equation}\label{Code}
\mathscr{C}_{\mathcal{D}} = \{ c_r:= \big( \langle r,s \rangle \big)_{s \in \mathcal{D}} : r \in \mathcal{R}^m \},
\end{equation}
where the map $\langle \cdot, \cdot \rangle$ is as defined by \eqref{IP}. Note that the code $\mathscr{C}_{\mathcal{D}}$ is a linear code of length $|\mathcal{D}|$ over $\frac{\mathbb{F}_q[u]}{\langle u^2 \rangle}.$ Furthermore, the code $\mathscr{C}_{\mathcal{D}}$ is unique up to permutation equivalence and is called the linear code over  $\frac{\mathbb{F}_q[u]}{\langle u^2 \rangle}$ with defining set $\mathcal{D}$ \cite{Mondal2024}.  In this section,  we will determine a spanning matrix of the code $\mathscr{C}_{\mathcal{D}}$ over $\frac{\mathbb{F}_q[u]}{\langle u^2 \rangle}$ with defining set $\mathcal{D} \subseteq \mathcal{R}^m,$ and subsequently, using this matrix, we will derive a spanning matrix of its Gray image $\Phi(\mathscr{C}_\mathcal{D})$ over $\mathbb{F}_q.$ To this end,  we first note that the finite field $\mathbb{F}_q$ can be embedded into the quasi-Galois ring $\frac{\mathbb{F}_q[u]}{\langle u^2 \rangle}.$ Let us define a map $\lambda: \mathcal{R}^m \rightarrow \big(\frac{\mathbb{F}_q[u]}{\langle u^2 \rangle} \big)^{2m}$ as \begin{equation}\label{lambda}\lambda(w_1+uw_2,w_3) = (w_1+uw_2,uw_3)\text{ ~for all ~}(w_1+uw_2,w_3) \in \mathcal{R}^m.\end{equation}
We further recall, by   \eqref{IP} and \eqref{Code}, that  the code $\mathscr{C}_{\mathcal{D}}$ can be expressed as
\begin{equation}\label{Code2}
\mathscr{C}_{\mathcal{D}} = \{ c_r= \big( (d+ue)\cdot(w_1+uw_2) +uf\cdot w_3 \big)_{(w_1+uw_2,w_3) \in \mathcal{D}} : r=(d+ue,f) \in \mathcal{R}^m \}.
\end{equation}
% Next, let $N_i$ denote the cardinality of the set $\mathcal{S}_i,$ and let $\mathcal{S}_i = \{s_{i,1},s_{i,2}, \ldots, s_{i,N_i} \}.$ 
Throughout this paper, let $M_{n_1\times n_2}(Y)$ denote the set of all $n_1 \times n_2$ matrices over the set $Y,$ where $n_1,n_2$ are positive integers. In the following theorem, we determine a spanning matrix of the linear code $\mathscr{C}_{\mathcal{D}}$ over $\frac{\mathbb{F}_q[u]}{\langle u^2 \rangle}.$
\begin{thm}\label{Th5}
For a non-empty subset $\mathcal{D} $  of $\mathcal{R}^m,$ the linear code $\mathscr{C}_{\mathcal{D}}$ over $\frac{\mathbb{F}_q[u]}{\langle u^2 \rangle}$ with defining set $\mathcal{D}$ has a spanning matrix $\mathcal{G} \in M_{2m \times |\mathcal{D}|}\big(\frac{\mathbb{F}_q[u]}{\langle u^2 \rangle}\big)$  whose columns are the vectors $\lambda(s)\in \big(\frac{\mathbb{F}_q[u]}{\langle u^2 \rangle}\big)^{2m},$ where $s$ runs over the  elements of  $\mathcal{D}.$ (Recall that the code $\mathscr{C}_{\mathcal{D}}$ is defined uniquely up to permutation equivalence.) 
% given by 
% \begin{equation}\label{SpanM1}
% \mathcal{G}=
%  \big[\lambda(s_1)^T ~~ \lambda(s_2)^T ~~ \cdots ~~ \lambda(s_N)^T\big],
% \end{equation} where $(\cdot)^T$ denotes the matrix transpose.
\end{thm}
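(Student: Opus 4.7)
My plan is a direct computation that rewrites the bilinear form $\langle r, s\rangle$ as a standard Euclidean inner product on the larger free module $\left(\frac{\mathbb{F}_q[u]}{\langle u^2\rangle}\right)^{2m}$, after which the theorem reads off from the definition of $\mathcal{G}$. Writing $r=(d+ue,f)\in\mathcal{R}^m$ and $s=(w_1+uw_2,w_3)\in\mathcal{D}$, I would first expand \eqref{IP} as
$$\langle r,s\rangle=(d+ue)\cdot(w_1+uw_2)+u(f\cdot w_3)=(d+ue)\cdot(w_1+uw_2)+f\cdot(uw_3),$$
where the second equality uses the embedding $\mathbb{F}_q\hookrightarrow\frac{\mathbb{F}_q[u]}{\langle u^2\rangle}$ to view $f\in\mathbb{F}_q^m$ as an element of $\left(\frac{\mathbb{F}_q[u]}{\langle u^2\rangle}\right)^m$. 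Grouping the two halves together, this becomes
$$\langle r,s\rangle=(d+ue,\,f)\cdot(w_1+uw_2,\,uw_3)=(d+ue,\,f)\cdot\lambda(s),$$
where the right-hand inner product is the form on $\left(\frac{\mathbb{F}_q[u]}{\langle u^2\rangle}\right)^{2m}$ defined by \eqref{INP2}, and $\lambda$ is as in \eqref{lambda}.

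Once this identification is in hand, the theorem falls out in two steps. First, letting $s$ run over $\mathcal{D}$, the identity says precisely that $c_r=(d+ue,\,f)\,\mathcal{G}$ when $(d+ue,f)$ is viewed as a row vector of length $2m$; as $r$ varies over $\mathcal{R}^m$, this shows $\mathscr{C}_{\mathcal{D}}$ is contained in the $\frac{\mathbb{F}_q[u]}{\langle u^2\rangle}$-span of the rows of $\mathcal{G}$. For the reverse inclusion, I would exhibit each row of $\mathcal{G}$ as an explicit codeword: taking $r_i=(\mathbf{e}_i,\mathbf{0})\in\mathcal{R}^m$ for $i\in[m]$ (with $\mathbf{e}_i\in\mathbb{F}_q^m$ the $i$-th standard basis vector and $\mathbf{0}\in\mathbb{F}_q^m$ the zero vector), a one-line application of \eqref{IP} shows that $c_{r_i}$ equals the $i$-th row of $\mathcal{G}$; similarly, taking $r_{m+j}=(\mathbf{0},\mathbf{e}_j)\in\mathcal{R}^m$ for $j\in[m]$ reproduces the $(m+j)$-th row. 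Since $\mathscr{C}_{\mathcal{D}}$ is closed under $\frac{\mathbb{F}_q[u]}{\langle u^2\rangle}$-linear combinations, the row span of $\mathcal{G}$ is contained in $\mathscr{C}_{\mathcal{D}}$, completing the argument.

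I anticipate no serious obstacle; the heart of the proof is really the single identification in the first displayed equation, after which everything reduces to routine bookkeeping. The only technical point requiring slight care is the legitimacy of moving $u$ across the pairing $f\cdot w_3$, namely $u(f\cdot w_3)=f\cdot(uw_3)$, which is immediate from the distributivity of scalar multiplication in the free module $\left(\frac{\mathbb{F}_q[u]}{\langle u^2\rangle}\right)^m$. One may also wish to note explicitly that the final $m$ rows of $\mathcal{G}$ take values in $u\cdot\frac{\mathbb{F}_q[u]}{\langle u^2\rangle}$, so that multiplying them by any element of $u\cdot\frac{\mathbb{F}_q[u]}{\langle u^2\rangle}$ yields $\mathbf{0}$; this explains why the parameter space $\mathcal{R}^m$ (with $f\in\mathbb{F}_q^m$ rather than in the full ring) already surjects onto the row span.
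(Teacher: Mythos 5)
Your proposal is correct and follows essentially the same route as the paper: the heart of both arguments is the identification $\langle r,s\rangle=(d+ue,f)\cdot\lambda(s)$ obtained by embedding $\mathbb{F}_q^m$ into $\big(\frac{\mathbb{F}_q[u]}{\langle u^2\rangle}\big)^m$, which gives $c_r=(d+ue,f)\,\mathcal{G}$ and hence $\mathscr{C}_{\mathcal{D}}\subseteq$ row span of $\mathcal{G}$. The only (harmless) difference is in the reverse inclusion: you exhibit each row of $\mathcal{G}$ as a codeword $c_{(\mathbf{e}_i,\mathbf{0})}$ or $c_{(\mathbf{0},\mathbf{e}_j)}$ and invoke the (already established) fact that $\mathscr{C}_{\mathcal{D}}$ is an $\frac{\mathbb{F}_q[u]}{\langle u^2\rangle}$-submodule, whereas the paper takes an arbitrary element $z\mathcal{G}$ of the row span and uses that the last $m$ rows of $\mathcal{G}$ are multiples of $u$ to replace $z=(x+uh,y+ug)$ by $(x+uh,y)\in\mathcal{R}^m$ — the same observation you record in your closing aside.
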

\begin{proof}
To prove the result, let  $\mathscr{D}$ be a linear code  over $\frac{\mathbb{F}_q[u]}{\langle u^2 \rangle}$ with a spanning matrix $\mathcal{G},$ \textit{i.e.,}
\begin{equation}\label{Di}
    \mathscr{D} = \bigg\{ z \mathcal{G} : z \in \bigg(\frac{\mathbb{F}_q[u]}{\langle u^2 \rangle}\bigg)^{2m} \bigg\}.
\end{equation}
 We assert that $\mathscr{D} = \mathscr{C}_{\mathcal{D}}.$ 
 To prove this assertion, let us take $c \in \mathscr{C}_{\mathcal{D}}.$ Here, we observe, by \eqref{Code2}, that there exists  $r=(d+ue,f) \in \mathcal{R}^m$ such that $$c= \Big( (d+ue)\cdot(w_1+uw_2) +uf\cdot w_3 \Big)_{(w_1+uw_2,w_3) \in \mathcal{D}}.$$ By embedding $\mathbb{F}_q^m$ into $\Big(\frac{\mathbb{F}_q[u]}{\langle u^2 \rangle}\Big)^{m},$ we may regard $r$ as an element of $\Big(\frac{\mathbb{F}_q[u]}{\langle u^2 \rangle}\Big)^{2m}.$ This implies that  $c=r \mathcal{G},$ which,  by \eqref{Di}, further implies that $c \in \mathscr{D}.$ This shows that $\mathscr{C}_{\mathcal{D}} \subseteq \mathscr{D}.$

On the other hand, let $b \in \mathscr{D}.$ Here, by \eqref{Di}, we see that there exists an element $z=(x+uh,y+ug) \in \Big(\frac{\mathbb{F}_q[u]}{\langle u^2 \rangle}\Big)^{2m}$ such that $x,y,h,g \in \mathbb{F}_q^m$ and $b = z\mathcal{G}.$ We further note, by  \eqref{lambda}, that the last $m$ rows of $\mathcal{G}$ are multiples of $u.$ This implies that $b=z \mathcal{G} = (x+uh,y) \mathcal{G} = \Big( (x+uh)\cdot(w_1+uw_2) +uy\cdot w_3 \Big)_{(w_1+uw_2,w_3) \in \mathcal{D}},$ which, by \eqref{Code2}, gives $\mathscr{D} \subseteq \mathscr{C}_{\mathcal{D}}.$ This proves the assertion that $ \mathscr{C}_{\mathcal{D}}=\mathscr{D}.$  \end{proof}
In the following lemma, we establish a connection between spanning sets of a linear code   over $\frac{\mathbb{F}_q[u]}{\langle u^2 \rangle}$ and its Gray image under the map $\Phi.$
\begin{lemma}\label{Lem7}
Let $\mathscr{D}$ be a linear code of length $\mathrm{n}$ over $\frac{\mathbb{F}_q[u]}{\langle u^2 \rangle}$ with a spanning set $\mathcal{I} = \{R_1, R_2, \ldots, R_k \}.$ The Gray image $\Phi(\mathscr{D})$ is a linear code of length $2\mathrm{n}$ over $\mathbb{F}_q$ with a spanning set $\mathcal{J}=\{\Phi(R_1), \Phi(R_2), \ldots, \Phi(R_k),\break \Phi(uR_1), \Phi(uR_2), \ldots, \Phi(uR_k) \}.$
\end{lemma}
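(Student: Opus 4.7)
The plan is to exploit the $\mathbb{F}_q$-linearity of the Gray map $\Phi$ (noted just before equation~\eqref{GI}) together with the module-theoretic description of $\mathscr{D}.$ First I would observe that, since $\Phi : \big(\frac{\mathbb{F}_q[u]}{\langle u^2 \rangle}\big)^{\mathrm{n}} \rightarrow \mathbb{F}_q^{2\mathrm{n}}$ is an $\mathbb{F}_q$-linear isomorphism, the image $\Phi(\mathscr{D})$ of any $\frac{\mathbb{F}_q[u]}{\langle u^2\rangle}$-submodule $\mathscr{D}$ is automatically an $\mathbb{F}_q$-linear subspace of $\mathbb{F}_q^{2\mathrm{n}},$ and hence a linear code of length $2\mathrm{n}$ over $\mathbb{F}_q.$ This takes care of the linearity assertion.

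The main point is then to verify that $\mathcal{J}$ actually spans $\Phi(\mathscr{D})$ over $\mathbb{F}_q.$ For the inclusion $\mathrm{Span}_{\mathbb{F}_q}(\mathcal{J}) \subseteq \Phi(\mathscr{D}),$ I would simply note that $R_i \in \mathscr{D}$ and $uR_i \in \mathscr{D}$ (the latter since $\mathscr{D}$ is closed under scalar multiplication by $u \in \frac{\mathbb{F}_q[u]}{\langle u^2 \rangle}$), so every vector in $\mathcal{J}$ lies in $\Phi(\mathscr{D}),$ and $\Phi(\mathscr{D})$ is $\mathbb{F}_q$-linear.

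For the reverse inclusion, take an arbitrary $c \in \mathscr{D}.$ Since $\mathcal{I}$ spans $\mathscr{D}$ as a module over $\frac{\mathbb{F}_q[u]}{\langle u^2 \rangle},$ I can write
\begin{equation*}
c = \sum_{i=1}^{k} \alpha_i R_i \quad \text{with each } \alpha_i \in \frac{\mathbb{F}_q[u]}{\langle u^2 \rangle}.
\end{equation*}
The key step is then to decompose each coefficient uniquely as $\alpha_i = a_i + u b_i$ with $a_i, b_i \in \mathbb{F}_q,$ which is possible because every element of $\frac{\mathbb{F}_q[u]}{\langle u^2 \rangle}$ has this form. Invoking $\mathbb{F}_q$-linearity of $\Phi$ componentwise, I obtain
\begin{equation*}
\Phi(c) = \sum_{i=1}^{k} \Phi(a_i R_i + b_i (uR_i)) = \sum_{i=1}^{k} \bigl( a_i \Phi(R_i) + b_i \Phi(uR_i) \bigr),
\end{equation*}
which expresses $\Phi(c)$ as an $\mathbb{F}_q$-linear combination of elements of $\mathcal{J}.$ This completes the two-sided inclusion.

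There is no real obstacle here, only one subtlety to keep straight: $\mathscr{D}$ is a module, not an $\mathbb{F}_q$-vector space, so one cannot directly pull the scalars $\alpha_i$ through $\Phi$ (the map $\Phi$ is not $\frac{\mathbb{F}_q[u]}{\langle u^2 \rangle}$-linear, as is clear from $\Phi(u) = (1,1) \neq u \cdot \Phi(1)$). The trick is precisely to break $\alpha_i = a_i + ub_i$ into its $\mathbb{F}_q$-parts first, so that only $\mathbb{F}_q$-scalars need to be extracted from $\Phi.$ This is exactly why $\mathcal{J}$ contains not only $\Phi(R_i)$ but also $\Phi(uR_i)$: the latter absorbs the contribution of the $u b_i$ piece of each $\alpha_i.$
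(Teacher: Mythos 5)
Your proposal is correct and follows essentially the same route as the paper's proof: both decompose each coefficient as $\alpha_i = a_i + ub_i$ with $a_i, b_i \in \mathbb{F}_q$ and use the $\mathbb{F}_q$-linearity of $\Phi$ to show the two-sided inclusion between $\Phi(\mathscr{D})$ and the $\mathbb{F}_q$-span of $\mathcal{J}.$ Your closing remark about why the $\Phi(uR_i)$ are needed is a nice explicit articulation of the point that is implicit in the paper's argument.
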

\begin{proof}
To prove the result, we first recall, from Section \ref{Prelim3}, that the Gray map $\Phi$  induces an $\mathbb{F}_q$-linear isometry between the spaces $\big(\big(\frac{\mathbb{F}_q[u]}{\langle u^2 \rangle}\big)^\mathrm{n}, wt_L\big)$ and $\big(\mathbb{F}_q^{2\mathrm{n}}, wt_H\big).$
Now, let $\mathscr{B}$ be a linear code of length $2\mathrm{n}$ over $\mathbb{F}_q$ with a spanning set
$\mathcal{J} = \{\Phi(R_1), \Phi(R_2), \ldots, \Phi(R_k), \Phi(uR_1), \Phi(uR_2), \ldots, \Phi(uR_k)\}.$

We first note, for $1 \leq i \leq k,$ that both $R_i, uR_i \in \mathscr{D},$ and hence both $\Phi(R_i), \Phi(uR_i) \in \Phi(\mathscr{D}).$ This implies that $\mathscr{B} \subseteq \Phi(\mathscr{D}).$
On the other hand, let us take $x \in \Phi(\mathscr{D}).$ Then there exists a codeword $y \in \mathscr{D}$ such that $x = \Phi(y).$ Since $\mathcal{I}$ is a spanning set of the code $\mathscr{D},$ the codeword $y$ can be expressed as
\begin{equation}\label{y}
   y = (a_1 + u b_1) R_1 + (a_2 + u b_2) R_2 + \cdots + (a_k + u b_k) R_k 
\end{equation}
for some $a_1,a_2,\ldots,a_k, b_1, b_2, \ldots, b_k \in \mathbb{F}_q.$  Using the fact that the Gray map $\Phi$ is $\mathbb{F}_q$-linear and equation \eqref{y}, we obtain
\begin{equation*}
   x=\Phi(y) = a_1 \Phi(R_1) + b_1 \Phi(uR_1) + a_2 \Phi(R_2) + b_2 \Phi(uR_2) + \cdots + a_k \Phi(R_k) + b_k \Phi(uR_k). 
\end{equation*}
This implies that $x = \Phi(y)$ belongs to  the $\mathbb{F}_q$-linear span of $\mathcal{J},$ and hence $x \in \mathscr{B}.$ Thus, we have $\Phi(\mathscr{D}) \subseteq \mathscr{B}.$ 
\\This proves  that $\Phi(\mathscr{D}) = \mathscr{B}.$ 
\end{proof}

Now, let us define two maps $\pi_1$ and $\pi_2$ from $\mathcal{R}^m$ into $ \mathbb{F}_q^{3m}$  as
\begin{equation}\label{pi}
  \pi_1(w_1+uw_2,w_3) =(w_2,w_3,w_1) \text{ and }  \pi_2(w_1+uw_2,w_3) =(w_1+w_2,w_3,w_1) 
\end{equation}for all $(w_1+uw_2,w_3)\in\mathcal{R}^{m}.$ In the following theorem, we determine a spanning matrix of the Gray image $\Phi(\mathscr{C}_{\mathcal{D}})$  of a linear code $\mathscr{C}_{\mathcal{D}}$ over $\frac{\mathbb{F}_q[u]}{\langle u^2 \rangle},$ with defining set $\mathcal{D}\subseteq \mathcal{R}^m,$   in terms of the set $\mathcal{D}$  and  the maps $\pi_1$ and  $\pi_2.$ 
\begin{thm}\label{Th6} Let $\mathcal{D}$ be a non-empty subset of $\mathcal{R}^m,$ and let $\mathscr{C}_{\mathcal{D}}$ be the corresponding linear code over $\frac{\mathbb{F}_q[u]}{\langle u^2\rangle}$ with defining set $\mathcal{D}.$ Then the Gray image $\Phi(\mathscr{C}_{\mathcal{D}})$ is a linear code of  length $2|\mathcal{D}|$ over $\mathbb{F}_q$ with a spanning matrix $\mathscr{G}\in M_{3m \times 2|\mathcal{D}|}(\mathbb{F}_q) $ whose  columns consist of the vectors $\pi_1(s)\in \mathbb{F}_q^{3m}$ and $\pi_2(s)\in \mathbb{F}_q^{3m}$ arranged in consecutive odd and even positions, respectively, as $s$ ranges over the elements of $\mathcal{D}.$

\end{thm}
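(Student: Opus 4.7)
The plan is to directly compute the Gray image of a generic codeword of $\mathscr{C}_{\mathcal{D}}$, interpret the resulting vector as a matrix-vector product with parameters ranging over $\mathbb{F}_q^{3m}$, and read off the columns of that matrix to match the claimed forms $\pi_1(s)$ and $\pi_2(s)$.

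First I would expand the inner product. For $r=(d+ue,f)\in\mathcal{R}^m$ with $d,e,f\in\mathbb{F}_q^m$ and an enumeration $\mathcal{D}=\{s_1,\ldots,s_{|\mathcal{D}|}\}$ with $s_i=(w_1^{(i)}+uw_2^{(i)},w_3^{(i)})$, equation \eqref{IP} together with $u^2=0$ gives
\[
\langle r,s_i\rangle = d\cdot w_1^{(i)} + u\bigl(e\cdot w_1^{(i)} + d\cdot w_2^{(i)} + f\cdot w_3^{(i)}\bigr).
\]
Applying the coordinate-wise Gray map $\Phi$ and placing the two outputs at positions $2i-1$ and $2i$, the $(2i-1)$-th and $(2i)$-th entries of $\Phi(c_r)$ become
\[
\beta_i := e\cdot w_1^{(i)} + d\cdot w_2^{(i)} + f\cdot w_3^{(i)}, \qquad \beta_i+\alpha_i := (d+e)\cdot w_1^{(i)} + d\cdot w_2^{(i)} + f\cdot w_3^{(i)},
\]
where $\alpha_i=d\cdot w_1^{(i)}$.

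Next I would observe that the assignment $r\mapsto\Phi(c_r)$ is $\mathbb{F}_q$-linear in the triple $(d,f,e)\in(\mathbb{F}_q^m)^3\cong\mathbb{F}_q^{3m}$, so there is a unique matrix $\mathscr{G}\in M_{3m\times 2|\mathcal{D}|}(\mathbb{F}_q)$ with $\Phi(c_r)=(d,f,e)\,\mathscr{G}$. As $r$ varies over $\mathcal{R}^m$, the triple $(d,f,e)$ varies over all of $\mathbb{F}_q^{3m}$, so the rows of $\mathscr{G}$ span $\Phi(\mathscr{C}_{\mathcal{D}})$ over $\mathbb{F}_q$. To identify the columns, I would rewrite $\beta_i$ and $\alpha_i+\beta_i$ as dot products with $(d,f,e)$:
\[
\beta_i = (d,f,e)\cdot(w_2^{(i)},w_3^{(i)},w_1^{(i)})^{T}, \qquad \alpha_i+\beta_i = (d,f,e)\cdot(w_1^{(i)}+w_2^{(i)},w_3^{(i)},w_1^{(i)})^{T}.
\]
By \eqref{pi}, the first column is $\pi_1(s_i)$ and the second is $\pi_2(s_i)$, which is exactly the claimed description.

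There is no deep obstacle; the entire argument is bookkeeping. The one subtlety is choosing the parameter order $(d,f,e)$ rather than $(d,e,f)$, since this is the ordering under which the three $m$-blocks of each column match $\pi_1$ and $\pi_2$ as defined in \eqref{pi}. As an alternative derivation I could instead apply Lemma \ref{Lem7} to the spanning matrix $\mathcal{G}$ of Theorem \ref{Th5}: its last $m$ rows are already $u$-multiples, so their $\Phi(u\cdot{-})$ images vanish; the remaining $3m$ nonzero vectors $\Phi(R_j)$ and $\Phi(uR_j)$ produce the same matrix after the appropriate row permutation. Either route yields the stated $\mathscr{G}$.
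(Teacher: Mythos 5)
Your proposal is correct. The main route you take — expanding $\langle r,s_i\rangle=d\cdot w_1^{(i)}+u\bigl(e\cdot w_1^{(i)}+d\cdot w_2^{(i)}+f\cdot w_3^{(i)}\bigr)$, applying $\Phi$ entrywise, and parametrizing $\Phi(c_r)=(d,f,e)\,\mathscr{G}$ with $(d,f,e)$ running over all of $\mathbb{F}_q^{3m}$ — is a genuinely more direct argument than the paper's: it identifies the row space of $\mathscr{G}$ with $\Phi(\mathscr{C}_{\mathcal{D}})$ in one step and reads off the columns $\pi_1(s_i),\pi_2(s_i)$ from the two linear forms, without invoking any auxiliary results. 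The paper instead factors the work through Theorem \ref{Th5} (a spanning matrix $\mathcal{G}$ of $\mathscr{C}_{\mathcal{D}}$ over $\frac{\mathbb{F}_q[u]}{\langle u^2\rangle}$ with columns $\lambda(s)$) and Lemma \ref{Lem7} (pushing a spanning set $\{R_j\}$ forward to $\{\Phi(R_j),\Phi(uR_j)\}$), then discards the images that vanish because the last $m$ rows of $\mathcal{G}$ are $u$-multiples; this is exactly your sketched alternative, and your bookkeeping there (retaining the $3m$ vectors $\Phi(R_1),\dots,\Phi(R_{2m}),\Phi(uR_1),\dots,\Phi(uR_m)$) is the careful way to state it. The paper's modular route buys reusability of Theorem \ref{Th5} and Lemma \ref{Lem7}, which it needs independently; your direct computation buys brevity and makes the choice of row ordering $(d,f,e)$ — the one subtlety you correctly flag — transparent. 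The only cosmetic imprecision is calling the surviving $3m$ vectors ``nonzero'' (some rows of $\mathscr{G}$ can be zero, e.g.\ when a coordinate vanishes on all of $\mathcal{D}$), but nothing in the argument uses that, so it is harmless.
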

\begin{proof}
To prove the result, we see, by Theorem \ref{Th5}, that the code $\mathscr{C}_{\mathcal{D}}$ has a spanning matrix $\mathcal{G}$ whose columns are the vectors $\lambda(s) \in \big(\frac{\mathbb{F}_q[u]}{\langle u^2 \rangle}\big)^{2m},$ where $s$ runs over the elements of the defining set $\mathcal{D}.$  Now, let $R_j$ denote the $j$-th row of the matrix $\mathcal{G}$ for $1 \leq j \leq 2m.$ We note, by \eqref{lambda}, that the row $R_j$ is a multiple of $u$ for all $m+1 \leq j \leq 2m.$
Since the rows of the matrix $\mathcal{G}$ form a spanning set of the code $\mathscr{C}_{\mathcal{D}},$ we see, by Lemma \ref{Lem7}, that the set
$T = \{ \Phi(R_1),\Phi(R_2), \ldots, \Phi(R_m), \Phi(u R_1),\Phi(u R_2), \ldots, \Phi(u R_m) \}$
forms a spanning set of the code $\Phi(\mathscr{C}_{\mathcal{D}}).$ 
Moreover, one can easily see, by \eqref{pi}, that the rows of the matrix $\mathscr{G}$ are precisely the elements of the set $T.$ This shows that $\mathscr{G}$ is a spanning matrix of the code $\Phi(\mathscr{C}_{\mathcal{D}}).$ 
\end{proof}

In the following section, we will  construct four infinite families of linear codes over  $\frac{\mathbb{F}_q[u]}{\langle u^2 \rangle},$ whose defining sets are certain subsets of $\mathcal{R}^m.$ We will also explicitly determine their parameters and Lee weight distributions.

\section{Four infinite families of linear codes over $\frac{\mathbb{F}_q[u]}{\langle u^2 \rangle}$}\label{Newcodes}
Throughout this paper, let $\mathpzc{A},\ \mathpzc{B}$ and $\mathpzc{C}$ be non-empty subsets of $[m] := \{1, 2, \ldots, m\}.$ Let $\Delta_{\mathpzc{A}},$ $\Delta_{\mathpzc{B}}$ and $\Delta_{\mathpzc{C}}$ be the simplicial complexes of $\mathbb{F}_q^m$ with supports $\mathpzc{A},$ $\mathpzc{B}$ and $\mathpzc{C},$ respectively. Now, let us define the following four subsets of $\mathcal{R}^m$: 
\begin{eqnarray}\label{S1}
\mathcal{S}_1 &=& \{(w_1+uw_2,w_3)\in \mathcal{R}^m : w_1 \in \Delta_{\mathpzc{A}}, w_2 \in \Delta_{\mathpzc{B}},w_3 \in \Delta_{\mathpzc{C}}^c  \text{ with }|\mathpzc{C}|<m\},\\
\label{S2}
\mathcal{S}_2 &=& \{(w_1+uw_2,w_3)\in \mathcal{R}^m : w_1 \in \Delta_{\mathpzc{A}}^c, w_2 \in \Delta_{\mathpzc{B}},w_3 \in \Delta_{\mathpzc{C}} \text{ with }|\mathpzc{A}|<m\},\\
\label{S3}
\mathcal{S}_3 &=& \{(w_1+uw_2,w_3)\in \mathcal{R}^m : w_1 \in \Delta_{\mathpzc{A}}, w_2 \in \Delta_{\mathpzc{B}}^c, w_3 \in \Delta_{\mathpzc{C}} \text{ with }|\mathpzc{B}|<m\}, 
\text{ and }\\
\label{S4}
\mathcal{S}_4 &=& \{(w_1+uw_2,w_3)\in \mathcal{R}^m : w_1 \in \Delta_{\mathpzc{A}}^\ast, w_2 \in \Delta_{\mathpzc{B}}, w_3 \in \Delta_{\mathpzc{C}}^c \text{ with }|\mathpzc{C}|<m \text{ and } |\mathpzc{A}|\geq 2\}.
\end{eqnarray}
\begin{remark}\label{Remark3.01}
The exclusion of $\{\mathbf{0}\}$ from $\Delta_{\mathpzc{A}}$ in the definition of
$\mathcal{S}_4$ is imposed to avoid vectors whose first component is of the form
$uw_2,$ which leads to codes with parameters distinct from those obtained using the defining set $\mathcal{S}_1$ (see Theorems \ref{Th1} and \ref{Th4}).  The defining set $\mathcal{S}_4$ will also play a role  in Section \ref{Sec5}, where  a certain subset of
$\mathcal{S}_4$ is used  to construct projective codes.
\end{remark}

For $1 \leq i \leq 4,$ let $\mathscr{C}_{\mathcal{S}_i}$ be a linear code of length $|\mathcal{S}_i|$ over $\frac{\mathbb{F}_q[u]}{\langle u^2 \rangle}$ with defining set $\mathcal{S}_i \subseteq \mathcal{R}^m,$ as defined by \eqref{Code2}. Let us define a map $\mathpzc{T}_i : \mathcal{R}^m \rightarrow \mathscr{C}_{\mathcal{S}_i}$ as 
\begin{equation*}
\mathpzc{T}_i(r) =c_r= \big( \langle r,s \rangle \big)_{s \in \mathcal{S}_i}
\end{equation*}
for all $r \in \mathcal{R}^m.$
Since the map $\langle \cdot, \cdot \rangle$ (as defined by \eqref{IP}) is a bilinear form, the map $\mathpzc{T}_i$ is a surjective $\big(\frac{\mathbb{F}_q[u]}{\langle u^2 \rangle}\big)$-module homomorphism. This implies that
\begin{equation*}
|\mathscr{C}_{\mathcal{S}_i}| = \frac{q^{3m}}{|\ker(\mathpzc{T}_i)|},
\end{equation*} where $\ker(\mathpzc{T}_i)$ denotes the kernel of the map $\mathpzc{T}_i.$ Moreover, as noted in Section \ref{Prelim3}, the image $\Phi(\mathscr{C}_{\mathcal{S}_i})$ is a linear code of length $2|\mathcal{S}_i|$ over $\mathbb{F}_q.$

In a recent work, Mondal and Lee \cite{Mondal2024} focused on the case   $q = 2$ and investigated the linear codes $\mathscr{C}_{\mathcal{S}_2}$ and $\mathscr{C}_{\mathcal{S}_3}$ over $\frac{\mathbb{F}_2[u]}{\langle u^2 \rangle}.$ In this paper, we will consider the case $q \geq 2$ and  extend the techniques employed by Mondal and Lee \cite{Mondal2024} to study the linear codes $\mathscr{C}_{\mathcal{S}_1},$ $\mathscr{C}_{\mathcal{S}_2},$ $\mathscr{C}_{\mathcal{S}_3}$ and $\mathscr{C}_{\mathcal{S}_4}$ over  $\frac{\mathbb{F}_q[u]}{\langle u^2 \rangle}.$   We will also determine their parameters and Lee weight distributions,
 and examine their Gray images. 

First of all, we see that the finite field  $\mathbb{F}_{q^m}$ of order $q^m$ can also be viewed as a vector space of dimension $m$ over $\mathbb{F}_q.$ More precisely, for an ordered basis $\bm{\alpha} = \{\alpha_1<\alpha_2<\cdots<\alpha_m\}$  of $\mathbb{F}_{q^m}$ over $\mathbb{F}_q,$ the map $\psi: \mathbb{F}_{q}^m \rightarrow \mathbb{F}_{q^m},$ defined as
$$\psi(v) = \sum\limits_{i=1}^m \alpha_i v_i \text{ for all } v=(v_1,v_2,\ldots,v_m) \in \mathbb{F}_q^m,$$ is an $\mathbb{F}_q$-linear isomorphism.  We further see, by Theorem 2.29 of \cite{Lidl1994}, that there exists a unique trace dual ordered basis $\bm{\beta} = \{\beta_1< \beta_2< \cdots< \beta_m \}$ of $\mathbb{F}_{q^m}$ over $\mathbb{F}_q,$ $\ie$ $Tr(\alpha_i\beta_j)=\delta_{ij}$ for $1 \leq i,j \leq m,$ where $\delta_{ij}$ denotes the Kronecker delta function and $Tr(\cdot)$ denotes  the trace function from $\mathbb{F}_{q^m}$ onto $\mathbb{F}_q.$ We next observe that each element $w \in \mathbb{F}_{q^m}$ can be uniquely written as  $w = \sum\limits_{i=1}^m Tr(w\beta_i)\alpha_i.$ Accordingly,  the support of the element $w \in \mathbb{F}_{q^m}$ with respect to the 
ordered basis $\bm{\alpha}$ is defined as $\mathcal{S}_{\bm{\alpha}}(w) = \{ i \in [m] : Tr(w\beta_i) \neq 0 \}.$ 

Moreover, for a non-empty subset $\mathcal{D}$ of $\mathcal{R}^m, $ let us define the corresponding subset of $\mathtt{R}=\frac{\mathbb{F}_{q^m}[u]}{\langle u^2 \rangle} \times \mathbb{F}_{q^m}$ as follows: $$\psi(\mathcal{D})=\big\{\big(\psi(w_1)+u\,\psi(w_2), \psi(w_3)\big): (w_1+uw_2, w_3) \in \mathcal{D} \big\},$$ where $\frac{\mathbb{F}_{q^m}[u]}{\langle u^2 \rangle}$ is the quasi-Galois ring with  maximal ideal $\langle u \rangle$ of nilpotency index $2$ and  residue field $\mathbb{F}_{q^m}.$ Note that $\mathtt{R}=\frac{\mathbb{F}_{q^m}[u]}{\langle u^2 \rangle} \times \mathbb{F}_{q^m}$ can be viewed as a module over $\frac{\mathbb{F}_{q^m}[u]}{\langle u^2 \rangle}.$ From now on, we will represent the elements of $\mathtt{R}$ as $(\theta_1+u\theta_2,\theta_3),$ where $\theta_1,\theta_2,\theta_3 \in \mathbb{F}_{q^m}.$ We further define \begin{equation*}
\mathcal{C}_{\psi(\mathcal{D})}=\left\{\big(Tr( \theta_1\delta_1)+u \, Tr( \theta_2\delta_1+ \theta_1\delta_2+ \theta_3\delta_3)\big)_{(\delta_1+u\delta_2, \delta_3) \in \psi(\mathcal{D})}: (\theta_1+u \theta_2,\theta_3) \in \mathtt{R} \right\}.\end{equation*} It is easy to see that the code $\mathcal{C}_{\psi(\mathcal{D})}$ is a linear code over $\frac{\mathbb{F}_q[u]}{\langle u^2 \rangle}.$ Furthermore, working as in Section II(A) of Luo and Cao \cite{Luo2021}, one can easily see that \begin{equation*}
\mathscr{C}_{\mathcal{D}}=\mathcal{C}_{\psi(\mathcal{D})}.\end{equation*}

In view of this observation,  throughout this section,  we shall identify the elements of $\mathbb{F}_q^m$ with those of $\mathbb{F}_{q^m},$  via the $\mathbb{F}_q$-linear isomorphism $\psi.$ Accordingly,  the simplicial complex $\Delta_A $ of $\mathbb{F}_{q}^m,$ with support $A \subseteq [m],$ will be regarded as a subset of $\mathbb{F}_{q^m}$ and referred to as a simplicial complex of $\mathbb{F}_{q^m}.$ Under this identification, the corresponding simplicial complex $\Delta_A$ of $\mathbb{F}_{q^m}$ has  the same support $A$ with respect to the ordered basis $\bm{\alpha}.$  Unless stated otherwise, we shall henceforth consider the support of a simplicial complex of $\mathbb{F}_{q^m}$ with respect to the ordered basis $\bm{\alpha}.$ Moreover, we see, for $1 \leq i \leq 4,$ that the code $\mathscr{C}_{\mathcal{S}_i}$ can be viewed as the image of the surjective $\big(\frac{\mathbb{F}_{q^m}[u]}{\langle u^2 \rangle}\big)$-linear homomorphism $\mu_i$ from $\mathtt{R} $ onto $\mathscr{C}_{\mathcal{S}_i},$ defined by \begin{equation*}\label{mu}\mu_i(\theta)=\mathtt{c}_{\theta}:= \Big(Tr( \theta_1 \delta_1) + u \, Tr( \theta_2 \delta_1 +  \theta_1 \delta_2 + \theta_3 \delta_3)\Big)_{(\delta_1+u\delta_2, \delta_3) \in \psi(\mathcal{S}_i)}\end{equation*} for all $\theta=(\theta_1+u\,\theta_2, \theta_3) \in \mathtt{R}$ with $\theta_1,\theta_2,\theta_3 \in \mathbb{F}_{q^m}.$ This implies, for $1 \leq i \leq 4,$ that
\begin{equation}\label{Nsize}
|\mathscr{C}_{\mathcal{S}_i}| = \frac{q^{3m}}{|\ker(\mu_i)|},
\end{equation} where $\ker(\mu_i)$ denotes the kernel of the map $\mu_i.$  Regardless of whether elements of $\mathbb{F}_q^m$ are identified with those of $\mathbb{F}_{q^m},$ we shall henceforth denote the linear codes over $\frac{\mathbb{F}_q[u]}{\langle u^2 \rangle}$ with defining sets $\mathcal{S}_1,$ $\mathcal{S}_2,$ $\mathcal{S}_3$ and $\mathcal{S}_4$ by $\mathscr{C}_{\mathcal{S}_1},$ $\mathscr{C}_{\mathcal{S}_2},$ $\mathscr{C}_{\mathcal{S}_3}$ and $\mathscr{C}_{\mathcal{S}_4},$ respectively.

Next, let $M $ be a $k$-dimensional $\mathbb{F}_q$-linear subspace of $\mathbb{F}_{q^m}.$ The trace dual of $M,$ denoted by $M^{\perp_{Tr}},$ is defined as $$M^{\perp_{Tr}} = \{w \in \mathbb{F}_{q^m} : Tr(wb)=0 \text{ for all } b\in M\}.$$
We note, by Proposition 2.4 of \cite{Grove2008}, that $M^{\perp_{Tr}}$ is an $(m-k)$-dimensional $\mathbb{F}_q$-linear subspace of $\mathbb{F}_{q^m}.$ We next observe the following:
\begin{remark}\label{KRK}Let $\Delta_{\mathpzc{P}}$ be the simplicial complex of $\mathbb{F}_{q^m}$ with support $\mathpzc{P} \subseteq [m].$ Here, one can easily see that $\Delta_{\mathpzc{P}}$ has a basis $\{\alpha_i : i\in \mathpzc{P} \}$ and its trace dual $\Delta_{\mathpzc{P}}^{\perp_{Tr}}$ has a basis $\{\beta_j : j\in [m]\setminus \mathpzc{P} \}.$ From this, it follows that $b\in \Delta_{\mathpzc{P}}^{\perp_{Tr}}$ if and only if $\mathcal{S}_{\bm{\beta}}(b) \cap \mathpzc{P} = \emptyset,$ where $\mathcal{S}_{\bm{\beta}}(b) = 
\{ i \in [m] : Tr(b\alpha_i) \neq 0 \}$ is the support of the element $b \in \mathbb{F}_{q^m}$ with respect to the 
ordered basis $\bm{\beta}.$ \end{remark}

 Further, let $\chi(\cdot)$ denote the canonical additive character of $\mathbb{F}_q,$ $\ie$ $\chi(x) = \zeta^{Tr_p^q(x)},$ where $\zeta$ is a complex primitive $p$-th root of unity and $Tr_p^q(\cdot)$ denotes the trace function from $\mathbb{F}_q$ onto $\mathbb{F}_p.$ We need the following lemma involving some special character sums.

\begin{lemma}\cite[p. 4908]{Yang2015} \label{Lem5}    Let $\mathpzc{P}$ be a non-empty subset of $  [m],$ and let $\Delta_{\mathpzc{P}}$ be the simplicial complex of $\mathbb{F}_{q^m}$ with support  $\mathpzc{P}.$  For  $b \in \mathbb{F}_{q^m},$ let us define $$F_b(\Delta_{\mathpzc{P}})=\sum\limits_{w \in \Delta_{\mathpzc{P}}} \chi \big( Tr(bw) \big)\text{~ and ~} F_b(\Delta_{\mathpzc{P}}^c)=\sum\limits_{w \in \Delta_{\mathpzc{P}}^c} \chi \big( Tr(bw) \big).$$
The following hold.
\begin{itemize}
\item[(a)] We have $F_b(\Delta_{\mathpzc{P}})= \left\{\begin{array}{cl} q^{|\mathpzc{P}|} & \text{if } \mathcal{S}_{\bm{\beta}}(b) \cap \mathpzc{P} = \emptyset;\\
0 & \text{otherwise.}
\end{array}\right.$
\item[(b)] For all non-zero $b  \in \mathbb{F}_{q^m},$ we have $\sum\limits_{w \in \mathbb{F}_{q^m}} \chi \big( Tr(bw) \big)= 0.$ Consequently, we have $ F_b(\Delta_{\mathpzc{P}}) = -F_b(\Delta_{\mathpzc{P}}^c)$ for all non-zero $b\in \mathbb{F}_{q^m}.$
\item[(c)] For all non-zero $\gamma \in \mathbb{F}_q^\ast,$ we have $F_b(\Delta_{\mathpzc{P}})=F_b(\gamma \Delta_{\mathpzc{P}})=F_{\gamma b}(\Delta_{\mathpzc{P}})$ and $F_b(\Delta_{\mathpzc{P}}^c)=F_b(\gamma \Delta_{\mathpzc{P}}^c)=F_{\gamma b}(\Delta_{\mathpzc{P}}^c),$ where $\gamma \Delta_{\mathpzc{P}} = \{\gamma w : w \in \Delta_{\mathpzc{P}}\}$ and $\gamma \Delta_{\mathpzc{P}}^c = \{\gamma w : w \in \Delta_{\mathpzc{P}}^c\}.$
\end{itemize}
\end{lemma}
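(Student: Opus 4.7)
The plan rests almost entirely on the orthogonality relation for the canonical additive character on $\mathbb{F}_q$, together with the $\mathbb{F}_q$-linear structure of $\Delta_{\mathpzc{P}}$ and the trace-dual basis setup.

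For part (a), I would parametrize elements of $\Delta_{\mathpzc{P}}$ as $w=\sum_{i\in \mathpzc{P}} x_i \alpha_i$ with $x_i\in\mathbb{F}_q$, using the fact noted in Remark \ref{KRK} that $\{\alpha_i : i\in\mathpzc{P}\}$ is an $\mathbb{F}_q$-basis of $\Delta_{\mathpzc{P}}$. Since $Tr$ is $\mathbb{F}_q$-linear and $\chi$ is a group homomorphism on $(\mathbb{F}_q,+)$, the character sum factors as
\begin{equation*}
F_b(\Delta_{\mathpzc{P}}) \;=\; \sum_{(x_i)_{i\in\mathpzc{P}} \in \mathbb{F}_q^{|\mathpzc{P}|}} \chi\Bigl(\sum_{i\in\mathpzc{P}} x_i\, Tr(b\alpha_i)\Bigr) \;=\; \prod_{i\in\mathpzc{P}} \Bigl(\sum_{x \in \mathbb{F}_q} \chi\bigl(x\, Tr(b\alpha_i)\bigr)\Bigr).
\end{equation*}
Each inner sum equals $q$ when $Tr(b\alpha_i)=0$ and vanishes otherwise, by the standard orthogonality of $\chi$ on $\mathbb{F}_q$. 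Thus the product is $q^{|\mathpzc{P}|}$ if $Tr(b\alpha_i)=0$ for every $i\in\mathpzc{P}$ and $0$ otherwise, and by Remark \ref{KRK} the former condition is equivalent to $\mathcal{S}_{\bm{\beta}}(b)\cap \mathpzc{P}=\emptyset$.

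For part (b), I would apply the formula from part (a) in the special case $\mathpzc{P}=[m]$, so that $\Delta_{[m]}=\mathbb{F}_{q^m}$: whenever $b\neq 0$, $\mathcal{S}_{\bm{\beta}}(b)$ is non-empty and hence meets $[m]$, forcing $\sum_{w\in\mathbb{F}_{q^m}} \chi(Tr(bw))=0$. The identity $F_b(\Delta_{\mathpzc{P}})=-F_b(\Delta_{\mathpzc{P}}^c)$ then follows directly from the disjoint decomposition $\mathbb{F}_{q^m}=\Delta_{\mathpzc{P}}\sqcup \Delta_{\mathpzc{P}}^c$.

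For part (c), observe that since $\Delta_{\mathpzc{P}}$ is an $\mathbb{F}_q$-subspace of $\mathbb{F}_{q^m}$ and $\gamma\in\mathbb{F}_q^\ast$, the map $w\mapsto \gamma w$ is an $\mathbb{F}_q$-linear bijection of $\Delta_{\mathpzc{P}}$ onto itself, and hence also of $\Delta_{\mathpzc{P}}^c$ onto itself. This gives $\gamma\Delta_{\mathpzc{P}}=\Delta_{\mathpzc{P}}$ and $\gamma\Delta_{\mathpzc{P}}^c=\Delta_{\mathpzc{P}}^c$, which yields the first equality in each chain. For the second, I would substitute $w\mapsto \gamma^{-1}w$ in the defining sum, or equivalently use $Tr(b\gamma w)=Tr((\gamma b)w)$, to obtain $F_b(\gamma\Delta_{\mathpzc{P}})=F_{\gamma b}(\Delta_{\mathpzc{P}})$ and its analogue for the complement. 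The only real obstacle in the whole argument is the bookkeeping between the $\mathcal{S}_{\bm{\alpha}}$-support (used to define $\Delta_{\mathpzc{P}}$) and the $\mathcal{S}_{\bm{\beta}}$-support (appearing in the non-vanishing criterion); invoking Remark \ref{KRK} at the correct moment in part (a) is what makes this translation clean.
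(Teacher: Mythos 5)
Your proposal is correct, and part (c) is argued exactly as in the paper (via $\gamma\Delta_{\mathpzc{P}}=\Delta_{\mathpzc{P}}$, $\gamma\Delta_{\mathpzc{P}}^c=\Delta_{\mathpzc{P}}^c$ and the $\mathbb{F}_q$-linearity of $Tr$). The difference is in parts (a) and (b): the paper does not prove them at all, but simply cites equation (13) of Yang \etal\ \cite{Yang2015} together with Remark \ref{KRK}, whereas you supply the underlying argument in full — writing $w=\sum_{i\in\mathpzc{P}}x_i\alpha_i$ via the basis $\{\alpha_i: i\in\mathpzc{P}\}$ of $\Delta_{\mathpzc{P}}$, factoring the character sum into $|\mathpzc{P}|$ one-dimensional sums, and invoking orthogonality of $\chi$ on $(\mathbb{F}_q,+)$; part (b) then follows from (a) with $\mathpzc{P}=[m]$ (where $\mathcal{S}_{\bm{\beta}}(b)\neq\emptyset$ for $b\neq 0$ by nondegeneracy of the trace pairing) and the disjoint decomposition $\mathbb{F}_{q^m}=\Delta_{\mathpzc{P}}\sqcup\Delta_{\mathpzc{P}}^c$. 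This is essentially the proof behind the cited identity, so conceptually the routes coincide; what your version buys is self-containedness and an explicit handling of the $\bm{\alpha}$-versus-$\bm{\beta}$ support bookkeeping, at the cost of reproducing a known computation the paper chose to outsource. Both steps are carried out correctly, so there is no gap.
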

\begin{proof}Parts (a) and (b) follow from equation (13) of Yang \etal \cite{Yang2015} and Remark \ref{KRK}. Part (c) follows from the facts that  $\gamma \Delta_{\mathpzc{P}} =\Delta_{\mathpzc{P}}$ and $\gamma \Delta_{\mathpzc{P}}^c = \Delta_{\mathpzc{P}}^c,$ along with the $\mathbb{F}_q$-linearity of the trace function $Tr(\cdot).$
\end{proof}
% % \begin{lemma}\label{Lem6}
% %    Let $\Delta_{\mathpzc{P}}$ be the simplicial complex of $\mathbb{F}_{q^m}$ with support $\mathpzc{P} \subseteq [m]$ with respect to the ordered basis $\bm{\alpha}.$ and let  $v \in \mathbb{F}_{q^m}$ be fixed. Let $F_v(\Delta_{\mathpzc{P}})=\sum\limits_{w \in \Delta_{\mathpzc{P}}} \chi \big( Tr(vw) \big)$ and $F_v(\Delta_{\mathpzc{P}}^c)=\sum\limits_{w \in \Delta_{\mathpzc{P}}^c} \chi \big( Tr(vw) \big).$
% %    For  $\gamma(\neq 0) \in \mathbb{F}_q,$ we have 
% %    \begin{equation*}
% %       \sum\limits_{w \in \Delta_{\mathpzc{P}}} \chi \big( \gamma Tr(vw) \big)=F_v(\Delta_{\mathpzc{P}}) \text{ and } \sum\limits_{w \in \Delta_{\mathpzc{P}}^c} \chi \big( \gamma Tr(vw) \big)=F_v(\Delta_{\mathpzc{P}}^c).
% %    \end{equation*}
% % \end{lemma}
% % \begin{proof}To prove the result, let $\gamma(\neq 0) \in \mathbb{F}_q$ be fixed. Let us consider the sets $\gamma \Delta_{\mathpzc{P}} = \{\gamma w : w \in \Delta_{\mathpzc{P}}\}$ and $\gamma \Delta_{\mathpzc{P}}^c = \{\gamma w : w \in \Delta_{\mathpzc{P}}^c\}.$
% %  It is easy to observe that $\gamma \Delta_{\mathpzc{P}} = \Delta_{\mathpzc{P}}$ and $\gamma \Delta_{\mathpzc{P}}^c = \Delta_{\mathpzc{P}}^c.$  From this and noting that the trace $Tr(\cdot)$ map is $\mathbb{F}_q$-linear, the desired result follows.
% \end{proof}
In the following theorem, we determine the parameters and  Lee weight distribution of the code $\mathscr{C}_{\mathcal{S}_1}$ over $\frac{\mathbb{F}_q[u]}{\langle u^2 \rangle}.$ 
\begin{thm}\label{Th1}
The code $\mathscr{C}_{\mathcal{S}_1}$  is a linear code over $\frac{\mathbb{F}_q[u]}{\langle u^2\rangle}$ with parameters 
$\big(q^{|\mathpzc{A}|+|\mathpzc{B}|}(q^m-q^{|\mathpzc{C}|}), q^{m+|\mathpzc{A}|+|\mathpzc{A}\cup\mathpzc{B}|}, \epsilon(q-1)q^{|\mathpzc{A}|+|\mathpzc{B}|-1}(q^m-q^{|\mathpzc{C}|})\big )$ and  Lee weight distribution as given in Table \ref{T1}, where $\epsilon=2$ if $\mathpzc{A}\subseteq \mathpzc{B},$ while $\epsilon=1$ if $\mathpzc{A}\not\subseteq \mathpzc{B}.$ As a consequence, $\mathscr{C}_{\mathcal{S}_1}$ is a $2$-weight code if $\mathpzc{A}\subseteq \mathpzc{B},$ whereas it  is a $4$-weight code if $\mathpzc{A}\not\subseteq \mathpzc{B}.$
\begin{table}[H]
\centering
    \begin{tabular}{ |c|c|} 
 \hline Lee (\textit{resp.} Hamming) weight $w$ & Frequency $\mathsf{A}_w$ \\ \hline
$0$ & $1$ \\
 \hline
$(q-1)q^{|\mathpzc{A}|+|\mathpzc{B}|-1}(q^m-q^{|\mathpzc{C}|})$ & $2(q^{|\mathpzc{A} \cup \mathpzc{B}|-|\mathpzc{B}|}-1)$ \\
 \hline
$2(q-1)q^{|\mathpzc{A}|+|\mathpzc{B}|-1}(q^m-q^{|\mathpzc{C}|})$ & $q^{m+|\mathpzc{A}|+|\mathpzc{A}\cup \mathpzc{B}|} - 2q^{m-|\mathpzc{C}|+|\mathpzc{A} \cup \mathpzc{B}| - |\mathpzc{B}|} + q^{m-|\mathpzc{C}|}$ \\
 \hline
$(q-1)q^{|\mathpzc{A}|+|\mathpzc{B}|-1}(2q^m-q^{|\mathpzc{C}|})$ & $2(q^{|\mathpzc{A} \cup \mathpzc{B}|-|\mathpzc{B}|}-1)(q^{m-|\mathpzc{C}|}-1)$ \\
 \hline
$2(q-1)q^{m+|\mathpzc{A}|+|\mathpzc{B}|-1}$ & $q^{m-|\mathpzc{C}|}-1$ \\
 \hline
\end{tabular}
\caption{The Lee (\textit{resp.} Hamming) weight distribution of the code $\mathscr{C}_{\mathcal{S}_1}$ (\textit{resp.} $\Phi(\mathscr{C}_{\mathcal{S}_1})$) }\label{T1} 
\end{table}
% the code $\mathscr{C}_{\mathcal{S}_1}$ is a $2$-weight code with parameters $ \big( (q^m-q^{|\mathpzc{C}|})q^{|\mathpzc{A}|+|\mathpzc{B}|}, q^{m+|\mathpzc{A}|+|\mathpzc{B}|},2(q^m-q^{|\mathpzc{C}|})q^{|\mathpzc{A}|+|\mathpzc{B}|-1} \big ).$ Further, when $\mathpzc{A}\not\subseteq \mathpzc{B},$ the code $\mathscr{C}_{\mathcal{S}_1}$ is a $4$-weight code  with parameters 
%     $\big( (q^m-q^{|\mathpzc{C}|})q^{|\mathpzc{A}|+|\mathpzc{B}|}, q^{m+|\mathpzc{A}|+|\mathpzc{A}\cup\mathpzc{B}|},\break (q^m-q^{|\mathpzc{C}|})q^{|\mathpzc{A}|+|\mathpzc{B}|-1}\big ).$ 
% \begin{itemize}
%     \item[(a)] When $\mathpzc{A}\subseteq \mathpzc{B},$ the code $\mathscr{C}_{\mathcal{S}_1}$ is a $2$-weight linear code over $\frac{\mathbb{F}_q[u]}{\langle u^2\rangle}$ with parameters $ \big( (q^m-q^{|\mathpzc{C}|})q^{|\mathpzc{A}|+|\mathpzc{B}|}, q^{m+|\mathpzc{A}|+|\mathpzc{B}|},2(q^m-q^{|\mathpzc{C}|})q^{|\mathpzc{A}|+|\mathpzc{B}|-1}\big ).$
%     \item[(b)] When $\mathpzc{A}\not\subseteq \mathpzc{B},$ the code $\mathscr{C}_{\mathcal{S}_1}$ is a $4$-weight linear code over $\frac{\mathbb{F}_q[u]}{\langle u^2\rangle}$ with parameters 
%     $\big( (q^m-q^{|\mathpzc{C}|})q^{|\mathpzc{A}|+|\mathpzc{B}|}, q^{m+|\mathpzc{A}|+|\mathpzc{A}\cup\mathpzc{B}|},\break (q^m-q^{|\mathpzc{C}|})q^{|\mathpzc{A}|+|\mathpzc{B}|-1}\big ).$
% \end{itemize}
\end{thm}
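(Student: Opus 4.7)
The plan is to compute the Lee weight of a generic codeword via a character-sum expansion, and then read off the five weight classes and their frequencies from a short case analysis on the relevant supports. First, by \eqref{Scard}, we immediately get $|\mathcal{S}_1| = |\Delta_{\mathpzc{A}}||\Delta_{\mathpzc{B}}||\Delta_{\mathpzc{C}}^c| = q^{|\mathpzc{A}|+|\mathpzc{B}|}(q^m-q^{|\mathpzc{C}|})$, which is the length. For $\theta=(\theta_1+u\theta_2,\theta_3)\in\mathtt{R}$, I would write $\mathtt{c}_\theta=\mathtt{d}_\theta+u\mathtt{e}_\theta$, where the coordinates of $\mathtt{d}_\theta$ and $\mathtt{e}_\theta$ are $Tr(\theta_1\delta_1)$ and $Tr(\theta_2\delta_1+\theta_1\delta_2+\theta_3\delta_3)$, respectively, as $(\delta_1+u\delta_2,\delta_3)$ runs over $\psi(\mathcal{S}_1)$. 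By \eqref{GI}, we have $wt_L(\mathtt{c}_\theta)=wt_H(\mathtt{e}_\theta)+wt_H(\mathtt{d}_\theta+\mathtt{e}_\theta)$.

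To evaluate each Hamming weight, I would expand the non-zero-coordinate indicator via the orthogonality relation for the canonical additive character $\chi$, factor the resulting triple sum using $\psi(\mathcal{S}_1)=\psi(\Delta_{\mathpzc{A}})\times\psi(\Delta_{\mathpzc{B}})\times\psi(\Delta_{\mathpzc{C}}^c)$, and invoke Lemma \ref{Lem5}(c) so that the summation over $\gamma\in\mathbb{F}_q^*$ collapses to a factor of $q-1$. Adding the two Hamming weights yields the closed form
\begin{equation*}
wt_L(\mathtt{c}_\theta)=\tfrac{2(q-1)}{q}|\mathcal{S}_1|-\tfrac{q-1}{q}\bigl[F_{\theta_2}(\Delta_{\mathpzc{A}})+F_{\theta_1+\theta_2}(\Delta_{\mathpzc{A}})\bigr]F_{\theta_1}(\Delta_{\mathpzc{B}})F_{\theta_3}(\Delta_{\mathpzc{C}}^c).
\end{equation*}
By Lemma \ref{Lem5}(a)--(b) together with Remark \ref{KRK}, each character-sum factor takes only a few values, determined by whether $\theta_3$ is zero, whether $\theta_3\in\Delta_{\mathpzc{C}}^{\perp_{Tr}}\setminus\{0\}$, and whether the $\bm{\beta}$-supports of $\theta_1,\theta_2,\theta_1+\theta_2$ avoid $\mathpzc{B},\mathpzc{A},\mathpzc{A}$, respectively. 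Substituting these possibilities into the display above gives exactly the five Lee weights in Table \ref{T1}. The size is obtained through \eqref{Nsize}: the zero-weight class forces $\theta_3=0$, $\theta_1\in\Delta_{\mathpzc{A}}^{\perp_{Tr}}\cap\Delta_{\mathpzc{B}}^{\perp_{Tr}}=\Delta_{\mathpzc{A}\cup\mathpzc{B}}^{\perp_{Tr}}$, and $\theta_2\in\Delta_{\mathpzc{A}}^{\perp_{Tr}}$, yielding $|\ker(\mu_1)|=q^{2m-|\mathpzc{A}|-|\mathpzc{A}\cup\mathpzc{B}|}$ and hence $|\mathscr{C}_{\mathcal{S}_1}|=q^{m+|\mathpzc{A}|+|\mathpzc{A}\cup\mathpzc{B}|}$.

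The main obstacle will be the coupled counting of pairs $(\theta_1,\theta_2)$ satisfying various combinations of the three conditions $\theta_1\in\Delta_{\mathpzc{B}}^{\perp_{Tr}}$, $\theta_2\in\Delta_{\mathpzc{A}}^{\perp_{Tr}}$, and $\theta_1+\theta_2\in\Delta_{\mathpzc{A}}^{\perp_{Tr}}$. This is precisely where the dichotomy $\mathpzc{A}\subseteq\mathpzc{B}$ versus $\mathpzc{A}\not\subseteq\mathpzc{B}$ arises. When $\mathpzc{A}\subseteq\mathpzc{B}$, we have $\Delta_{\mathpzc{B}}^{\perp_{Tr}}\subseteq\Delta_{\mathpzc{A}}^{\perp_{Tr}}$, so $\theta_1\in\Delta_{\mathpzc{A}}^{\perp_{Tr}}$ automatically and the last two conditions become equivalent; the two ``mixed'' weight classes in Table \ref{T1} then have frequency $0$ (reflected in the vanishing of $q^{|\mathpzc{A}\cup\mathpzc{B}|-|\mathpzc{B}|}-1$), and only the doubled class survives, producing $\epsilon=2$. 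When $\mathpzc{A}\not\subseteq\mathpzc{B}$, the three conditions vary essentially independently, all five classes are populated, $\epsilon=1$, and the counts follow from Lemma \ref{Lem3} applied to the complementary simplicial complexes in $\mathbb{F}_{q^m}$ together with the observation that, for $\theta_1\notin\Delta_{\mathpzc{A}}^{\perp_{Tr}}$, the sets $\{\theta_2:\theta_2\in\Delta_{\mathpzc{A}}^{\perp_{Tr}}\}$ and $\{\theta_2:\theta_1+\theta_2\in\Delta_{\mathpzc{A}}^{\perp_{Tr}}\}$ are disjoint cosets of $\Delta_{\mathpzc{A}}^{\perp_{Tr}}$. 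A final verification that the tabulated frequencies sum to $q^{m+|\mathpzc{A}|+|\mathpzc{A}\cup\mathpzc{B}|}$ provides a consistency check and completes the proof.
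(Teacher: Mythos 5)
Your proposal is correct and follows the paper's overall strategy: the same character-sum expansion leading to exactly the closed form \eqref{Eq3.1} for $wt_L(\mathtt{c}_\theta),$ the same determination of $\ker(\mu_1)$ (hence the size $q^{m+|\mathpzc{A}|+|\mathpzc{A}\cup\mathpzc{B}|}$ via \eqref{Nsize}), and then a case analysis on the $\bm{\beta}$-supports of $\theta_1,\theta_2,\theta_1+\theta_2$ and the status of $\theta_3.$ Where you genuinely diverge is in how the frequencies are counted: the paper runs an explicit eight-case (plus subcase) enumeration and delegates the two delicate joint counts — pairs with $\theta_1\in\Delta_{\mathpzc{B}}^{\perp_{Tr}},$ $\theta_1\notin\Delta_{\mathpzc{A}}^{\perp_{Tr}}$ and exactly one (resp.\ neither) of $\theta_2,\theta_1+\theta_2$ in $\Delta_{\mathpzc{A}}^{\perp_{Tr}}$ — to Lemma \ref{Lem4} (the cardinalities of $\widehat{\mathpzc{M}}_{\mathpzc{P},\mathpzc{Q}},$ $\widetilde{\mathpzc{M}}_{\mathpzc{P},\mathpzc{Q}},$ etc.), whereas you replace all of that by the single observation that, for $\theta_1\notin\Delta_{\mathpzc{A}}^{\perp_{Tr}},$ the solution sets $\Delta_{\mathpzc{A}}^{\perp_{Tr}}$ and $-\theta_1+\Delta_{\mathpzc{A}}^{\perp_{Tr}}$ for $\theta_2$ are disjoint cosets; combined with Lemma \ref{Lem3} this reproduces every entry of Table \ref{T1} with less bookkeeping and makes the collapse to a $2$-weight code when $\mathpzc{A}\subseteq\mathpzc{B}$ (vanishing of $q^{|\mathpzc{A}\cup\mathpzc{B}|-|\mathpzc{B}|}-1$) transparent. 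One step you should make explicit in a write-up, as the paper does after \eqref{kermu}: your counts enumerate elements $\theta\in\mathtt{R},$ and each codeword arises from exactly $|\ker(\mu_1)|=q^{2m-|\mathpzc{A}|-|\mathpzc{A}\cup\mathpzc{B}|}$ such $\theta,$ so the tabulated frequencies are the $\theta$-counts divided by this factor; with that stated, your closing consistency check $\sum_w\mathsf{A}_w=q^{m+|\mathpzc{A}|+|\mathpzc{A}\cup\mathpzc{B}|}$ reads as intended and the argument is complete.
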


\begin{proof}
To prove the result, we first recall, from Section \ref{Newcodes}, that the code $\mathscr{C}_{\mathcal{S}_1}$  is a linear code of length $|\mathcal{S}_1|$ over $\frac{\mathbb{F}_q[u]}{\langle u^2 \rangle}.$ Further, by \eqref{Scard} and \eqref{S1}, we have \begin{equation}\label{EQS1}|\mathcal{S}_1| =  |\Delta_{\mathpzc{A}}| |\Delta_{\mathpzc{B}}||\Delta_{\mathpzc{C}}^c| =  q^{|\mathpzc{A}| + |\mathpzc{B}|}(q^m - q^{|\mathpzc{C}|}).\end{equation} Thus, $\mathscr{C}_{\mathcal{S}_1}$ is a linear code of length $ q^{|\mathpzc{A}| + |\mathpzc{B}|}(q^m - q^{|\mathpzc{C}|})$ over $\frac{\mathbb{F}_q[u]}{\langle u^2 \rangle}.$

Now, to determine the size of the code $\mathscr{C}_{\mathcal{S}_1},$ we see, by \eqref{Nsize}, that it suffices to determine $|\ker(\mu_1)|,$ where $\mu_1 : \mathtt{R} \rightarrow \mathscr{C}_{\mathcal{S}_1}$ is a surjective $\big(\frac{\mathbb{F}_q[u]}{\langle u^2 \rangle}\big)$-module homomorphism, defined by 

 $$\mu_1(\theta) = \mathtt{c}_{\theta}:= \big(Tr( \theta_1 \delta_1) + u \, Tr( \theta_2\delta_1 +  \theta_1 \delta_2 +  \theta_3 \delta_3)\big)_{(\delta_1+u\delta_2, \delta_3) \in \mathcal{S}_1}$$
for all $\theta=(\theta_1+u\theta_2, \theta_3) \in \mathtt{R}.$  From this and using equation \eqref{GI}, we obtain
  \begin{align*}
 wt_L(\mathtt{c}_{\theta}) & =  \sum\limits_{\delta_1 \in \Delta_\mathpzc{A}} \sum\limits_{\delta_2 \in \Delta_\mathpzc{B}}\sum\limits_{\delta_3 \in \Delta_\mathpzc{C}^c} \bigg( wt_H\Big( Tr( \theta_2\delta_1 +  \theta_1\delta_2 +  \theta_3\delta_3)\Big) +  wt_H\Big( Tr( \theta_2\delta_1 +  \theta_1\delta_2 +  \theta_3\delta_3 + \theta_1\delta_1)\Big) \bigg)\\
 & = |\mathcal{S}_1| - \frac{1}{q}\sum\limits_{\gamma \in \mathbb{F}_q}  \sum\limits_{\delta_1 \in \Delta_\mathpzc{A}} \sum\limits_{\delta_2 \in \Delta_\mathpzc{B}}\sum\limits_{\delta_3 \in \Delta_\mathpzc{C}^c} \chi\big(\gamma Tr(\theta_3\delta_3+\theta_2\delta_1+\theta_1\delta_2) \big) 
 \\ & ~~+  |\mathcal{S}_1| - \frac{1}{q}\sum\limits_{\gamma \in \mathbb{F}_q}  \sum\limits_{\delta_1 \in \Delta_\mathpzc{A}} \sum\limits_{\delta_2 \in \Delta_\mathpzc{B}}\sum\limits_{\delta_3 \in \Delta_\mathpzc{C}^c} \chi\big(\gamma Tr(\theta_3\delta_3+(\theta_1+\theta_2)\delta_1+\theta_1\delta_2) \big) \\
& = \frac{2(q-1)}{q}|\mathcal{S}_1| - \frac{1}{q}\sum\limits_{\gamma \in \mathbb{F}_q^\ast}  \sum\limits_{\delta_1 \in \Delta_\mathpzc{A}} \chi\big(\gamma Tr(\theta_2\delta_1)\big) \sum\limits_{\delta_2 \in \Delta_\mathpzc{B}} \chi\big(\gamma Tr(\theta_1\delta_2)\big)\sum\limits_{\delta_3 \in \Delta_\mathpzc{C}^c} \chi\big(\gamma Tr(\theta_3\delta_3)\big)  \\ & ~~ - \frac{1}{q}\sum\limits_{\gamma \in \mathbb{F}_q^\ast}  \sum\limits_{\delta_1 \in \Delta_\mathpzc{A}} \chi\big(\gamma Tr((\theta_1+\theta_2)\delta_1)\big) \sum\limits_{\delta_2 \in \Delta_\mathpzc{B}} \chi\big(\gamma Tr(\theta_1\delta_2)\big)\sum\limits_{\delta_3 \in \Delta_\mathpzc{C}^c} \chi\big(\gamma Tr(\theta_3\delta_3)\big),
\end{align*} where $\mathbb{F}_q^\ast=\mathbb{F}_q\setminus \{0\}.$ This, by Lemma \ref{Lem5}(c), implies that
\begin{equation}\label{Eq3.1}
wt_L(\mathtt{c}_{\theta}) = \frac{(q-1)}{q} \Big(2|\mathcal{S}_1| - F_{\theta_1}(\Delta_\mathpzc{B})F_{\theta_3}(\Delta_\mathpzc{C}^c) \big(F_{\theta_2}(\Delta_\mathpzc{A})+F_{\theta_1+\theta_2}(\Delta_\mathpzc{A})\big) \Big).
\end{equation} 
% \textcolor{red}{where, for a simplicial complex $\Lambda$ of $\mathbb{F}_{q^m}$ and $a\in \mathbb{F}_q,$ we define $F_a(\Lambda)=\sum\limits_{w \in \Lambda} \chi \big( Tr(vw) \big)$ and $F_a(\Lambda^c)=\sum\limits_{w \in \Lambda^c} \chi \big( Tr(vw) \big).$}
Consequently, we have $wt_L(\mathtt{c}_{\theta})=0$ if and only if
$2|\mathcal{S}_1| =  F_{\theta_1}(\Delta_\mathpzc{B})F_{\theta_3}(\Delta_\mathpzc{C}^c)\big(F_{\theta_2}(\Delta_\mathpzc{A})+F_{\theta_1+\theta_2}(\Delta_\mathpzc{A})\big) ,$ which,   by Lemma \ref{Lem5} and using equation \eqref{EQS1},  holds if and only if   $F_{\theta_1}(\Delta_\mathpzc{B}) = |\Delta_\mathpzc{B}|,$ $F_{\theta_3}(\Delta_\mathpzc{C}^c)=|\Delta_\mathpzc{C}^c|$ and $F_{\theta_2}(\Delta_\mathpzc{A})+F_{\theta_1+\theta_2}(\Delta_\mathpzc{A}) = 2|\Delta_\mathpzc{A}|.$  Now, by Lemma \ref{Lem5}(a), we get
\begin{equation}\label{Ker}
    \ker(\mu_1)=\{ (\theta_1+u\theta_2,\theta_3) \in \mathtt{R} : \mathcal{S}_{\bm{\beta}}(\theta_1) \cap (\mathpzc{A} \cup \mathpzc{B}) = \emptyset,~ \mathcal{S}_{\bm{\beta}}(\theta_2) \cap \mathpzc{A} = \emptyset \text{ and }\theta_3 = 0 \}.
\end{equation}
This implies, by  Lemma \ref{Lem3}, that \begin{equation}\label{kermu}|\ker(\mu_1)| = |\mathpzc{X}_{\mathpzc{A} \cup \mathpzc{B}}||\mathpzc{X}_{\mathpzc{A}}|=q^{2m-|\mathpzc{A}|-|\mathpzc{A}\cup \mathpzc{B}|}.\end{equation} From this and using equation \eqref{Nsize}, we get $|\mathscr{C}_{\mathcal{S}_1}|=q^{m+|\mathpzc{A}|+|\mathpzc{A} \cup \mathpzc{B}|}.$

Next, to determine the Lee weight distribution of the code $\mathscr{C}_{\mathcal{S}_1},$ we assume, throughout  the proof, that $\theta = (\theta_1 + u \theta_2, \theta_3) \in \mathtt{R}$ is such that $\theta \notin \ker(\mu_1).$ By \eqref{Ker},  we note that the element $\theta = (\theta_1 + u \theta_2, \theta_3) \in \mathtt{R}$ satisfies exactly one of the following eight conditions:  
% To explicitly determine the Lee weight of $\mathtt{c}_{\theta} \in \mathscr{C}_{\mathcal{S}_1}$ and the frequency of its Lee weight, we distinguish the following cases:
\begin{itemize}
    \item[(I)]  $\mathcal{S}_{\bm{\beta}}(\theta_1) \cap (\mathpzc{A}\cup \mathpzc{B}) = \emptyset,$ $\mathcal{S}_{\bm{\beta}}(\theta_2) \cap \mathpzc{A} \neq \emptyset$ and $\theta_3=0.$
    \item[(II)]  $\mathcal{S}_{\bm{\beta}}(\theta_1) \cap (\mathpzc{A}\cup \mathpzc{B}) \neq \emptyset,$ $\mathcal{S}_{\bm{\beta}}(\theta_2) \cap \mathpzc{A} = \emptyset$ and $\theta_3=0.$ 
    \item[(III)]  $\mathcal{S}_{\bm{\beta}}(\theta_1) \cap (\mathpzc{A}\cup \mathpzc{B}) \neq \emptyset,$ $\mathcal{S}_{\bm{\beta}}(\theta_2) \cap \mathpzc{A} \neq \emptyset$ and $\theta_3=0.$
    \item[(IV)] $\theta_3 \neq 0$ and $\mathcal{S}_{\bm{\beta}}(\theta_3) \cap \mathpzc{C} \neq \emptyset.$ 
    \item[(V)]  $\mathcal{S}_{\bm{\beta}}(\theta_1) \cap (\mathpzc{A}\cup \mathpzc{B}) = \emptyset,$ $\mathcal{S}_{\bm{\beta}}(\theta_2) \cap \mathpzc{A} \neq \emptyset,$ $\theta_3 \neq 0$ and $\mathcal{S}_{\bm{\beta}}(\theta_3) \cap \mathpzc{C} = \emptyset.$  
    \item[(VI)]  $\mathcal{S}_{\bm{\beta}}(\theta_1) \cap (\mathpzc{A} \cup \mathpzc{B}) \neq \emptyset,$ $\mathcal{S}_{\bm{\beta}}(\theta_2) \cap \mathpzc{A} = \emptyset,$ $\theta_3 \neq 0$ and $\mathcal{S}_{\bm{\beta}}(\theta_3) \cap \mathpzc{C} = \emptyset.$ 
    \item[(VII)]  $\mathcal{S}_{\bm{\beta}}(\theta_1) \cap (\mathpzc{A} \cup \mathpzc{B}) \neq \emptyset,$ $\mathcal{S}_{\bm{\beta}}(\theta_2) \cap \mathpzc{A} \neq \emptyset,$ $\theta_3 \neq 0$ and $\mathcal{S}_{\bm{\beta}}(\theta_3) \cap \mathpzc{C} = \emptyset.$ 
    \item[(VIII)]  $\mathcal{S}_{\bm{\beta}}(\theta_1) \cap (\mathpzc{A} \cup \mathpzc{B}) = \emptyset,$ $\mathcal{S}_{\bm{\beta}}(\theta_2) \cap \mathpzc{A} = \emptyset,$ $\theta_3 \neq 0$ and $\mathcal{S}_{\bm{\beta}}(\theta_3) \cap \mathpzc{C} = \emptyset.$
\end{itemize} 
We  next proceed to determine the Lee weight of the codeword $\mathtt{c}_{\theta} \in \mathscr{C}_{\mathcal{S}_1}$ for each $\theta = (\theta_1 + u \theta_2, \theta_3) \in \mathtt{R}\setminus \ker(\mu_1) $ satisfying  exactly one of the conditions (I) -- (VIII) above and the  number of choices for $\theta = (\theta_1 + u \theta_2, \theta_3) \in \mathtt{R}$ satisfying each of these eight conditions.
\begin{description}
    \item[(I)]  Let  $\mathcal{S}_{\bm{\beta}}(\theta_1) \cap (\mathpzc{A}\cup \mathpzc{B}) = \emptyset,$  $\mathcal{S}_{\bm{\beta}}(\theta_2) \cap \mathpzc{A} \neq \emptyset$ and $\theta_3 = 0.$ Here, we have $\mathcal{S}_{\bm{\beta}}(\theta_1+\theta_2) \cap \mathpzc{A} \neq \emptyset.$ In this case, we see, by Lemma \ref{Lem5}(a) and equations \eqref{EQS1} and \eqref{Eq3.1}, that $wt_L(\mathtt{c}_{\theta}) = \frac{2(q-1)}{q}|\mathcal{S}_1|=2(q-1)q^{|\mathpzc{A}|+|\mathpzc{B}|-1}(q^m-q^{|\mathpzc{C}|}).$ Furthermore, by Lemma \ref{Lem3}, the element $\theta = (\theta_1 + u \theta_2, \theta_3) \in \mathtt{R}$ satisfying the condition (I) has precisely $|\mathpzc{X}_{\mathpzc{A} \cup \mathpzc{B}}||\mathpzc{X}_\mathpzc{A}^c|=q^{m-|\mathpzc{A}  \cup  \mathpzc{B}|}(q^m-q^{m-|\mathpzc{A}|})$ distinct choices. 
    \item[(II)] Let  $\mathcal{S}_{\bm{\beta}}(\theta_1) \cap (\mathpzc{A} \cup \mathpzc{B}) \neq \emptyset,$ $\mathcal{S}_{\bm{\beta}}(\theta_2) \cap \mathpzc{A} = \emptyset$ and $\theta_3=0.$  Here, we will distinguish the following three cases:
(i)  $\mathcal{S}_{\bm{\beta}}(\theta_1) \cap \mathpzc{A} = \emptyset$ and $\mathcal{S}_{\bm{\beta}}(\theta_1) \cap \mathpzc{B} \neq \emptyset,$ (ii)  $\mathcal{S}_{\bm{\beta}}(\theta_1) \cap \mathpzc{A}  \neq \emptyset$ and $\mathcal{S}_{\bm{\beta}}(\theta_1) \cap \mathpzc{B}= \emptyset,$ and 
      (iii) $\mathcal{S}_{\bm{\beta}}(\theta_1) \cap \mathpzc{A}  \neq \emptyset$ and $\mathcal{S}_{\bm{\beta}}(\theta_1) \cap \mathpzc{B} \neq \emptyset.$
    \begin{description}
        \item[(i)] Suppose that  $\mathcal{S}_{\bm{\beta}}(\theta_1) \cap \mathpzc{A} = \emptyset$ and $\mathcal{S}_{\bm{\beta}}(\theta_1) \cap \mathpzc{B} \neq \emptyset.$ Here, we have $\mathcal{S}_{\bm{\beta}}(\theta_1+\theta_2) \cap \mathpzc{A} = \emptyset.$ In this case, we see, by Lemma \ref{Lem5}(a) and equations \eqref{EQS1} and \eqref{Eq3.1}, that $wt(\mathtt{c}_{\theta}) = \frac{2(q-1)}{q}|\mathcal{S}_1|=2(q-1)q^{|\mathpzc{A}|+|\mathpzc{B}|-1}(q^m-q^{|\mathpzc{C}|}).$ We further note, by Lemma \ref{Lem3}, that the element $\theta = (\theta_1 + u \theta_2, \theta_3) \in \mathtt{R}$ satisfying the conditions $\theta_3=0,$ $\mathcal{S}_{\bm{\beta}}(\theta_2) \cap \mathpzc{A} = \emptyset,$ $\mathcal{S}_{\bm{\beta}}(\theta_1) \cap \mathpzc{A} = \emptyset$ and $\mathcal{S}_{\bm{\beta}}(\theta_1) \cap \mathpzc{B} \neq \emptyset$ has precisely $|\mathpzc{X}_\mathpzc{A}||\mathpzc{Y}_{\mathpzc{A},\mathpzc{B}}|=q^{m-|\mathpzc{A}|}(q^{m-|\mathpzc{A}|}-q^{m-|\mathpzc{A}  \cup  \mathpzc{B}|})$ distinct choices.
        \item[(ii)] Suppose that $\mathcal{S}_{\bm{\beta}}(\theta_1) \cap \mathpzc{A}  \neq \emptyset$ and $\mathcal{S}_{\bm{\beta}}(\theta_1) \cap \mathpzc{B}= \emptyset.$ Here,  we see, by Lemma \ref{Lem5}(a) and equations \eqref{EQS1} and \eqref{Eq3.1}, that $wt(\mathtt{c}_{\theta}) = \frac{(q-1)}{q}|\mathcal{S}_1|=(q-1)q^{|\mathpzc{A}|+|\mathpzc{B}|-1}(q^m-q^{|\mathpzc{C}|}).$ We further note, by Lemma \ref{Lem3}, that the element $\theta = (\theta_1 + u \theta_2, \theta_3) \in \mathtt{R}$ satisfying the conditions $\theta_3=0,$ $\mathcal{S}_{\bm{\beta}}(\theta_2) \cap \mathpzc{A} = \emptyset,$ $\mathcal{S}_{\bm{\beta}}(\theta_1) \cap \mathpzc{A}  \neq \emptyset$ and $\mathcal{S}_{\bm{\beta}}(\theta_1) \cap \mathpzc{B}= \emptyset$ has precisely $|\mathpzc{X}_\mathpzc{A}||\mathpzc{Y}_{\mathpzc{B},\mathpzc{A}}|=q^{m-|\mathpzc{A}|}(q^{m-|\mathpzc{B}|}-q^{m-|\mathpzc{A}  \cup  \mathpzc{B}|})$ distinct choices.
        \item[(iii)] Suppose that $\mathcal{S}_{\bm{\beta}}(\theta_1) \cap \mathpzc{A}  \neq \emptyset$ and $\mathcal{S}_{\bm{\beta}}(\theta_1) \cap \mathpzc{B} \neq \emptyset.$ Here, we have $\mathcal{S}_{\bm{\beta}}(\theta_1+\theta_2) \cap \mathpzc{A} \neq \emptyset.$ In this case, we see, by Lemma \ref{Lem5}(a) and equations \eqref{EQS1} and \eqref{Eq3.1}, that $wt(\mathtt{c}_{\theta}) = \frac{2(q-1)}{q}|\mathcal{S}_1|=2(q-1)q^{|\mathpzc{A}|+|\mathpzc{B}|-1}(q^m-q^{|\mathpzc{C}|}).$ We further note, by Lemma \ref{Lem3}, that there are precisely $|\mathpzc{X}_\mathpzc{A}||\mathpzc{Z}_{\mathpzc{B},\mathpzc{A}}|=q^{m-|\mathpzc{A}|}(q^m -q^{m-|\mathpzc{A}|}-q^{m-|\mathpzc{B}|}+q^{m-|\mathpzc{A}  \cup  \mathpzc{B}|})$ distinct choices for the element  $\theta = (\theta_1 + u \theta_2, \theta_3) \in \mathtt{R}$ satisfying the conditions $\theta_3=0,$ $\mathcal{S}_{\bm{\beta}}(\theta_2) \cap \mathpzc{A} = \emptyset,$ $\mathcal{S}_{\bm{\beta}}(\theta_1) \cap \mathpzc{A}  \neq \emptyset$ and $\mathcal{S}_{\bm{\beta}}(\theta_1) \cap \mathpzc{B} \neq \emptyset.$ 
    \end{description}
    \item[(III)] Let  $\mathcal{S}_{\bm{\beta}}(\theta_1) \cap (\mathpzc{A}\cup \mathpzc{B}) \neq \emptyset,$ $\mathcal{S}_{\bm{\beta}}(\theta_2) \cap \mathpzc{A} \neq \emptyset$ and $\theta_3=0.$ Here, we will distinguish the following three cases:
   (i)  $\mathcal{S}_{\bm{\beta}}(\theta_1) \cap \mathpzc{A} = \emptyset$ and $\mathcal{S}_{\bm{\beta}}(\theta_1) \cap \mathpzc{B} \neq \emptyset,$ (ii)  $\mathcal{S}_{\bm{\beta}}(\theta_1) \cap \mathpzc{A}  \neq \emptyset$ and $\mathcal{S}_{\bm{\beta}}(\theta_1) \cap \mathpzc{B}= \emptyset,$ and 
        (iii)  $\mathcal{S}_{\bm{\beta}}(\theta_1) \cap \mathpzc{A}  \neq \emptyset$ and $\mathcal{S}_{\bm{\beta}}(\theta_1) \cap \mathpzc{B} \neq \emptyset.$
    \begin{description}
    \item[(i)] Suppose that $\mathcal{S}_{\bm{\beta}}(\theta_1) \cap \mathpzc{A} = \emptyset$ and $\mathcal{S}_{\bm{\beta}}(\theta_1) \cap \mathpzc{B} \neq \emptyset.$ Here,  by Lemma \ref{Lem5}(a) and equations \eqref{EQS1} and \eqref{Eq3.1}, we get $wt(\mathtt{c}_{\theta}) = \frac{2(q-1)}{q}|\mathcal{S}_1|=2(q-1)q^{|\mathpzc{A}|+|\mathpzc{B}|-1}(q^m-q^{|\mathpzc{C}|}).$ We further note, by Lemma \ref{Lem3}, that the element  $\theta = (\theta_1 + u \theta_2, \theta_3) \in \mathtt{R}$ satisfying  $\theta_3=0,$ $\mathcal{S}_{\bm{\beta}}(\theta_2) \cap \mathpzc{A} \neq \emptyset,$ $\mathcal{S}_{\bm{\beta}}(\theta_1) \cap \mathpzc{A} = \emptyset$ and $\mathcal{S}_{\bm{\beta}}(\theta_1) \cap \mathpzc{B} \neq \emptyset$ has precisely $|\mathpzc{X}^c_\mathpzc{A}||\mathpzc{Y}_{\mathpzc{A},\mathpzc{B}}|=(q^m-q^{m-|\mathpzc{A}|})(q^{m-|\mathpzc{A}|}-q^{m-|\mathpzc{A}  \cup  \mathpzc{B}|})$ distinct choices.
        
    \item[(ii)] Suppose that  $\mathcal{S}_{\bm{\beta}}(\theta_1) \cap \mathpzc{A}  \neq \emptyset$ and $\mathcal{S}_{\bm{\beta}}(\theta_1) \cap \mathpzc{B}= \emptyset.$ 
        
        First of all, let us assume that $\mathcal{S}_{\bm{\beta}}(\theta_1+\theta_2) \cap \mathpzc{A} = \emptyset.$ In this case, we see, by Lemma \ref{Lem5}(a) and equations \eqref{EQS1} and  \eqref{Eq3.1}, that $wt(\mathtt{c}_{\theta}) = \frac{(q-1)}{q}|\mathcal{S}_1|=(q-1)q^{|\mathpzc{A}|+|\mathpzc{B}|-1}(q^m-q^{|\mathpzc{C}|}).$ We further observe, by Lemma \ref{Lem4}, that the element $\theta = (\theta_1 + u \theta_2, \theta_3) \in \mathtt{R}$ satisfying  $\theta_3=0,$ $\mathcal{S}_{\bm{\beta}}(\theta_2) \cap \mathpzc{A} \neq \emptyset,$ $\mathcal{S}_{\bm{\beta}}(\theta_1) \cap \mathpzc{A}  \neq \emptyset,$ $\mathcal{S}_{\bm{\beta}}(\theta_1) \cap \mathpzc{B}= \emptyset$ and $\mathcal{S}_{\bm{\beta}}(\theta_1+\theta_2) \cap \mathpzc{A} = \emptyset$ has precisely $|\widehat{\mathpzc{M}}_{\mathpzc{A},\mathpzc{B}}|= q^{2m-|\mathpzc{A}|-|\mathpzc{A}  \cup  \mathpzc{B}|}(q^{|\mathpzc{A}|-|\mathpzc{A}\cap \mathpzc{B}|}-1)$ distinct choices.

        Next, let us suppose that $\mathcal{S}_{\bm{\beta}}(\theta_1+\theta_2) \cap \mathpzc{A} \neq \emptyset.$ In this case, we see, by Lemma \ref{Lem5}(a) and equations \eqref{EQS1} and  \eqref{Eq3.1}, that $wt(\mathtt{c}_{\theta}) = \frac{2(q-1)}{q}|\mathcal{S}_1|=2(q-1)q^{|\mathpzc{A}|+|\mathpzc{B}|-1}(q^m-q^{|\mathpzc{C}|}).$ We further note, by Lemma \ref{Lem4}, that the element  $\theta = (\theta_1 + u \theta_2, \theta_3) \in \mathtt{R}$ satisfying  $\theta_3=0,$ $\mathcal{S}_{\bm{\beta}}(\theta_2) \cap \mathpzc{A} \neq \emptyset,$ $\mathcal{S}_{\bm{\beta}}(\theta_1) \cap \mathpzc{A}  \neq \emptyset,$ $\mathcal{S}_{\bm{\beta}}(\theta_1) \cap \mathpzc{B}= \emptyset$ and $\mathcal{S}_{\bm{\beta}}(\theta_1+\theta_2) \cap \mathpzc{A} \neq \emptyset$ has precisely $|\widetilde{\mathpzc{M}}_{\mathpzc{A},\mathpzc{B}}|= q^{2m-|\mathpzc{A}|-|\mathpzc{A}  \cup  \mathpzc{B}|}\big((q^{|\mathpzc{A}|}-1)(q^{|\mathpzc{A}  \cup  \mathpzc{B}|-|\mathpzc{B}|}-1)-(q^{|\mathpzc{A}|-|\mathpzc{A}\cap \mathpzc{B}|}-1)\big)$ distinct choices.
        \item[(iii)] Suppose that  $\mathcal{S}_{\bm{\beta}}(\theta_1) \cap \mathpzc{A}  \neq \emptyset$ and $\mathcal{S}_{\bm{\beta}}(\theta_1) \cap \mathpzc{B} \neq \emptyset.$ In this case, we note, by Lemma \ref{Lem5}(a) and equations \eqref{EQS1} and \eqref{Eq3.1}, that $wt(\mathtt{c}_{\theta}) = \frac{2(q-1)}{q}|\mathcal{S}_1|=2(q-1)q^{|\mathpzc{A}|+|\mathpzc{B}|-1}(q^m-q^{|\mathpzc{C}|}).$ We further note, by Lemma \ref{Lem3}, that the element $\theta = (\theta_1 + u \theta_2, \theta_3) \in \mathtt{R}$ satisfying $\theta_3=0,$ $\mathcal{S}_{\bm{\beta}}(\theta_2) \cap \mathpzc{A} \neq \emptyset,$ $\mathcal{S}_{\bm{\beta}}(\theta_1) \cap \mathpzc{A}  \neq \emptyset$ and $\mathcal{S}_{\bm{\beta}}(\theta_1) \cap \mathpzc{B} \neq \emptyset$ has precisely $|\mathpzc{X}_\mathpzc{A}^c||\mathpzc{Z}_{\mathpzc{B},\mathpzc{A}}|=(q^m-q^{m-|\mathpzc{A}|})(q^m -q^{m-|\mathpzc{A}|}-q^{m-|\mathpzc{B}|}+q^{m-|\mathpzc{A}  \cup  \mathpzc{B}|})$ distinct choices.
    \end{description}
    \item[(IV)] Let $\theta_3 \neq 0$ and $\mathcal{S}_{\bm{\beta}}(\theta_3) \cap \mathpzc{C} \neq \emptyset.$ In this case, we see, by Lemma \ref{Lem5}(b) and equations \eqref{EQS1} and \eqref{Eq3.1}, that $wt(\mathtt{c}_{\theta}) = \frac{2(q-1)}{q}|\mathcal{S}_1|=2(q-1)q^{|\mathpzc{A}|+|\mathpzc{B}|-1}(q^m-q^{|\mathpzc{C}|}).$
     We also observe, by Lemma \ref{Lem3}, that  there are precisely $|\mathpzc{X}_\mathpzc{C}^c|q^{2m}=(q^m-q^{m-|\mathpzc{C}|})q^{2m}$ distinct choices for the element $\theta = (\theta_1 + u \theta_2, \theta_3) \in \mathtt{R}$ satisfying the condition (IV).
     \item[(V)] Let  $\mathcal{S}_{\bm{\beta}}(\theta_1) \cap (\mathpzc{A}\cup \mathpzc{B}) = \emptyset,$ $\mathcal{S}_{\bm{\beta}}(\theta_2) \cap \mathpzc{A} \neq \emptyset,$ $\theta_3 \neq 0$ and $\mathcal{S}_{\bm{\beta}}(\theta_3) \cap \mathpzc{C} = \emptyset.$ Here, we have $\mathcal{S}_{\bm{\beta}}(\theta_1+\theta_2) \cap \mathpzc{A} \neq \emptyset.$ In this case, we see, by Lemma \ref{Lem5} and equations \eqref{EQS1} and \eqref{Eq3.1}, that $wt(\mathtt{c}_{\theta}) = \frac{2(q-1)}{q}|\mathcal{S}_1|=2(q-1)q^{|\mathpzc{A}|+|\mathpzc{B}|-1}(q^m-q^{|\mathpzc{C}|}).$ We further note, by Lemma \ref{Lem3}, that the element  $\theta = (\theta_1 + u \theta_2, \theta_3) \in \mathtt{R}$ satisfying the condition (V) has precisely $(|\mathpzc{X}_{\mathpzc{C}}|-1)|\mathpzc{X}_{\mathpzc{A} \cup \mathpzc{B}}||\mathpzc{X}_\mathpzc{A}^c|=(q^{m-|\mathpzc{C}|}-1)q^{m-|\mathpzc{A}  \cup  \mathpzc{B}|}(q^m-q^{m-|\mathpzc{A}|})$ distinct choices. 
     \item[(VI)] Let  $\mathcal{S}_{\bm{\beta}}(\theta_1) \cap (\mathpzc{A} \cup \mathpzc{B}) \neq \emptyset,$ $\mathcal{S}_{\bm{\beta}}(\theta_2) \cap \mathpzc{A} = \emptyset,$ $\theta_3 \neq 0$ and $\mathcal{S}_{\bm{\beta}}(\theta_3) \cap \mathpzc{C} = \emptyset.$ Here, we will distinguish the following three cases:
   (i)  $\mathcal{S}_{\bm{\beta}}(\theta_1) \cap \mathpzc{A} = \emptyset$ and $\mathcal{S}_{\bm{\beta}}(\theta_1) \cap \mathpzc{B} \neq \emptyset,$
        (ii)  $\mathcal{S}_{\bm{\beta}}(\theta_1) \cap \mathpzc{A}  \neq \emptyset$ and $\mathcal{S}_{\bm{\beta}}(\theta_1) \cap \mathpzc{B}= \emptyset,$ and 
        (iii)  $\mathcal{S}_{\bm{\beta}}(\theta_1) \cap \mathpzc{A}  \neq \emptyset$ and $\mathcal{S}_{\bm{\beta}}(\theta_1) \cap \mathpzc{B} \neq \emptyset.$
    \begin{description}
        \item[(i)] Suppose that  $\mathcal{S}_{\bm{\beta}}(\theta_1) \cap \mathpzc{A} = \emptyset$ and $\mathcal{S}_{\bm{\beta}}(\theta_1) \cap \mathpzc{B} \neq \emptyset.$ Here, by Lemma \ref{Lem5} and equations \eqref{EQS1} and \eqref{Eq3.1}, we obtain $wt(\mathtt{c}_{\theta}) = \frac{2(q-1)}{q}|\mathcal{S}_1|=2(q-1)q^{|\mathpzc{A}|+|\mathpzc{B}|-1}(q^m-q^{|\mathpzc{C}|}).$ Further, we see, by Lemma \ref{Lem3}, that there are precisely $(|\mathpzc{X}_{\mathpzc{C}}|-1)|\mathpzc{X}_\mathpzc{A}||\mathpzc{Y}_{\mathpzc{A},\mathpzc{B}}|=(q^{m-|\mathpzc{C}|}-1)q^{m-|\mathpzc{A}|}(q^{m-|\mathpzc{A}|}-q^{m-|\mathpzc{A}  \cup  \mathpzc{B}|})$  distinct choices for the element $\theta = (\theta_1 + u \theta_2, \theta_3) \in \mathtt{R}$ satisfying  $\theta_3 \neq 0,$ $\mathcal{S}_{\bm{\beta}}(\theta_3) \cap \mathpzc{C} = \emptyset,$ $\mathcal{S}_{\bm{\beta}}(\theta_2) \cap \mathpzc{A} = \emptyset,$ $\mathcal{S}_{\bm{\beta}}(\theta_1) \cap \mathpzc{A} = \emptyset$ and $\mathcal{S}_{\bm{\beta}}(\theta_1) \cap \mathpzc{B} \neq \emptyset.$
        \item[(ii)]Suppose that  $\mathcal{S}_{\bm{\beta}}(\theta_1) \cap \mathpzc{A}  \neq \emptyset$ and $\mathcal{S}_{\bm{\beta}}(\theta_1) \cap \mathpzc{B}= \emptyset.$ Here, by Lemma \ref{Lem5} and equations \eqref{EQS1} and \eqref{Eq3.1}, we obtain $wt(\mathtt{c}_{\theta}) = \frac{(q-1)}{q}(2|\mathcal{S}_1|+q^{|\mathpzc{A}|+|\mathpzc{B}|+|\mathpzc{C}|})=(q-1)q^{|\mathpzc{A}|+|\mathpzc{B}|-1}(2q^m-q^{|\mathpzc{C}|}).$ We further see, by Lemma \ref{Lem3}, that the element $\theta = (\theta_1 + u \theta_2, \theta_3) \in \mathtt{R}$ satisfying  $\theta_3 \neq 0,$ $\mathcal{S}_{\bm{\beta}}(\theta_3) \cap \mathpzc{C} = \emptyset,$ $\mathcal{S}_{\bm{\beta}}(\theta_2) \cap \mathpzc{A} = \emptyset,$ $\mathcal{S}_{\bm{\beta}}(\theta_1) \cap \mathpzc{A}  \neq \emptyset$ and $\mathcal{S}_{\bm{\beta}}(\theta_1) \cap \mathpzc{B}= \emptyset$ has precisely $(|\mathpzc{X}_{\mathpzc{C}}|-1)|\mathpzc{X}_\mathpzc{A}||\mathpzc{Y}_{\mathpzc{B},\mathpzc{A}}|=(q^{m-|\mathpzc{C}|}-1)q^{m-|\mathpzc{A}|}(q^{m-|\mathpzc{B}|}-q^{m-|\mathpzc{A}  \cup  \mathpzc{B}|})$ distinct choices.
        \item[(iii)]Suppose that $\mathcal{S}_{\bm{\beta}}(\theta_1) \cap \mathpzc{A}  \neq \emptyset$ and $\mathcal{S}_{\bm{\beta}}(\theta_1) \cap \mathpzc{B} \neq \emptyset.$ Here, by Lemma \ref{Lem5} and equations \eqref{EQS1} and \eqref{Eq3.1}, we obtain $wt(\mathtt{c}_{\theta}) = \frac{2(q-1)}{q}|\mathcal{S}_1|=2(q-1)q^{|\mathpzc{A}|+|\mathpzc{B}|-1}(q^m-q^{|\mathpzc{C}|}).$ We further see, by Lemma \ref{Lem3}, that there are precisely $(|\mathpzc{X}_{\mathpzc{C}}|-1)|\mathpzc{X}_\mathpzc{A}||\mathpzc{Z}_{\mathpzc{B},\mathpzc{A}}|=(q^{m-|\mathpzc{C}|}-1)q^{m-|\mathpzc{A}|}(q^m -q^{m-|\mathpzc{A}|}-q^{m-|\mathpzc{B}|}+q^{m-|\mathpzc{A}  \cup  \mathpzc{B}|})$ distinct choices for the element  $\theta = (\theta_1 + u \theta_2, \theta_3) \in \mathtt{R}$ satisfying  $\theta_3 \neq 0,$ $\mathcal{S}_{\bm{\beta}}(\theta_3) \cap \mathpzc{C} = \emptyset,$ $\mathcal{S}_{\bm{\beta}}(\theta_2) \cap \mathpzc{A} = \emptyset,$ $\mathcal{S}_{\bm{\beta}}(\theta_1) \cap \mathpzc{A}  \neq \emptyset$ and $\mathcal{S}_{\bm{\beta}}(\theta_1) \cap \mathpzc{B} \neq \emptyset.$ 
    \end{description}
   \item[(VII)] Let  $\mathcal{S}_{\bm{\beta}}(\theta_1) \cap (\mathpzc{A} \cup \mathpzc{B}) \neq \emptyset,$ $\mathcal{S}_{\bm{\beta}}(\theta_2) \cap \mathpzc{A} \neq \emptyset,$ $\theta_3 \neq 0$ and $\mathcal{S}_{\bm{\beta}}(\theta_3) \cap \mathpzc{C} = \emptyset.$ Here, we will consider the following three cases separately: (i) 
     $\mathcal{S}_{\bm{\beta}}(\theta_1) \cap \mathpzc{A} = \emptyset$ and $\mathcal{S}_{\bm{\beta}}(\theta_1) \cap \mathpzc{B} \neq \emptyset,$
       (ii)  $\mathcal{S}_{\bm{\beta}}(\theta_1) \cap \mathpzc{A}  \neq \emptyset$ and $\mathcal{S}_{\bm{\beta}}(\theta_1) \cap \mathpzc{B}= \emptyset,$ and 
        (iii)  $\mathcal{S}_{\bm{\beta}}(\theta_1) \cap \mathpzc{A}  \neq \emptyset$ and $\mathcal{S}_{\bm{\beta}}(\theta_1) \cap \mathpzc{B} \neq \emptyset.$
 \begin{description}
\item[(i)] Suppose that  $\mathcal{S}_{\bm{\beta}}(\theta_1) \cap \mathpzc{A} = \emptyset$ and $\mathcal{S}_{\bm{\beta}}(\theta_1) \cap \mathpzc{B} \neq \emptyset.$ Here, by  Lemma \ref{Lem5} and equations \eqref{EQS1} and \eqref{Eq3.1}, we obtain $wt(\mathtt{c}_{\theta}) = \frac{2(q-1)}{q}|\mathcal{S}_1|=2(q-1)q^{|\mathpzc{A}|+|\mathpzc{B}|-1}(q^m-q^{|\mathpzc{C}|}).$ We further see, by Lemma \ref{Lem3}, that there are precisely $(|\mathpzc{X}_{\mathpzc{C}}|-1)|\mathpzc{X}^c_\mathpzc{A}||\mathpzc{Y}_{\mathpzc{A},\mathpzc{B}}|=(q^{m-|\mathpzc{C}|}-1)(q^m-q^{m-|\mathpzc{A}|})(q^{m-|\mathpzc{A}|}-q^{m-|\mathpzc{A}  \cup  \mathpzc{B}|})$ distinct choices for the element  $\theta = (\theta_1 + u \theta_2, \theta_3) \in \mathtt{R}$ satisfying  $\theta_3 \neq 0,$ $\mathcal{S}_{\bm{\beta}}(\theta_3) \cap \mathpzc{C} = \emptyset,$ $\mathcal{S}_{\bm{\beta}}(\theta_2) \cap \mathpzc{A} \neq \emptyset,$ $\mathcal{S}_{\bm{\beta}}(\theta_1) \cap \mathpzc{A} = \emptyset$ and $\mathcal{S}_{\bm{\beta}}(\theta_1) \cap \mathpzc{B} \neq \emptyset.$

        \item[(ii)] Suppose that  $\mathcal{S}_{\bm{\beta}}(\theta_1) \cap \mathpzc{A}  \neq \emptyset$ and $\mathcal{S}_{\bm{\beta}}(\theta_1) \cap \mathpzc{B}= \emptyset.$ 

First of all, let us assume that $\mathcal{S}_{\bm{\beta}}(\theta_1+\theta_2) \cap \mathpzc{A} = \emptyset.$ Here, by  Lemma \ref{Lem5} and equations \eqref{EQS1} and \eqref{Eq3.1}, we obtain $wt(\mathtt{c}_{\theta}) = \frac{(q-1)}{q}(2|\mathcal{S}_1|+q^{|\mathpzc{A}|+|\mathpzc{B}|+|\mathpzc{C}|})=(q-1)q^{|\mathpzc{A}|+|\mathpzc{B}|-1}(2q^m-q^{|\mathpzc{C}|}).$ We further see, by Lemma \ref{Lem3}, that there are precisely $(|\mathpzc{X}_{\mathpzc{C}}|-1)|\widehat{\mathpzc{M}}_{\mathpzc{A},\mathpzc{B}}|= (q^{m-|\mathpzc{C}|}-1)q^{2m-|\mathpzc{A}|-|\mathpzc{A}  \cup  \mathpzc{B}|}(q^{|\mathpzc{A}|-|\mathpzc{A}\cap \mathpzc{B}|}-1)$ distinct choices for the element $\theta = (\theta_1 + u \theta_2, \theta_3) \in \mathtt{R}$ satisfying $\theta_3 \neq 0,$ $\mathcal{S}_{\bm{\beta}}(\theta_3) \cap \mathpzc{C} = \emptyset,$ $\mathcal{S}_{\bm{\beta}}(\theta_2) \cap \mathpzc{A} \neq \emptyset,$ $\mathcal{S}_{\bm{\beta}}(\theta_1) \cap \mathpzc{A}  \neq \emptyset,$ $\mathcal{S}_{\bm{\beta}}(\theta_1) \cap \mathpzc{B}= \emptyset$ and $\mathcal{S}_{\bm{\beta}}(\theta_1+\theta_2) \cap \mathpzc{A} = \emptyset.$ 

Next, let us suppose that $\mathcal{S}_{\bm{\beta}}(\theta_1+\theta_2) \cap \mathpzc{A} \neq \emptyset.$ Here, by Lemma \ref{Lem5} and equations \eqref{EQS1} and \eqref{Eq3.1}, we obtain $wt(\mathtt{c}_{\theta}) = \frac{2(q-1)}{q}|\mathcal{S}_1|=2(q-1)q^{|\mathpzc{A}|+|\mathpzc{B}|-1}(q^m-q^{|\mathpzc{C}|}).$ We further see, by Lemma \ref{Lem3}, that the element $\theta = (\theta_1 + u \theta_2, \theta_3) \in \mathtt{R}$ satisfying $\theta_3 \neq 0,$ $\mathcal{S}_{\bm{\beta}}(\theta_3) \cap \mathpzc{C} = \emptyset,$ $\mathcal{S}_{\bm{\beta}}(\theta_2) \cap \mathpzc{A} \neq \emptyset,$ $\mathcal{S}_{\bm{\beta}}(\theta_1) \cap \mathpzc{A}  \neq \emptyset,$ $\mathcal{S}_{\bm{\beta}}(\theta_1) \cap \mathpzc{B}= \emptyset$ and $\mathcal{S}_{\bm{\beta}}(\theta_1+\theta_2) \cap \mathpzc{A} \neq \emptyset$ has precisely $(|\mathpzc{X}_{\mathpzc{C}}|-1)|\widetilde{\mathpzc{M}}_{\mathpzc{A},\mathpzc{B}}|= (q^{m-|\mathpzc{C}|}-1)q^{2m-|\mathpzc{A}|-|\mathpzc{A}  \cup  \mathpzc{B}|}\big((q^{|\mathpzc{A}|}-1)(q^{|\mathpzc{A}  \cup  \mathpzc{B}|-|\mathpzc{B}|}-1)-(q^{|\mathpzc{A}|-|\mathpzc{A}\cap \mathpzc{B}|}-1)\big)$ distinct choices.

\item[(iii)] Suppose that  $\mathcal{S}_{\bm{\beta}}(\theta_1) \cap \mathpzc{A}  \neq \emptyset$ and $\mathcal{S}_{\bm{\beta}}(\theta_1) \cap \mathpzc{B} \neq \emptyset.$ Here, by  Lemma \ref{Lem5} and equations \eqref{EQS1} and \eqref{Eq3.1}, we obtain $wt(\mathtt{c}_{\theta}) = \frac{2(q-1)}{q}|\mathcal{S}_1|=2(q-1)q^{|\mathpzc{A}|+|\mathpzc{B}|-1}(q^m-q^{|\mathpzc{C}|}).$ We further note,  by Lemma \ref{Lem3}, that the element  $\theta = (\theta_1 + u \theta_2, \theta_3) \in \mathtt{R}$ satisfying   $\theta_3 \neq 0,$ $\mathcal{S}_{\bm{\beta}}(\theta_3) \cap \mathpzc{C} = \emptyset,$ $\mathcal{S}_{\bm{\beta}}(\theta_2) \cap \mathpzc{A} \neq \emptyset,$ $\mathcal{S}_{\bm{\beta}}(\theta_1) \cap \mathpzc{A}  \neq \emptyset$ and $\mathcal{S}_{\bm{\beta}}(\theta_1) \cap \mathpzc{B} \neq \emptyset$ has precisely    $(|\mathpzc{X}_{\mathpzc{C}}|-1)|\mathpzc{X}_\mathpzc{A}^c||\mathpzc{Z}_{\mathpzc{B},\mathpzc{A}}|=(q^{m-|\mathpzc{C}|}-1)(q^m-q^{m-|\mathpzc{A}|})(q^m -q^{m-|\mathpzc{A}|}-q^{m-|\mathpzc{B}|}+q^{m-|\mathpzc{A}  \cup  \mathpzc{B}|})$ distinct choices.
    \end{description}
\item[(VIII)] Finally, let  $\mathcal{S}_{\bm{\beta}}(\theta_1) \cap (\mathpzc{A}\cup \mathpzc{B}) = \emptyset,$ $\mathcal{S}_{\bm{\beta}}(\theta_2) \cap \mathpzc{A} = \emptyset,$ $\theta_3 \neq 0$ and $\mathcal{S}_{\bm{\beta}}(\theta_3) \cap \mathpzc{C} = \emptyset.$ Here, we have $\mathcal{S}_{\bm{\beta}}(\theta_1+\theta_2) \cap \mathpzc{A} = \emptyset.$ In this case, we see, by Lemma \ref{Lem5} and equations \eqref{EQS1} and \eqref{Eq3.1}, that $wt(\mathtt{c}_{\theta}) = \frac{2(q-1)}{q}(|\mathcal{S}_1|+q^{|\mathpzc{A}|+|\mathpzc{B}|+|\mathpzc{C}|})=2(q-1)q^{m+|\mathpzc{A}|+|\mathpzc{B}|-1}.$ Further, by Lemma \ref{Lem3}, we see that there are precisely $(|\mathpzc{X}_{\mathpzc{C}}|-1)|\mathpzc{X}_{\mathpzc{A} \cup \mathpzc{B}}||\mathpzc{X}_{\mathpzc{A}}|=(q^{m-|\mathpzc{C}|}-1)(q^{m-|\mathpzc{A}  \cup  \mathpzc{B}|})(q^{m-|\mathpzc{A}|})$ distinct choices for the element  $\theta = (\theta_1 + u \theta_2, \theta_3) \in \mathtt{R}$ satisfying the condition (VIII).  
\end{description}

Furthermore, we note, by \eqref{kermu}, that $|\ker(\mu_1)| = q^{2m - |\mathpzc{A}| - |\mathpzc{A} \cup \mathpzc{B}|}.$ Using this fact and by combining the cases (I) -- (VIII) above,  we observe  that Table \ref{T1} provides the Lee weight distribution of the code $\mathscr{C}_{\mathcal{S}_1}$ over $\frac{\mathbb{F}_q[u]}{\langle u^2 \rangle}.$ 
\end{proof}
Throughout this paper, let us define the number $\eta$ as
\begin{equation}\label{Eta}
\eta=\left\{\begin{array}{cl}  1 & \text{if } q \text{ is even};\\
0 & \text{otherwise.}
\end{array}\right.
\end{equation}
In the following theorem, we determine the parameters and Lee weight distribution of the code $\mathscr{C}_{\mathcal{S}_2}$ over $\frac{\mathbb{F}_q[u]}{\langle u^2 \rangle}.$ 
\begin{thm}\label{Th2}
Let the number $\eta$ be as defined in \eqref{Eta}. The code $\mathscr{C}_{\mathcal{S}_2}$ is a linear code over $\frac{\mathbb{F}_q[u]}{\langle u^2\rangle}$ with parameters $\big( q^{|\mathpzc{B}|+|\mathpzc{C}|}(q^m-q^{|\mathpzc{A}|}), q^{2m+|\mathpzc{C}|},\epsilon(q-1)(q^m-q^{|\mathpzc{A}|})q^{|\mathpzc{B}|+|\mathpzc{C}|-1}\big )$ and  Lee weight distribution as given in Table \ref{T2}, where $\epsilon=2$ if $\mathpzc{A}\subseteq \mathpzc{B},$ while $\epsilon=1$ if $\mathpzc{A}\not\subseteq \mathpzc{B}.$ As a consequence,  the code $\mathscr{C}_{\mathcal{S}_2}$ is  
\begin{itemize}
\item a $2$-weight code when $\mathpzc{B} = [m].$ 
\item a $3$-weight code when $\mathpzc{A}\subseteq \mathpzc{B}\subsetneq [m].$  
\item a $5$-weight code when  $\mathpzc{A} \not\subseteq \mathpzc{B}\subsetneq [m].$ \end{itemize}
\begin{table}[H]
\centering
    \begin{tabular}{ |c|c|} 
 \hline Lee (\textit{resp.} Hamming) weight $w$ & Frequency $\mathsf{A}_w$ \\ \hline
$0$ & $1$ \\
 \hline
$(q-1)(q^m-q^{|\mathpzc{A}|})q^{|\mathpzc{B}|+|\mathpzc{C}|-1}$ & $2(q^{m-|\mathpzc{B}|}-q^{m-|\mathpzc{A} \cup \mathpzc{B}|})$ \\
 \hline
$(q-1)q^{m+|\mathpzc{B}|+|\mathpzc{C}|-1}$ & $2(q^{m-|\mathpzc{A} \cup \mathpzc{B}|}-1)$ \\
\hline
$2(q-1)(q^m-q^{|\mathpzc{A}|})q^{|\mathpzc{B}|+|\mathpzc{C}|-1}$ & $q^{2m+|\mathpzc{C}|} - (q^{m-|\mathpzc{B}|}-q^{m-|\mathpzc{A}\cup\mathpzc{B}|})(2q^{m-|\mathpzc{A}|}+1-\eta) - q^{2m-|\mathpzc{A}\cup\mathpzc{B}|-|\mathpzc{A}|}$\\
\hline
$(q-1)(2q^m-q^{|\mathpzc{A}|})q^{|\mathpzc{B}|+|\mathpzc{C}|-1}$ & $(q^{m-|\mathpzc{B}|}-q^{m-|\mathpzc{A} \cup \mathpzc{B}|})(2q^{m-|\mathpzc{A}|}-1-\eta)$ \\
\hline
$2(q-1)q^{m+|\mathpzc{B}|+|\mathpzc{C}|-1}$ & $(q^{m-|\mathpzc{A}|}-1)(q^{m-|\mathpzc{A} \cup \mathpzc{B}|}-1) + q^{m-|\mathpzc{A}|} - q^{m-|\mathpzc{A}\cup\mathpzc{B}|}$ \\
 \hline
\end{tabular}
\caption{The Lee (\textit{resp.} Hamming) weight distribution of the code $\mathscr{C}_{\mathcal{S}_2}$ (\textit{resp.} $\phi(\mathscr{C}_{\mathcal{S}_2})$) }\label{T2} 
\end{table}
\end{thm}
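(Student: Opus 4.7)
The plan is to mirror the proof of Theorem \ref{Th1} with the defining set $\mathcal{S}_1$ replaced by $\mathcal{S}_2$. First, I would compute $|\mathcal{S}_2|=|\Delta_{\mathpzc{A}}^c|\,|\Delta_{\mathpzc{B}}|\,|\Delta_{\mathpzc{C}}|=(q^m-q^{|\mathpzc{A}|})q^{|\mathpzc{B}|+|\mathpzc{C}|}$ from \eqref{Scard}. Applying $wt_H(x)=1-\tfrac{1}{q}\sum_{\gamma\in\mathbb{F}_q}\chi(\gamma x)$ to both components produced by the Gray map (cf.\ \eqref{GI}), swapping summations, and factoring the resulting triple sum over $\Delta_{\mathpzc{A}}^c\times\Delta_{\mathpzc{B}}\times\Delta_{\mathpzc{C}}$ via Lemma \ref{Lem5}(c), then yields the exact analogue of \eqref{Eq3.1}:
\[
wt_L(\mathtt{c}_\theta)=\tfrac{q-1}{q}\Bigl(2|\mathcal{S}_2|-F_{\theta_1}(\Delta_{\mathpzc{B}})\,F_{\theta_3}(\Delta_{\mathpzc{C}})\,\bigl(F_{\theta_2}(\Delta_{\mathpzc{A}}^c)+F_{\theta_1+\theta_2}(\Delta_{\mathpzc{A}}^c)\bigr)\Bigr).
\]
The only structural change from Theorem \ref{Th1} is that $F_{\theta_2}(\Delta_{\mathpzc{A}})$ is replaced by $F_{\theta_2}(\Delta_{\mathpzc{A}}^c)$, and by Lemma \ref{Lem5}(a)(b) together with Remark \ref{KRK} this factor now takes three values: $|\Delta_{\mathpzc{A}}^c|$ at $b=0$, $-|\Delta_{\mathpzc{A}}|$ on $\Delta_{\mathpzc{A}}^{\perp_{Tr}}\setminus\{0\}$, and $0$ elsewhere.

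The kernel and size are extracted from the weight formula: weight zero forces each of the three factors to its maximum, and $F_b(\Delta_{\mathpzc{A}}^c)$ hits $|\Delta_{\mathpzc{A}}^c|$ only at $b=0$, so $\theta_1=\theta_2=0$ together with $\mathcal{S}_{\bm{\beta}}(\theta_3)\cap\mathpzc{C}=\emptyset$, giving $|\ker(\mu_2)|=q^{m-|\mathpzc{C}|}$ and $|\mathscr{C}_{\mathcal{S}_2}|=q^{2m+|\mathpzc{C}|}$ via \eqref{Nsize}. For the Lee weight distribution I would split $\mathtt{R}\setminus\ker(\mu_2)$ into two regimes: (a) when $F_{\theta_1}(\Delta_{\mathpzc{B}})=0$ or $F_{\theta_3}(\Delta_{\mathpzc{C}})=0$ the product vanishes and the weight equals $\tfrac{2(q-1)}{q}|\mathcal{S}_2|$; (b) otherwise the weight depends only on the ordered pair $\bigl(F_{\theta_2}(\Delta_{\mathpzc{A}}^c),F_{\theta_1+\theta_2}(\Delta_{\mathpzc{A}}^c)\bigr)\in\{|\Delta_{\mathpzc{A}}^c|,0,-|\Delta_{\mathpzc{A}}|\}^2$, whose nine formal cases collapse onto the six weight values of Table \ref{T2}. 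Each sub-case is counted by fixing $\theta_1\in\Delta_{\mathpzc{B}}^{\perp_{Tr}}$, separating by whether $\theta_1$ lies in $\Delta_{\mathpzc{A}\cup\mathpzc{B}}^{\perp_{Tr}}$ or in $\Delta_{\mathpzc{B}}^{\perp_{Tr}}\setminus\Delta_{\mathpzc{A}\cup\mathpzc{B}}^{\perp_{Tr}}$, and enumerating the valid $\theta_2$ via Lemma \ref{Lem3}. Dividing the resulting $(\theta_1,\theta_2)$-counts by $|\ker(\mu_2)|=q^{m-|\mathpzc{C}|}$ (noting that the $q^{m-|\mathpzc{C}|}$ admissible choices of $\theta_3\in\Delta_{\mathpzc{C}}^{\perp_{Tr}}$ all produce the same codeword) yields the frequencies $\mathsf{A}_w$ of Table \ref{T2}.

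The main obstacle is the pair of mixed sub-cases where exactly one of $\theta_2$ and $\theta_1+\theta_2$ lies in $\Delta_{\mathpzc{A}}^{\perp_{Tr}}\setminus\{0\}$ while the other lies outside $\Delta_{\mathpzc{A}}^{\perp_{Tr}}$. The involution $\theta_2\mapsto-\theta_1-\theta_2$ swaps these two sub-cases, and its fixed points are exactly the solutions of $2\theta_2=-\theta_1$: in characteristic two this equation collapses to $\theta_1=0$ and so contributes nothing to the sub-cases at hand, while in odd characteristic it produces a genuine family of collisions that must be subtracted when counting. This is precisely what introduces the characteristic-dependent indicator $\eta$ of \eqref{Eta} into the frequencies of the weights $(q-1)(2q^m-q^{|\mathpzc{A}|})q^{|\mathpzc{B}|+|\mathpzc{C}|-1}$ and $2(q-1)(q^m-q^{|\mathpzc{A}|})q^{|\mathpzc{B}|+|\mathpzc{C}|-1}$. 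Finally, the classification into 2-, 3-, and 5-weight codes follows by inspecting when the frequencies $2(q^{m-|\mathpzc{B}|}-q^{m-|\mathpzc{A}\cup\mathpzc{B}|})$ and $(q^{m-|\mathpzc{B}|}-q^{m-|\mathpzc{A}\cup\mathpzc{B}|})(2q^{m-|\mathpzc{A}|}-1-\eta)$ vanish (namely when $\mathpzc{A}\subseteq\mathpzc{B}$) and when $2(q^{m-|\mathpzc{A}\cup\mathpzc{B}|}-1)$ vanishes (namely when $\mathpzc{B}=[m]$).
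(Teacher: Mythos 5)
Your overall strategy is the same as the paper's: the paper proves Theorem \ref{Th2} by repeating the computation of Theorem \ref{Th1} with $\mathcal{S}_1$ replaced by $\mathcal{S}_2$, and your length $|\mathcal{S}_2|=q^{|\mathpzc{B}|+|\mathpzc{C}|}(q^m-q^{|\mathpzc{A}|})$, your analogue of \eqref{Eq3.1}, the three possible values of $F_b(\Delta_{\mathpzc{A}}^c)$, the kernel $\{\theta_1=\theta_2=0,\ \mathcal{S}_{\bm{\beta}}(\theta_3)\cap\mathpzc{C}=\emptyset\}$ and the size $q^{2m+|\mathpzc{C}|}$ are all correct, as is the six-way classification of the weights.

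The gap is exactly at the step you flag as the main obstacle: the origin of $\eta$. As subsets of pairs $(\theta_1,\theta_2)$, the two mixed sub-cases are Case A: $\theta_2\in\Delta_{\mathpzc{A}}^{\perp_{Tr}}\setminus\{0\}$ and $\theta_1+\theta_2\notin\Delta_{\mathpzc{A}}^{\perp_{Tr}}$, and Case B with the roles of $\theta_2$ and $\theta_1+\theta_2$ interchanged. These are disjoint for a trivial reason (a common pair would put $\theta_2$ both inside and outside the subspace $\Delta_{\mathpzc{A}}^{\perp_{Tr}}$), so there are no collisions to subtract in any characteristic; moreover, at a fixed point of your involution $\theta_2\mapsto -\theta_1-\theta_2$ one has $\theta_1+\theta_2=-\theta_2$, so $F_{\theta_2}(\Delta_{\mathpzc{A}}^c)=F_{\theta_1+\theta_2}(\Delta_{\mathpzc{A}}^c)$ by Lemma \ref{Lem5}(c), and such a pair can never lie in a mixed sub-case at all. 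Hence your mechanism cannot produce any parity-dependent correction, and carrying out your own outline honestly gives, for each admissible $\theta_1\in\Delta_{\mathpzc{B}}^{\perp_{Tr}}\setminus\Delta_{\mathpzc{A}\cup\mathpzc{B}}^{\perp_{Tr}}$, exactly $2(q^{m-|\mathpzc{A}|}-1)$ choices of $\theta_2$, i.e. the frequency $(q^{m-|\mathpzc{B}|}-q^{m-|\mathpzc{A}\cup\mathpzc{B}|})(2q^{m-|\mathpzc{A}|}-2)$ for the weight $(q-1)(2q^m-q^{|\mathpzc{A}|})q^{|\mathpzc{B}|+|\mathpzc{C}|-1}$, independent of the parity of $q$; this agrees with the table's $2q^{m-|\mathpzc{A}|}-1-\eta$ only when $q$ is even. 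A concrete check with $q=3$, $m=2$, $\mathpzc{A}=\{1\}$, $\mathpzc{B}=\mathpzc{C}=\{2\}$ gives $8$ codewords of that weight (not the $10$ predicted by the $\eta$-entry), and only the parity-independent frequencies satisfy the first-moment identity $\sum_{w} w\,\mathsf{A}_w=\tfrac{2(q-1)}{q}\,|\mathcal{S}_2|\,q^{2m+|\mathpzc{C}|}$, which must hold because every Gray coordinate is balanced when $\delta_1\neq 0$. So the proposal does not actually derive Table \ref{T2}: the frequencies of the fourth and fifth rows are never computed, the justification offered for $\eta$ is spurious, and for odd $q$ with $\mathpzc{A}\not\subseteq\mathpzc{B}$ the honest count conflicts with the stated table — a discrepancy you would need to confront explicitly rather than attribute to a non-existent collision.
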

\begin{proof} Working as in Theorem \ref{Th1} and by applying Lemmas \ref{Lem3}, \ref{Lem4} and \ref{Lem5} and using equations  \eqref{Code}, \eqref{S2} and \eqref{Nsize}, the desired result follows.
\end{proof}
\begin{remark}\label{Rk1}
Theorem 3 of Mondal and Lee \cite{Mondal2024} follows, as a special case, from  Theorem \ref{Th2} upon setting $q = 2.$

\end{remark}
In the following theorem, we determine the parameters and  Lee weight distribution of the code $\mathscr{C}_{\mathcal{S}_3}$ over $\frac{\mathbb{F}_q[u]}{\langle u^2 \rangle}.$ 
\begin{thm}\label{Th3}
The code $\mathscr{C}_{\mathcal{S}_3}$ is a linear code over $\frac{\mathbb{F}_q[u]}{\langle u^2\rangle}$ with parameters $\big( q^{|\mathpzc{A}|+|\mathpzc{C}|}(q^m-q^{|\mathpzc{B}|}), q^{m+|\mathpzc{A}|+|\mathpzc{C}|},2(q-1)q^{|\mathpzc{A}|+|\mathpzc{C}|-1}(q^m-q^{|\mathpzc{B}|})\big )$ and  Lee weight distribution as given in Table \ref{T3}. As a consequence, the code $\mathscr{C}_{\mathcal{S}_3}$ is \begin{itemize}
\item  a $2$-weight code if either $[m]=\mathpzc{A}\cup\mathpzc{B}$ or $\mathpzc{A}\subseteq \mathpzc{B}.$ 
\item  a $3$-weight code otherwise.\end{itemize}
\begin{table}[H]
\centering
    \begin{tabular}{ |c|c|} 
 \hline Lee (\textit{resp.} Hamming) weight $w$ & Frequency $\mathsf{A}_w$ \\ \hline
$0$ & $1$ \\
\hline
$2(q-1)q^{|\mathpzc{A}|+|\mathpzc{C}|-1}(q^m-q^{|\mathpzc{B}|})$ & $q^{m+|\mathpzc{A}|+|\mathpzc{C}|}-2q^{m-|\mathpzc{B}|}+q^{m-|\mathpzc{A} \cup \mathpzc{B}|}$\\
 \hline
 $(q-1)q^{|\mathpzc{A}|+|\mathpzc{C}|-1}(2q^m-q^{|\mathpzc{B}|})$ & $2(q^{m-|\mathpzc{B}|}-q^{m-|\mathpzc{A} \cup \mathpzc{B}|})$\\
 \hline
 $2(q-1)q^{m+|\mathpzc{A}|+|\mathpzc{C}|-1}$ & $q^{m-|\mathpzc{A}\cup \mathpzc{B}|}-1$\\
 \hline
\end{tabular}
\caption{The Lee (\textit{resp.} Hamming) weight distribution of the code $\mathscr{C}_{\mathcal{S}_3}$ (\textit{resp.} $\Phi(\mathscr{C}_{\mathcal{S}_3})$) }\label{T3} 
\end{table}
% As a consequence, the code $\mathscr{C}_{\mathcal{S}_3}$ is a $2$-weight code if either $[m]=\mathpzc{A}\cup\mathpzc{B}$ or $\mathpzc{A}\subseteq \mathpzc{B}$ with the same parameters. Otherwise, the code $\mathscr{C}_{\mathcal{S}_3}$ is a $3$-weight code with the same parameters.
% The code $\mathscr{C}_{\mathcal{S}_3}$ is a $2$-weight linear code over  $\frac{\mathbb{F}_q[u]}{\langle u^2\rangle}$ if either $[m]=\mathpzc{A}\cup\mathpzc{B}$ or $\mathpzc{A}\subseteq \mathpzc{B}.$ Otherwise, the code $\mathscr{C}_{\mathcal{S}_3}$ is a $3$-weight linear code over $\frac{\mathbb{F}_q[u]}{\langle u^2\rangle}.$ The code $\mathscr{C}_{\mathcal{S}_3}$ has parameters $\big( q^{|\mathpzc{C}|+|\mathpzc{A}|}(q^m-q^{|\mathpzc{B}|}), q^{m+|\mathpzc{C}|+|\mathpzc{A}|},2(q-1)q^{|\mathpzc{C}|+|\mathpzc{A}|-1}(q^m-q^{|\mathpzc{B}|})\big ).$ 
\end{thm}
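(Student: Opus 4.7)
The plan is to adapt the character-sum strategy developed for Theorem \ref{Th1} to the defining set $\mathcal{S}_3$, in which $\Delta_{\mathpzc{B}}$ is replaced by $\Delta_{\mathpzc{B}}^c$ and $\Delta_{\mathpzc{C}}^c$ by $\Delta_{\mathpzc{C}}$. The length $|\mathcal{S}_3| = q^{|\mathpzc{A}|+|\mathpzc{C}|}(q^m - q^{|\mathpzc{B}|})$ follows immediately from \eqref{Scard} and \eqref{S3}. For the Lee weight of $\mathtt{c}_\theta$ with $\theta = (\theta_1 + u\theta_2, \theta_3) \in \mathtt{R}$, I would expand via \eqref{GI} using $wt_H(Tr(x)) = 1 - \frac{1}{q}\sum_{\gamma \in \mathbb{F}_q}\chi(\gamma\, Tr(x))$, interchange the orders of summation, factor the resulting triple sum over the three coordinates of the defining set, and apply Lemma \ref{Lem5}(c) to collapse the $\gamma$-dependence. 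This produces
\begin{equation*}
wt_L(\mathtt{c}_\theta) = \frac{q-1}{q}\Big(2|\mathcal{S}_3| - F_{\theta_1}(\Delta_{\mathpzc{B}}^c)\, F_{\theta_3}(\Delta_{\mathpzc{C}})\, \big(F_{\theta_2}(\Delta_{\mathpzc{A}}) + F_{\theta_1+\theta_2}(\Delta_{\mathpzc{A}})\big)\Big).
\end{equation*}

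To identify $\ker(\mu_3)$, I would solve $wt_L(\mathtt{c}_\theta) = 0$. By Lemma \ref{Lem5}(a)--(b), the factor $F_{\theta_1}(\Delta_{\mathpzc{B}}^c)$ attains $q^m - q^{|\mathpzc{B}|}$ only when $\theta_1 = 0$ and otherwise lies in $\{0, -q^{|\mathpzc{B}|}\}$, while $F_{\theta_3}(\Delta_{\mathpzc{C}}) \in \{q^{|\mathpzc{C}|}, 0\}$ and each summand of $F_{\theta_2}(\Delta_{\mathpzc{A}}) + F_{\theta_1+\theta_2}(\Delta_{\mathpzc{A}})$ lies in $\{q^{|\mathpzc{A}|}, 0\}$. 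Matching the subtracted product to $2|\mathcal{S}_3|$ therefore forces $\theta_1 = 0$, $\mathcal{S}_{\bm{\beta}}(\theta_3) \cap \mathpzc{C} = \emptyset$, and $\mathcal{S}_{\bm{\beta}}(\theta_2) \cap \mathpzc{A} = \emptyset$. Counting via Lemma \ref{Lem3} then gives $|\ker(\mu_3)| = q^{2m - |\mathpzc{A}| - |\mathpzc{C}|}$, and \eqref{Nsize} yields $|\mathscr{C}_{\mathcal{S}_3}| = q^{m + |\mathpzc{A}| + |\mathpzc{C}|}$.

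For the Lee weight distribution, I would partition $\mathtt{R} \setminus \ker(\mu_3)$ into cases indexed by (i) whether $\theta_1 = 0$, $\theta_1 \neq 0$ with $\mathcal{S}_{\bm{\beta}}(\theta_1) \cap \mathpzc{B} \neq \emptyset$, or $\theta_1 \neq 0$ with $\mathcal{S}_{\bm{\beta}}(\theta_1) \cap \mathpzc{B} = \emptyset$; (ii) whether $\mathcal{S}_{\bm{\beta}}(\theta_3) \cap \mathpzc{C} = \emptyset$; and (iii) the pair of support conditions on $\theta_2$ and $\theta_1 + \theta_2$ relative to $\mathpzc{A}$. Every case in which at least one of the three factors vanishes yields the common weight $2(q-1)q^{|\mathpzc{A}|+|\mathpzc{C}|-1}(q^m - q^{|\mathpzc{B}|})$, with frequency counted by subtracting the remaining cases from $q^{3m} - |\ker(\mu_3)|$. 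The main obstacle will be the cluster of cases with $\theta_1 \neq 0$, $\mathcal{S}_{\bm{\beta}}(\theta_1) \cap \mathpzc{B} = \emptyset$, and $\mathcal{S}_{\bm{\beta}}(\theta_3) \cap \mathpzc{C} = \emptyset$, where $F_{\theta_1}(\Delta_{\mathpzc{B}}^c) = -q^{|\mathpzc{B}|}$ makes the subtracted product contribute \emph{positively} to $wt_L$, and the weight depends sensitively on how many of $F_{\theta_2}(\Delta_{\mathpzc{A}})$ and $F_{\theta_1+\theta_2}(\Delta_{\mathpzc{A}})$ are non-zero. The sub-case in which exactly one vanishes produces the intermediate weight $(q-1)q^{|\mathpzc{A}|+|\mathpzc{C}|-1}(2q^m-q^{|\mathpzc{B}|})$; both sub-configurations are to be counted via Lemma \ref{Lem3} applied to $\mathpzc{Y}_{\mathpzc{B},\mathpzc{A}}$, each contributing $q^{m-|\mathpzc{B}|} - q^{m-|\mathpzc{A} \cup \mathpzc{B}|}$ codewords. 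The sub-case in which both vanish produces the weight $2(q-1)q^{m+|\mathpzc{A}|+|\mathpzc{C}|-1}$ with frequency $q^{m-|\mathpzc{A} \cup \mathpzc{B}|} - 1$. Once the frequencies are tabulated, the 2-weight corollaries follow by observing that $\mathpzc{A} \cup \mathpzc{B} = [m]$ annihilates the frequency of $2(q-1)q^{m+|\mathpzc{A}|+|\mathpzc{C}|-1}$, while $\mathpzc{A} \subseteq \mathpzc{B}$ annihilates that of the intermediate weight.
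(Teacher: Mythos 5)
Your proposal is correct and follows precisely the route the paper prescribes for this theorem, namely repeating the character-sum computation of Theorem \ref{Th1} with $\Delta_{\mathpzc{B}}^c$ and $\Delta_{\mathpzc{C}}$ in place of $\Delta_{\mathpzc{B}}$ and $\Delta_{\mathpzc{C}}^c$, so that $wt_L(\mathtt{c}_\theta)=\frac{q-1}{q}\big(2|\mathcal{S}_3|-F_{\theta_1}(\Delta_{\mathpzc{B}}^c)F_{\theta_3}(\Delta_{\mathpzc{C}})(F_{\theta_2}(\Delta_{\mathpzc{A}})+F_{\theta_1+\theta_2}(\Delta_{\mathpzc{A}}))\big),$ and your kernel, size and frequency counts all agree with Table \ref{T3}. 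One wording slip: in the final cluster the sub-case you label ``both vanish'' should read ``both are non-zero,'' since the weight $2(q-1)q^{m+|\mathpzc{A}|+|\mathpzc{C}|-1}$ with frequency $q^{m-|\mathpzc{A}\cup\mathpzc{B}|}-1$ arises when $F_{\theta_2}(\Delta_{\mathpzc{A}})=F_{\theta_1+\theta_2}(\Delta_{\mathpzc{A}})=q^{|\mathpzc{A}|},$ whereas the case where both actually vanish is already absorbed into the common weight $2(q-1)q^{|\mathpzc{A}|+|\mathpzc{C}|-1}(q^m-q^{|\mathpzc{B}|}).$
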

\begin{proof}
Working as in Theorem \ref{Th1} and by applying Lemmas \ref{Lem3}, \ref{Lem4} and \ref{Lem5} and using equations \eqref{Code}, \eqref{S3}  and \eqref{Nsize}, we get the desired result.
\end{proof}
\begin{remark}\label{Rk2}Theorem 2 of Mondal and Lee \cite{Mondal2024} follows, as a special case, from  Theorem \ref{Th3} by setting $q = 2.$
\end{remark}
In the following theorem, we determine the parameters and  Lee weight distribution of the code $\mathscr{C}_{\mathcal{S}_4}.$ 
\begin{thm}\label{Th4}  The code $\mathscr{C}_{\mathcal{S}_4}$ is a linear  code over $\frac{\mathbb{F}_q[u]}{\langle u^2\rangle}$ with Lee weight distribution  as given in Table \ref{T4}. Furthermore, it is  \begin{itemize}
\item a $4$-weight code  with parameters $\big( (q^m-q^{|\mathpzc{C}|})(q^{|\mathpzc{A}|}-1)q^{|\mathpzc{B}|}, ~q^{m+|\mathpzc{A}|+|\mathpzc{B}|},~2(q-1)q^{|\mathpzc{B}|-1}\big((q^m-q^{|\mathpzc{C}|})(q^{|\mathpzc{A}|}-1)-q^{|\mathpzc{C}|}\big)$ when $\mathpzc{A}\subseteq \mathpzc{B}.$
    \item  a $6$-weight code  with parameters $\big( (q^m-q^{|\mathpzc{C}|})(q^{|\mathpzc{A}|}-1)q^{|\mathpzc{B}|},~ q^{m+|\mathpzc{A}|+|\mathpzc{A}\cup \mathpzc{B}|},~(q-1)(q^{m}-q^{|\mathpzc{C}|})q^{|\mathpzc{A}|+|\mathpzc{B}|-1}\big )$ when $\mathpzc{A}\not\subseteq \mathpzc{B}.$ 
\end{itemize}
% The code $\mathscr{C}_{\mathcal{S}_4}$ is a linear code over $\frac{\mathbb{F}_q[u]}{\langle u^2\rangle}$ with parameters $\big( (q^m-q^{|\mathpzc{C}|})(q^{|\mathpzc{A}|}-1)q^{|\mathpzc{B}|}, \break q^{m+|\mathpzc{A}|+|\mathpzc{A}\cup \mathpzc{B}|},(q-1)(q^{m}-q^{|\mathpzc{C}|})q^{|\mathpzc{A}|+|\mathpzc{B}|-1}\big )$ and Lee weight distribution as given in Table \ref{T4}.  As a consequence,  the code $\mathscr{C}_{\mathcal{S}_4}$ is a $4$-weight code when $\mathpzc{A}\subseteq \mathpzc{B},$ and a  $6$-weight code otherwise.  
\begin{table}[H]
\centering
    \begin{tabular}{ |c|c|} 
 \hline Lee (\textit{resp.} Hamming) weight $w$ & Frequency $\mathsf{A}_w$  \\ \hline
$0$ & $1$ \\
\hline
$(q-1)(q^{m}-q^{|\mathpzc{C}|})q^{|\mathpzc{A}|+|\mathpzc{B}|-1}$ & $2(q^{|\mathpzc{A}\cup \mathpzc{B}|-|\mathpzc{B}|}-1)$\\
\hline
$2(q-1)q^{|\mathpzc{B}|-1}\big((q^m-q^{|\mathpzc{C}|})(q^{|\mathpzc{A}|}-1)-q^{|\mathpzc{C}|}\big)$ & $\big((q^{|\mathpzc{A}|}-1)+(q^{|\mathpzc{A}|}-2)(q^{|\mathpzc{A}\cup \mathpzc{B}|-|\mathpzc{B}|}-1)\big)(q^{m-|\mathpzc{C}|}-1)$ \\
\hline
$2(q-1)(q^m-q^{|\mathpzc{C}|})(q^{|\mathpzc{A}|}-1)q^{|\mathpzc{B}|-1}$ & $q^{m+|\mathpzc{A}|+|\mathpzc{A} \cup \mathpzc{B}|}-q^{m+|\mathpzc{A}|+|\mathpzc{A} \cup \mathpzc{B}|-|\mathpzc{C}|-|\mathpzc{B}|}$ \\
\hline
$2(q-1)(q^m-q^{|\mathpzc{C}|})q^{|\mathpzc{A}|+|\mathpzc{B}|-1}$ & $(q^{|\mathpzc{A}|}-1)+(q^{|\mathpzc{A}|}-2)(q^{|\mathpzc{A} \cup \mathpzc{B}|-|\mathpzc{B}|}-1)$\\
\hline
$(q-1)q^{|\mathpzc{B}|-1}\big((2q^m-q^{|\mathpzc{C}|})(q^{|\mathpzc{A}|}-1)-q^{|\mathpzc{C}|}\big)$ & $2(q^{|\mathpzc{A} \cup \mathpzc{B}|-|\mathpzc{B}|}-1)(q^{m-|\mathpzc{C}|}-1)$ \\
\hline
$2(q-1)q^{m+|\mathpzc{B}|-1}(q^{|\mathpzc{A}|}-1)$ & $q^{m-|\mathpzc{C}|}-1$ \\
\hline
\end{tabular}
\caption{The Lee (\textit{resp.} Hamming) weight distribution of the code $\mathscr{C}_{\mathcal{S}_4}$ (\textit{resp.} $\Phi(\mathscr{C}_{\mathcal{S}_4})$)}\label{T4} 
\end{table}
\end{thm}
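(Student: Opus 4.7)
The plan is to mirror the strategy of Theorem \ref{Th1}, adjusting the character-sum computation to accommodate the fact that the first coordinate ranges over $\Delta_\mathpzc{A}^\ast$ rather than $\Delta_\mathpzc{A}.$ First, I would record the length $|\mathcal{S}_4|=(q^{|\mathpzc{A}|}-1)q^{|\mathpzc{B}|}(q^m-q^{|\mathpzc{C}|})$ from \eqref{Scard} and \eqref{S4}. Next, I would consider the surjective $\bigl(\frac{\mathbb{F}_{q^m}[u]}{\langle u^2\rangle}\bigr)$-module homomorphism $\mu_4:\mathtt{R}\to\mathscr{C}_{\mathcal{S}_4}$ and, by applying the standard orthogonality relation for $\chi$ and Lemma \ref{Lem5}(c), derive the analogue of \eqref{Eq3.1}:
\begin{equation*}
wt_L(\mathtt{c}_\theta)=\tfrac{q-1}{q}\Bigl(2|\mathcal{S}_4|-F_{\theta_1}(\Delta_\mathpzc{B})\,F_{\theta_3}(\Delta_\mathpzc{C}^c)\bigl[F_{\theta_2}(\Delta_\mathpzc{A}^\ast)+F_{\theta_1+\theta_2}(\Delta_\mathpzc{A}^\ast)\bigr]\Bigr),
\end{equation*}
where $F_b(\Delta_\mathpzc{A}^\ast)=F_b(\Delta_\mathpzc{A})-1$ equals $q^{|\mathpzc{A}|}-1$ when $\mathcal{S}_{\bm{\beta}}(b)\cap\mathpzc{A}=\emptyset$ and $-1$ otherwise.

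Second, I would extract $\ker(\mu_4)$ from $wt_L(\mathtt{c}_\theta)=0.$ The product must reach its maximum $2(q^{|\mathpzc{A}|}-1)q^{|\mathpzc{B}|}(q^m-q^{|\mathpzc{C}|}),$ which forces $\theta_3=0,$ $\mathcal{S}_{\bm{\beta}}(\theta_1)\cap\mathpzc{B}=\emptyset,$ $\mathcal{S}_{\bm{\beta}}(\theta_2)\cap\mathpzc{A}=\emptyset,$ and $\mathcal{S}_{\bm{\beta}}(\theta_1+\theta_2)\cap\mathpzc{A}=\emptyset;$ the latter two together give $\mathcal{S}_{\bm{\beta}}(\theta_1)\cap\mathpzc{A}=\emptyset.$ Hence $\ker(\mu_4)=\ker(\mu_1),$ so by \eqref{kermu} and \eqref{Nsize} the size of $\mathscr{C}_{\mathcal{S}_4}$ is $q^{m+|\mathpzc{A}|+|\mathpzc{A}\cup\mathpzc{B}|},$ which specializes to $q^{m+|\mathpzc{A}|+|\mathpzc{B}|}$ when $\mathpzc{A}\subseteq\mathpzc{B}.$

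Third, I would split the analysis for $\theta\notin\ker(\mu_4)$ into the same eight configurations (I)--(VIII) used in the proof of Theorem \ref{Th1}, determined by the three conditions $\mathcal{S}_{\bm{\beta}}(\theta_1)\cap(\mathpzc{A}\cup\mathpzc{B})=\emptyset$ or not, $\mathcal{S}_{\bm{\beta}}(\theta_2)\cap\mathpzc{A}=\emptyset$ or not, and $\theta_3=0$ or $\mathcal{S}_{\bm{\beta}}(\theta_3)\cap\mathpzc{C}=\emptyset.$ In each case I would substitute the values of $F_{\theta_1}(\Delta_\mathpzc{B}),$ $F_{\theta_3}(\Delta_\mathpzc{C}^c),$ $F_{\theta_2}(\Delta_\mathpzc{A}^\ast),$ and $F_{\theta_1+\theta_2}(\Delta_\mathpzc{A}^\ast)$ from Lemma \ref{Lem5} and count the number of $\theta$ producing that weight using Lemmas \ref{Lem3} and \ref{Lem4}. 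The presence of the $-1$ in $F_b(\Delta_\mathpzc{A}^\ast)$ is what causes certain weights to shift by a term of the form $\pm (q-1)q^{|\mathpzc{B}|-1}(q^m-q^{|\mathpzc{C}|})$ relative to the analogous weights in Theorem \ref{Th1}, which accounts for the ``extra'' factor $q^{|\mathpzc{C}|}$ appearing inside the weight expressions in Table \ref{T4}.

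Fourth, I would aggregate. The cases where $\theta_1=0$ (which is available here because no restriction $w_1 \neq 0$ appears on the defining-side) or $F_{\theta_2}(\Delta_\mathpzc{A}^\ast)=F_{\theta_1+\theta_2}(\Delta_\mathpzc{A}^\ast)=-1$ contribute new weight values that were not present in Theorem \ref{Th1}, and I would check that when $\mathpzc{A}\subseteq\mathpzc{B}$ several weight classes coalesce so that only four distinct non-zero weights remain, whereas for $\mathpzc{A}\not\subseteq\mathpzc{B}$ the six tabulated weights all persist. The main obstacle will be the careful bookkeeping in the subcase where $\mathcal{S}_{\bm{\beta}}(\theta_1)\cap\mathpzc{A}\neq\emptyset$ but $\mathcal{S}_{\bm{\beta}}(\theta_1)\cap\mathpzc{B}=\emptyset,$ where the $\widehat{\mathpzc{M}}$/$\widetilde{\mathpzc{M}}$ splitting of Lemma \ref{Lem4} must be used to separate codewords according to whether $\mathcal{S}_{\bm{\beta}}(\theta_1+\theta_2)\cap\mathpzc{A}$ is empty; the contributions coming from $F_{\theta_2}(\Delta_\mathpzc{A}^\ast)+F_{\theta_1+\theta_2}(\Delta_\mathpzc{A}^\ast)$ in the value $(q^{|\mathpzc{A}|}-1)+(-1)$ will be what yields the novel weight $2(q-1)q^{|\mathpzc{B}|-1}\bigl((q^m-q^{|\mathpzc{C}|})(q^{|\mathpzc{A}|}-1)-q^{|\mathpzc{C}|}\bigr)$ and its frequency. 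Finally, summing all frequencies and verifying they match $|\mathscr{C}_{\mathcal{S}_4}|-1$ (together with the known $|\ker(\mu_4)|$) will confirm Table \ref{T4}, and the minimum Lee weight reported in the statement will follow from comparing the five non-zero weight values under both dichotomies.
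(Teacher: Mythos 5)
Your proposal is correct and follows essentially the same route as the paper, whose proof of this theorem simply says to work as in Theorem \ref{Th1}, applying Lemmas \ref{Lem3}, \ref{Lem4} and \ref{Lem5} and equation \eqref{Nsize} — exactly the adaptation you spell out, with the character sum over $\Delta_{\mathpzc{A}}^\ast$ satisfying $F_b(\Delta_{\mathpzc{A}}^\ast)=F_b(\Delta_{\mathpzc{A}})-1,$ the kernel coinciding with $\ker(\mu_1),$ and the eight-case count yielding Table \ref{T4}. One small slip worth noting: the parenthetical asserting that $\theta_1=0$ is ``available here because no restriction $w_1\neq 0$ appears on the defining-side'' is backwards, since $\mathcal{S}_4$ does impose $w_1\in\Delta_{\mathpzc{A}}^\ast$ (while $\theta$ always ranges over all of $\mathtt{R}$); the new weights arise precisely from the resulting $-1$ terms, as you correctly identify elsewhere, so this does not affect the argument.
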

\begin{proof}
Working as in Theorem \ref{Th1} and by applying Lemmas \ref{Lem3}, \ref{Lem4} and \ref{Lem5} and using equations  \eqref{Code}, \eqref{S4} and \eqref{Nsize}, we get the desired result.
\end{proof}

\section{Parameters of the Gray images of  $\mathscr{C}_{\mathcal{S}_1},$ $\mathscr{C}_{\mathcal{S}_2},$ $\mathscr{C}_{\mathcal{S}_3}$ and $\mathscr{C}_{\mathcal{S}_4}$ over $\frac{\mathbb{F}_q[u]}{\langle u^2 \rangle}$}\label{GI}
In this section, we will investigate the Gray images of the codes $\mathscr{C}_{\mathcal{S}_1},$ $\mathscr{C}_{\mathcal{S}_2},$ $\mathscr{C}_{\mathcal{S}_3}$ and $\mathscr{C}_{\mathcal{S}_4}$ over $\frac{\mathbb{F}_q[u]}{\langle u^2 \rangle}$ under the Gray map $\Phi$ and explicitly determine their parameters and Hamming weight distributions.  For this, we shall henceforth represent the vectors of $\mathbb{F}_q^{3m}$ in the form $(w_1,w_2,w_3),$ where $w_1, w_2,w_3 \in \mathbb{F}_q^m.$   We further define the following four multisets consisting of the vectors of $\mathbb{F}_q^{3m}$:
\begin{align}\label{D1}
    \mathcal{N}_1  &=  \big\{\big\{
(w_2+\omega,w_3,w_1)\in \mathbb{F}_q^{3m} : w_1 \in \Delta_{\mathpzc{A}}, ~w_2 \in \Delta_{\mathpzc{B}},~w_3 \in \Delta_{\mathpzc{C}}^c,~ \omega\in \{\textbf{0}, w_1\} \text{ with }|\mathpzc{C}|<m\big\}\big\},\\ 
\label{D2}
\mathcal{N}_2  &=  \big\{\big\{
(w_2+\omega,w_3,w_1)\in \mathbb{F}_q^{3m} : w_1 \in \Delta_{\mathpzc{A}}^c, ~w_2 \in \Delta_{\mathpzc{B}},~w_3 \in \Delta_{\mathpzc{C}},~ \omega\in \{\textbf{0}, w_1\} \text{ with }|\mathpzc{A}|<m\big\}\big\},\\
\label{D3}
\mathcal{N}_3  &=  \big\{\big\{
(w_2+\omega,w_3,w_1)\in \mathbb{F}_q^{3m} : w_1 \in \Delta_{\mathpzc{A}}, ~w_2 \in \Delta_{\mathpzc{B}}^c,~w_3 \in \Delta_{\mathpzc{C}},~ \omega\in \{\textbf{0}, w_1\} \text{ with }|\mathpzc{B}|<m\big\}\big\}, \text{ and }\\
\label{D4}
\mathcal{N}_4  &=  \big\{\big\{
 (w_2+\omega, w_3,w_1)\in \mathbb{F}_q^{3m} : w_1 \in \Delta_{\mathpzc{A}}^\ast, ~w_2 \in \Delta_{\mathpzc{B}},~w_3 \in \Delta_{\mathpzc{C}}^c,~ \omega\in \{\textbf{0}, w_1\} \text{ with }|\mathpzc{C}|<m \text{ and } |\mathpzc{A}| \geq 2\big\}\big\},
\end{align}
where $\{\!\{\,\cdot\,\}\!\}$ denotes a multiset, allowing elements to appear with multiplicities, from this point on. We next make a key observation in the following lemma.

\begin{lemma}\label{Matrix}
For $1\leq i \leq 4,$ let $\mathscr{G}_{\mathcal{N}_i}\in M_{3m \times 2|\mathcal{S}_{i}|}(\mathbb{F}_q)$ be a matrix whose columns form the multiset $\mathcal{N}_i,$ considered up to permutation of columns. Then the Gray image $\Phi(\mathscr{C}_{\mathcal{S}_i})$ of $\mathscr{C}_{\mathcal{S}_i}$ is a linear code of length $2|\mathcal{S}_{i}|$ over $\mathbb{F}_q$ with  $\mathscr{G}_{\mathcal{N}_i}$ as its spanning matrix.
\end{lemma}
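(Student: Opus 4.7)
The proof reduces directly to Theorem \ref{Th6} together with the explicit formulas for $\pi_1,\pi_2$ given in \eqref{pi}. The task is essentially bookkeeping: to recognise the collection of columns produced by Theorem \ref{Th6}, for the specific defining set $\mathcal{S}_i,$ as (a permutation of) the multiset $\mathcal{N}_i.$

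\textbf{Step 1 (Apply Theorem \ref{Th6}).} I would begin by fixing $i\in\{1,2,3,4\}$ and invoking Theorem \ref{Th6} with $\mathcal{D}=\mathcal{S}_i.$ This immediately tells us that $\Phi(\mathscr{C}_{\mathcal{S}_i})$ is a linear code of length $2|\mathcal{S}_i|$ over $\mathbb{F}_q,$ whose spanning matrix $\mathscr{G}\in M_{3m\times 2|\mathcal{S}_i|}(\mathbb{F}_q)$ has columns $\pi_1(s),\pi_2(s)\in\mathbb{F}_q^{3m},$ placed in consecutive odd and even positions as $s$ ranges over $\mathcal{S}_i.$

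\textbf{Step 2 (Rewrite columns uniformly).} Next, for an arbitrary $s=(w_1+uw_2,w_3)\in \mathcal{R}^m,$ I would use \eqref{pi} to observe that
\[
\pi_1(s)=(w_2,w_3,w_1)=(w_2+\mathbf{0},w_3,w_1),\qquad \pi_2(s)=(w_1+w_2,w_3,w_1)=(w_2+w_1,w_3,w_1).
\]
Consequently, the unordered pair $\{\pi_1(s),\pi_2(s)\}$ coincides with $\{(w_2+\omega,w_3,w_1):\omega\in\{\mathbf{0},w_1\}\},$ and hence the multiset of all columns of $\mathscr{G}$ is $\{\!\{(w_2+\omega,w_3,w_1):s\in\mathcal{S}_i,\ \omega\in\{\mathbf{0},w_1\}\}\!\}.$

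\textbf{Step 3 (Match with $\mathcal{N}_i$).} Substituting the defining constraints \eqref{S1}--\eqref{S4} on $w_1,w_2,w_3$ into the multiset obtained in Step 2, I would verify directly from \eqref{D1}--\eqref{D4} that this multiset is precisely $\mathcal{N}_i.$ Since permuting columns does not alter the row span, $\mathscr{G}_{\mathcal{N}_i}$ is a spanning matrix of $\Phi(\mathscr{C}_{\mathcal{S}_i}),$ completing the proof.

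\textbf{Obstacles.} There is no genuine obstruction; the only point requiring care is the multiset interpretation — columns must be counted with multiplicity, since two distinct pairs $(s,\omega)$ and $(s',\omega')$ could a priori give the same column vector, and these multiplicities are what make $\mathscr{G}_{\mathcal{N}_i}$ have the correct number of columns $2|\mathcal{S}_i|.$ This lemma is essentially a convenient repackaging of Theorem \ref{Th6}, intended for use in Section \ref{Sec5} where the column structure of $\mathscr{G}_{\mathcal{N}_i}$ drives the analysis of projectivity and duality.
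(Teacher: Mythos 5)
Your proposal is correct and follows essentially the same route as the paper, which proves the lemma by invoking Theorem \ref{Th6} for $\mathcal{D}=\mathcal{S}_i$ and identifying the columns $\pi_1(s),\pi_2(s)$ with the multiset $\mathcal{N}_i$; you merely spell out the bookkeeping (including the multiset/multiplicity point) that the paper leaves implicit.
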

\begin{proof}
It follows immediately from the definition of  $\mathscr{C}_{\mathcal{S}_i}$ and  Theorem \ref{Th6}.
\end{proof}

To study the Gray image $\Phi(\mathscr{C}_{\mathcal{S}_1})$,  we begin with the following observation.

\begin{remark}\label{comp1}
By Lemma \ref{Matrix}, we see that the code $\Phi(\mathscr{C}_{\mathcal{S}_1})$ is precisely the row span of the matrix  $\mathscr{G}_{\mathcal{N}_1}$ over $\mathbb{F}_q.$ Moreover, when $\mathpzc{A} \subseteq \mathpzc{B},$ we have $\Delta_\mathpzc{A}\subseteq\Delta_\mathpzc{B}.$ In this case, we further observe  that each column of the matrix $\mathscr{G}_{\mathcal{N}_1}$ appears exactly twice. More precisely, up to permutation equivalence, the matrix $\mathscr{G}_{\mathcal{N}_1}$ is of the form 
$\mathscr{G}_{\mathcal{N}_1}=\big[\widehat{\mathscr{G}}_{\mathcal{N}_1}\mid \widehat{\mathscr{G}}_{\mathcal{N}_1}\big],$ where $\widehat{\mathscr{G}}_{\mathcal{N}_1}\in M_{3m \times |\mathcal{S}_{1}|}(\mathbb{F}_q)$ is the matrix whose columns are precisely the elements of the set $\widehat{\mathcal{N}}_1 = \{(w_2,w_3,w_1)\in \mathbb{F}_q^{3m} : w_1 \in \Delta_{\mathpzc{A}}, ~w_2 \in \Delta_{\mathpzc{B}},~w_3 \in \Delta_{\mathpzc{C}}^c \text{ with }|\mathpzc{C}|<m \}.$   As a consequence,  the code $\Phi(\mathscr{C}_{\mathcal{S}_1})$ is a double repetition of the code $\mathcal{D}_{1}$ spanned by the rows of  $\widehat{\mathscr{G}}_{\mathcal{N}_1}$ over $\mathbb{F}_q.$ 

 Additionally, we observe that all columns of $\widehat{\mathscr{G}}_{\mathcal{N}_1}\in M_{3m \times |\mathcal{S}_{1}|}(\mathbb{F}_q)$ are distinct and that there are exactly $2m-|\mathpzc{A}| -|\mathpzc{B}|$  zero rows in $\widehat{\mathscr{G}}_{\mathcal{N}_1}.$ By applying a suitable row permutation to $\widehat{\mathscr{G}}_{\mathcal{N}_1}$ (if necessary) so that its last $2m-|\mathpzc{A}| -|\mathpzc{B}|$ rows are zero, the columns of $\widehat{\mathscr{G}}_{\mathcal{N}_1}$ can be regarded as elements of $\mathbb{F}_q^{m+|\mathpzc{A}|+|\mathpzc{B}|} \setminus \Delta_{\mathpzc{L}_1},$ where $\Delta_{{\mathpzc{L}}_1} \subseteq \mathbb{F}_q^{m+|\mathpzc{A}|+|\mathpzc{B}|}$ is a simplicial complex with support 
\begin{equation*}
    \mathpzc{L}_1 = \{1,2,\ldots,|\mathpzc{B}|\} \cup \{i+|\mathpzc{B}| : i \in \mathpzc{C}\} \cup \{m+|\mathpzc{B}|+1, m+|\mathpzc{B}|+2, \ldots,m+|\mathpzc{B}|+|\mathpzc{A}|\}. 
\end{equation*}
 Thus, the code $\mathcal{D}_1$ is a linear code over $\mathbb{F}_q$ with defining set $\mathbb{F}_q^{m+|\mathpzc{A}|+|\mathpzc{B}|} \setminus \Delta_{\mathpzc{L}_1},$ and hence it belongs to the well-known  family of Solomon–Stiffler codes \cite{Solomon1965}.  Further, the  parameters and Hamming weight distribution of $\mathcal{D}_1,$ and hence of the code $\Phi(\mathscr{C}_{\mathcal{S}_1})$ in the case $\mathpzc{A} \subseteq \mathpzc{B},$  can be obtained by applying Theorem 2 of Hu \etal \cite{Hu2024}.   \end{remark} In the following theorem,   we will consider the case  $\mathpzc{A} \not\subseteq \mathpzc{B}$ (see Remark \ref{comp1}) and explicitly determine the parameters and Hamming weight distribution of the Gray image $\Phi(\mathscr{C}_{\mathcal{S}_1})$  over $\mathbb{F}_q.$  
\begin{thm}\label{ThN1}
When $\mathpzc{A}\not\subseteq \mathpzc{B},$ the Gray image $\Phi(\mathscr{C}_{\mathcal{S}_1})$ of the code $\mathscr{C}_{\mathcal{S}_1}$ is a $4$-weight linear code over $\mathbb{F}_q$ with parameters $$\big[2q^{|\mathpzc{A}|+|\mathpzc{B}|}(q^m-q^{|\mathpzc{C}|}), ~{m+|\mathpzc{A}|+|\mathpzc{A}\cup\mathpzc{B}|}, ~(q-1)q^{|\mathpzc{A}|+|\mathpzc{B}|-1}(q^m-q^{|\mathpzc{C}|})\big ]$$ and Hamming weight distribution as given in  Table \ref{T1}. 
Furthermore, the code $\Phi(\mathscr{C}_{\mathcal{S}_1})$ is self-orthogonal  if either $q=2$ or $q=3.$
\end{thm}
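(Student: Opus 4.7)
The plan is to derive the length, dimension, minimum distance, and four-weight distribution of $\Phi(\mathscr{C}_{\mathcal{S}_1})$ directly from Theorem \ref{Th1} via the isometry properties of the Gray map $\Phi$, and then to verify self-orthogonality for $q\in\{2,3\}$ by a short divisibility check on the four Hamming weights listed in Table \ref{T1}.

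For the parameter transfer, I would appeal to Remark \ref{RRk}, which asserts that $\Phi$ is an $\mathbb{F}_q$-linear isometry from $(\mathscr{C}_{\mathcal{S}_1},wt_L)$ onto $(\Phi(\mathscr{C}_{\mathcal{S}_1}),wt_H)$. Hence $|\Phi(\mathscr{C}_{\mathcal{S}_1})|=|\mathscr{C}_{\mathcal{S}_1}|=q^{m+|\mathpzc{A}|+|\mathpzc{A}\cup\mathpzc{B}|}$, so the $\mathbb{F}_q$-dimension of the Gray image is $m+|\mathpzc{A}|+|\mathpzc{A}\cup\mathpzc{B}|$; the length doubles to $2|\mathcal{S}_1|=2q^{|\mathpzc{A}|+|\mathpzc{B}|}(q^m-q^{|\mathpzc{C}|})$; the minimum Hamming distance equals the minimum Lee distance in Theorem \ref{Th1}; and Table \ref{T1} is literally the Hamming weight distribution of $\Phi(\mathscr{C}_{\mathcal{S}_1})$. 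Under $\mathpzc{A}\not\subseteq\mathpzc{B}$, the parameter $\epsilon$ in Theorem \ref{Th1} equals $1$ and $|\mathpzc{A}\cup\mathpzc{B}|>|\mathpzc{B}|$, so each of the four listed frequencies is strictly positive; a pairwise-difference calculation (each pairwise difference is a nonzero integer multiple of $(q-1)q^{|\mathpzc{A}|+|\mathpzc{B}|+|\mathpzc{C}|-1}$ or $(q-1)q^{m+|\mathpzc{A}|+|\mathpzc{B}|-1}$) confirms that the four weights are distinct, settling the four-weight claim.

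The self-orthogonality of $\Phi(\mathscr{C}_{\mathcal{S}_1})$ for $q\in\{2,3\}$ reduces, by the standard criteria, to showing that every Hamming weight is divisible by $4$ when $q=2$ and by $3$ when $q=3$. Writing $q^m-q^{|\mathpzc{C}|}=q^{|\mathpzc{C}|}(q^{m-|\mathpzc{C}|}-1)$ and $2q^m-q^{|\mathpzc{C}|}=q^{|\mathpzc{C}|}(2q^{m-|\mathpzc{C}|}-1)$, each of the four weights in Table \ref{T1} takes the form $\epsilon'(q-1)\,q^{|\mathpzc{A}|+|\mathpzc{B}|+|\mathpzc{C}|-1}\cdot u$ with $u$ an odd integer and $\epsilon'\in\{1,2\}$, except for the largest weight which is $2(q-1)q^{m+|\mathpzc{A}|+|\mathpzc{B}|-1}$. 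Since $\mathpzc{A},\mathpzc{B},\mathpzc{C}$ are non-empty we have $|\mathpzc{A}|+|\mathpzc{B}|+|\mathpzc{C}|-1\geq 2$, and $m\geq 2$ gives $m+|\mathpzc{A}|+|\mathpzc{B}|-1\geq 3$. For $q=3$ the factor $3^{|\mathpzc{A}|+|\mathpzc{B}|-1}\geq 3$ already forces divisibility by $3$; for $q=2$ the exponent $\geq 2$ alone supplies the factor $4$ in the weights with coefficient $(q-1)=1$, while the weights with coefficient $2(q-1)=2$ pick up an extra factor of $2$, yielding divisibility by $4$ throughout.

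The main delicate point I anticipate is the tight binary instance $|\mathpzc{A}|=|\mathpzc{B}|=|\mathpzc{C}|=1$, where the exponent $|\mathpzc{A}|+|\mathpzc{B}|+|\mathpzc{C}|-1$ saturates the lower bound $2$; here divisibility by $4$ holds with no slack, and one must explicitly note that the accompanying factor $2^{m-|\mathpzc{C}|}-1$ is odd and so does not obstruct it. Once this and the analogous check for $q=3$ are complete, the self-orthogonality follows from the classical criteria, and the proof is finished.
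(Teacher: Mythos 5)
Your proposal is correct and follows essentially the same route as the paper: transfer the length, size, distance and weight distribution of Theorem \ref{Th1} through the Gray isometry (Remark \ref{RRk}), and obtain self-orthogonality for $q=2,3$ from the standard divisibility criteria (Theorems 1.4.8(ii) and 1.4.10(i) of Huffman--Pless), whose weight-divisibility hypotheses you verify explicitly where the paper leaves the check implicit. The positivity/distinctness check for the four weights and the careful binary case $|\mathpzc{A}|=|\mathpzc{B}|=|\mathpzc{C}|=1$ are fine and consistent with Table \ref{T1}.
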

\begin{proof}
The result follows immediately by applying Theorem \ref{Th1}, Lemma \ref{Matrix}, Theorems 1.4.8(ii) and 1.4.10(i) of \cite{HuffF} and Remark \ref{RRk}.
\end{proof}

\begin{remark}\label{comp1.1} 
By Theorem 2 of Hu \etal \cite{Hu2024}, the code $\mathcal{D}_1$ (as defined in Remark \ref{comp1})  is a $2$-weight code over $\mathbb{F}_q$ with parameters $\big[q^{|\mathpzc{A}|+|\mathpzc{B}|}(q^m-q^{|\mathpzc{C}|}),~ m+|\mathpzc{A}|+|\mathpzc{B}|,~ (q-1)q^{|\mathpzc{A}|+|\mathpzc{B}|-1}(q^m-q^{|\mathpzc{C}|})\big].$ When $\mathpzc{A}\not\subseteq \mathpzc{B},$ we observe that the code $\Phi(\mathscr{C}_{\mathcal{S}_1})$ considered in Theorem \ref{ThN1} achieves the same Hamming distance as the code $\mathcal{D}_1,$ while having twice the length and a larger dimension.  More precisely, the dimension of $\Phi(\mathscr{C}_{\mathcal{S}_1})$ exceeds that of $\mathcal{D}_1$ by  $|\mathpzc{A}\cup \mathpzc{B}|-|\mathpzc{B}| =|\mathpzc{A}|-|\mathpzc{A}\cap \mathpzc{B}|> 0.$
\end{remark}

The following example illustrates Theorem \ref{ThN1}.
\begin{example}\label{Ex4.1}
Let $q=4$ and $m=3,$ and let $\mathpzc{A}=\{1\},$ $\mathpzc{B}=\{2,3\}$ and $\mathpzc{C}=\{3\}$ be subsets of $[m]=[3]=\{1,2,3\}.$ By carrying out computations in Magma \cite{Magma}, we see that the code $\Phi(\mathscr{C}_{\mathcal{S}_1})$ is a $4$-weight linear $[7680,10,2880]$-code over $\mathbb{F}_4$ with Hamming weight enumerator $W_{\Phi(\mathscr{C}_{\mathcal{S}_1})}(Z)=1+6Z^{2880}+16272Z^{5760}+90Z^{5952}+15Z^{6144}.$ This  is consistent with Theorem \ref{ThN1}. \end{example}

To study the Gray image $\Phi(\mathscr{C}_{\mathcal{S}_2}),$ we first make the following observation.

\begin{remark}\label{comp2}
When $\mathpzc{B}=[m],$ we note,  arguing as  in Remark \ref{comp1} and using Lemma \ref{Matrix}, that the spanning matrix $\mathscr{G}_{\mathcal{N}_2}$ of the code $\Phi(\mathscr{C}_{\mathcal{S}_2}),$ up to permutation equivalence, is of the form $\mathscr{G}_{\mathcal{N}_2}=\big[\widehat{\mathscr{G}}_{\mathcal{N}_2}\mid \widehat{\mathscr{G}}_{\mathcal{N}_2}\big],$  where $\widehat{\mathscr{G}}_{\mathcal{N}_2}\in M_{3m \times |\mathcal{S}_{2}|}(\mathbb{F}_q)$ is the matrix whose columns are precisely the elements of the set $\widehat{\mathcal{N}}_2 = \{(w_2,w_3,w_1)\in \mathbb{F}_q^{3m} : w_1 \in \Delta_{\mathpzc{A}}^c, ~w_2 \in \Delta_{\mathpzc{B}},~w_3 \in \Delta_{\mathpzc{C}} \text{ with }|\mathpzc{A}|<m \}.$ As a consequence, the code $\Phi(\mathscr{C}_{\mathcal{S}_2})$ is a double repetition of the code $\mathcal{D}_2$ spanned by the rows of the matrix $\widehat{\mathscr{G}}_{\mathcal{N}_2}.$
   
   Additionally, we observe that all columns of $\widehat{\mathscr{G}}_{\mathcal{N}_2}\in M_{3m \times |\mathcal{S}_{2}|}(\mathbb{F}_q)$ are distinct and that there are exactly $2m-|\mathpzc{B}| -|\mathpzc{C}|$  zero rows in $\widehat{\mathscr{G}}_{\mathcal{N}_2}.$ By applying a suitable row permutation to the matrix $\widehat{\mathscr{G}}_{\mathcal{N}_2}$ (if necessary) so that its last $2m-|\mathpzc{B}| -|\mathpzc{C}|$  rows are zero, the columns of $\widehat{\mathscr{G}}_{\mathcal{N}_2}$ can be regarded as elements of $\mathbb{F}_q^{m+|\mathpzc{B}|+|\mathpzc{C}|} \setminus \Delta_{\mathpzc{L}_2},$ where $\Delta_{{\mathpzc{L}}_2} \subseteq \mathbb{F}_q^{m+|\mathpzc{B}|+|\mathpzc{C}|}$ is a simplicial complex with support 
\begin{equation*}
\mathpzc{L}_2 = \{1,2,\ldots,|\mathpzc{B}|+|\mathpzc{C}|\} \cup \{i+|\mathpzc{B}|+|\mathpzc{C}| : i \in \mathpzc{A}\}.
\end{equation*}
Thus, the code $\mathcal{D}_2$ is a linear code over $\mathbb{F}_q$ with defining set $\mathbb{F}_q^{m+|\mathpzc{B}|+|\mathpzc{C}|} \setminus \Delta_{\mathpzc{L}_2},$ and hence it belongs to the  well-known family of Solomon–Stiffler codes \cite{Solomon1965}.  Further, the  parameters and Hamming weight distribution of $\mathcal{D}_2,$ and hence of the code $\Phi(\mathscr{C}_{\mathcal{S}_2})$ in the case $\mathpzc{B}=[m],$ can be obtained by applying Theorem 2 of Hu \etal \cite{Hu2024}. 
\end{remark}

In the following theorem,  we determine the parameters and Hamming weight distribution  of the Gray image $\Phi(\mathscr{C}_{\mathcal{S}_2})$ under the assumption that $\mathpzc{B} \neq [m]$ (see Remark \ref{comp2}).
\begin{thm}\label{ThN2}
When $\mathpzc{B} \neq [m],$ the Gray image $\Phi(\mathscr{C}_{\mathcal{S}_2})$ of the code $\mathscr{C}_{\mathcal{S}_2}$ is a linear code over $\mathbb{F}_q$ with parameters
$$\big[2q^{|\mathpzc{B}|+|\mathpzc{C}|}(q^m-q^{|\mathpzc{A}|}), ~{2m+|\mathpzc{C}|},~(q-1)(q^m-\epsilon q^{|\mathpzc{A}|})q^{|\mathpzc{B}|+|\mathpzc{C}|-1}\big ]$$ and Hamming weight distribution as given in  Table \ref{T2}, where $\epsilon=0$ if $\mathpzc{A}\subseteq \mathpzc{B},$ while $\epsilon=1$ if $\mathpzc{A}\not\subseteq \mathpzc{B}.$   Furthermore, the code $\Phi(\mathscr{C}_{\mathcal{S}_2})$ is
 \begin{itemize}
 \item  self-orthogonal if either $q=2$ or $q=3.$ 
\item a $3$-weight code if $\mathpzc{A}\subseteq \mathpzc{B}.$  
\item a $5$-weight code if  $\mathpzc{A} \not\subseteq \mathpzc{B}.$ \end{itemize}
\end{thm}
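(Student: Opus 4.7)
The plan is to deduce Theorem \ref{ThN2} directly from Theorem \ref{Th2} by invoking Remark \ref{RRk}, which states that the Gray map $\Phi$ is an $\mathbb{F}_q$-linear isometry from $\big(\big(\frac{\mathbb{F}_q[u]}{\langle u^2\rangle}\big)^{\mathrm{n}}, wt_L\big)$ onto $\big(\mathbb{F}_q^{2\mathrm{n}}, wt_H\big)$ that preserves cardinality. Applying this to $\mathscr{C}_{\mathcal{S}_2},$ whose parameters and Lee weight distribution were determined in Theorem \ref{Th2}, immediately yields that $\Phi(\mathscr{C}_{\mathcal{S}_2})$ is a linear code over $\mathbb{F}_q$ of length $2|\mathcal{S}_2| = 2q^{|\mathpzc{B}|+|\mathpzc{C}|}(q^m-q^{|\mathpzc{A}|}),$ of cardinality $q^{2m+|\mathpzc{C}|}$ (hence of dimension $2m+|\mathpzc{C}|$), and with Hamming weight distribution coinciding with Table \ref{T2}.

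The remaining work is to identify the minimum distance and the number of distinct nonzero weights under the hypothesis $\mathpzc{B} \neq [m].$ I will read off Table \ref{T2} and analyze which frequencies vanish in the sub-cases $\mathpzc{A} \subseteq \mathpzc{B}$ and $\mathpzc{A} \not\subseteq \mathpzc{B}.$ When $\mathpzc{A}\subseteq \mathpzc{B},$ the equality $|\mathpzc{A}\cup\mathpzc{B}|=|\mathpzc{B}|$ forces the frequencies of the weights $(q-1)(q^m-q^{|\mathpzc{A}|})q^{|\mathpzc{B}|+|\mathpzc{C}|-1}$ and $(q-1)(2q^m-q^{|\mathpzc{A}|})q^{|\mathpzc{B}|+|\mathpzc{C}|-1}$ to vanish, while the hypothesis $\mathpzc{B}\neq[m]$ guarantees $q^{m-|\mathpzc{B}|}-1>0,$ ensuring that the three remaining weights truly occur; a straightforward comparison using $2q^{|\mathpzc{A}|}\le q^m$ then shows that $(q-1)q^{m+|\mathpzc{B}|+|\mathpzc{C}|-1}$ is the smallest of the three, which matches the formula $(q-1)(q^m-\epsilon q^{|\mathpzc{A}|})q^{|\mathpzc{B}|+|\mathpzc{C}|-1}$ with $\epsilon=0.$ When $\mathpzc{A}\not\subseteq \mathpzc{B},$ I will verify, using $|\mathpzc{A}\cup\mathpzc{B}|>|\mathpzc{B}|,$ $|\mathpzc{B}|<m,$ and $|\mathpzc{A}|<m,$ that all five frequencies listed in Table \ref{T2} are strictly positive and that the five weight values are pairwise distinct; the smallest then becomes $(q-1)(q^m-q^{|\mathpzc{A}|})q^{|\mathpzc{B}|+|\mathpzc{C}|-1},$ matching $\epsilon=1.$

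For the self-orthogonality claim for $q\in\{2,3\},$ the strategy is to invoke the classical weight-divisibility criteria \cite[Thms.~1.4.8(ii) and 1.4.10(i)]{HuffF}: a binary code in which every codeword has Hamming weight divisible by $4$ is self-orthogonal, and a ternary code in which every codeword has weight divisible by $3$ is self-orthogonal. I will then inspect each entry of Table \ref{T2} for $q=2$ (where $\eta=1$) or $q=3$ (where $\eta=0$) and verify the required divisibility by factoring the weights: for example, for $q=2$ each listed weight takes the form $2^{|\mathpzc{A}|+|\mathpzc{B}|+|\mathpzc{C}|-1}\cdot (\text{integer factor})$ or $2^{m+|\mathpzc{B}|+|\mathpzc{C}|-1}\cdot 2,$ and since $|\mathpzc{A}|,|\mathpzc{B}|,|\mathpzc{C}|\ge 1$ we get $|\mathpzc{A}|+|\mathpzc{B}|+|\mathpzc{C}|\ge 3,$ yielding divisibility by $4.$ The analogous check for $q=3$ is immediate from the factor $3^{|\mathpzc{B}|+|\mathpzc{C}|-1}$ and the coefficient $(q-1)=2.$ The main obstacle I anticipate is the verification that all five weights are genuinely distinct (ruling out coincidences such as $(q-1)q^{m+|\mathpzc{B}|+|\mathpzc{C}|-1}=2(q-1)(q^m-q^{|\mathpzc{A}|})q^{|\mathpzc{B}|+|\mathpzc{C}|-1}$ when $q=2$ and $|\mathpzc{A}|=m-1$) in the case $\mathpzc{A}\not\subseteq \mathpzc{B},$ which requires careful bookkeeping of the parameters $|\mathpzc{A}|,$ $|\mathpzc{B}|,$ $|\mathpzc{C}|$ and the use of $\mathpzc{A}\not\subseteq\mathpzc{B}$ to avoid degeneracy.
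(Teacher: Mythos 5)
Your route is the same as the paper's: Theorem \ref{Th2} together with the Gray isometry of Remark \ref{RRk} transfers the length, size and Lee weight distribution to $\Phi(\mathscr{C}_{\mathcal{S}_2}),$ and the self-orthogonality for $q\in\{2,3\}$ is exactly the weight-divisibility argument via Theorems 1.4.8(ii) and 1.4.10(i) of Huffman--Pless; your divisibility check (divisibility by $4$ from $|\mathpzc{A}|+|\mathpzc{B}|+|\mathpzc{C}|\geq 3$ and $m\geq 2$ when $q=2,$ and by $3$ when $q=3$) is correct. Your decision to read the minimum distance off Table \ref{T2} rather than quote the Lee distance displayed in Theorem \ref{Th2} is also the right move: when $\mathpzc{A}\subseteq\mathpzc{B}\subsetneq[m]$ the weight $(q-1)q^{m+|\mathpzc{B}|+|\mathpzc{C}|-1}$ occurs with positive frequency $2(q^{m-|\mathpzc{B}|}-1)$ and is at most $2(q-1)(q^m-q^{|\mathpzc{A}|})q^{|\mathpzc{B}|+|\mathpzc{C}|-1},$ which is what produces the $\epsilon=0$ value in the statement (the headline distance of Theorem \ref{Th2} would give the wrong value here, so the table is the only reliable source).

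The one step that will not go through as written is the claimed pairwise distinctness of the listed weights ``using $\mathpzc{A}\not\subseteq\mathpzc{B}$ to avoid degeneracy.'' The coincidence you flag, $(q-1)q^{m+|\mathpzc{B}|+|\mathpzc{C}|-1}=2(q-1)(q^m-q^{|\mathpzc{A}|})q^{|\mathpzc{B}|+|\mathpzc{C}|-1},$ holds precisely when $q^m=2q^{|\mathpzc{A}|},$ i.e. $q=2$ and $|\mathpzc{A}|=m-1,$ and the hypothesis $\mathpzc{A}\not\subseteq\mathpzc{B}$ does not exclude it: take $q=2,$ $m=3,$ $\mathpzc{A}=\{1,2\},$ $\mathpzc{B}=\{1\},$ $\mathpzc{C}=\{1\}.$ Then both weights equal $16,$ all five frequencies in Table \ref{T2} are positive ($4,\,2,\,116,\,4,\,1$), and $\Phi(\mathscr{C}_{\mathcal{S}_2})$ has only four distinct non-zero weights, not five. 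Likewise, for $q=2,$ $\mathpzc{A}=\mathpzc{B}$ and $|\mathpzc{A}|=m-1$ the ``$3$-weight'' case collapses to two distinct weights (consistent with the first bullet of Theorem \ref{Th7} for the punctured code). So the $3$-weight/$5$-weight bullets hold only after excluding the binary edge case $|\mathpzc{A}|=m-1$; carrying out your distinctness check would expose this rather than complete it. This defect lies in the statement itself and is passed over silently by the paper's one-line proof as well; everything else in your proposal (length, dimension, weight distribution, minimum distance, self-orthogonality) is sound and mirrors the paper's argument.
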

\begin{proof}
The result follows immediately by applying Theorem \ref{Th2}, Lemma \ref{Matrix} and Theorems 1.4.8(ii) and 1.4.10(i) of \cite{HuffF}  and Remark \ref{RRk}.
\end{proof}

\begin{remark}\label{comp2.2}
By Theorem 2 of Hu \etal \cite{Hu2024}, the code $\mathcal{D}_2$ (as defined in Remark \ref{comp2}) is  a $2$-weight  code  over $\mathbb{F}_q$ with parameters $\big[q^{|\mathpzc{B}|+|\mathpzc{C}|}(q^m-q^{|\mathpzc{A}|}),~ m+|\mathpzc{B}|+|\mathpzc{C}|,~ (q-1)q^{|\mathpzc{B}|+|\mathpzc{C}|-1}(q^m-q^{|\mathpzc{A}|})\big].$ 

If $\mathpzc{A}\subseteq \mathpzc{B}\subsetneq [m],$ then the Gray image $\Phi(\mathscr{C}_{\mathcal{S}_2})$ studied in Theorem \ref{ThN2} is a $3$-weight code with parameters $\big[2q^{|\mathpzc{B}|+|\mathpzc{C}|}(q^m-q^{|\mathpzc{A}|}), ~{2m+|\mathpzc{C}|},~(q-1)q^{m+|\mathpzc{B}|+|\mathpzc{C}|-1}\big ],$ which are distinct from the parameters of the code $\mathcal{D}_2.$

When $\mathpzc{A}\not\subseteq \mathpzc{B}\subsetneq [m],$ the code $\Phi(\mathscr{C}_{\mathcal{S}_2})$ considered in Theorem \ref{ThN2} achieves the same Hamming distance as the code $\mathcal{D}_2,$ while its length is doubled and its dimension is increased by $m-|\mathpzc{B}|>0.$ Moreover, in this case, the code $\Phi(\mathscr{C}_{\mathcal{S}_2})$ is a $5$-weight code.
\end{remark}

The following example illustrates Theorem \ref{ThN2}.
\begin{example}\label{Ex4.2}
Let $q=4$ and $m=4,$ and let $\mathpzc{A}=\{1\},$ $\mathpzc{B}=\{2,3\}$ and $\mathpzc{C}=\{2\}$ be subsets of $[m]=[4]=\{1,2,3,4\}.$ By carrying out computations in Magma \cite{Magma}, we see that the code $\Phi(\mathscr{C}_{\mathcal{S}_2})$ is a $5$-weight linear $[32256, 9, 12096]$-code over $\mathbb{F}_4$ with Hamming weight enumerator $W_{\Phi(\mathscr{C}_{\mathcal{S}_2})}(Z)=1+24Z^{12096}+6Z^{12288}+260352Z^{24192}+1512Z^{24384}+249Z^{24576}.$ This agrees  with Theorem \ref{ThN2}. \end{example}

Next, to study the Gray image $\Phi(\mathscr{C}_{\mathcal{S}_3}),$ we first make the following observation.

\begin{remark}\label{comp3}
When $\mathpzc{A}\subseteq \mathpzc{B},$ we note, arguing as in Remark \ref{comp1} and using Lemma \ref{Matrix}, that the spanning matrix $\mathscr{G}_{\mathcal{N}_3}$ of the code $\Phi(\mathscr{C}_{\mathcal{S}_3}),$ up to permutation equivalence, is of the form $\mathscr{G}_{\mathcal{N}_3}=\big[\widehat{\mathscr{G}}_{\mathcal{N}_3}\mid \widehat{\mathscr{G}}_{\mathcal{N}_3}\big],$ where  $\widehat{\mathscr{G}}_{\mathcal{N}_3}\in M_{3m \times |\mathcal{S}_{3}|}(\mathbb{F}_q)$ is the matrix whose columns are precisely the elements of the set $\widehat{\mathcal{N}}_3 = \{(w_2,w_3,w_1)\in \mathbb{F}_q^{3m} : w_1 \in \Delta_{\mathpzc{A}}, ~w_2 \in \Delta_{\mathpzc{B}}^c,~w_3 \in \Delta_{\mathpzc{C}} \text{ with }|\mathpzc{B}|<m \}.$  As a consequence,  the code $\Phi(\mathscr{C}_{\mathcal{S}_3})$ is a double repetition of the code $\mathcal{D}_3$ spanned by the rows of the matrix  $\widehat{\mathscr{G}}_{\mathcal{N}_3}$ over $\mathbb{F}_q.$
   
Additionally, we observe that all columns of $\widehat{\mathscr{G}}_{\mathcal{N}_3}\in M_{3m \times |\mathcal{S}_{3}|}(\mathbb{F}_q)$ are distinct and that there are exactly $2m-|\mathpzc{A}| -|\mathpzc{C}|$  zero rows in $\widehat{\mathscr{G}}_{\mathcal{N}_3}.$ By applying a suitable row permutation to the matrix  $\widehat{\mathscr{G}}_{\mathcal{N}_3}$ (if necessary) so that its last $2m-|\mathpzc{A}| -|\mathpzc{C}|$ rows are zero, the columns of $\widehat{\mathscr{G}}_{\mathcal{N}_3}$ can be viewed as elements of $\mathbb{F}_q^{m+|\mathpzc{A}|+|\mathpzc{C}|} \setminus \Delta_{\mathpzc{L}_3},$ where $\Delta_{{\mathpzc{L}}_3} \subseteq \mathbb{F}_q^{m+|\mathpzc{A}|+|\mathpzc{C}|}$ is a simplicial complex with support 
\begin{equation*}
    \mathpzc{L}_3 = \mathpzc{B} \cup \{m+1, m+2, \ldots,m+|\mathpzc{A}|+|\mathpzc{C}|\}.
\end{equation*}
Thus, the code $\mathcal{D}_3$ is a linear code over $\mathbb{F}_q$ with defining set $\mathbb{F}_q^{m+|\mathpzc{A}|+|\mathpzc{C}|} \setminus \Delta_{\mathpzc{L}_3},$ and hence it belongs to the well-known family of  Solomon–Stiffler codes \cite{Solomon1965}.  Further, the  parameters and Hamming weight distribution of $\mathcal{D}_3,$ and hence of the code  $\Phi(\mathscr{C}_{\mathcal{S}_3})$ in the case $\mathpzc{A}\subseteq \mathpzc{B},$ can be obtained by applying Theorem 2 of Hu \etal \cite{Hu2024}.  
\end{remark}

In the following theorem, we determine the parameters and Hamming weight distribution of the Gray image $\Phi(\mathscr{C}_{\mathcal{S}_3})$  under the assumption that $\mathpzc{A}\not\subseteq \mathpzc{B}$ (see Remark \ref{comp3}). 

\begin{thm}\label{ThN3}
When $\mathpzc{A} \not\subseteq \mathpzc{B},$ the Gray image $\Phi(\mathscr{C}_{\mathcal{S}_3})$ of the code $\mathscr{C}_{\mathcal{S}_3}$ is a linear code over $\mathbb{F}_q$ with parameters
$$\big[ 2q^{|\mathpzc{A}|+|\mathpzc{C}|}(q^m-q^{|\mathpzc{B}|}), ~{m+|\mathpzc{A}|+|\mathpzc{C}|},~2(q-1)q^{|\mathpzc{A}|+|\mathpzc{C}|-1}(q^m-q^{|\mathpzc{B}|})\big ]$$ and Hamming weight distribution as given in  Table \ref{T3}. Furthermore, the code $\Phi(\mathscr{C}_{\mathcal{S}_3})$ is
   \begin{itemize}
\item a minimal, near-Griesmer and distance-optimal code over $\mathbb{F}_q.$
\item  self-orthogonal  if either $q=2$ or $q=3.$
\item  a $2$-weight code if $[m]=\mathpzc{A}\cup\mathpzc{B}.$ Otherwise, the code $\Phi(\mathscr{C}_{\mathcal{S}_3})$ is a $3$-weight code over $\mathbb{F}_q.$
 \end{itemize} 
\end{thm}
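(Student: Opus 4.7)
The plan is to derive the length, dimension and minimum distance of $\Phi(\mathscr{C}_{\mathcal{S}_3})$ directly from Theorem \ref{Th3} and Remark \ref{RRk}, and then verify the remaining qualitative properties (the weight count, self-orthogonality, minimality, the near-Griesmer identity, and distance-optimality) one at a time using the explicit weight data of Table \ref{T3}.

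First I would invoke Theorem \ref{Th3} and Remark \ref{RRk}: since the Gray map $\Phi$ is an $\mathbb{F}_q$-linear isometry that doubles the length, the code $\Phi(\mathscr{C}_{\mathcal{S}_3})$ has the claimed length, dimension, minimum Hamming distance, and Hamming weight distribution. The $2$-weight versus $3$-weight dichotomy then follows from Table \ref{T3}, because the weight $2(q-1)q^{m+|\mathpzc{A}|+|\mathpzc{C}|-1}$ has frequency $q^{m-|\mathpzc{A}\cup\mathpzc{B}|}-1,$ which is positive precisely when $[m]\neq \mathpzc{A}\cup\mathpzc{B}.$ For self-orthogonality at $q\in\{2,3\},$ I would apply Theorems $1.4.8$(ii) and $1.4.10$(i) of \cite{HuffF}, which reduce the verification to showing that every weight is divisible by $4$ when $q=2$ and by $3$ when $q=3.$ Since each weight in Table \ref{T3} contains a factor $(q-1)q^{|\mathpzc{A}|+|\mathpzc{C}|-1}$ with $|\mathpzc{A}|,|\mathpzc{C}|\geq 1,$ and the complementary factor $q^m-q^{|\mathpzc{B}|}$ or $2q^m-q^{|\mathpzc{B}|}$ is divisible by $q^{|\mathpzc{B}|}$ with $|\mathpzc{B}|\geq 1,$ both divisibility conditions are immediate.

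For minimality I would apply Lemma \ref{Lem1}. The minimum weight equals $w_0 = 2(q-1)q^{|\mathpzc{A}|+|\mathpzc{C}|-1}(q^m-q^{|\mathpzc{B}|}).$ Under the hypothesis $\mathpzc{A}\not\subseteq\mathpzc{B},$ if $[m]=\mathpzc{A}\cup\mathpzc{B}$ then the maximum weight is $w_\infty = (q-1)q^{|\mathpzc{A}|+|\mathpzc{C}|-1}(2q^m-q^{|\mathpzc{B}|}),$ whereas if $[m]\neq \mathpzc{A}\cup\mathpzc{B}$ the combination $\mathpzc{A}\not\subseteq\mathpzc{B}$ together with $|\mathpzc{A}\cup\mathpzc{B}|<m$ forces $|\mathpzc{B}|\leq m-2,$ and then $w_\infty = 2(q-1)q^{m+|\mathpzc{A}|+|\mathpzc{C}|-1}.$ A short algebraic manipulation, using $|\mathpzc{B}|<m$ in the first case and $|\mathpzc{B}|\leq m-2$ in the second, verifies $w_0/w_\infty>(q-1)/q$ in both cases.

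The main obstacle is the near-Griesmer verification. Setting $s=|\mathpzc{A}|+|\mathpzc{C}|,$ $b=|\mathpzc{B}|,$ $t=m-b$ and $N=q^t-1,$ we have $\mathrm{n}=2q^{s+b}N,$ $\mathrm{k}=s+b+t$ and $\mathrm{d}=2(q-1)q^{s+b-1}N.$ I would split the Griesmer sum at $i=s+b$: the first $s+b$ summands are integers and telescope geometrically to $2N(q^{s+b}-1).$ For $0\leq j\leq t-1,$ writing $\mathrm{d}/q^{s+b+j}=2(q-1)q^{t-j-1}-2(q-1)/q^{j+1}$ and noting that $2(q-1)/q^{j+1}\in(0,1)$ for $j\geq 1$ while $2(q-1)/q\in[1,2)$ for $j=0,$ the identity $\lceil A-B\rceil = A-\lfloor B\rfloor$ (valid here with $A=2(q-1)q^{t-j-1}\in\mathbb{Z}$) contributes $2(q-1)q^{t-j-1}$ for $j\geq 1$ and $2(q-1)q^{t-1}-1$ for $j=0;$ summing via a finite geometric series collapses the second block to $2(q-1)(q^t-1)/(q-1)-1=2N-1.$ Adding the two blocks yields $2Nq^{s+b}-1=\mathrm{n}-1,$ proving the near-Griesmer identity. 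Since $q^2\mid \mathrm{d},$ Lemma $2$ of \cite{Hu2022} then gives distance-optimality, completing the proof.
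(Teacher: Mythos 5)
Your proposal is correct and follows essentially the same route as the paper, whose proof simply cites Theorem \ref{Th3}, Remark \ref{RRk}, Lemma \ref{Lem1}, Theorems 1.4.8(ii) and 1.4.10(i) of \cite{HuffF}, and the Griesmer bound; you have merely written out the details that the paper leaves implicit (the divisibility checks, the Ashikhmin--Barg ratio in the two cases, where the key observation that $\mathpzc{A}\not\subseteq\mathpzc{B}$ with $|\mathpzc{A}\cup\mathpzc{B}|<m$ forces $|\mathpzc{B}|\le m-2$ is exactly what is needed, and the explicit evaluation of the Griesmer sum to $\mathrm{n}-1$). All of these computations check out, so no gap remains.
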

\begin{proof}
The result follows immediately by applying Theorem \ref{Th3}, Lemmas \ref{Lem1} and \ref{Matrix}, Theorems 1.4.8(ii) and 1.4.10(i) of \cite{HuffF}, and using the Griesmer bound \eqref{GB} and Remark \ref{RRk}.
\end{proof}

\begin{remark}\label{comp3.2}\begin{description}\item[(a)] When $\mathpzc{A} \not\subseteq \mathpzc{B},$ we have $\mathpzc{A}\cap \mathpzc{B}^c\neq\emptyset.$
In this case, let us choose $w_2\in \mathbb{F}_q^m$ such that $\supp(w_2)=\mathpzc{A}\cap \mathpzc{B}^c.$ One can easily see that $w_2\in \Delta_{\mathpzc{A}}\cap \Delta_{\mathpzc{B}}^c.$ Accordingly, by taking $w_1 = w_2,$ we see, for each $w_3 \in \Delta_\mathpzc{C},$ that
$$(w_2, w_3, w_2 )\in \mathcal{N}_3.$$
Here, we assert that the vector  $(w_2, w_3, w_2)$ appears exactly once in the multiset $\mathcal{N}_3.$ 
To prove this assertion, we suppose, on the contrary, that $(w_2, w_3, w_2 ) = (w_1'+w_2', w_3', w_1' )$ for some $w_1' \in \Delta_\mathpzc{A},$ $w_2' \in \Delta_\mathpzc{B}^c$ and $w_3' \in \Delta_\mathpzc{C}.$ This holds if and only if $w_2' = \mathbf{0},$ $w_2 = w_1'$ and $w_3 = w_3'.$ This is a contradiction, as $\mathbf{0} \notin \Delta_\mathpzc{B}^c.$ Therefore, the vector $(w_2, w_3, w_2)$ appears exactly once in the multiset $\mathcal{N}_3,$ and hence it appears exactly once as a column of the matrix $\mathscr{G}_{\mathcal{N}_3}$.  From this, it follows that the code $\Phi(\mathscr{C}_{\mathcal{S}_3})$ is not a double repetition of the code $\mathcal{D}_3$ spanned  by the rows of  $\widehat{\mathscr{G}}_{\mathcal{N}_3}$ (as defined in Remark \ref{comp3}),  even though the code $\Phi(\mathscr{C}_{\mathcal{S}_3})$ and the double repetition code of $\mathcal{D}_3$  have the same parameters. Hence, when $[m]=\mathpzc{A} \cup \mathpzc{B},$ the code $\Phi(\mathscr{C}_{\mathcal{S}_3})$ considered in Theorem \ref{ThN3} is  not equivalent to the double repetition code of  $\mathcal{D}_3.$ 
\item[(b)] 
When $\mathpzc{C}\subseteq\mathpzc{B},$  the parameters and  Hamming weight distribution of the code $\Phi(\mathscr{C}_{\mathcal{S}_3})$ over $\mathbb{F}_q$ coincides with that of the $3$-weight code studied in Theorem 5 of Chen \etal \cite{Chen2025}.  \item[(c)] Finally, when $\mathpzc{C}\not\subseteq\mathpzc{B},$  the Hamming weight distribution of the code $\Phi(\mathscr{C}_{\mathcal{S}_3})$  differs from that  of the $3$-weight code studied in Theorem 5 of Chen \etal \cite{Chen2025}. Consequently, in this case, the code $\Phi(\mathscr{C}_{\mathcal{S}_3})$ is  not equivalent to the code considered in Theorem 5 of Chen \etal \cite{Chen2025}. 
\end{description}\end{remark}

The following example illustrates Theorem \ref{ThN3}.
\begin{example}\label{Ex4.3}
Let $q=4$ and $m=4,$ and let $\mathpzc{A}=\{2\},$ $\mathpzc{B}=\{1,3\}$ and $\mathpzc{C}=\{1\}$ be subsets of $[m]=[4]=\{1,2,3,4\}.$ By carrying out computations in Magma \cite{Magma}, we see that the code $\Phi(\mathscr{C}_{\mathcal{S}_3})$ is a $3$-weight, minimal and near-Griesmer distance-optimal linear $[7680,6,5760]$-code over $\mathbb{F}_4$ with Hamming weight enumerator $W_{\Phi(\mathscr{C}_{\mathcal{S}_3})}(Z)=1+4068Z^{5760}+24Z^{5952}+3Z^{6144}.$ This is in agreement  with Theorem \ref{ThN3}. 
\end{example}

In the following theorem, we determine the parameters and Hamming weight distribution of the Gray image $\Phi(\mathscr{C}_{\mathcal{S}_4})$ of the code $\mathscr{C}_{\mathcal{S}_4}.$ 
\begin{thm}\label{ThN4}
\begin{itemize}
\item[(a)] When $\mathpzc{A}\subseteq \mathpzc{B},$ the Gray image $\Phi(\mathscr{C}_{\mathcal{S}_4})$ of the code $\mathscr{C}_{\mathcal{S}_4}$ is a $4$-weight linear code over $\mathbb{F}_q$ with parameters 
$$\big[ 2(q^m-q^{|\mathpzc{C}|})(q^{|\mathpzc{A}|}-1)q^{|\mathpzc{B}|}, ~{m+|\mathpzc{A}|+|\mathpzc{B}|},~2(q-1)q^{|\mathpzc{B}|-1}\big((q^m-q^{|\mathpzc{C}|})(q^{|\mathpzc{A}|}-1)-q^{|\mathpzc{C}|}\big)\big]$$ and Hamming weight distribution as given in  Table \ref{T4}.
    \item[(b)] When $\mathpzc{A}\not\subseteq \mathpzc{B},$ the Gray image $\Phi(\mathscr{C}_{\mathcal{S}_4})$ of the code $\mathscr{C}_{\mathcal{S}_4}$ is  a $6$-weight linear code over $\mathbb{F}_q$ with parameters $$\big[ 2(q^m-q^{|\mathpzc{C}|})(q^{|\mathpzc{A}|}-1)q^{|\mathpzc{B}|},~ {m+|\mathpzc{A}|+|\mathpzc{A}\cup \mathpzc{B}|},~(q-1)(q^{m}-q^{|\mathpzc{C}|})q^{|\mathpzc{A}|+|\mathpzc{B}|-1}\big ]$$ and Hamming weight distribution as given in  Table \ref{T4}.
\end{itemize}Furthermore, the code $\Phi(\mathscr{C}_{\mathcal{S}_4})$ is 
 self-orthogonal if either $q=2$ or $q=3.$
\end{thm}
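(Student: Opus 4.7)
The plan is to reduce Theorem \ref{ThN4} directly to Theorem \ref{Th4} via the Gray map $\Phi.$ By Remark \ref{RRk}, $\Phi$ is an $\mathbb{F}_q$-linear isometry from $\big(\frac{\mathbb{F}_q[u]}{\langle u^2 \rangle}\big)^{|\mathcal{S}_4|}$ onto $\mathbb{F}_q^{2|\mathcal{S}_4|}$ that carries Lee weights to Hamming weights. Consequently, $\Phi(\mathscr{C}_{\mathcal{S}_4})$ is a linear code of length $2|\mathcal{S}_4|$ over $\mathbb{F}_q$ whose size equals $|\mathscr{C}_{\mathcal{S}_4}|$ and whose Hamming weight distribution equals the Lee weight distribution recorded in Table \ref{T4}. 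The dimension and minimum Hamming distance are then read off from the size and the minimum Lee weight given in Theorem \ref{Th4}, and the assertions that $\Phi(\mathscr{C}_{\mathcal{S}_4})$ has exactly $4$ (respectively $6$) nonzero weights in case (a) (resp.\ (b)) follow at once. This is the same reduction used in the proofs of Theorems \ref{ThN1}--\ref{ThN3}.

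For the self-orthogonality assertion when $q \in \{2, 3\},$ the plan is to invoke Theorems 1.4.8(ii) and 1.4.10(i) of \cite{HuffF}: a binary (respectively ternary) linear code all of whose nonzero Hamming weights are divisible by $4$ (respectively by $3$) is self-orthogonal. Accordingly, I would inspect each of the (at most) six nonzero weight formulas listed in Table \ref{T4} and verify the required divisibility, using the constraints $|\mathpzc{A}| \geq 2,$ $|\mathpzc{B}| \geq 1$ and $|\mathpzc{C}| \geq 1$ imposed in \eqref{S4}. For $q = 3,$ every weight already carries the factor $q - 1 = 2,$ and at least one of the factors $q^{|\mathpzc{B}|-1},$ $q^{|\mathpzc{A}|+|\mathpzc{B}|-1},$ $q^{m+|\mathpzc{B}|-1}$ or $q^m - q^{|\mathpzc{C}|} = q^{|\mathpzc{C}|}(q^{m-|\mathpzc{C}|}-1)$ supplies the needed factor of $3.$ For $q = 2,$ the four entries in Table \ref{T4} with an explicit leading factor of $2$ pick up the additional factor of $2$ either from $q^{|\mathpzc{B}|}$ when $|\mathpzc{B}|\geq 2,$ or from the factorisation $q^m - q^{|\mathpzc{C}|} = 2^{|\mathpzc{C}|}(2^{m-|\mathpzc{C}|} - 1)$ combined with a parity check on the remaining bracket; the two entries without an explicit leading $2$ can be rewritten, after pulling out $q^{|\mathpzc{C}|},$ in the form $2^{|\mathpzc{A}|+|\mathpzc{B}|+|\mathpzc{C}|-1}(2^{m-|\mathpzc{C}|} - 1)$ and $2^{|\mathpzc{B}|+|\mathpzc{C}|-1}\big((2^{m+1-|\mathpzc{C}|} - 1)(2^{|\mathpzc{A}|} - 1) - 1\big),$ the latter bracket being even as a difference of two odd numbers, so both are divisible by $4.$

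The main obstacle is managing the corner cases for $q = 2$ with $|\mathpzc{B}| = 1,$ where the expected power of $2$ no longer comes for free from $q^{|\mathpzc{B}|-1}$ and must instead be recovered via parity analysis of $(q^m - q^{|\mathpzc{C}|})(q^{|\mathpzc{A}|} - 1) \pm q^{|\mathpzc{C}|};$ the conditions $|\mathpzc{A}| \geq 2$ and $|\mathpzc{C}| \geq 1$ close each such subcase uniformly. Once divisibility by $4$ (respectively by $3$) has been verified across all nonzero weights, the cited theorems of \cite{HuffF} imply $\Phi(\mathscr{C}_{\mathcal{S}_4}) \subseteq \Phi(\mathscr{C}_{\mathcal{S}_4})^{\perp},$ completing the proof.
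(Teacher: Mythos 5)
Your proposal is correct and follows essentially the same route as the paper: the parameters, weight counts and weight distribution are transferred from Theorem \ref{Th4} via the Gray isometry of Remark \ref{RRk}, and self-orthogonality for $q=2,3$ is obtained from the divisibility criteria in Theorems 1.4.8(ii) and 1.4.10(i) of \cite{HuffF}. The only difference is that you spell out the divisibility-by-$4$ (resp.\ by $3$) verification for the entries of Table \ref{T4}, which the paper's proof leaves implicit; your case analysis (using $|\mathpzc{A}|\geq 2$, $|\mathpzc{B}|,|\mathpzc{C}|\geq 1$ and $|\mathpzc{C}|<m$) is sound.
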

\begin{proof}
 The desired result follows immediately by applying Theorem \ref{Th4}, Lemma \ref{Matrix} and Theorems 1.4.8(ii) and 1.4.10(i) of \cite{HuffF} and Remark \ref{RRk}.
\end{proof}
\begin{remark}
\begin{itemize}
\item[(a)] Let $n\geq 3$ be an integer, and let $\mathpzc{E}_1$ and $\mathpzc{E}_2$ be non-empty subsets of $[n]$ satisfying $1\leq |\mathpzc{E}_1|\leq |\mathpzc{E}_2|<n$ and $q^n> q^{|\mathpzc{E}_1|}+q^{|\mathpzc{E}_2|}.$ By Theorem 3 of Hu \etal \cite{Hu2024}, there exists a linear code $\mathcal{D}_4$ over $\mathbb{F}_q$ with at most $5$ non-zero Hamming weights and parameters $[2(q-1)(q^n-q^{|\mathpzc{E}_1|}-q^{|\mathpzc{E}_2|}+q^{|\mathpzc{E}_1\cap\mathpzc{E}_2|}),~n,~2(q-1)(q^{n-1}-q^{|\mathpzc{E}_1|-1}-q^{|\mathpzc{E}_2|-1})].$ On the other hand, when $\mathpzc{A}\subseteq \mathpzc{B},$ we observe that the code $\Phi(\mathscr{C}_{\mathcal{S}_4})$ considered in Theorem \ref{ThN4}(a) has the same parameters as the code $\mathcal{D}_4,$ under the identification $n=m+|\mathpzc{A}|+|\mathpzc{B}|,$ $|\mathpzc{E}_1|=m+|\mathpzc{B}|,$ $|\mathpzc{E}_2|=|\mathpzc{A}|+|\mathpzc{B}|+|\mathpzc{C}|,$ and $|\mathpzc{E}_1\cap\mathpzc{E}_2|=|\mathpzc{B}|+|\mathpzc{C}|.$ However, we see that if $|\mathpzc{E}_1\cap\mathpzc{E}_2|>|\mathpzc{B}|+|\mathpzc{C}|,$ then the code $\Phi(\mathscr{C}_{\mathcal{S}_4})$ attains the same dimension and Hamming distance as the code $\mathcal{D}_4,$ while having a smaller length. Moreover, the Hamming weight distributions of these two codes are distinct. 

\item[(b)] By Theorem 3 of Hu \etal \cite{Hu2024}, one can obtain a linear code $\mathcal{D}_5$ over $\mathbb{F}_q$ with at most $5$ non-zero Hamming weights and parameters $\big[(q^m-q^{|\mathpzc{C}|})(q^{|\mathpzc{A}|}-1)q^{|\mathpzc{B}|},~ m+|\mathpzc{A}|+|\mathpzc{B}|,~ (q-1)(q^{m}-q^{|\mathpzc{C}|})q^{|\mathpzc{A}|+|\mathpzc{B}|-1}\big].$ On the other hand, when $\mathpzc{A}\not\subseteq \mathpzc{B},$ the code $\Phi(\mathscr{C}_{\mathcal{S}_4})$ studied in Theorem \ref{ThN4}(b) achieves the same Hamming distance as the code $\mathcal{D}_5,$ while its length is doubled and its dimension is increased by $|\mathpzc{A}\cup\mathpzc{B}|-|\mathpzc{B}|=|\mathpzc{A}|-|\mathpzc{A}\cap\mathpzc{B}| > 0.$
\end{itemize}
\end{remark}
The following example illustrates Theorem \ref{ThN4}.
\begin{example}\label{Ex4.4}
Let $q=3$ and $m=4,$ and let $\mathpzc{A}=\mathpzc{B}=\{4\}$ and $\mathpzc{C}=\{1,2\}$ be subsets of $[m]=[4]=\{1,2,3,4\}.$ By carrying out computations in Magma \cite{Magma}, we see that the code $\Phi(\mathscr{C}_{\mathcal{S}_4})$ is a ternary $4$-weight and self-orthogonal  $[864,6,540]$-code with Hamming weight enumerator $W_{\Phi(\mathscr{C}_{\mathcal{S}_4})}(Z)=1+16Z^{540}+702Z^{576}+8Z^{648}+2Z^{864}.$ This is consistent with Theorem \ref{ThN4}. 
\end{example}

\section{Some constructions of projective few-weight codes over $\mathbb{F}_q$}\label{Sec5}
 In this section, we will provide two constructions of projective few-weight codes over $\mathbb{F}_q$ with the help of  the multisets $\mathcal{N}_2$ and $\mathcal{N}_4.$ To this end, we first prove the following useful lemma.
\begin{lemma}\label{Lem8}
Let $\mathcal{N}_2$ and $\mathcal{N}_4$ be the multisets as defined by \eqref{D2} and \eqref{D4}, respectively. The following hold.
\begin{itemize}
\item[(a)] If $\mathpzc{B} \subseteq \mathpzc{A},$ then the multiset $\mathcal{N}_2$ consists of distinct vectors. Furthermore, there exists a subset $\overline{\mathcal{N}}_2$ of $\mathcal{N}_2$ such that $|\overline{\mathcal{N}}_2|= \frac{|\mathcal{N}_2|}{q-1},$ each element in $\overline{\mathcal{N}}_2$ generates a distinct  one-dimensional subspace of $\mathbb{F}_q^{3m}$ over  $\mathbb{F}_q,$ and
$$\mathcal{N}_2 = \{\alpha x : x \in \overline{\mathcal{N}}_2 \text{ and } \alpha \in \mathbb{F}_q^\ast\}.$$ 
\item[(b)]  If $\mathpzc{A} \cap \mathpzc{B} = \emptyset,$ then the multiset $\mathcal{N}_4$ consists of distinct vectors.  Moreover, there exists a subset $\overline{\mathcal{N}}_4$ of $\mathcal{N}_4$ such that $|\overline{\mathcal{N}}_4|= \frac{|\mathcal{N}_4|}{q-1},$ each element in $\overline{\mathcal{N}}_4$ generates a distinct  one-dimensional subspace of $\mathbb{F}_q^{3m}$ over  $\mathbb{F}_q,$ and
$$\mathcal{N}_4 = \{\alpha x : x \in \overline{\mathcal{N}}_4 \text{ and } \alpha \in \mathbb{F}_q^\ast\}.$$
\end{itemize}
\end{lemma}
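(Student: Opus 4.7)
The proof plan is to treat the two parts in parallel, since both have the same structure: first establish injectivity of the parametrization defining each multiset, then exhibit the $\mathbb{F}_q^\ast$-orbit decomposition by observing that both sets are $\mathbb{F}_q^\ast$-stable and contain no zero vector.

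For part (a), I would start by fixing two quadruples $(w_1,w_2,w_3,\omega)$ and $(w_1',w_2',w_3',\omega')$ satisfying the conditions in \eqref{D2} and supposing they yield the same vector. Projecting onto the third block gives $w_1=w_1'$, and projecting onto the second gives $w_3=w_3'$, so the only thing to check is that the identity $w_2+\omega=w_2'+\omega'$ with $\omega,\omega'\in\{\mathbf{0},w_1\}$ forces $\omega=\omega'$ (and hence $w_2=w_2'$). The non-trivial case is $\omega=\mathbf{0}$, $\omega'=w_1$, which yields $w_1=w_2-w_2'\in\Delta_{\mathpzc{B}}$ because $\Delta_{\mathpzc{B}}$ is an $\mathbb{F}_q$-subspace. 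Using the hypothesis $\mathpzc{B}\subseteq\mathpzc{A}$, this places $w_1$ in $\Delta_{\mathpzc{A}}$, contradicting $w_1\in\Delta_{\mathpzc{A}}^c$. Thus the parametrization is injective and $\mathcal{N}_2$ has no repeated vectors. For part (b), the same argument works with $\Delta_{\mathpzc{A}}^\ast$ in place of $\Delta_{\mathpzc{A}}^c$: the relation $w_1=w_2-w_2'$ now forces $\supp(w_1)\subseteq\mathpzc{A}\cap\mathpzc{B}=\emptyset$, so $w_1=\mathbf{0}$, contradicting $w_1\in\Delta_{\mathpzc{A}}^\ast$.

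Next, I would verify that $\mathcal{N}_2$ and $\mathcal{N}_4$ are closed under the scalar action of $\mathbb{F}_q^\ast$. For any $\alpha\in\mathbb{F}_q^\ast$ and $(w_2+\omega,w_3,w_1)\in\mathcal{N}_i$, writing $\alpha(w_2+\omega,w_3,w_1)=(\alpha w_2+\alpha\omega,\alpha w_3,\alpha w_1)$ and using that the simplicial complexes $\Delta_{\mathpzc{A}},\Delta_{\mathpzc{B}},\Delta_{\mathpzc{C}}$ (and hence their complements $\Delta_{\mathpzc{A}}^c$, $\Delta_{\mathpzc{C}}^c$ and punctured version $\Delta_{\mathpzc{A}}^\ast$) are all stable under multiplication by $\alpha$, together with $\alpha\omega\in\{\mathbf{0},\alpha w_1\}$, shows that $\alpha x\in\mathcal{N}_i$ for $i\in\{2,4\}$.

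Finally, I would note that every vector in $\mathcal{N}_2$ or $\mathcal{N}_4$ is non-zero, because its third block is $w_1$ with $w_1\in\Delta_{\mathpzc{A}}^c$ (in the first case) or $w_1\in\Delta_{\mathpzc{A}}^\ast$ (in the second case), both of which exclude $\mathbf{0}$. Consequently, the $\mathbb{F}_q^\ast$-action on $\mathcal{N}_i$ is free, so its orbits all have size $q-1$, and the number of orbits is $|\mathcal{N}_i|/(q-1)$. Choosing one representative from each orbit yields the desired subsets $\overline{\mathcal{N}}_2$ and $\overline{\mathcal{N}}_4$; distinct representatives generate distinct one-dimensional subspaces of $\mathbb{F}_q^{3m}$ by construction, and the union of the scaled copies recovers $\mathcal{N}_i$ set-theoretically (and, by the injectivity established in the first step, also as a multiset).

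The only delicate point is the injectivity of the parametrization, since this is where the hypotheses $\mathpzc{B}\subseteq\mathpzc{A}$ and $\mathpzc{A}\cap\mathpzc{B}=\emptyset$ are used; the orbit-counting part is then essentially automatic once one observes that $\Delta_{\mathpzc{A}}^c$ and $\Delta_{\mathpzc{A}}^\ast$ exclude $\mathbf{0}$ and that the defining sets are $\mathbb{F}_q^\ast$-invariant.
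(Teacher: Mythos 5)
Your proof is correct and follows essentially the same route as the paper: you establish distinctness via the key disjointness facts $\Delta_{\mathpzc{A}}^c\cap\Delta_{\mathpzc{B}}=\emptyset$ (resp.\ $\Delta_{\mathpzc{A}}^\ast\cap\Delta_{\mathpzc{B}}=\emptyset$), then use $\mathbb{F}_q^\ast$-invariance of the defining blocks and orbit counting to extract $\overline{\mathcal{N}}_2$ and $\overline{\mathcal{N}}_4$. The only cosmetic difference is that the paper shows distinctness by writing $\mathcal{N}_2=\mathcal{T}_1\cup\mathcal{T}_2$ (the $\omega=\mathbf{0}$ and $\omega=w_1$ parts) and proving these disjoint, whereas you argue injectivity of the parametrization directly, and you make explicit the freeness of the scalar action (non-vanishing of the third block) that the paper leaves implicit.
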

\begin{proof}
\begin{itemize}
\item[(a)] To prove the result, let us  assume that $\mathpzc{B} \subseteq \mathpzc{A}.$ Here, we have $\Delta_{\mathpzc{B}} \subseteq \Delta_{\mathpzc{A}},$ which holds if and only if $\Delta_{\mathpzc{A}}^c \cap \Delta_{\mathpzc{B}} = \emptyset.$ 

First of all, we assert that all vectors in the multiset $\mathcal{N}_2$ are distinct. To prove this assertion, we see that the multiset $\mathcal{N}_2$ can be written as $\mathcal{N}_2=\mathcal{T}_1 \cup \mathcal{T}_2,$ where 
\begin{eqnarray*}\label{SetT}\notag
  \mathcal{T}_1 &=&\big\{\big\{ (w_2,w_3,w_1)
  \in \mathbb{F}_q^{3m} : w_1 \in \Delta_{\mathpzc{A}}^c, w_2 \in \Delta_{\mathpzc{B}}, w_3 \in \Delta_{\mathpzc{C}} \text{ with }|\mathpzc{A}|<m \big\} \big\} \text{ and }\\
\mathcal{T}_2 &=&\big\{\big\{(w_1+w_2,w_3,w_1)
\in \mathbb{F}_q^{3m} : w_1 \in \Delta_{\mathpzc{A}}^c, w_2 \in \Delta_{\mathpzc{B}}, w_3 \in \Delta_{\mathpzc{C}} \text{ with }|\mathpzc{A}|<m \big\}\big\}.
\end{eqnarray*}
Note that both $\mathcal{T}_1$ and $\mathcal{T}_2$ consist of distinct vectors. So to prove this assertion, it suffices to show that $\mathcal{T}_1 \cap \mathcal{T}_2 = \emptyset.$ For this, we suppose, on the contrary, that $z\in \mathcal{T}_1 \cap \mathcal{T}_2.$ By \eqref{D2} and \eqref{D4}, we see that there exist   $w_1,v_1 \in \Delta_{\mathpzc{A}}^c,$ $w_2,v_2 \in \Delta_{\mathpzc{B}}$ and $w_3,v_3 \in \Delta_{\mathpzc{C}}$ such that 
$$z=(w_2,w_3,w_1)
 =(v_1 + v_2,v_3,v_1).$$ This gives $w_2 = v_1 + v_2,$ $w_3=v_3$ and $w_1 = v_1.$ This implies that $v_1 = w_2 - v_2 \in \Delta_{\mathpzc{A}}^c \cap \Delta_{\mathpzc{B}},$ which is a contradiction to the fact that $\Delta_{\mathpzc{A}}^c \cap \Delta_{\mathpzc{B}} = \emptyset.$ This proves the assertion that all vectors in the multiset $\mathcal{N}_2$ are distinct. 

Further, if $\Lambda$ is a simplicial complex  of $\mathbb{F}_q^m$ with a single maximal element, then one can easily observe that $\alpha x \in \Lambda$ and $\alpha y \in \Lambda^c$ for all  $x \in \Lambda,$  $y \in \Lambda^c$ and $\alpha \in \mathbb{F}_q^\ast.$ 
Using this fact, one can easily observe that  $\alpha z \in \mathcal{N}_2$ for all  $z \in \mathcal{N}_2$ and $\alpha \in \mathbb{F}_q^\ast.$ Accordingly, one can define an equivalence relation $\sim$ on $\mathcal{N}_2$ as follows:  For  $x, y \in \mathcal{N}_2,$ set $x\sim y$   if and only if there exists $\alpha \in \mathbb{F}_q^\ast$ such that $x = \alpha y.$ Now, 
let $\overline{\mathcal{N}}_2$ denote a complete set of representatives of all the distinct equivalence classes of $\mathcal{N}_2$ under $\sim.$ Here, one can easily see that $|\overline{\mathcal{N}}_2|= \frac{|\mathcal{N}_2|}{q-1},$ each vector in the set $\overline{\mathcal{N}}_2$ generates a distinct one-dimensional subspace of $\mathbb{F}_q^{3m}$  over $\mathbb{F}_q$ and that
$\mathcal{N}_2 = \{\alpha x : x \in \overline{\mathcal{N}}_2 \text{ and } \alpha \in \mathbb{F}_q^\ast\}.$ 
\item[(b)] Here, let us assume that $\mathpzc{A} \cap \mathpzc{B} = \emptyset.$ This holds if and only if $\Delta_{\mathpzc{A}}^\ast \cap \Delta_{\mathpzc{B}} = \emptyset.$ Using this fact and working as in part (a), we see that all vectors in the multiset $\mathcal{N}_4$ are distinct and that $\alpha z \in \mathcal{N}_4$ for all $\alpha \in \mathbb{F}_q^\ast$ and $z \in \mathcal{N}_4.$  Now, 
let $\overline{\mathcal{N}}_4$ denote a complete set of representatives of all the distinct equivalence classes of $\mathcal{N}_4$ under the equivalence relation $\sim$ as defined in part (a). Here, one can easily see that $|\overline{\mathcal{N}}_4|= \frac{|\mathcal{N}_4|}{q-1},$ each vector in the set $\overline{\mathcal{N}}_4$ generates a distinct one-dimensional subspace of $\mathbb{F}_q^{3m}$  over $\mathbb{F}_q$ and that
$\mathcal{N}_4 = \{\alpha x : x \in \overline{\mathcal{N}}_4 \text{ and } \alpha \in \mathbb{F}_q^\ast\},$ which proves part (b). 
\end{itemize}
\vspace{-4mm}\end{proof}

Henceforth, when  $\mathpzc{B} \subseteq \mathpzc{A}\subsetneq [m],$ we assume that $\overline{\mathcal{N}}_2$ is a subset of $\mathcal{N}_2$ as described in Lemma \ref{Lem8}(a).
Similarly, when $\mathpzc{A} \cap \mathpzc{B} = \emptyset,$ we assume that $\overline{\mathcal{N}}_4$ is a subset of $\mathcal{N}_4$ as described in Lemma \ref{Lem8}(b). Further, let  $G_{\overline{\mathcal{N}}_2} \in M_{3m \times |\overline{\mathcal{N}}_2|}(\mathbb{F}_q)$ and $G_{\overline{\mathcal{N}}_4} \in M_{3m \times |\overline{\mathcal{N}}_4|}(\mathbb{F}_q)$ be the matrices whose columns are the vectors of $\overline{\mathcal{N}}_2$ and $\overline{\mathcal{N}}_4,$ respectively. Now, let $\mathcal{C}_{\overline{\mathcal{N}}_2}$ and $\mathcal{C}_{\overline{\mathcal{N}}_4}$ be the linear codes over $\mathbb{F}_q$ spanned by the matrices $G_{\overline{\mathcal{N}}_2}$ and $G_{\overline{\mathcal{N}}_4},$ respectively. Note that the sets $\overline{\mathcal{N}}_2$ and $\overline{\mathcal{N}}_4$ are not uniquely fixed; hence, the codes $\mathcal{C}_{\overline{\mathcal{N}}_2}$ and $\mathcal{C}_{\overline{\mathcal{N}}_4}$ are  uniquely defined only up to monomial equivalence. In particular, monomially equivalent codes possess identical parameters and Hamming weight distributions. Now, as any two columns in each of the matrices  $G_{\overline{\mathcal{N}}_2}$ and $G_{\overline{\mathcal{N}}_4}$ are linearly independent over $\mathbb{F}_q,$ we see, by Lemma \ref{Lem2}(a), that both $\mathcal{C}_{\overline{\mathcal{N}}_2}$ and $\mathcal{C}_{\overline{\mathcal{N}}_4}$ are projective  codes over $\mathbb{F}_q.$

In the following theorem, we assume that $\mathpzc{B} \subseteq \mathpzc{A}\subsetneq [m]$ and  explicitly determine the parameters and  Hamming weight distribution of the projective code $\mathcal{C}_{\overline{\mathcal{N}}_2}$ over $\mathbb{F}_q.$ 
We show that the code $\mathcal{C}_{\overline{\mathcal{N}}_2}$ is a projective $3$-weight code over $\mathbb{F}_q$ if $\mathpzc{B} = \mathpzc{A},$ while it is a projective $5$-weight code over $\mathbb{F}_q$ if $\mathpzc{B} \subsetneq \mathpzc{A}.$ We also observe that $\mathcal{C}_{\overline{\mathcal{N}}_2}$ is a self-orthogonal code for $q = 2$ or $3.$
\begin{thm}\label{Th7} Let the number $\eta$ be as defined in \eqref{Eta}.  When $\mathpzc{B} \subseteq \mathpzc{A}\subsetneq[m],$ the code $\mathcal{C}_{\overline{\mathcal{N}}_2}$ is a projective  code over $\mathbb{F}_q$ with parameters 
$\left[\frac{2q^{|\mathpzc{B}|+|\mathpzc{C}|}(q^m-q^{|\mathpzc{A}|})}{q-1}, {2m+|\mathpzc{C}|},(q^m-\rho q^{|\mathpzc{A}|})q^{|\mathpzc{B}|+|\mathpzc{C}|-1}\right]$ and
 Hamming weight distribution as given in Table \ref{T6}, where $\rho=0$ if $\mathpzc{B}= \mathpzc{A},$ while $\rho=1$ if $\mathpzc{B}\subsetneq \mathpzc{A}.$  As a consequence, the code $\mathcal{C}_{\overline{\mathcal{N}}_2}$ is 
 \begin{itemize}
\item a projective $2$-weight code if $q=2,$ $\mathpzc{B}=\mathpzc{A}$ and $|\mathpzc{B}| = m-1.$
\item  a projective $3$-weight code if either $\mathpzc{B}=\mathpzc{A}\subsetneq[m]$ and $q\geq 3,$ or $\mathpzc{B}=\mathpzc{A},$ $|\mathpzc{B}| \leq m-2$ and $q=2.$
\item a projective $5$-weight code when $\mathpzc{B} \subsetneq \mathpzc{A}\subsetneq[m].$
 \end{itemize}
Furthermore, the code $\mathcal{C}_{\overline{\mathcal{N}}_2}$ is  self-orthogonal if either $q = 2$ or $q=3.$
\begin{table}[H]
\centering
    \begin{tabular}{ |c|c|} 
 \hline Hamming weight $w$ & Frequency $A_w$ \\ \hline
$0$ & $1$ \\
 \hline
$(q^m-q^{|\mathpzc{A}|})q^{|\mathpzc{B}|+|\mathpzc{C}|-1}$ & $2(q^{m-|\mathpzc{B}|}-q^{m-|\mathpzc{A} \cup \mathpzc{B}|})$ \\
 \hline
$2(q^m-q^{|\mathpzc{A}|})q^{|\mathpzc{B}|+|\mathpzc{C}|-1}$ & $q^{2m+|\mathpzc{C}|} - (q^{m-|\mathpzc{B}|}-q^{m-|\mathpzc{A}\cup\mathpzc{B}|})(2q^{m-|\mathpzc{A}|}+1-\eta) - q^{2m-|\mathpzc{A}\cup\mathpzc{B}|-|\mathpzc{A}|}$\\
\hline
$(2q^m-q^{|\mathpzc{A}|})q^{|\mathpzc{B}|+|\mathpzc{C}|-1}$ & $(q^{m-|\mathpzc{B}|}-q^{m-|\mathpzc{A} \cup \mathpzc{B}|})(2q^{m-|\mathpzc{A}|}-1-\eta)$ \\
\hline
$2q^{m+|\mathpzc{B}|+|\mathpzc{C}|-1}$ & $(q^{m-|\mathpzc{A}|}-1)(q^{m-|\mathpzc{A} \cup \mathpzc{B}|}-1)+q^{m-|\mathpzc{A}|}-q^{m-|\mathpzc{A}\cup\mathpzc{B}|}$ \\
 \hline
 $q^{m+|\mathpzc{B}|+|\mathpzc{C}|-1}$ & $2(q^{m-|\mathpzc{A} \cup \mathpzc{B}|}-1)$ \\
\hline
\end{tabular}
\caption{The Hamming weight distribution of the code $\mathcal{C}_{\overline{\mathcal{N}}_2}$ over $\mathbb{F}_q$ when $\mathpzc{B} \subseteq \mathpzc{A}\subsetneq[m]$}\label{T6} 
\end{table}

\end{thm}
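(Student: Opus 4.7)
The plan is to pass from the already-established description of $\Phi(\mathscr{C}_{\mathcal{S}_2})$ (Theorem \ref{ThN2}) to the punctured/representative code $\mathcal{C}_{\overline{\mathcal{N}}_2}$ by exploiting the orbit structure of $\mathcal{N}_2$ under $\mathbb{F}_q^\ast$. By Lemma \ref{Lem8}(a), the hypothesis $\mathpzc{B}\subseteq \mathpzc{A}$ guarantees that the vectors of $\mathcal{N}_2$ are pairwise distinct and decompose as the disjoint union of the $\mathbb{F}_q^\ast$-orbits $\{\alpha x:\alpha\in\mathbb{F}_q^\ast\}$ as $x$ runs over $\overline{\mathcal{N}}_2$. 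Together with Lemma \ref{Matrix}, this means that the spanning matrix $\mathscr{G}_{\mathcal{N}_2}$ of $\Phi(\mathscr{C}_{\mathcal{S}_2})$ is obtained from $G_{\overline{\mathcal{N}}_2}$ by repeating each column through all $q-1$ non-zero scalar multiples.

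The key step is the following weight-relation: for any $r\in\mathbb{F}_q^{3m}$, writing $c'_r=(r\cdot x)_{x\in\overline{\mathcal{N}}_2}$ and $c_r=(r\cdot y)_{y\in\mathcal{N}_2}$, a coordinate $r\cdot x$ vanishes iff all $q-1$ coordinates $\alpha(r\cdot x)$ of $c_r$ in its orbit vanish, whence $wt_H(c_r)=(q-1)\,wt_H(c'_r)$. Consequently the linear map $c'_r\mapsto c_r$ from $\mathcal{C}_{\overline{\mathcal{N}}_2}$ onto $\Phi(\mathscr{C}_{\mathcal{S}_2})$ is a weight-proportional bijection; in particular it is injective, so $\dim \mathcal{C}_{\overline{\mathcal{N}}_2}=\dim \Phi(\mathscr{C}_{\mathcal{S}_2})=2m+|\mathpzc{C}|$ (using Theorem \ref{ThN2}), and its length is $|\overline{\mathcal{N}}_2|=|\mathcal{N}_2|/(q-1)=2q^{|\mathpzc{B}|+|\mathpzc{C}|}(q^m-q^{|\mathpzc{A}|})/(q-1)$. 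Dividing every non-zero Lee weight of Table \ref{T2} by $q-1$ and keeping the same frequencies yields Table \ref{T6}, and in particular the minimum distance formula with $\rho=0$ when $\mathpzc{B}=\mathpzc{A}$ (the entry of weight $(q-1)(q^m-q^{|\mathpzc{A}|})q^{|\mathpzc{B}|+|\mathpzc{C}|-1}$ has frequency $2(q^{m-|\mathpzc{B}|}-q^{m-|\mathpzc{A}\cup\mathpzc{B}|})=0$) and $\rho=1$ when $\mathpzc{B}\subsetneq\mathpzc{A}.$ Projectivity follows from Lemma \ref{Lem2}(a), since any two columns of $G_{\overline{\mathcal{N}}_2}$ generate distinct one-dimensional subspaces of $\mathbb{F}_q^{3m}$ and are hence linearly independent.

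The counting of non-zero weights is then a direct tabulation: when $\mathpzc{B}=\mathpzc{A}$, Table \ref{T6} collapses to three potentially non-zero weights $q^{m+|\mathpzc{B}|+|\mathpzc{C}|-1}$, $2(q^m-q^{|\mathpzc{A}|})q^{|\mathpzc{B}|+|\mathpzc{C}|-1}$ and $2q^{m+|\mathpzc{B}|+|\mathpzc{C}|-1}$. Checking pairwise when these coincide shows that for $q\geq 3$ they are all distinct (giving a $3$-weight code), while for $q=2$ the first and second values coincide precisely when $|\mathpzc{A}|=m-1$, giving a $2$-weight code in that case and a $3$-weight code otherwise. When $\mathpzc{B}\subsetneq\mathpzc{A}\subsetneq[m]$, all five entries are non-zero and mutually distinct, producing a $5$-weight code. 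Finally, for self-orthogonality when $q\in\{2,3\}$: for $c'_{r_1},c'_{r_2}\in\mathcal{C}_{\overline{\mathcal{N}}_2}$ one computes
\begin{equation*}
c_{r_1}\cdot c_{r_2}\;=\;\sum_{x\in\overline{\mathcal{N}}_2}\!\!\Big(\sum_{\alpha\in\mathbb{F}_q^\ast}\alpha^2\Big)(r_1\cdot x)(r_2\cdot x)\;=\;-\,c'_{r_1}\cdot c'_{r_2},
\end{equation*}
since $\sum_{\alpha\in\mathbb{F}_q^\ast}\alpha^2=-1$ exactly when $(q-1)\mid 2$. Combined with the self-orthogonality of $\Phi(\mathscr{C}_{\mathcal{S}_2})$ granted by Theorem \ref{ThN2}, this yields the self-orthogonality of $\mathcal{C}_{\overline{\mathcal{N}}_2}$.

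The main technical obstacle I anticipate is the bookkeeping in the few-weight classification, because several of the Table \ref{T6} frequencies simplify substantially when $\mathpzc{B}=\mathpzc{A}$ and because some weight values can accidentally merge in low characteristic; a careful case analysis on $q=2$ versus $q\geq 3$ and on whether $|\mathpzc{A}|=m-1$ will be needed. Beyond this combinatorial check the argument is essentially mechanical, being driven entirely by the orbit decomposition of $\mathcal{N}_2$ supplied by Lemma \ref{Lem8}(a) and the already-known weight distribution of $\Phi(\mathscr{C}_{\mathcal{S}_2})$.
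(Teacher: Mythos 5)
Your proposal is correct and follows essentially the same route as the paper: the paper's proof of this theorem is a one-line reduction to Theorem \ref{Th2} (equivalently Table \ref{T2}) combined with the orbit description of $\overline{\mathcal{N}}_2$ in Lemma \ref{Lem8}(a), which amounts precisely to the $(q-1)$-fold scaling of weights, the preservation of dimension, and the projectivity observation (recorded in the paragraph preceding the theorem via Lemma \ref{Lem2}(a)) that you spell out in detail. The one place where you genuinely deviate is self-orthogonality: the paper applies the weight-divisibility criteria of Huffman--Pless (Theorems 1.4.8(ii) and 1.4.10(i)) to Table \ref{T6}, whereas you transfer self-orthogonality from $\Phi(\mathscr{C}_{\mathcal{S}_2})$ (Theorem \ref{ThN2}) through the identity $\sum_{\alpha\in\mathbb{F}_q^\ast}\alpha^2=-1$ for $q\in\{2,3\}$; both arguments are valid, and yours avoids re-checking divisibility of every entry of Table \ref{T6}, though it is not more general since it ultimately rests on the same divisibility criteria used to establish Theorem \ref{ThN2}. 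One caveat, which concerns the statement rather than your method: your claim that the five weights are mutually distinct whenever $\mathpzc{B}\subsetneq\mathpzc{A}\subsetneq[m]$ fails for $q=2$ with $|\mathpzc{A}|=m-1$, where $2(q^{m}-q^{|\mathpzc{A}|})q^{|\mathpzc{B}|+|\mathpzc{C}|-1}=q^{m+|\mathpzc{B}|+|\mathpzc{C}|-1}$ and the code has only four non-zero weights; this slip is already present in the theorem itself (and in Theorem \ref{Th2}), so it is not introduced by your argument, but a careful write-up should either exclude this corner case or adjust the weight count there.
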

\begin{proof}
The desired result follows from  Theorem \ref{Th2} and the description of the set $\overline{\mathcal{N}}_2$  given in Lemma \ref{Lem8}(a). We also see, by Theorems 1.4.8(ii) and 1.4.10(i) of \cite{HuffF},  that the code $\mathcal{C}_{\overline{\mathcal{N}}_2}$ is  self-orthogonal for $q=2$ or $3.$
\end{proof}
 Shi and Sol{\'e} \cite[Sec. 6]{Shi2019} posed an open question concerning the construction of new projective $3$-weight codes  over $\mathbb{F}_q,$  with non-zero Hamming weights summing to  $\frac{3 (q-1)}{q}$ times the code length. In the following corollary, we address this question in the special case $q=4,$ by identifying an infinite family of quaternary projective $3$-weight codes whose non-zero Hamming weights sum to $\frac{9}{4}$ times the code length.
\begin{cor} \label{SHI} 
Let $q=4,$ $\mathpzc{B}=\mathpzc{A}$ and $|\mathpzc{A}|=m-1,$ and let us define $\theta = 2^{4m+2|\mathpzc{C}|-3}.$ The code $\mathcal{C}_{\overline{\mathcal{N}}_2}$ is a projective $3$-weight code over $\mathbb{F}_4$ with parameters $\left[\theta, {2m+|\mathpzc{C}|},\frac{\theta}{2}\right]$ and  has non-zero Hamming weights $\mathtt{w}_1=\frac{\theta}{2},$   $\mathtt{w}_2=\frac{3\theta}{4}$ and  $\mathtt{w}_3= \theta$ with frequencies $A_{\mathtt{w}_1}=6,$  $A_{\mathtt{w}_2}=4^{2m+|\mathpzc{C}|}-16$ and $A_{\mathtt{w}_3}=9,$ respectively. Moreover, we have $\mathtt{w}_1+\mathtt{w}_2+\mathtt{w}_3= \frac{9\theta}{4}.$ \end{cor}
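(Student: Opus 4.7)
The plan is to obtain Corollary \ref{SHI} as a direct specialization of Theorem \ref{Th7} under the substitutions $q=4$, $\mathpzc{B}=\mathpzc{A}$, and $|\mathpzc{A}|=m-1$. Under these assumptions, one has $\rho=0$ (since $\mathpzc{B}=\mathpzc{A}$), $\eta=1$ (since $q=4$ is even), $|\mathpzc{A}\cup\mathpzc{B}|=|\mathpzc{A}|=m-1$, and $q^{m-|\mathpzc{A}|}=4$. First, I would verify the parameters: the length formula of Theorem \ref{Th7} becomes
$$\frac{2\cdot 4^{m-1+|\mathpzc{C}|}(4^m-4^{m-1})}{3}=2\cdot 4^{2m-2+|\mathpzc{C}|}=2^{4m+2|\mathpzc{C}|-3}=\theta,$$
the dimension $2m+|\mathpzc{C}|$ is preserved, and the stated minimum distance $(q^m-\rho q^{|\mathpzc{A}|})q^{|\mathpzc{B}|+|\mathpzc{C}|-1}$ evaluates to $4^m\cdot 4^{m+|\mathpzc{C}|-2}=\theta/2$.

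Next, I would identify which rows of Table \ref{T6} actually contribute. Since $\mathpzc{B}=\mathpzc{A}$ forces $q^{m-|\mathpzc{B}|}-q^{m-|\mathpzc{A}\cup\mathpzc{B}|}=0$, both the first and third rows of Table \ref{T6} have zero frequency and therefore drop out. The surviving rows are rows two, four, and five, with weights $2(q^m-q^{|\mathpzc{A}|})q^{|\mathpzc{B}|+|\mathpzc{C}|-1}$, $2q^{m+|\mathpzc{B}|+|\mathpzc{C}|-1}$, and $q^{m+|\mathpzc{B}|+|\mathpzc{C}|-1}$. Substituting $q=4$ and $|\mathpzc{B}|=|\mathpzc{A}|=m-1$, a routine computation simplifies these to $3\theta/4$, $\theta$, and $\theta/2$, respectively, which I then relabel as $\mathtt{w}_1=\theta/2$, $\mathtt{w}_2=3\theta/4$, $\mathtt{w}_3=\theta$.

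For the multiplicities, row five of Table \ref{T6} gives $2(q^{m-|\mathpzc{A}\cup\mathpzc{B}|}-1)=2(4-1)=6=A_{\mathtt{w}_1}$, row four reduces to $(q^{m-|\mathpzc{A}|}-1)(q^{m-|\mathpzc{A}\cup\mathpzc{B}|}-1)+q^{m-|\mathpzc{A}|}-q^{m-|\mathpzc{A}\cup\mathpzc{B}|}=(4-1)^2+0=9=A_{\mathtt{w}_3}$, and row two collapses (since its middle term vanishes and $q^{2m-|\mathpzc{A}\cup\mathpzc{B}|-|\mathpzc{A}|}=4^{2}$) to $4^{2m+|\mathpzc{C}|}-16=A_{\mathtt{w}_2}$. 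Finally, the sum $\mathtt{w}_1+\mathtt{w}_2+\mathtt{w}_3=\theta/2+3\theta/4+\theta=9\theta/4$ is immediate. No substantial obstacle is anticipated: the corollary is a bookkeeping specialization of Theorem \ref{Th7}, and the only subtle point is recognizing that the hypothesis $\mathpzc{B}=\mathpzc{A}$ with $|\mathpzc{A}|=m-1$ simultaneously kills two of the five potential weights in Table \ref{T6} and pins down $q^{m-|\mathpzc{A}|}=4$, which is what forces the clean $3$-weight structure with the prescribed sum.
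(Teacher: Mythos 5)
Your proposal is correct and matches the paper's route: the corollary is stated there as a direct specialization of Theorem \ref{Th7} and Table \ref{T6} with $q=4$, $\mathpzc{B}=\mathpzc{A}\subsetneq[m]$, $|\mathpzc{A}|=m-1$, and your bookkeeping (vanishing of rows one and three since $q^{m-|\mathpzc{B}|}=q^{m-|\mathpzc{A}\cup\mathpzc{B}|}$, and the evaluations of the surviving weights and frequencies) is exactly the verification the paper leaves implicit. All substitutions check out, so nothing further is needed.
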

In the appendix, we will construct strongly walk-regular graphs using the codes constructed in the above corollary.  Now, in the following theorem, we assume that $\mathpzc{A} \cap \mathpzc{B}=\emptyset,$  $\mathpzc{C} \neq [m]$ and $|\mathpzc{A}|\geq 2,$ and  explicitly determine the parameters and  Hamming weight distribution of the projective code $\mathcal{C}_{\overline{\mathcal{N}}_4}$ over $\mathbb{F}_q.$  We show that the code $\mathcal{C}_{\overline{\mathcal{N}}_4}$ is a projective $6$-weight code over $\mathbb{F}_q.$ We also observe that $\mathcal{C}_{\overline{\mathcal{N}}_4}$ is a self-orthogonal code  for $q = 2$ or $3.$
\begin{thm}\label{Th8}
When $\mathpzc{A}\cap\mathpzc{B}=\emptyset,$ $\mathpzc{C} \neq [m]$ and $|\mathpzc{A}|\geq 2,$ the code $\mathcal{C}_{\overline{\mathcal{N}}_4}$ is a projective $6$-weight code over $\mathbb{F}_q$ with parameters 
$\left[\frac{2(q^m-q^{|\mathpzc{C}|})(q^{|\mathpzc{A}|}-1)q^{|\mathpzc{B}|}}{q-1}, {m+2|\mathpzc{A}|+|\mathpzc{B}|},(q^{m}-q^{|\mathpzc{C}|})q^{|\mathpzc{A}|+|\mathpzc{B}|-1}\right]$ and
 Hamming weight distribution as given in Table \ref{T7}. Furthermore, the code $\mathcal{C}_{\overline{\mathcal{N}}_4}$ is self-orthogonal when $q = 2$ or $3.$
\begin{table}[H]
\centering
    \begin{tabular}{ |c|c|} 
 \hline Hamming weight $w$ & Frequency $A_w$ \\ \hline
$0$ & $1$ \\
\hline
$(q^{m}-q^{|\mathpzc{C}|})q^{|\mathpzc{A}|+|\mathpzc{B}|-1}$ & $2(q^{|\mathpzc{A}|}-1)$\\
\hline
$2q^{|\mathpzc{B}|-1}\big((q^m-q^{|\mathpzc{C}|})(q^{|\mathpzc{A}|}-1)-q^{|\mathpzc{C}|}\big)$ & $(q^{|\mathpzc{A}|}-1)^2(q^{m-|\mathpzc{C}|}-1)$ \\
\hline
$2(q^m-q^{|\mathpzc{C}|})(q^{|\mathpzc{A}|}-1)q^{|\mathpzc{B}|-1}$ & $q^{m+2|\mathpzc{A}|+|\mathpzc{B}|}-q^{m+2|\mathpzc{A}|-|\mathpzc{C}|}$ \\
\hline
$2(q^m-q^{|\mathpzc{C}|})q^{|\mathpzc{A}|+|\mathpzc{B}|-1}$ & $(q^{|\mathpzc{A}|}-1)^2$\\
\hline
$q^{|\mathpzc{B}|-1}\big((2q^m-q^{|\mathpzc{C}|})(q^{|\mathpzc{A}|}-1)-q^{|\mathpzc{C}|}\big)$ & $2(q^{|\mathpzc{A}|}-1)(q^{m-|\mathpzc{C}|}-1)$ \\
\hline
$2q^{m+|\mathpzc{B}|-1}(q^{|\mathpzc{A}|}-1)$ & $q^{m-|\mathpzc{C}|}-1$ \\
\hline
\end{tabular}
\caption{The Hamming weight distribution of the code $\mathcal{C}_{\overline{\mathcal{N}}_4}$ over $\mathbb{F}_q$ when $\mathpzc{A}\cap\mathpzc{B}=\emptyset,$ $\mathpzc{C} \neq [m]$ and $|\mathpzc{A}|\geq 2$}\label{T7} 
\end{table}
\end{thm}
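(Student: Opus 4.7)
The plan is to mirror the strategy used for Theorem~\ref{Th7}, deducing the claim from Theorem~\ref{Th4} (which supplies the Lee weight distribution of $\mathscr{C}_{\mathcal{S}_4}$), Lemma~\ref{Matrix} (which identifies $\mathscr{G}_{\mathcal{N}_4}$ as a spanning matrix of $\Phi(\mathscr{C}_{\mathcal{S}_4})$), and Lemma~\ref{Lem8}(b) (which, under the hypothesis $\mathpzc{A}\cap\mathpzc{B}=\emptyset,$ guarantees that $\mathcal{N}_4$ is a set and that $\mathcal{N}_4=\{\alpha x : x\in\overline{\mathcal{N}}_4,\ \alpha\in\mathbb{F}_q^\ast\}$ with the representatives in $\overline{\mathcal{N}}_4$ spanning pairwise distinct one-dimensional subspaces of $\mathbb{F}_q^{3m}$).

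First I would observe that the hypotheses $\mathpzc{A}\cap\mathpzc{B}=\emptyset$ and $|\mathpzc{A}|\geq 2$ force $\mathpzc{A}\not\subseteq\mathpzc{B}$ together with $|\mathpzc{A}\cup\mathpzc{B}|=|\mathpzc{A}|+|\mathpzc{B}|,$ so the case $\mathpzc{A}\not\subseteq\mathpzc{B}$ of Theorem~\ref{Th4} applies and the size of $\mathscr{C}_{\mathcal{S}_4}$ simplifies to $q^{m+|\mathpzc{A}|+|\mathpzc{A}\cup\mathpzc{B}|}=q^{m+2|\mathpzc{A}|+|\mathpzc{B}|}.$ By Remark~\ref{RRk}, this is also the size of $\Phi(\mathscr{C}_{\mathcal{S}_4}),$ and since $\mathcal{C}_{\overline{\mathcal{N}}_4}$ is obtained from $\Phi(\mathscr{C}_{\mathcal{S}_4})$ by retaining a single column from each $\mathbb{F}_q^\ast$-orbit of columns of $\mathscr{G}_{\mathcal{N}_4},$ the two codes share the same $\mathbb{F}_q$-dimension $m+2|\mathpzc{A}|+|\mathpzc{B}|.$ The length $\frac{2(q^m-q^{|\mathpzc{C}|})(q^{|\mathpzc{A}|}-1)q^{|\mathpzc{B}|}}{q-1}$ is precisely $|\overline{\mathcal{N}}_4|=|\mathcal{N}_4|/(q-1),$ computed directly from the definition \eqref{D4}. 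Projectivity follows at once from Lemma~\ref{Lem2}(a): pairwise linear independence of the columns of $G_{\overline{\mathcal{N}}_4}$ yields $d(\mathcal{C}_{\overline{\mathcal{N}}_4}^\perp)\geq 3.$

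For the Hamming weight distribution, the key observation is that for each codeword $\mathtt{c}_\theta\in\Phi(\mathscr{C}_{\mathcal{S}_4})$ and each orbit representative $x\in\overline{\mathcal{N}}_4,$ the $q-1$ coordinates of $\mathtt{c}_\theta$ indexed by $\{\alpha x: \alpha\in\mathbb{F}_q^\ast\}$ evaluate to $\{\alpha(\theta\cdot x):\alpha\in\mathbb{F}_q^\ast\},$ which are either all zero or all non-zero according to whether $\theta\cdot x=0.$ Hence the Hamming weight of the corresponding codeword in $\mathcal{C}_{\overline{\mathcal{N}}_4}$ equals $\frac{1}{q-1}$ times the weight of $\mathtt{c}_\theta,$ while the bijection between the two codes preserves frequencies. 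Applying this scaling row-by-row to the $\mathpzc{A}\not\subseteq\mathpzc{B}$ portion of Table~\ref{T4}, and simplifying using $|\mathpzc{A}\cup\mathpzc{B}|-|\mathpzc{B}|=|\mathpzc{A}|$ throughout, produces Table~\ref{T7}; in particular, the six listed non-zero weights arise (with none collapsing under the hypotheses), and the minimum among them is $(q^m-q^{|\mathpzc{C}|})q^{|\mathpzc{A}|+|\mathpzc{B}|-1}.$

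Finally, self-orthogonality for $q\in\{2,3\}$ follows from Theorems~1.4.8(ii) and 1.4.10(i) of \cite{HuffF} by checking that each non-zero weight in Table~\ref{T7} is divisible by $2$ when $q=2$ and by $3$ when $q=3$; both divisibilities are immediate from the explicit powers of $q$ appearing as factors. I do not foresee a conceptual obstacle; the main care required is the bookkeeping when matching the six rows of Table~\ref{T7} against those of Table~\ref{T4} and verifying that no two of the rescaled weights coincide, so that the code genuinely has six distinct non-zero weights rather than fewer.
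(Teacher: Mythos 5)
You follow the paper's own proof of Theorem \ref{Th8} essentially step for step: the paper likewise deduces the parameters and Table \ref{T7} from Theorem \ref{Th4} together with the description of $\overline{\mathcal{N}}_4$ in Lemma \ref{Lem8}(b) (your explicit $(q-1)$-fold weight scaling and the substitution $|\mathpzc{A}\cup\mathpzc{B}|=|\mathpzc{A}|+|\mathpzc{B}|$ are exactly the bookkeeping the paper leaves implicit), obtains projectivity from Lemma \ref{Lem2}(a) via the pairwise linear independence of the columns of $G_{\overline{\mathcal{N}}_4},$ and cites Theorems 1.4.8(ii) and 1.4.10(i) of \cite{HuffF} for self-orthogonality.

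Two caveats. First, your self-orthogonality check for $q=2$ is misstated: Theorem 1.4.8(ii) of \cite{HuffF} requires every weight to be divisible by $4,$ and divisibility by $2$ alone does not make a binary code self-orthogonal. The conclusion survives because, with $|\mathpzc{A}|\geq 2,$ $|\mathpzc{B}|\geq 1$ and $1\leq |\mathpzc{C}|\leq m-1,$ every weight in Table \ref{T7} is in fact divisible by $4$; for instance $(2^m-2^{|\mathpzc{C}|})2^{|\mathpzc{A}|+|\mathpzc{B}|-1}=2^{|\mathpzc{A}|+|\mathpzc{B}|+|\mathpzc{C}|-1}(2^{m-|\mathpzc{C}|}-1)$ and the second weight equals $2^{|\mathpzc{B}|+|\mathpzc{C}|}\big((2^{m-|\mathpzc{C}|}-1)(2^{|\mathpzc{A}|}-1)-1\big),$ and the remaining rows factor analogously; for $q=3$ divisibility by $3$ is immediate, as you say. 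Second, your parenthetical assertion that none of the six rescaled weights coincide is not true in every admissible case (nor is it verified in the paper): for $q=2,$ $|\mathpzc{A}|=2$ and $|\mathpzc{C}|=m-1,$ the first two weights of Table \ref{T7} both equal $2^{m+|\mathpzc{B}|},$ so the code then has fewer than six distinct non-zero weights. This is a corner case in the theorem's ``$6$-weight'' claim itself rather than a defect of your reduction, but if you assert distinctness you should either verify it or exclude that case.
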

\begin{proof}
The  result follows from Theorem \ref{Th4} and the description of the set $\overline{\mathcal{N}}_4$ given in Lemma \ref{Lem8}(b). Moreover, for $q=2$ or $3,$ we see, by Theorems 1.4.8(ii) and 1.4.10(i) of \cite{HuffF}, that the code $\mathcal{C}_{\overline{\mathcal{N}}_4}$ is self-orthogonal.
% To prove the result, let us first define a subset $\mathcal{S}_4$ of $\mathcal{R}^m$ as $\mathcal{S}_4 = \{(w_1+uw_2,w_3)\in \mathcal{R}^m : w_3 \in \Delta_{\mathpzc{C}}^c, w_1 \in \Delta_{\mathpzc{A}}^\ast, w_2 \in \Delta_{\mathpzc{B}} \text{ with }|\mathpzc{C}|<m \text{ and } |\mathpzc{A}|\geq 2\}.$ Now, corresponding to the subset $\mathcal{S}_4$ of $\mathcal{R}^m,$ one can define, using \eqref{Code}, the linear code $\mathscr{C}_{\mathcal{S}_4}$ of length $|\mathcal{S}_4|$ over $\frac{\mathbb{F}_q[u]}{\langle u^2 \rangle}.$ 
% Further, one can obtain, by working as in Theorem \ref{Th1} and applying Lemmas \ref{Lem3}, \ref{Lem4}, \ref{Lem5} and \ref{Lem5}(c), the parameters and Lee weight distribution of the code $\mathscr{C}_{\mathcal{S}_4}.$ Then the desired result follows from Lemma \ref{Lem2}(a), together with the parameters and Lee weight distribution of the code $\mathscr{C}_{\mathcal{S}_4}.$
\end{proof}

While the parameters and Hamming weight enumerators of the duals $\mathcal{C}_{\overline{\mathcal{N}}1}^\perp$ and $\mathcal{C}_{\overline{\mathcal{N}}_4}^\perp$ over $\mathbb{F}_q$ can be derived using the MacWilliams identity together with Theorems \ref{Th7} and \ref{Th8}, we adopt an alternative approach in the next two theorems. Specifically, we obtain their parameters by examining the linear independence relations among the vectors of $\overline{\mathcal{N}}_2$ and $\overline{\mathcal{N}}_4,$ thereby gaining a clearer understanding of their underlying structure.

In the following theorem, we  determine the parameters of the  dual $\mathcal{C}_{\overline{\mathcal{N}}_2}^\perp$  over $\mathbb{F}_q.$  As a consequence, we obtain an infinite family of binary distance-optimal codes with Hamming distance $4.$ 
\begin{thm}\label{Th9}
    Suppose that $\mathpzc{B}\subseteq \mathpzc{A}\subsetneq[m].$ The following hold.
\begin{itemize}
\item[(a)] When either $q\geq 3$ or $q=2$ and $|\mathpzc{A}|\leq m-2,$  the dual $\mathcal{C}_{\overline{\mathcal{N}}_2}^\perp$ is a linear code over $\mathbb{F}_q$ with parameters $$\bigg[\frac{2q^{|\mathpzc{B}|+|\mathpzc{C}|}(q^m-q^{|\mathpzc{A}|})}{q-1}, \frac{2q^{|\mathpzc{B}|+|\mathpzc{C}|}(q^m-q^{|\mathpzc{A}|})}{q-1}-2m-|\mathpzc{C}|,3\bigg ].$$
\item[(b)] When $q=2$ and $|\mathpzc{A}|=m-1,$ the dual $\mathcal{C}_{\overline{\mathcal{N}}_2}^\perp$ is a  binary distance-optimal  code  with parameters $$[2^{m+|\mathpzc{B}|+|\mathpzc{C}|},2^{m+|\mathpzc{B}|+|\mathpzc{C}|}-2m-|\mathpzc{C}|,4].$$
    \end{itemize}
\end{thm}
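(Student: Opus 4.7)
The plan is to read off the length and dimension directly from Theorem \ref{Th7}, so that only the minimum distance of $\mathcal{C}_{\overline{\mathcal{N}}_2}^\perp$ requires genuine work. By Lemma \ref{Lem2}, this distance equals the smallest number of linearly dependent columns of the spanning matrix $G_{\overline{\mathcal{N}}_2}$. Lemma \ref{Lem8}(a) says these columns span pairwise distinct one-dimensional subspaces of $\mathbb{F}_q^{3m}$; hence any two are independent, giving the baseline $d(\mathcal{C}_{\overline{\mathcal{N}}_2}^\perp) \geq 3$. The question then splits into deciding whether a three-column dependence exists (forcing $d^\perp = 3$) or not (leading us to look for a four-column one).

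For part (a), I would produce an explicit three-term dependence. Fix $j \in [m] \setminus \mathpzc{A}$ (possible since $|\mathpzc{A}| < m$) and $k \in \mathpzc{B}$. When $q \geq 3$, set $v_1 = (\mathbf{0}, \mathbf{0}, e_j)$ and $v_2 = (e_k, \mathbf{0}, e_j)$ in $\mathcal{N}_2$, and choose $\alpha, \beta \in \mathbb{F}_q^\ast$ with $\alpha + \beta \neq 0$; then $\alpha v_1 + \beta v_2 = (\beta e_k, \mathbf{0}, (\alpha+\beta) e_j)$ again lies in $\mathcal{N}_2$ (as $\beta e_k \in \Delta_\mathpzc{B}$ and $(\alpha+\beta) e_j \in \Delta_\mathpzc{A}^c$), and its class $v_3 \in \overline{\mathcal{N}}_2$ is easily checked to be distinct from those of $v_1$ and $v_2$. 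When $q = 2$ and $|\mathpzc{A}| \leq m-2$, I would instead pick two distinct $j_1, j_2 \in [m] \setminus \mathpzc{A}$ and take the three vectors $(\mathbf{0}, \mathbf{0}, e_{j_1})$, $(\mathbf{0}, \mathbf{0}, e_{j_2})$, $(\mathbf{0}, \mathbf{0}, e_{j_1} + e_{j_2})$, each of which lies in $\mathcal{N}_2$ since its support meets $[m] \setminus \mathpzc{A}$, and which sum to $\mathbf{0}$.

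For part (b), the first task is to rule out three-term dependences. When $q = 2$ and $[m] \setminus \mathpzc{A} = \{j\}$, every $w_1 \in \Delta_\mathpzc{A}^c$ has $j$-th coordinate equal to $1$, so the $\mathbb{F}_2$-sum of any three columns of $G_{\overline{\mathcal{N}}_2}$ carries a $1$ at the $j$-th slot of the last $m$ coordinates and therefore cannot vanish; hence $d^\perp \geq 4$. To reach $d^\perp = 4$, I would exhibit, for some $k \in \mathpzc{B}$, the four distinct vectors $v_1 = (\mathbf{0}, \mathbf{0}, e_j)$, $v_2 = (e_j, \mathbf{0}, e_j)$, $v_3 = (e_k, \mathbf{0}, e_j)$, $v_4 = (e_j + e_k, \mathbf{0}, e_j)$ in $\mathcal{N}_2$ (with $v_2, v_4$ of Type II and $v_1, v_3$ of Type I), whose sum vanishes in $\mathbb{F}_2$. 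Distance-optimality then follows from the Sphere-packing bound \eqref{SPB}: a hypothetical binary $[n, n-2m-|\mathpzc{C}|, 5]$-code would require $1 + n + \binom{n}{2} \leq 2^{2m+|\mathpzc{C}|}$, but with $n = 2^{m+|\mathpzc{B}|+|\mathpzc{C}|}$ one has $\binom{n}{2} \geq n^2/4 = 2^{2(m+|\mathpzc{B}|+|\mathpzc{C}|)-2}$, which exceeds $2^{2m+|\mathpzc{C}|}$ because $|\mathpzc{B}|, |\mathpzc{C}| \geq 1$ force $2|\mathpzc{B}| + |\mathpzc{C}| \geq 3$.

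The main obstacle I anticipate is the impossibility half of part (b): ensuring that \emph{no} three columns of $G_{\overline{\mathcal{N}}_2}$ are dependent, in contrast to the many easy constructions available in part (a). The clean way to handle this is the parity-at-coordinate-$j$ observation above, which exploits the fact that when $|\mathpzc{A}| = m-1$ in characteristic two, the set $\Delta_\mathpzc{A}^c$ is an affine hyperplane with a fixed $1$ in the unique coordinate outside $\mathpzc{A}$; this is exactly the feature that the earlier constructions cannot evade.
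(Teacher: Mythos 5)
Your proposal is correct and follows essentially the same route as the paper: length and dimension from Theorem \ref{Th7}, the bound $d_H(\mathcal{C}_{\overline{\mathcal{N}}_2}^\perp)\geq 3$ from Lemma \ref{Lem8}(a) together with Lemma \ref{Lem2}, explicit three-column dependences in the two subcases of part (a), the parity argument at the unique coordinate outside $\mathpzc{A}$ to exclude three-column dependences when $q=2$ and $|\mathpzc{A}|=m-1$, an explicit four-column dependence, and the Sphere-packing bound \eqref{SPB} for distance-optimality. The only differences are cosmetic choices of witness vectors (e.g.\ $e_k$ with $k\in\mathpzc{B}$ and general coefficients $\alpha,\beta$ instead of the paper's $w\in\Delta_{\mathpzc{C}}\setminus\{\mathbf{0}\}$ and $\alpha$ with $1+\alpha\neq 0$), which do not change the argument.
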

\begin{proof}
To prove the result, we note, by Theorem \ref{Th7}, that $\mathcal{C}_{\overline{\mathcal{N}}_2}$ is a linear code of length
$\frac{2q^{|\mathpzc{B}|+|\mathpzc{C}|}(q^m - q^{|\mathpzc{A}|})}{q - 1}$ and dimension
$2m + |\mathpzc{C}|$
over $\mathbb{F}_q.$ This, by Theorem 7.3 of \cite{Hill1986}, implies that the dual $\mathcal{C}_{\overline{\mathcal{N}}_2}^\perp$ is a linear code of length
$\frac{2q^{|\mathpzc{B}|+|\mathpzc{C}|}(q^m - q^{|\mathpzc{A}|})}{q - 1}$ and dimension
$\frac{2q^{|\mathpzc{B}|+|\mathpzc{C}|}(q^m - q^{|\mathpzc{B}|})}{q - 1} - 2m - |\mathpzc{C}|$
over $\mathbb{F}_q.$ 

Next, to determine the Hamming distance of the code $\mathcal{C}_{\overline{\mathcal{N}}_2}^\perp,$ we note, by Lemma \ref{Lem8}(a), that all vectors in the multiset $\mathcal{N}_2$ are distinct, $|\overline{\mathcal{N}}_2|=\frac{|\mathcal{N}_2|}{q-1},$ each vector in $\overline{\mathcal{N}}_2$ generates a distinct one-dimensional subspace of $\mathbb{F}_q^{3m}$ over $\mathbb{F}_q,$  and
that $\mathcal{N}_2 = \{\alpha x : x \in \overline{\mathcal{N}}_2 \text{ and } \alpha \in \mathbb{F}_q^\ast\}.$ This, by Lemma \ref{Lem2}(a), implies that $d_H(\mathcal{C}_{\overline{\mathcal{N}}_2}^\perp)\geq 3.$
Now, to show that $d_H(\mathcal{C}_{\overline{\mathcal{N}}_2}^\perp) = 3,$ we see, by Lemma \ref{Lem2}(b), that it suffices to produce three distinct vectors in the set $\overline{\mathcal{N}}_2$ that are linearly dependent over $\mathbb{F}_q.$ 
Towards this, we will produce three linearly dependent vectors in $\mathcal{N}_2$ such that any pair of vectors among them is linearly independent, so that their corresponding representatives in the set $\overline{\mathcal{N}}_2$  satisfy the desired condition.
To do this, let $v \in \Delta_{\mathpzc{B}} \setminus \{\mathbf{0}\}$ and $w \in \Delta_{\mathpzc{C}} \setminus \{\mathbf{0}\} $ be fixed. For  $i \in [m],$ let us define  $e_i=(0,0,\ldots,0,\underbrace{1}_{i\text{-th}},0,0,\ldots0)\in \mathbb{F}_q^m.$ We will now distinguish the following three cases: (i) $q \geq 3,$  (ii) $q=2$ and $|\mathpzc{A}|\leq m-2,$ and (iii) $q=2$ and $|\mathpzc{A}|= m-1.$

\begin{itemize}\item[(i)] Let $q \geq 3.$ Let $\kappa \in [m]\setminus\mathpzc{A}$ be fixed,
and let $\alpha\in \mathbb{F}_q^\ast$ be such that $1+\alpha\in \mathbb{F}_q^\ast.$ 
Now, let us define  $v_1,v_2\in \mathbb{F}_q^{3m}$ as follows:
\begin{equation*}
    v_1=(\mathbf{0},w,e_\kappa) \text{ and }
    v_2=(\mathbf{0},\mathbf{0},\alpha e_\kappa). 
    % \text{ and }
    % v_1+v_2=\begin{pmatrix}
    % \mathbf{0} \\
    % w \\
    % (1+\alpha) e_\kappa
    % \end{pmatrix}.
\end{equation*}
Note that the vectors $v_1, v_2, v_1 + v_2 \in \mathcal{N}_2.$  We further observe that the vectors $v_1, v_2, v_1 + v_2 \in \mathcal{N}_2$ are linearly dependent over $\mathbb{F}_q,$  while any two of them are linearly independent over $\mathbb{F}_q.$ Accordingly, their corresponding representatives in the set $\overline{\mathcal{N}}_2$ satisfy the desired condition. From this, it follows that $d_H(\mathcal{C}_{\overline{\mathcal{N}}_2}^\perp) = 3.$

\item[(ii)] Let $q=2$ and $|\mathpzc{A}|\leq m-2.$ Here, one can easily observe that $\mathcal{N}_2=\overline{\mathcal{N}}_2.$ Further, as $|\mathpzc{A}|\leq m-2,$ we have $|[m]\setminus\mathpzc{A}|\geq 2.$ Let $i,j\in [m]\setminus\mathpzc{A}$ be such that $i\neq j.$ Now, let us define $x_1,x_2\in \mathbb{F}_2^{3m}$  as follows:
\begin{equation*}
    x_1=(\mathbf{0},w,e_i) \text{ and }
    x_2=(\mathbf{0},\mathbf{0},e_j).
    % \text{ and }
    % x_1+x_2=\begin{pmatrix}
    % \mathbf{0} \\
    % w \\
    % e_i+e_j
    % \end{pmatrix}.
\end{equation*}
Note that the vectors $x_1, x_2, x_1+x_2 \in \mathcal{N}_2.$  One can easily see that the vectors $x_1, x_2, x_1+x_2 \in \mathcal{N}_2$  are distinct and linearly dependent over $\mathbb{F}_2.$  From this, it follows that $d_H(\mathcal{C}_{\overline{\mathcal{N}}_2}^\perp)   = 3.$

\item[(iii)] Finally, let $q=2$ and $|\mathpzc{A}|=m-1.$ Here, we have $\overline{\mathcal{N}}_2=\mathcal{N}_2$ and $|[m]\setminus\mathpzc{A}|=1.$ Now, to show that $d_H(\mathcal{C}_{\overline{\mathcal{N}}_2}^\perp) = 4,$ it suffices, by Lemma \ref{Lem2}, to show that any three vectors of $\mathcal{N}_2$ are linearly independent over $\mathbb{F}_2$ and that there exist four vectors in $\mathcal{N}_2$ that are linearly dependent over $\mathbb{F}_2.$ Here, one can easily see that any three distinct vectors $z_1, z_2, z_3 \in \mathcal{N}_2$ are linearly independent over $\mathbb{F}_2$ if and only if $ z_3 \neq z_1 + z_2.$ 
Now, as $|[m]\setminus\mathpzc{A}|=1,$  let us assume that  $ [m]\setminus\mathpzc{A}=\{ \kappa \}.$ For all $a,b\in \Delta_{\mathpzc{A}}^c,$ we have $(a)_\kappa=(b)_\kappa=1,$ which gives $(a+b)_\kappa=0.$ This shows that  $a+b\notin \Delta_{\mathpzc{A}}^c.$ From this, it follows that  $z_3 = z_1 + z_2\notin  \mathcal{N}_2$ for all $z_1,z_2\in \mathcal{N}_2.$  Thus, any three vectors in the set $\mathcal{N}_2$ are linearly independent over $\mathbb{F}_2.$ 

Further, let us define $y_1,$ $y_2,$ $y_3 \in \mathbb{F}_2^{3m}$ as follows:   
 \begin{equation*}
y_1=(\mathbf{0},w,e_\kappa),\,
y_2=(\mathbf{0},\mathbf{0},e_\kappa) \text{ and }
y_3=(v,\mathbf{0},e_\kappa).
    % \text{ and }
    % y_1+y_2+y_3=\begin{pmatrix}
    % v \\
    % w \\
    % e_\upsilon
    % \end{pmatrix}.
\end{equation*}
Note that $y_1, y_2, y_3, y_1+y_2+y_3 \in \mathcal{N}_2.$ Further, one can easily see that the vectors $y_1, y_2, y_3, y_1+y_2+y_3 \in \mathcal{N}_2$ are linearly dependent over $\mathbb{F}_2.$ This, by Lemma \ref{Lem2}, implies that  $d_H(\mathcal{C}_{\overline{\mathcal{N}}_2}^\perp) = 4.$ Moreover, we observe, by the Sphere-packing bound \eqref{SPB}, that the code $\mathcal{C}_{\overline{\mathcal{N}}_2}^\perp$ is distance-optimal.  \end{itemize}
This completes the proof of the theorem.
\end{proof}
In the following theorem, we  determine the parameters of the  dual $\mathcal{C}_{\overline{\mathcal{N}}_4}^\perp$ of the projective codes $\mathcal{C}_{\overline{\mathcal{N}}_4}$ over $\mathbb{F}_q.$  As a consequence, we obtain an infinite family binary distance-optimal codes. 
\begin{thm}\label{Th10}
When $q=2,$ $\mathpzc{A}\cap \mathpzc{B}=\emptyset,$ $|\mathpzc{C}|=m-1$ and $|\mathpzc{A}|\geq 2,$ the dual $\mathcal{C}_{\overline{\mathcal{N}}_4}^\perp$ is a binary distance-optimal code  with parameters $$\bigg[2^{m+|\mathpzc{A}|+|\mathpzc{B}|}-2^{m+|\mathpzc{B}|}, 2^{m+|\mathpzc{A}|+|\mathpzc{B}|}-2^{m+|\mathpzc{B}|}-m-2|\mathpzc{A}|-|\mathpzc{B}|,4\bigg ].$$
\end{thm}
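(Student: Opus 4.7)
The plan is to follow the template of Theorem \ref{Th9}. Specifying Theorem \ref{Th8} to $q=2$ and $|\mathpzc{C}|=m-1,$ the code $\mathcal{C}_{\overline{\mathcal{N}}_4}$ has length $n=2^{m+|\mathpzc{A}|+|\mathpzc{B}|}-2^{m+|\mathpzc{B}|}$ and dimension $m+2|\mathpzc{A}|+|\mathpzc{B}|,$ so Theorem 7.3 of \cite{Hill1986} immediately gives the claimed length and dimension $n-m-2|\mathpzc{A}|-|\mathpzc{B}|$ of the dual. It then suffices to compute $d_H(\mathcal{C}_{\overline{\mathcal{N}}_4}^\perp)$ using Lemma \ref{Lem2} on the columns of $G_{\overline{\mathcal{N}}_4},$ and to close with the Sphere-packing bound \eqref{SPB} for distance-optimality.

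For the inequality $d_H(\mathcal{C}_{\overline{\mathcal{N}}_4}^\perp)\geq 4,$ the key observation is that $q=2$ together with $|\mathpzc{C}|=m-1$ forces $[m]\setminus\mathpzc{C}=\{k\}$ for a unique index $k,$ and $(w_3)_k=1$ for every $w_3\in\Delta_{\mathpzc{C}}^c.$ Since $q=2$ gives $\overline{\mathcal{N}}_4=\mathcal{N}_4$ by Lemma \ref{Lem8}(b), every column of $G_{\overline{\mathcal{N}}_4}$ carries a $1$ in its $(m+k)$-th position, so the sum of any three distinct columns has $(m+k)$-th entry $1+1+1=1\neq 0$ over $\mathbb{F}_2;$ thus any three columns are linearly independent, and Lemma \ref{Lem2}(a) yields $d_H(\mathcal{C}_{\overline{\mathcal{N}}_4}^\perp)\geq 4.$

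For the matching upper bound, I would exhibit four distinct linearly dependent vectors in $\mathcal{N}_4.$ The hypotheses $\mathpzc{A}\cap\mathpzc{B}=\emptyset,$ $|\mathpzc{A}|\geq 2,$ and $|\mathpzc{B}|\geq 1$ imply $m\geq 3,$ so $|\Delta_{\mathpzc{C}}^c|=2^{m-1}\geq 4.$ Choosing $a\in\Delta_{\mathpzc{A}}^\ast,$ $b\in\Delta_{\mathpzc{B}}\setminus\{\mathbf{0}\},$ and distinct $u,u'\in\Delta_{\mathpzc{C}}^c,$ I would set
\begin{equation*}
z_1=(\mathbf{0},u,a),\quad z_2=(\mathbf{0},u',a),\quad z_3=(b,u,a),\quad z_4=(b,u',a).
\end{equation*}
Each $z_i$ lies in $\mathcal{N}_4$ (take $\omega_i=\mathbf{0}$), they are pairwise distinct, and $z_1+z_2+z_3+z_4=\mathbf{0},$ so Lemma \ref{Lem2}(b) gives the desired upper bound. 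Finally, distance-optimality follows from \eqref{SPB}: the bound for a $[n,n-m-2|\mathpzc{A}|-|\mathpzc{B}|,5]$ binary code reads $1+n+\binom{n}{2}\leq 2^{m+2|\mathpzc{A}|+|\mathpzc{B}|},$ which fails since $n^2\sim 2^{2(m+|\mathpzc{A}|+|\mathpzc{B}|)}$ dwarfs $2^{m+2|\mathpzc{A}|+|\mathpzc{B}|+1}$ under $m+|\mathpzc{B}|\geq 2.$ The main subtlety lies in the four-vector construction: the disjointness $\mathpzc{A}\cap\mathpzc{B}=\emptyset$ forces the $\mathpzc{A}$-block of any first coordinate in $\mathcal{N}_4$ to equal either $\mathbf{0}$ or the third coordinate, which severely restricts how $\omega_i$'s can combine; the pairing above bypasses this by keeping all $\omega_i=\mathbf{0},$ letting the $b$-pair and the $u$-pair cancel independently.
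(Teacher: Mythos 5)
Your proposal is correct and follows essentially the same route as the paper, whose proof simply says to argue as in Theorem \ref{Th9} using Theorem \ref{Th8} and the Sphere-packing bound: the forced nonzero coordinate of $w_3\in\Delta_{\mathpzc{C}}^c$ (since $|\mathpzc{C}|=m-1$ and $q=2$) rules out any three dependent columns, your four vectors (note $z_4=z_1+z_2+z_3$) give the matching upper bound exactly as in case (iii) of Theorem \ref{Th9}, and the packing-bound computation settles distance-optimality. No gaps.
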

\begin{proof}
The desired result follows by working as in Theorem \ref{Th9} and applying Theorem \ref{Th8} and using the Sphere-packing bound \eqref{SPB}.
\end{proof}

\section{Some additional applications}\label{add}
In this section, we will explore two additional applications of the results derived in Sections \ref{GI} and \ref{Sec5}. More precisely, we will study minimal access structures of Massey's secret sharing schemes based on the duals of  minimal codes obtained in Theorem \ref{ThN3}. We will also construct infinite families of LRCs with locality either $2$ or $3$ through the projective codes constructed  in Theorems \ref{Th7} and \ref{Th8}.

\subsection{Minimal access structures of Massey's secret sharing schemes based on the duals of our minimal codes}\label{Sec7}

In this section, we will examine the minimal access structures of Massey's secret sharing schemes based on the duals of minimal codes constructed in Theorem \ref{ThN3}, via the Gray image $\Phi(\mathscr{C}_{\mathcal{S}_3}).$ We will also obtain the number of dictatorial participants in this scheme. We begin by outlining Massey's secret sharing scheme based on linear codes over $\mathbb{F}_q$ \cite{Massey1993}.

Let $\mathcal{C}$ be a linear code of length $n$ and dimension $k$ over $\mathbb{F}_q$ with a generator matrix  
$\mathcal{H} = \big[h_0 ~~ h_1 ~~ \cdots ~~ h_{n-1} \big],$
where $h_i^T \in \mathbb{F}_q^k \setminus\{\mathbf{0}\}$ for $0 \leq i \leq n-1.$ In Massey's secret sharing scheme based on the  code $\mathcal{C},$ the system consists of one dealer (the trusted party) and $n - 1$ participants, denoted by $P_1, P_2, \ldots, P_{n-1}.$  Here, the secret  is an element $s \in \mathbb{F}_q.$  To create shares, the dealer selects a random vector $y \in \mathbb{F}_q^k$ satisfying $s = y \cdot h_0^T$ and computes the codeword $v  = y \mathcal{H}= (s, v_1, \ldots, v_{n-1})\in \mathcal{C}.$ The dealer then distributes the share $v_i$ to the participant $P_i$ for $1 \leq i \leq n - 1.$

A subset of participants $\{P_{i_1}, P_{i_2}, \ldots, P_{i_t}\}$ can recover the secret $s$ by combining their shares  if and only if  the column $h_0$  is an $\mathbb{F}_q$-linear combination of the columns $ h_{i_1}, h_{i_2}, \ldots, h_{i_t},$ where $1 \leq i_1 < i_2 < \cdots < i_t \leq n-1$ \cite[Lem. 2]{Renvall1996}. Such a subset is referred to as an access set. Further, an access set is called minimal if the participants in the set can collectively reconstruct the secret, but no proper subset of them can do so. The collection of all such minimal access sets constitutes the minimal access structure of the scheme. Furthermore, a participant who belongs to every minimal access set is referred to as a dictatorial participant.

Ding and Yuan \cite[Prop. 2]{Ding2003} studied minimal access structures of Massey’s secret sharing schemes based on the linear codes whose duals are minimal. For the sake of completeness, we state Proposition 2 of Ding and Yuan \cite{Ding2003} below.
\begin{lemma}\label{MLem}\cite[Prop. 2]{Ding2003}
Let $\mathcal{C}$ be a minimal code of length $n$ and dimension $k$ over $\mathbb{F}_q$ with a generator matrix
$\mathcal{H} = \big[ h_0 ~~ h_1 ~~ \cdots ~~ h_{n-1} \big],$
where $h_i^T \in \mathbb{F}_q^k \setminus \{\mathbf{0}\}$ for $0 \leq i \leq n - 1.$
In Massey's secret sharing scheme based on the dual  $\mathcal{C}^\perp$ with participants $P_1, P_2, \ldots, P_{n-1},$ there are precisely $q^{k - 1}$ minimal access sets.
Moreover, for $1 \leq i \leq n - 1,$ if the vector $h_i$ is a scalar multiple of $h_0,$ then the corresponding participant $P_i$ belongs to every minimal access set and is therefore dictatorial. Otherwise, the participant $P_i$ belongs to exactly
$(q - 1)q^{k - 2}$ out of the $q^{k - 1}$ minimal access sets.
\end{lemma}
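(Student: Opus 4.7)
The plan is to reduce the lemma to a purely linear-algebraic count on the code $\mathcal{C},$ using the classical characterization of access structures in Massey's scheme. First I would make explicit the following correspondence: in the scheme built from $\mathcal{C}^\perp,$ a subset $\{P_{i_1},\ldots,P_{i_t}\}$ is an access set if and only if there exists a codeword $c = (c_0,c_1,\ldots,c_{n-1}) \in (\mathcal{C}^\perp)^\perp = \mathcal{C}$ with $c_0 \neq 0$ and $\{i \geq 1 : c_i \neq 0\} \subseteq \{i_1,\ldots,i_t\};$ minimality of the access set then corresponds to minimality of the underlying codeword $c$ in $\mathcal{C},$ while two codewords determine the same access set exactly when they differ by a nonzero scalar. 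The essence here is that the reconstruction criterion recalled at the start of Section~\ref{Sec7} (a dependence involving the $0$-th column of a generator matrix of $\mathcal{C}^\perp$) translates through orthogonality into a nonzero codeword of $\mathcal{C}$ whose $0$-th coordinate is nonzero.

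Granted this dictionary, and since $\mathcal{C}$ is minimal by hypothesis (so every nonzero codeword of $\mathcal{C}$ is automatically minimal), the count of minimal access sets reduces to the count of one-dimensional $\mathbb{F}_q$-subspaces $\langle c \rangle$ of $\mathcal{C}$ with $c_0 \neq 0.$ The coordinate map $\varphi_0 : \mathcal{C} \to \mathbb{F}_q,\ c \mapsto c_0,$ corresponds to the linear functional $y \mapsto y \cdot h_0^T$ on $\mathbb{F}_q^k,$ which is surjective because $h_0 \neq \mathbf{0};$ hence exactly $(q-1)q^{k-1}$ codewords of $\mathcal{C}$ have $c_0 \neq 0,$ and partitioning them into scalar-equivalence classes of size $q-1$ yields precisely $q^{k-1}$ minimal access sets, as claimed.

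For the second claim I would analyze the joint evaluation $\phi_i : \mathcal{C} \to \mathbb{F}_q^2,\ c \mapsto (c_0,c_i),$ whose image is the $\mathbb{F}_q$-column span of $[h_0\ h_i].$ If $h_i = \lambda h_0$ with $\lambda \in \mathbb{F}_q^\ast,$ then the image is one-dimensional and $c_i = \lambda c_0$ for every $c \in \mathcal{C};$ in particular, $c_0 \neq 0$ forces $c_i \neq 0,$ so $P_i$ lies in all $q^{k-1}$ minimal access sets and is therefore dictatorial. Otherwise, $h_0$ and $h_i$ are linearly independent, $\phi_i$ is surjective with kernel of dimension $k-2,$ and each element of $\mathbb{F}_q^2$ has exactly $q^{k-2}$ preimages in $\mathcal{C};$ consequently, the number of codewords with $c_0 \neq 0$ and $c_i \neq 0$ equals $(q-1)^2 q^{k-2},$ which after quotienting by $\mathbb{F}_q^\ast$ produces $(q-1)q^{k-2}$ minimal access sets containing $P_i.$

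The main obstacle in this plan is not the counting, which boils down to standard dimension arguments once the correspondence is in place, but rather justifying the correspondence itself rigorously: one must carefully verify that the reconstruction criterion translates, via the duality $\mathcal{C} = (\mathcal{C}^\perp)^\perp,$ into the existence of a nonzero codeword of $\mathcal{C}$ with nonzero $0$-th coordinate, and that under this translation minimal access sets correspond bijectively (modulo $\mathbb{F}_q^\ast$) with the one-dimensional subspaces of $\mathcal{C}$ spanned by minimal codewords having nonzero $0$-th coordinate. This is the classical Massey-type correspondence used in \cite{Ding2003}, and once it is established, the rest of the argument reduces to the routine linear-algebra computations above.
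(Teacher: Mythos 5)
The paper does not prove this lemma at all: it is quoted verbatim as Proposition 2 of Ding and Yuan \cite{Ding2003}, so there is no internal proof to compare against. Your argument is a correct self-contained reconstruction of the standard one from that reference: the Massey/Renvall--Ding criterion for the scheme built on $\mathcal{C}^\perp$ translates, via $(\mathcal{C}^\perp)^\perp=\mathcal{C},$ into codewords $c\in\mathcal{C}$ with $c_0\neq 0,$ and minimality of $\mathcal{C}$ guarantees both that each set $\supp(c)\setminus\{0\}$ is a minimal access set and that two such codewords give the same set only when they are proportional (support containment forces a scalar multiple), so the scalar classes you count are in bijection with the minimal access sets; Ding--Yuan phrase the same count by normalizing to $c_0=1$ instead of quotienting by $\mathbb{F}_q^\ast,$ which is equivalent. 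The fibre counts for $\varphi_0$ and $\phi_i$ (surjectivity since $h_0\neq\mathbf{0},$ and the rank-$1$ versus rank-$2$ dichotomy for $[h_0\ h_i]$) are exactly right and give $q^{k-1},$ the dictatorial case, and $(q-1)q^{k-2},$ respectively.
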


 In the following theorem, we examine the minimal access structure of the Massey’s secret sharing scheme based on the dual  $\Phi(\mathscr{C}_{\mathcal{S}_3})^\perp$ over $\mathbb{F}_q$ 
 and determine the number of dictatorial participants.
\begin{thm}\label{TA3} 
Let  $\mathscr{G}_{\mathcal{S}_3}= \big[ x_0 ~~ x_1 ~~ \cdots ~~ x_{2|\mathcal{S}_3|-1} \big]$ be a spanning matrix of the code $\Phi(\mathscr{C}_{\mathcal{S}_3})$ whose columns form  the multiset $\mathcal{N}_3$ (as defined by \eqref{D3}).
In Massey’s secret sharing scheme based on the dual  $\Phi(\mathscr{C}_{\mathcal{S}_3})^\perp$ over $\mathbb{F}_q,$ there are precisely
$2q^{|\mathpzc{A}| + |\mathpzc{C}|}(q^m - q^{|\mathpzc{B}|}) - 1$ participants 
and 
$q^{m + |\mathpzc{A}| + |\mathpzc{C}| - 1}$ minimal access sets.  In addition, the following hold.
\begin{itemize}
\item[(a)] Suppose that $x_0 = (w_2,w_3,w_1)$
for some $w_1 \in \Delta_{\mathpzc{A}},$\ $w_2 \in \Delta_{\mathpzc{B}}^c$ and $w_3 \in \Delta_{\mathpzc{C}}.$ If $w_2-w_1\in \Delta_{\mathpzc{B}}^c,$ then there are exactly 
$2q-3$ dictatorial participants; otherwise, there are exactly 
$q-2$ dictatorial participants. Each of the remaining participants belongs to exactly
$(q - 1)q^{m + |\mathpzc{A}| + |\mathpzc{C}| - 2}$
of the minimal access sets.
\item[(b)] Suppose that $x_0 = (w_2+w_1,w_3,w_1)$
for some $w_1 \in \Delta_{\mathpzc{A}}, $ $w_2 \in \Delta_{\mathpzc{B}}^c$ and $w_3 \in \Delta_{\mathpzc{C}}.$ If $w_1+w_2\in \Delta_{\mathpzc{B}}^c,$ then there are exactly 
$2q-3$ dictatorial participants; otherwise, there are exactly 
$q-2$ dictatorial participants. Each of the remaining participants belongs to exactly
$(q - 1)q^{m + |\mathpzc{A}| + |\mathpzc{C}| - 2}$
of the minimal access sets. 
\end{itemize}
Moreover, any group of $2(q - 1)q^{|\mathpzc{A}| + |\mathpzc{C}| - 1}(q^m - q^{|\mathpzc{B}|}) - 2$ or fewer participants gathers no information about the secret.
\end{thm}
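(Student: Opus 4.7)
The plan is to reduce the whole theorem to a direct application of Lemma \ref{MLem} with $\mathcal{C} = \Phi(\mathscr{C}_{\mathcal{S}_3})$ and $\mathcal{H} = \mathscr{G}_{\mathcal{S}_3}$. First I would invoke Theorem \ref{ThN3} to note that $\mathcal{C}$ is a minimal linear code over $\mathbb{F}_q$ of length $n = 2|\mathcal{S}_3| = 2q^{|\mathpzc{A}|+|\mathpzc{C}|}(q^m - q^{|\mathpzc{B}|})$, dimension $k = m + |\mathpzc{A}| + |\mathpzc{C}|$, and Hamming distance $d(\mathcal{C}) = 2(q-1)q^{|\mathpzc{A}|+|\mathpzc{C}|-1}(q^m - q^{|\mathpzc{B}|})$. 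Substituting these values into Lemma \ref{MLem} would immediately deliver the stated number of participants $n - 1$, the number of minimal access sets $q^{k-1}$, and the fact that every non-dictatorial participant lies in exactly $(q-1)q^{k-2}$ of them, thereby disposing of the ``basic counting'' portion of the statement.

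Next I would count the dictatorial participants. By Lemma \ref{MLem}, $P_i$ is dictatorial precisely when $x_i = \alpha x_0$ for some $\alpha \in \mathbb{F}_q^\ast$, so I would enumerate the tuples $(v_1, v_2, v_3, \omega)$ with $v_1 \in \Delta_\mathpzc{A}$, $v_2 \in \Delta_\mathpzc{B}^c$, $v_3 \in \Delta_\mathpzc{C}$ and $\omega \in \{\mathbf{0}, v_1\}$ for which $(v_2 + \omega, v_3, v_1) = \alpha x_0$, and then subtract $1$ for the dealer's own column. In case (a), writing $x_0 = (w_2, w_3, w_1)$ forces $v_1 = \alpha w_1$ and $v_3 = \alpha w_3$; the choice $\omega = \mathbf{0}$ then forces $v_2 = \alpha w_2$, which always lies in $\Delta_\mathpzc{B}^c$ by the stability of $\Delta_\mathpzc{B}^c$ under non-zero scalar multiplication, contributing $q-1$ tuples; the choice $\omega = v_1 = \alpha w_1$ forces $v_2 = \alpha(w_2 - w_1)$, which lies in $\Delta_\mathpzc{B}^c$ iff $w_2 - w_1 \in \Delta_\mathpzc{B}^c$, contributing $q-1$ further tuples in that situation. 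Subtracting one for $x_0$ itself yields the stated counts $2q - 3$ or $q - 2$. Case (b), with $x_0 = (w_1 + w_2, w_3, w_1)$, would be handled by the symmetric computation: $\omega = v_1$ always contributes $q-1$ tuples (via $v_2 = \alpha w_2 \in \Delta_\mathpzc{B}^c$), while $\omega = \mathbf{0}$ contributes an additional $q-1$ iff $w_1 + w_2 \in \Delta_\mathpzc{B}^c$.

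For the final privacy assertion, I would rely on the standard property of Massey's scheme: a coalition $T$ can reconstruct the secret only if $x_0$ is an $\mathbb{F}_q$-linear combination of $\{x_i : i \in T\}$, and such a dependence is equivalent to the existence of a non-zero codeword of $\mathcal{C}$ whose support is contained in $\{0\} \cup T$ and whose $0$-th coordinate is non-zero. Any such codeword has weight at least $d(\mathcal{C})$, so $|T| \geq d(\mathcal{C}) - 1$; hence any coalition of size at most $d(\mathcal{C}) - 2 = 2(q-1)q^{|\mathpzc{A}|+|\mathpzc{C}|-1}(q^m - q^{|\mathpzc{B}|}) - 2$ gathers no information, as claimed. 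The hardest part will be the dictatorial enumeration: since the same vector may arise from two distinct tuples in the multiset $\mathcal{N}_3$, the count has to be carried out at the level of tuples with multiplicity rather than distinct vectors, and one must carefully verify when each of the two possible values of $\omega$ produces a valid tuple via membership in $\Delta_\mathpzc{B}^c$.
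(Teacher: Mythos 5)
Your treatment of the participant/access-set counts and of the dictatorial enumeration is essentially the paper's own argument: the paper likewise feeds the length, dimension and minimality from Theorem \ref{ThN3} into Lemma \ref{MLem}, and it counts the columns of $\mathscr{G}_{\mathcal{S}_3}$ that are scalar multiples of $x_0$ by exactly the multiplicity analysis you propose (each $\alpha x_0$, $\alpha\in\mathbb{F}_q^\ast$, arises once from the $\omega=\mathbf{0}$ tuple and a second time from the $\omega=w_1$ tuple precisely when $w_2-w_1\in\Delta_{\mathpzc{B}}^c$, resp.\ $w_1+w_2\in\Delta_{\mathpzc{B}}^c$), giving $2(q-1)-1=2q-3$ or $(q-1)-1=q-2$ dictatorial participants. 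One point you gloss over and the paper handles explicitly: Lemma \ref{MLem} is stated for a generator matrix with $k=m+|\mathpzc{A}|+|\mathpzc{C}|$ rows and nonzero columns, whereas $\mathscr{G}_{\mathcal{S}_3}$ has $3m$ rows, $2m-|\mathpzc{A}|-|\mathpzc{C}|$ of them zero; the paper deletes these zero rows to obtain a genuine generator matrix and notes that the linear dependence relations among columns are unchanged. You should insert this reduction (and note that every column of $\mathscr{G}_{\mathcal{S}_3}$ is nonzero) before claiming a ``direct application'' of the lemma with $\mathcal{H}=\mathscr{G}_{\mathcal{S}_3}$.

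The genuine gap is in your privacy paragraph. The scheme is based on $\Phi(\mathscr{C}_{\mathcal{S}_3})^{\perp}$, so a coalition $T$ can recover the secret if and only if the $0$-th column of a generator matrix of the \emph{dual} lies in the span of the columns indexed by $T$, equivalently if and only if there is a codeword $c\in\Phi(\mathscr{C}_{\mathcal{S}_3})$ with $c_0\neq 0$ and $\supp(c)\subseteq\{0\}\cup T$. Your criterion ``$x_0$ is an $\mathbb{F}_q$-linear combination of $\{x_i:i\in T\}$'' is the access condition for the scheme based on $\Phi(\mathscr{C}_{\mathcal{S}_3})$ itself, and the equivalence you assert is false: a linear dependence among columns of a spanning matrix of $\Phi(\mathscr{C}_{\mathcal{S}_3})$ yields a codeword of its dual, not of $\Phi(\mathscr{C}_{\mathcal{S}_3})$; in terms of the $x_i$, the correct reconstruction condition is $x_0\notin\mathrm{span}\{x_j:j\notin\{0\}\cup T\}$, which neither implies nor is implied by your condition. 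The final bound $|T|\geq d-1$ with $d=2(q-1)q^{|\mathpzc{A}|+|\mathpzc{C}|-1}(q^m-q^{|\mathpzc{B}|})$ is correct, but it has to be drawn from the codeword-of-$\Phi(\mathscr{C}_{\mathcal{S}_3})$ criterion, and the upgrade from ``cannot reconstruct'' to ``gathers no information'' needs the all-or-nothing property of this linear scheme; the paper disposes of the whole step by citing Theorem 9 of Renvall and Ding \cite{Renvall1996} together with the Hamming distance from Theorem \ref{ThN3}.
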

\begin{proof}  To prove the result, we see, by Theorem \ref{ThN3}, that the Gray image $\Phi(\mathscr{C}_{\mathcal{S}_3})$ is a minimal code of length $2(q^m - q^{|\mathpzc{B}|})q^{|\mathpzc{A}| + |\mathpzc{C}|}$ and dimension $m + |\mathpzc{A}| + |\mathpzc{C}|$ over $\mathbb{F}_q.$ In Massey's secret sharing scheme based on the dual  $\Phi(\mathscr{C}_{\mathcal{S}_3})^\perp,$ we see, by Lemma \ref{MLem},  that there are exactly $2(q^m - q^{|\mathpzc{B}|})q^{|\mathpzc{A}| + |\mathpzc{C}|} - 1$ participants and  $q^{m + |\mathpzc{A}| + |\mathpzc{C}|- 1}$ minimal access sets.  Further, using \eqref{D3}, it follows that there are $2m - |\mathpzc{A}| - |\mathpzc{C}|$ rows in the matrix $\mathscr{G}_{\mathcal{N}_3}$ that are zero. Let $\mathscr{H}\in M_{(m + |\mathpzc{A}| + |\mathpzc{C}|) \times 2|\mathcal{S}_3|}(\mathbb{F}_q)$ be the submatrix of $\mathscr{G}_{\mathcal{N}_3}$ obtained by deleting these $2m - |\mathpzc{A}| - |\mathpzc{C}|$ zero rows. Since the dimension of the code $\Phi(\mathscr{C}_{\mathcal{S}_3})$ is $m + |\mathpzc{A}| + |\mathpzc{C}|,$ we note that the matrix $\mathscr{H}$ is a generator matrix of the code $\Phi(\mathscr{C}_{\mathcal{S}_3}).$ Moreover,  the linear dependence relations among the columns of  $\mathscr{G}_{\mathcal{N}_3}$ and $\mathscr{H}$ are identical.

Next, to count the dictatorial participants, we observe, by \eqref{D3}, that the vector $x_0$ is of the form
$(w_2+\omega,w_3,w_1)$
for some $w_1 \in \Delta_{\mathpzc{A}},\ w_2 \in \Delta_{\mathpzc{B}}^c,\ w_3 \in \Delta_{\mathpzc{C}} \text{ and } \omega \in \{\mathbf{0}, w_1\}.$ We will now distinguish the following two cases: $\mathbf{(i)}$ $\omega = \mathbf{0},$ and $\mathbf{(ii)}$ $\omega = w_1.$  
\begin{description}
\item[(i)] Suppose that $\omega = \mathbf{0}.$ Here, we see, by \eqref{D3}, that in the multiset $\mathcal{N}_3,$ the vector $x_0$ has multiplicity two  if $ w_2-w_1 \in \Delta_\mathpzc{B}^c,$ while it has multiplicity one  if $ w_2-w_1 \not\in \Delta_\mathpzc{B}^c.$ Furthermore, by \eqref{D3} again, we see that for each $\alpha \in \mathbb{F}_q^\ast,$ the vector $\alpha x_0$ also belongs to $\mathcal{N}_3,$ and that in the multiset $\mathcal{N}_3$, each such scalar multiple appears with multiplicity two  if $w_2 - w_1 \in \Delta_\mathpzc{B}^c$ and with multiplicity one, otherwise. Therefore, there are exactly $2(q-1)$ columns in $\mathscr{G}_{\mathcal{N}_3}$ that are scalar multiples of $x_0$ (including $x_0$ itself) if $w_2 - w_1 \in \Delta_\mathpzc{B}^c;$ otherwise, there are $q-1$ columns in $\mathscr{G}_{\mathcal{N}_3}$ that are scalar multiples of $x_0$ (including $x_0$ itself). Combining this with the observation that the linear dependence relations among the columns of both $\mathscr{G}_{\mathcal{N}_3}$ and $\mathscr{H}$ are the same, and by applying Lemma \ref{MLem}, we conclude that if $w_2-w_1\in \Delta_{\mathpzc{B}}^c,$ then there are exactly 
$2q-3$ dictatorial participants; otherwise, there are exactly 
$q-2$ dictatorial participants. Moreover, each of the remaining participants belongs to exactly
$(q - 1)q^{m + |\mathpzc{A}| + |\mathpzc{C}| - 2}$
of the minimal access sets.
\item[(ii)] When $\omega = w_1,$ working as in case (i), the desired result follows.
% Here, we see, by \eqref{D3}, that in the multiset $\mathcal{N}_3,$ the vector $x_0$ has multiplicity two  if $w_1 + w_2 \in \Delta_\mathpzc{B}^c,$ while it has multiplicity one if $ w_1 + w_2 \not\in \Delta_\mathpzc{B}^c.$ Furthermore, by \eqref{D3} again, one can see that for each $\alpha \in \mathbb{F}_q^\ast,$ the vector $\alpha x_0$ also belongs to $\mathcal{N}_3,$ and that  in the multiset $\mathcal{N}_3,$ each such scalar multiple appears with multiplicity two if $w_1 + w_2 \in \Delta_\mathpzc{B}^c$ and with multiplicity one, otherwise. Therefore, there are exactly $2(q-1)$ columns in $\mathscr{G}_{\mathcal{N}_3}$ that are scalar multiples of $x_0$ (including $x_0$ itself) if $w_1 + w_2 \in \Delta_\mathpzc{B}^c;$ otherwise, there are $q-1$ columns in $\mathscr{G}_{\mathcal{N}_3}$ that are scalar multiples of $x_0$ (including $x_0$ itself). Combining this with the observation that the linear dependence relations among columns of both $\mathscr{G}_{\mathcal{N}_3}$ and $\mathscr{H}$ are the same, and by applying Lemma \ref{MLem}, we conclude that if $w_1+w_2\in \Delta_{\mathpzc{B}}^c,$ then there are exactly 
% $2q-3$ dictatorial participants; otherwise, there are exactly 
% $q-2$ dictatorial participants. Moreover, each of the remaining participants belongs to exactly
% $(q - 1)q^{m + |\mathpzc{A}| + |\mathpzc{C}| - 2}$
% of the minimal access sets.
\end{description}

Furthermore, by Theorem \ref{ThN3}, we see that the code $\Phi(\mathscr{C}_{\mathcal{S}_3})$ has Hamming distance $2(q - 1)(q^m - q^{|\mathpzc{B}|})q^{|\mathpzc{A}| + |\mathpzc{C}| - 1}.$ From this and by Theorem 9 of Renvall and Ding \cite{Renvall1996}, it follows that any group of at most $2(q - 1)(q^m - q^{|\mathpzc{B}|})q^{|\mathpzc{A}| + |\mathpzc{C}| - 1} - 2$ participants gathers no information about the secret.
\end{proof}
\begin{remark}
In Massey's secret sharing schemes based on the dual $\Phi(\mathscr{C}_{\mathcal{S}_3})^\perp$ of the minimal code over $\mathbb{F}_q,$ we see, by Theorem \ref{TA3}, that there always exists at least one dictatorial participant. As remarked by Yuan and Ding \cite[p.\,212]{Yuan2005}, such schemes are particularly relevant in scenarios where it is necessary for certain participants to be involved in every decision-making process.
\end{remark}

\subsection{Construction of locally repairable codes with localities either $2$ or $3$}\label{Sec8}
In a distributed storage system, information is stored over a network of storage nodes, where failures are inevitable. To address this challenge, Gopalan \etal \cite{Gopalan2012} introduced  locally repairable codes, in which each coordinate of a codeword is stored on a separate node and can be recovered from a small subset of other coordinates using the code’s locality structure.

In this section, we will study the locality properties of the projective  codes $\mathcal{C}_{\overline{\mathcal{N}}_2}$ and $\mathcal{C}_{\overline{\mathcal{N}}_4}$ over $\mathbb{F}_q,$ constructed in Theorems \ref{Th7} and \ref{Th8}.  To do this, we first recall the definition of linear locally repairable codes  \cite[Def. 4.1]{Luo2022}. A linear code over $\mathbb{F}_q$ with a spanning matrix $G$ is said to have locality $r$ if $r$ is the least positive integer for which every column of $G$ can be expressed as an $\mathbb{F}_q$-linear combination of at most $r$ other columns of $G.$ A linear $[\mathrm{n},\mathrm{k},\mathrm{d}]$-code over $\mathbb{F}_q$ with locality $r$ is referred to as a $q$-ary $[\mathrm{n},\mathrm{k},\mathrm{d}]$ locally repairable code (LRC) with locality $r.$ We next state the well-known Cadambe-Mazumdar bound for linear LRCs.

% Let $\mathcal{C}$ be an $[\mathrm{n},\mathrm{k},\mathrm{d}]$-code over $\mathbb{F}_q.$ The code $\mathcal{C}$ is said to have locality $r$ if $r$ is the least positive integer for which every coordinate in a codeword can be recovered by accessing at most $r$ other coordinates, not including the coordinate being recovered. Equivalently, a code $\mathcal{C}$ with a spanning matrix $G$ has locality $r$ if $r$ is the least positive integer for which every column of $G$ can be expressed as an $\mathbb{F}_q$-linear combination of at most $r$ other columns of $G$ \cite[Def. 4.1]{Luo2022}. The linear code $\mathcal{C}$ with locality $r$ is refereed to as a $q$-ary $[\mathrm{n},\mathrm{k},\mathrm{d}]$ locally repairable code (LRC) with locality $r.$ Next, we state the well-known Cadambe-Mazumdar bound for LRCs.

\begin{lemma}\cite[Th. 1]{Cadambe2015}\label{Lemma5.5}
For a $q$-ary $[\mathrm{n},\mathrm{k},\mathrm{d}]$ LRC with  locality $r,$  we have
\begin{equation}\label{Eq5.10}
\mathrm{k} \leq \min\limits_{1 \leq i \leq \ceil{\frac{\mathrm{k}}{r}} -1} \Big\{ri +  \mathrm{k}^q_{opt}(\mathrm{n} - i(r+1), \mathrm{d}) \Big\},
\end{equation}
where $\mathrm{k}^q_{opt}(\mathrm{n}, \mathrm{d})$ denotes the largest possible dimension that can be achieved by a linear code of length $\mathrm{n}$ and Hamming distance $\mathrm{d}$ over $\mathbb{F}_q.$ 
\end{lemma}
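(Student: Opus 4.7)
The plan is, for each admissible $i$, to exhibit a linear code over $\mathbb{F}_q$ of length $\mathrm{n}-i(r+1)$, Hamming distance at least $\mathrm{d}$, and dimension at least $\mathrm{k}-ri$; this will give $\mathrm{k}-ri \leq \mathrm{k}^q_{opt}(\mathrm{n}-i(r+1),\mathrm{d})$, and minimizing over $i$ then yields \eqref{Eq5.10}. Fix an $[\mathrm{n},\mathrm{k},\mathrm{d}]$ LRC $\mathcal{C}$ over $\mathbb{F}_q$ with locality $r$ and an $i$ in the admissible range. I would construct a shortening set $S\subseteq [\mathrm{n}]$ greedily: starting from $T_0=\emptyset$, at step $j=1,\ldots,i$ pick any coordinate $p_j\in [\mathrm{n}]\setminus T_{j-1}$ (possible by the routine estimate $|T_{j-1}|\leq (j-1)(r+1)$ combined with $i\leq \lceil \mathrm{k}/r\rceil-1$ and $\mathrm{n}\geq \mathrm{k}$), select a local repair set $R_{p_j}\subseteq [\mathrm{n}]\setminus\{p_j\}$ of cardinality at most $r$ afforded by the locality property, and put $T_j=T_{j-1}\cup\{p_j\}\cup R_{p_j}$. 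Then $|T_i|\leq i(r+1)$, and I would pad $T_i$ arbitrarily to a set $S\subseteq [\mathrm{n}]$ of cardinality exactly $i(r+1)$.

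Next, I would work with the shortened code $\mathcal{C}^S:=\{c\in\mathcal{C}:c|_S=\mathbf{0}\}$, regarded via restriction to $[\mathrm{n}]\setminus S$ as a linear code of length $\mathrm{n}-i(r+1)$ over $\mathbb{F}_q$. Since shortening does not reduce the minimum distance, $d_H(\mathcal{C}^S)\geq \mathrm{d}$. To lower-bound its dimension, I would consider the evaluation map $\mathrm{ev}_S:\mathcal{C}\to \mathbb{F}_q^S$, whose kernel is precisely $\mathcal{C}^S$. For each $j$, the local repair relation $\varphi_j:c\mapsto c_{p_j}-L_j\bigl(c|_{R_{p_j}}\bigr)$ vanishes identically on $\mathcal{C}$ and, since $\{p_j\}\cup R_{p_j}\subseteq S$, descends to a linear functional on $\mathbb{F}_q^S$ that annihilates $\mathrm{ev}_S(\mathcal{C})$. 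Provided $\varphi_1,\ldots,\varphi_i$ are linearly independent on $\mathbb{F}_q^S$, this yields $\dim \mathrm{ev}_S(\mathcal{C})\leq |S|-i=ri$, and rank-nullity then gives $\dim\mathcal{C}^S=\mathrm{k}-\dim\mathrm{ev}_S(\mathcal{C})\geq \mathrm{k}-ri$, as required.

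The crux of the argument is the linear independence of $\varphi_1,\ldots,\varphi_i$, and this is where I expect the main obstacle to lie. The greedy rule $p_j\notin T_{j-1}$ is essential: it ensures $p_k\notin R_{p_j}$ whenever $j<k$, so the coefficient of $x_{p_k}$ in $\varphi_j$ vanishes for $j<k$, while the coefficient of $x_{p_j}$ in $\varphi_j$ equals $1$ (because $p_j\notin R_{p_j}$ by definition of a repair set). Consequently the $i\times i$ coefficient matrix of $\varphi_1,\ldots,\varphi_i$ with respect to the pivot variables $x_{p_1},\ldots,x_{p_i}$ is lower triangular with unit diagonal, hence non-singular, which secures the independence. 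A secondary, mostly bookkeeping, issue is verifying that the greedy construction runs for the full $i$ steps within the admissible range, which the size estimates above settle. Once both pieces are in place, minimizing the resulting inequality $\mathrm{k}\leq ri+\mathrm{k}^q_{opt}(\mathrm{n}-i(r+1),\mathrm{d})$ over $i$ completes the proof.
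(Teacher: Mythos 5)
The paper offers no proof of this lemma (it is quoted from Cadambe--Mazumdar, whose original argument is entropy-based and covers nonlinear codes), so the only question is whether your linear-algebraic shortening proof stands on its own. Its core does: the greedy rule $p_j\notin T_{j-1}$ gives the lower-triangular, unit-diagonal coefficient matrix, so $\varphi_1,\ldots,\varphi_i$ are independent functionals annihilating $\mathrm{ev}_S(\mathcal{C})$, whence $\dim \mathrm{ev}_S(\mathcal{C})\le |S|-i=ri$, $\dim\mathcal{C}^S\ge \mathrm{k}-ri\ge 1$, and $d_H(\mathcal{C}^S)\ge \mathrm{d}$; this is the standard linear specialization of the bound and is fine.

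The genuine gap is the step you label as bookkeeping. The facts $|T_{j-1}|\le (j-1)(r+1)$, $i\le\lceil \mathrm{k}/r\rceil-1$ and $\mathrm{n}\ge \mathrm{k}$ do \emph{not} imply $|T_{j-1}|<\mathrm{n}$ (needed to pick $p_j$), nor $i(r+1)\le \mathrm{n}$ (needed to pad $S$ to size exactly $i(r+1)$ and to make the length $\mathrm{n}-i(r+1)$ positive): for $r=2$, $\mathrm{k}=100$ one has $(\lceil \mathrm{k}/r\rceil-2)(r+1)=144>\mathrm{k}$, so $\mathrm{n}\ge \mathrm{k}$ yields no contradiction with $T_{j-1}=[\mathrm{n}]$. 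Of course no locality-$2$ code of such rate exists, but that is precisely the rate bound $\mathrm{k}\le \mathrm{n}r/(r+1)$, a consequence of the very statement being proven, so it cannot be invoked. Both points are repairable with your own tools. First, the restriction of a codeword to $T_{j-1}$ is determined by its restriction to $\bigcup_{l<j}R_{p_l}$, so $\dim\mathrm{proj}_{T_{j-1}}(\mathcal{C})\le r(j-1)\le r\big(\lceil \mathrm{k}/r\rceil-2\big)\le \mathrm{k}-r-1<\mathrm{k}$; since the projection onto all of $[\mathrm{n}]$ has dimension $\mathrm{k}$, this forces $T_{j-1}\neq[\mathrm{n}]$, so the greedy step never stalls for $j\le i$. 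Second, run the same greedy process until the sets $\{p_j\}\cup R_{p_j}$ cover $[\mathrm{n}]$: this produces at least $\lceil \mathrm{n}/(r+1)\rceil$ independent functionals vanishing on $\mathcal{C}$, hence $\mathrm{k}\le \mathrm{n}-\lceil \mathrm{n}/(r+1)\rceil$, i.e. $\mathrm{k}(r+1)\le \mathrm{n}r$, which gives $i(r+1)\le \mathrm{n}-1$ for every $i\le\lceil \mathrm{k}/r\rceil-1$; this legitimizes the padding, and the shortened code itself then witnesses $\mathrm{k}-ri\le \mathrm{k}^q_{opt}(\mathrm{n}-i(r+1),\mathrm{d}).$ With these two additions your argument is complete.
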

The code $\mathcal{C}$ is said to be alphabet-optimal if it attains the bound \eqref{Eq5.10}.  To analyze the locality properties of the projective  codes $\mathcal{C}_{\overline{\mathcal{N}}_2}$ and $\mathcal{C}_{\overline{\mathcal{N}}_4}$ over $\mathbb{F}_q,$  we need the following lemma.
% that provides a sufficient condition to express a vector $v \in \Delta_\mathpzc{S}^c$ as a sum of two linearly independent vectors $e,f \in \Delta_\mathpzc{S}^c,$ where $\Delta_{\mathpzc{S}}$ is the simplicial complex of $\mathbb{F}_{q}^\mathrm{n}$ with support $\mathpzc{S}.$

\begin{lemma}\label{Lem5.6}
Let $\mathpzc{P} $ be a  non-empty proper subset of $ [\mathrm{n}],$ and let  $\Delta_{\mathpzc{P}}$ be the simplicial complex of $\mathbb{F}_q^{\mathrm{n}}$ with support $\mathpzc{P}.$   If either $q \geq 3$ or  $q = 2$ with $|\mathpzc{P}^c| \geq 2,$ then for each $v \in \Delta_{\mathpzc{P}}^c,$ there exist linearly
independent vectors $e, f \in \Delta_{\mathpzc{P}}^c$ such that $v = e + f.$
% \item[(b)] If $\mathrm{n} \geq 3,$ $q=2$ and $|\mathpzc{P}^c| =1,$  then for each $v \in \Delta_{\mathpzc{P}}^c,$ there exist linearly
% independent vectors $e, ~f, ~g \in \Delta_{\mathpzc{P}}^c,$ such that $v = e + f+g.$\end{itemize}
% When $\mathrm{n} \geq 3$ and $q = 2$ with $|\mathpzc{S}^c| = 1,$ then for each $v \in \Delta_{\mathpzc{S}}^c,$ there exist distinct vectors $e, f ,g \in \Delta_{\mathpzc{S}}^c \setminus \{v\}$  satisfying $v = e + f +g.$
\end{lemma}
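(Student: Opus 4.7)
The plan is to construct the decomposition $v = e + f$ explicitly in two cases, handled separately, and each construction will be checked directly against the two requirements: (i) both $e, f$ lie in $\Delta_{\mathpzc{P}}^c$, i.e.\ their restrictions to $\mathpzc{P}^c$ are non-zero; (ii) $\{e, f\}$ is linearly independent over $\mathbb{F}_q,$ equivalently (since $v = e+f$) that $\{e, v\}$ is linearly independent. Throughout, I use that $v \in \Delta_{\mathpzc{P}}^c$ forces $(v)_{\mathpzc{P}^c} \neq \mathbf{0}$ and, in particular, $v \notin \Delta_{\mathpzc{P}}.$

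For the case $q \geq 3,$ first choose a scalar $\alpha \in \mathbb{F}_q \setminus \{0,1\},$ which exists because $q \geq 3.$ Since $\mathpzc{P}$ is non-empty, $\Delta_{\mathpzc{P}}$ has dimension $|\mathpzc{P}| \geq 1,$ so I may fix a non-zero vector $\tilde{e} \in \Delta_{\mathpzc{P}}.$ Then set
\[
e = \alpha v + \tilde{e}, \qquad f = (1-\alpha)v - \tilde{e}.
\]
Clearly $e + f = v.$ For (i), note that $\tilde{e} \in \Delta_{\mathpzc{P}}$ gives $(\tilde{e})_{\mathpzc{P}^c} = \mathbf{0},$ so $(e)_{\mathpzc{P}^c} = \alpha (v)_{\mathpzc{P}^c}$ and $(f)_{\mathpzc{P}^c} = (1-\alpha)(v)_{\mathpzc{P}^c},$ both non-zero since $\alpha, 1-\alpha \in \mathbb{F}_q^\ast$ and $(v)_{\mathpzc{P}^c} \neq \mathbf{0}.$ For (ii), suppose $e = \lambda v$ for some $\lambda \in \mathbb{F}_q.$ Then $\tilde{e} = (\lambda - \alpha) v,$ and since $\tilde{e} \in \Delta_{\mathpzc{P}}$ while $v \notin \Delta_{\mathpzc{P}},$ this forces $\lambda = \alpha$ and hence $\tilde{e} = \mathbf{0},$ contradicting the choice of $\tilde{e}.$ So $e$ is not a scalar multiple of $v,$ which makes $\{e, f\}$ linearly independent.

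For the case $q = 2$ with $|\mathpzc{P}^c| \geq 2,$ the previous scaling trick is unavailable since $\mathbb{F}_2$ has no scalars outside $\{0,1\}.$ Instead I exploit the size of $\mathbb{F}_2^{|\mathpzc{P}^c|},$ which has at least $4$ elements. The two ``bad'' values $\mathbf{0}$ and $(v)_{\mathpzc{P}^c}$ can be avoided, so I pick any $x \in \mathbb{F}_2^{|\mathpzc{P}^c|} \setminus \{\mathbf{0}, (v)_{\mathpzc{P}^c}\}.$ Define $e \in \mathbb{F}_2^{\mathrm{n}}$ by $(e)_{\mathpzc{P}^c} = x$ and $(e)_{\mathpzc{P}} = \mathbf{0},$ and set $f = v + e.$ Then $(e)_{\mathpzc{P}^c} = x \neq \mathbf{0}$ and $(f)_{\mathpzc{P}^c} = (v)_{\mathpzc{P}^c} + x \neq \mathbf{0}$ by the choice of $x,$ so both lie in $\Delta_{\mathpzc{P}}^c.$ In particular $e, f \neq \mathbf{0},$ and $e \neq f$ because otherwise $v = \mathbf{0},$ contradicting $v \in \Delta_{\mathpzc{P}}^c;$ over $\mathbb{F}_2$ these conditions are equivalent to $\{e,f\}$ being linearly independent.

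The only place where real care is needed is the $q = 2$ case: since the only non-zero scalar is $1,$ one cannot split $v$ into two scalar multiples, and the $q \geq 3$ construction breaks down because $\Delta_{\mathpzc{P}} \setminus \{\mathbf{0}\}$ alone cannot compensate. The hypothesis $|\mathpzc{P}^c| \geq 2$ is exactly what guarantees that $\mathbb{F}_2^{|\mathpzc{P}^c|}$ contains an element distinct from both $\mathbf{0}$ and $(v)_{\mathpzc{P}^c},$ making the construction feasible; thus checking that this combinatorial hypothesis is used sharply is the main conceptual point of the proof.
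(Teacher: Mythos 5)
Your proof is correct, and it reaches the conclusion by a genuinely different construction than the paper, at least in the $q\geq 3$ case. There the paper argues coordinatewise: after a harmless relabelling it assumes $1\in\mathpzc{P}^c$ with $(v)_1\neq 0$, picks $\kappa\in\mathpzc{P}$, and defines $e$ to agree with $v$ everywhere except that $(e)_1=(v)_1+\alpha$ (with $\alpha$ chosen so this is non-zero) and $(e)_\kappa$ is toggled between $0$ and $1$; then $f=v-e$ is a weight-two vector supported on $\{1,\kappa\}$. You instead use the global decomposition $e=\alpha v+\tilde e,$ $f=(1-\alpha)v-\tilde e$ with $\alpha\notin\{0,1\}$ and $\tilde e\in\Delta_{\mathpzc{P}}\setminus\{\mathbf{0}\},$ which avoids any choice of coordinates and makes both verifications immediate: on $\mathpzc{P}^c$ the vectors $e,f$ are the non-zero multiples $\alpha(v)_{\mathpzc{P}^c}$ and $(1-\alpha)(v)_{\mathpzc{P}^c},$ and independence reduces to $\tilde e\neq\mathbf{0}$ together with $v\notin\Delta_{\mathpzc{P}}$ (your reduction of independence of $\{e,f\}$ to independence of $\{e,v\}$ is sound since both are non-zero and $v=e+f$). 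In the $q=2$ case the two arguments share the same idea — choose the $\mathpzc{P}^c$-part of $e$ to avoid the two bad values $\mathbf{0}$ and $(v)_{\mathpzc{P}^c}$, which is exactly where $|\mathpzc{P}^c|\geq 2$ enters — but the paper adjusts only two fixed coordinates of $\mathpzc{P}^c$ and keeps $e$ equal to $v$ elsewhere, whereas you set $e$ to vanish on $\mathpzc{P}$ and choose its whole $\mathpzc{P}^c$-block freely. What each approach buys: yours is shorter and essentially coordinate-free, with no ``without loss of generality'' bookkeeping; the paper's version produces a summand $f$ of Hamming weight at most $2$, an extra feature that is not actually needed for the lemma or for its use in the locality results (Theorems \ref{Thm5.12} and \ref{Thm5.13}), so nothing downstream is affected by your choice.
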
 
\begin{proof}To prove the result, let $v \in \Delta_{\mathpzc{P}}^c$ be fixed. Here, we will consider the following two cases separately: (i) $q \geq 3,$ and (ii) $q = 2$ and $|\mathpzc{P}^c| \geq 2.$
\begin{itemize}\item[(i)] Let $q\geq 3.$ Here, we assume, without any loss of generality, that $1 \notin \mathpzc{P}$ and  $(v)_1 \neq 0.$ As $q \geq 3,$ there exists $\alpha \in \mathbb{F}_q^\ast$ such that $(v)_1 + \alpha \neq 0.$ Further, since $\mathpzc{P} \neq \emptyset,$ there exists  $\kappa \in \mathpzc{P}$ satisfying $2 \leq \kappa \leq \mathrm{n}.$ Now, let us define a vector $e\in \mathbb{F}_q^{\mathrm{n}}$ as $(e)_1 = (v)_1+\alpha,$ $(e)_i = (v)_i$ for all $i \in [\mathrm{n}] \setminus \{1,\kappa \},$ and 
\begin{equation*}
(e)_\kappa=\left\{\begin{array}{cl}  1 & \text{if } (v)_\kappa = 0;\\
0 & \text{otherwise.}
\end{array}\right.
\end{equation*}
One can easily see that both vectors $e$ and $v-e$ lie in $\Delta_{\mathpzc{P}}^c$ and satisfy the desired properties.

\item[(ii)]Next, let us assume that $q=2$ and $|\mathpzc{P}^c| \geq 2.$ Here, we assume, without any loss of generality, that  both $1,2 \in \mathpzc{P}^c.$ Let us define a vector $e\in \mathbb{F}_q^{\mathrm{n}}$ as  $(e)_i = (v)_i$ for all $i \in [\mathrm{n}] \setminus \{1,2 \},$ and 
\begin{equation*}
(e)_{ \{1,2 \} }=\left\{\begin{array}{cl}  (0,1) & \text{if } (v)_{ \{1,2\} } = (1,0);\\
(1,0) & \text{otherwise.}
\end{array}\right.
\end{equation*}
One can easily see that both vectors $e$ and $v-e$ belong to $\Delta_{\mathpzc{P}}^c$ and satisfy the desired properties.\end{itemize} 
% \item[(b)] Let $\mathrm{n} \geq 3,$ $q = 2$ and $|\mathpzc{P}^c| = 1.$ Here, we assume, without any loss of generality, that $\mathpzc{P}^c=\{1\}.$ As $v \in \Delta_{\mathpzc{P}}^c,$ we must have $(v)_1 = 1.$  Let us define a vector $e \in \Delta_{\mathpzc{P}}^c$ as $(e)_{ \{2,3 \}} \neq  (v)_{\{2,3 \}}$ and $(e)_i = (v)_i$ for all $i \in [\mathrm{n}] \setminus \{2,3 \}.$ We further define a vector  $f \in \Delta_{\mathpzc{P}}^c$ as $(f)_1 =1,$ $(f)_{ \{2,3 \}} \notin \{  (v)_{\{2,3 \}},(e)_{\{2,3 \}} \}$ and $(f)_i = 0$ for all $i \in [\mathrm{n}] \setminus \{1,2,3 \}.$ Here, one can easily see that the vectors $e,\,f,\,v-e-f \in \Delta_{\mathpzc{P}}^c$ satisfy the desired properties.
% \vspace{-4mm}\end{itemize}
\vspace{-5mm}\end{proof}

In the following theorem, we show that the projective code $\mathcal{C}_{\overline{\mathcal{N}}_2}$ is a $q$-ary LRC with locality $2$ when either $q \geq 3$ or $q = 2$ with $|\mathpzc{A}^c| \geq 2,$ and  a binary LRC with locality $3$ when $q = 2$ and $|\mathpzc{A}^c| = 1.$

\begin{thm}\label{Thm5.12} Let us define $\rho=0$ if $\mathpzc{B}= \mathpzc{A},$ while $\rho=1$ if $\mathpzc{B}\subsetneq \mathpzc{A}.$
\begin{itemize}
\item[(a)] When either $q \geq 3$ or  $q = 2$ with $|\mathpzc{A}^c| \geq 2,$ the projective code $\mathcal{C}_{\overline{\mathcal{N}}_2}$ is a $q$-ary $\left[\frac{2q^{|\mathpzc{B}|+|\mathpzc{C}|}(q^m-q^{|\mathpzc{A}|})}{q-1},{2m+|\mathpzc{C}|},\right. \\ \left.(q^m-\rho q^{|\mathpzc{A}|})q^{|\mathpzc{B}|+|\mathpzc{C}|-1}\right]$ LRC with locality $2.$  
\item[(b)] When $q = 2$ and $|\mathpzc{A}^c| = 1,$ the projective  code $\mathcal{C}_{\overline{\mathcal{N}}_2}$ is a binary $\left[2^{m+|\mathpzc{B}|+|\mathpzc{C}|}, 2m+|\mathpzc{C}|,(2-\rho)2^{m+|\mathpzc{B}|+|\mathpzc{C}|-2}\right]$ LRC with locality $3.$ 
\end{itemize}
\end{thm}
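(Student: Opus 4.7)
The plan is to analyze the linear dependencies among the columns of the spanning matrix $G_{\overline{\mathcal{N}}_2}$ directly, using the explicit description of $\overline{\mathcal{N}}_2$ provided by Lemma \ref{Lem8}(a). Since $\mathcal{C}_{\overline{\mathcal{N}}_2}$ is projective, any two distinct columns of $G_{\overline{\mathcal{N}}_2}$ generate distinct one-dimensional $\mathbb{F}_q$-subspaces, which immediately rules out locality $1$ in both parts. It therefore suffices to establish matching upper bounds on the locality.

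For part (a), I would fix an arbitrary column $c$ of $G_{\overline{\mathcal{N}}_2},$ which (up to a scalar) has the form $(w_2+\omega, w_3, w_1)$ with $w_1 \in \Delta_{\mathpzc{A}}^c,$ $w_2 \in \Delta_{\mathpzc{B}},$ $w_3 \in \Delta_{\mathpzc{C}}$ and $\omega \in \{\mathbf{0}, w_1\}.$ Under the hypotheses of part (a), Lemma \ref{Lem5.6} applied to the simplicial complex $\Delta_{\mathpzc{A}} \subseteq \mathbb{F}_q^m$ yields vectors $e, f \in \Delta_{\mathpzc{A}}^c$ linearly independent over $\mathbb{F}_q$ with $w_1 = e + f.$ Splitting on the value of $\omega,$ this gives the identities
\[ (w_2, w_3, w_1) = (w_2, w_3, e) + (\mathbf{0}, \mathbf{0}, f) \quad\text{and}\quad (w_2+w_1, w_3, w_1) = (e, \mathbf{0}, e) + (w_2+f, w_3, f). \]
In each case, the two summands are easily verified to lie in $\mathcal{N}_2,$ and the linear independence of $e$ and $f$ guarantees that the three vectors $c_1, c_2$ and $c$ pairwise span distinct one-dimensional subspaces. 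This shows $c$ is an $\mathbb{F}_q$-linear combination of two other columns of $G_{\overline{\mathcal{N}}_2},$ so the locality is $\leq 2,$ and combined with the projective lower bound this gives locality exactly $2.$

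For the lower bound in part (b), I would write $\mathpzc{A}^c = \{\kappa\}$ and observe that, since the third $m$-block of every column of $G_{\overline{\mathcal{N}}_2}$ lies in $\Delta_{\mathpzc{A}}^c,$ its $\kappa$-th entry equals $1$ in $\mathbb{F}_2.$ Hence the sum over $\mathbb{F}_2$ of any two distinct columns has a third block whose $\kappa$-th entry vanishes, whereas $c$ itself has $1$ in that coordinate. Together with the impossibility (by projectivity) of expressing $c$ as a scalar multiple of a single other column, this shows that no column can be written as an $\mathbb{F}_2$-linear combination of $\leq 2$ other columns, yielding locality $\geq 3.$

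For the matching upper bound in part (b), I would construct, for each column $c = (w_2 + \omega, w_3, w_1),$ a three-term decomposition $c = c_1 + c_2 + c_3$ with $c_1, c_2, c_3$ pairwise distinct columns of $G_{\overline{\mathcal{N}}_2},$ all distinct from $c.$ Writing $w_1 = e_\kappa + u_0$ with $u_0 \in \Delta_{\mathpzc{A}},$ the identity $w_1 + e_\kappa + e_\kappa = w_1$ suggests the template of assigning the third blocks $w_1, e_\kappa, e_\kappa$ to the three summands; the first and second blocks can then be distributed thanks to $\mathbf{0} \in \Delta_{\mathpzc{B}} \cap \Delta_{\mathpzc{C}}$ and the shift parameter $\omega' \in \{\mathbf{0}, w_1^{(i)}\}$ available on each first block. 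The main obstacle will be ensuring pairwise distinctness in degenerate situations where the template accidentally produces a repetition or recovers $c$ itself, e.g., $w_1 = e_\kappa,$ $w_2 + \omega = \mathbf{0},$ or $w_3 = \mathbf{0};$ these will be resolved by a short case split in which one summand is replaced by a column with third block $e_\kappa + a$ for a suitable $a \in \Delta_{\mathpzc{A}} \setminus \{\mathbf{0}, u_0\}$ or with a non-trivial first- or second-block perturbation drawn from the non-empty sets $\Delta_{\mathpzc{B}}$ and $\Delta_{\mathpzc{C}}.$
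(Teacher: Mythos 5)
Your part (a) is correct and is essentially the paper's own argument: the same appeal to Lemma \ref{Lem5.6} with $\mathpzc{P}=\mathpzc{A}$, and the same two-column decomposition (the paper writes $y_\kappa=y_i+y_j$ with $y_i=(w_2+\xi u_1,w_3,u_1)$ and $y_j=(\xi v_1,\mathbf{0},v_1)$, which coincides with your split on $\omega$ up to renaming $e,f$), with projectivity disposing of locality $1.$ Your lower bound in part (b) (every third block lies in $\Delta_{\mathpzc{A}}^c,$ hence has $\kappa$-th coordinate $1,$ so over $\mathbb{F}_2$ no column is a sum of one or two other columns) is also sound; the paper uses exactly this parity argument inside the proof of Theorem \ref{Th9}.

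The gap is in the upper bound of part (b), i.e.\ locality $\leq 3.$ Your construction is only a template: you fix the third blocks as $w_1,e_\kappa,e_\kappa$ and defer the pairwise-distinctness problem to an unspecified ``short case split.'' The degenerate configurations (for instance $w_1=e_\kappa,$ $w_2+\omega=\mathbf{0}$ and $w_3=\mathbf{0}$ simultaneously) are precisely where such templates collapse, and one of the two repairs you propose need not exist: when $m=2$ one has $|\Delta_{\mathpzc{A}}|=2,$ so $\Delta_{\mathpzc{A}}\setminus\{\mathbf{0},u_0\}$ can be empty. The other repair (perturbing a first or second block by a nonzero element of $\Delta_{\mathpzc{B}}$ or $\Delta_{\mathpzc{C}},$ which are nonzero spaces) does always work, so your route can be completed, but as written the verification is missing. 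The paper sidesteps the case analysis entirely: it chooses two columns $y_i=(w_2+\xi u_1,u_3,u_1)$ and $y_j=(w_2+\xi v_1,v_3,v_1)$ whose restricted pairs $\big((u_3)_{\mathpzc{C}},(u_1)_{\mathpzc{A}}\big),$ $\big((v_3)_{\mathpzc{C}},(v_1)_{\mathpzc{A}}\big),$ $\big((w_3)_{\mathpzc{C}},(w_1)_{\mathpzc{A}}\big)$ are pairwise distinct (possible since $2^{|\mathpzc{A}|+|\mathpzc{C}|}\geq 4$), and takes the third summand to be $y_i+y_j+y_\kappa$ itself; this vector lies in $\mathcal{N}_2$ automatically, because the $\kappa$-th coordinate of a sum of three elements of $\Delta_{\mathpzc{A}}^c$ is $1$ while $\Delta_{\mathpzc{B}}$ and $\Delta_{\mathpzc{C}}$ are linear, and the distinctness of the restricted pairs forces the four columns to be pairwise distinct with no exceptional cases. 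Finally, note that the stated $[\mathrm{n},\mathrm{k},\mathrm{d}]$ parameters do not follow from the locality analysis; both parts need the closing appeal to Theorem \ref{Th7}, which your proposal omits.
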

\begin{proof}
To prove the result, we first recall that the projective code $\mathcal{C}_{\overline{\mathcal{N}}_2}$ is a linear code over $\mathbb{F}_q$ with a spanning matrix $G_{\overline{\mathcal{N}}_2}$ whose columns are  the vectors of $\overline{\mathcal{N}}_2.$

\begin{itemize}
\item[(a)] Let us assume that either $q\geq 3$ or $q=2$ with $|\mathpzc{A}^c| \geq 2.$  Now, to prove that the code $\mathcal{C}_{\overline{\mathcal{N}}_2}$ has locality $2,$ we will show that each column of $G_{\overline{\mathcal{N}}_2}$ can be expressed as a sum of  two other columns of $G_{\overline{\mathcal{N}}_2}.$ For this, we note, based on the description  of the set $\overline{\mathcal{N}}_2$ given in Lemma \ref{Lem8}(a), that an arbitrary column of  $G_{\overline{\mathcal{N}}_2}$ is of the form $y_\kappa=(w_2+\omega,w_3,w_1) \in \overline{\mathcal{N}}_2,$ where $w_2 \in \Delta_\mathpzc{B},$ $w_3 \in \Delta_\mathpzc{C},$  $w_1 \in \Delta_\mathpzc{A}^c$ and $\omega \in \{ \mathbf{0},w_1\}.$ Let us define $\xi \in \mathbb{F}_q$ as 
\begin{equation}\label{Eq6.4}
\xi =\left\{\begin{array}{cl}  0 & \text{if } \omega = \mathbf{0};\\
1 & \text{otherwise.}
\end{array}\right.
\end{equation}
We further see, by Lemma \ref{Lem5.6}, that there exist linearly independent vectors $u_1,v_1 \in \Delta_\mathpzc{A}^c$ such that $w_1 = u_1 + v_1.$  This implies that $y_\kappa = y_i +y_j,$ where  $y_i=(w_2 + \xi u_1,w_3,u_1)$ and $y_j=(\xi v_1,\mathbf{0},v_1).$ One can easily see that  $y_i$ and $y_j$  are linearly independent over $\mathbb{F}_q.$ From the description  of the set $\overline{\mathcal{N}}_2$ given in Lemma \ref{Lem8}(a), we note that both  $y_i$ and $y_j$ belong to $\overline{\mathcal{N}}_2.$ This shows that the code $\mathcal{C}_{\overline{\mathcal{N}}_2}$ has locality $2.$ From this and by applying Theorem \ref{Th7}, we get the desired result.

\item[(b)] Let $q=2$ and $|\mathpzc{A}^c| = 1.$ Here, we have $\overline{\mathcal{N}}_2 = \mathcal{N}_2.$ Now, to show that the code $\mathcal{C}_{\mathcal{N}_2}$ has locality $3,$ we will show that every column of $G_{\mathcal{N}_2}$ can be expressed as a sum of three other columns of $G_{\mathcal{N}_2}.$ For this, we see, by \eqref{D2},  that an arbitrary column of $G_{\mathcal{N}_2}$ is of the form $y_\kappa=(w_2+\omega,w_3,w_1) \in \mathcal{N}_2,$ where $w_2 \in \Delta_\mathpzc{B},$ $w_3 \in \Delta_\mathpzc{C},$  $w_1 \in \Delta_\mathpzc{A}^c$ and $\omega \in \{ \mathbf{0},w_1\}.$ We further observe that there exist column vectors $y_i = (w_2 + \xi u_1,u_3,u_1)$ and $y_j=(w_2 + \xi v_1,v_3,v_1)$
satisfying 
$ \big((u_3)_\mathpzc{C},(u_1)_\mathpzc{A} \big) \neq \big((w_3)_\mathpzc{C},(w_1)_\mathpzc{A}\big),$ $\big((v_3)_\mathpzc{C},(v_1)_\mathpzc{A} \big)\neq \big((w_3)_\mathpzc{C},(w_1)_\mathpzc{A}\big)$ and $ \big((v_3)_\mathpzc{C},(v_1)_\mathpzc{A} \big) \neq \big((u_3)_\mathpzc{C},(u_1)_\mathpzc{A} \big),$
 where $\xi$ is as defined by \eqref{Eq6.4}.
% $(u_3)_{\mathpzc{C}^c} = \mathbf{0}$ and $(u_1)_{\mathpzc{A}^c} = 1.$ 
% $(v_3)_{\mathpzc{C}^c} = \mathbf{0}$ and $(v_1)_{\mathpzc{A}^c} = 1.$ 
One can easily see that the column vectors $y_i, y_j$ and $y_i+y_j+y_\kappa$ belong to $\mathcal{N}_2$ and that the column $y_\kappa$ is a sum of the columns $y_i,y_j$ and $y_i+y_j+y_\kappa$ of the matrix $G_{\mathcal{N}_2}.$ This shows that the code $\mathcal{C}_{\mathcal{N}_2}$ has locality $3.$ From this and by applying Theorem \ref{Th7} again, we get the desired result.
\end{itemize}
\vspace{-4mm}\end{proof}
From the above theorem, we make the following observation.
\begin{remark}In particular, when $q=2,$ $\mathpzc{A}=\mathpzc{B}$ and $|\mathpzc{A}|=m-1,$  we see, by Theorem \ref{Th7}, that the 
 projective code $\mathcal{C}_{\overline{\mathcal{N}}_2}$ is a binary $\left[2^{2m+|\mathpzc{C}|-1}, 2m+|\mathpzc{C}|,2^{2m+|\mathpzc{C}|-2}\right]$-code and has non-zero Hamming weights $2^{2m+|\mathpzc{C}|-2}$ and $2^{2m+|\mathpzc{C}|-1}$ with frequencies  $2^{2m+|\mathpzc{C}|}-2$ and $1,$ respectively. This,  by Proposition 3(4) of Mondal \etal \cite{Mondal2024}, implies that the binary code $\mathcal{C}_{\overline{\mathcal{N}}_2}$ is also a first order Reed-Muller code.  Furthermore, we see, by   Theorem \ref{Thm5.12}(b), that the 
 projective code $\mathcal{C}_{\overline{\mathcal{N}}_2}$ is a binary $\left[2^{2m+|\mathpzc{C}|-1}, 2m+|\mathpzc{C}|,2^{2m+|\mathpzc{C}|-2}\right]$ LRC with locality $3,$ which agrees with Lemma 14 of Huang \etal \cite{Huang2016}.  Moreover, we see, by Lemma \ref{Lemma5.5} and using the Griesmer bound \eqref{GB}, that the code $\mathcal{C}_{\overline{\mathcal{N}}_2}$ is an alphabet-optimal LRC.\end{remark}
 % \textcolor{purple}{In the following corollary, we apply Theorem \ref{Thm5.12}(b)  to obtain an infinite family of binary alphabet-optimal LRCs with locality $3.$
% \begin{cor}\label{Cor6.1}
% When $q=2$ and $\mathpzc{A}=\mathpzc{B}$ with $|\mathpzc{A}|=m-1,$ the projective code $\mathcal{C}_{\overline{\mathcal{N}}_2}$ is a binary $\left[2^{2m+|\mathpzc{C}|-1}, 2m+|\mathpzc{C}|,\right. \\ \left.2^{2m+|\mathpzc{C}|-2}\right]$ alphabet-optimal LRC with locality $3.$ 
% \end{cor}
% \begin{proof}
% The desired result follows immediately by applying Theorem \ref{Thm5.12}(b) and Lemma \ref{Lemma5.5}, and using the Griesmer bound \eqref{GB}. 
% \end{proof}
% \begin{remark}\label{Rm6.2}
% The codes obtained in Corollary \ref{Cor6.1} is the first-order Reed-Muller code, a fact also observed by
% Mondal \etal \cite[Prop. 3(4)]{Mondal2024}. Huang \etal \cite[Lem. 14]{Huang2016} proved that $\mu$th-order binary Reed-Muller code has locality $2^{\mu+1}-1.$
% \end{remark}}

In the following theorem, we show that the projective linear code $\mathcal{C}_{\overline{\mathcal{N}}_4}$ is a $q$-ary LRC with locality $2$ when either $q \geq 3$ or  $q = 2$ with $|\mathpzc{C}^c| \geq 2,$ and a binary LRC with locality $3$ when $q = 2$ and $|\mathpzc{C}^c| = 1.$

\begin{thm}\label{Thm5.13}
\begin{itemize}
\item[(a)] When either $q \geq 3$ or  $q = 2$ with $|\mathpzc{C}^c| \geq 2,$ the projective code $\mathcal{C}_{\overline{\mathcal{N}}_4}$ is a $q$-ary $\left[\frac{2(q^m-q^{|\mathpzc{C}|})(q^{|\mathpzc{A}|}-1)q^{|\mathpzc{B}|}}{q-1}, \right. \\ \left.{m+2|\mathpzc{A}|+|\mathpzc{B}|},(q^{m}-q^{|\mathpzc{C}|})q^{|\mathpzc{A}|+|\mathpzc{B}|-1}\right]$ LRC with locality $2.$
\item[(b)] When $q = 2$ and $|\mathpzc{C}^c| = 1,$ the projective  code $\mathcal{C}_{\overline{\mathcal{N}}_4}$ is a binary $\left[(2^{|\mathpzc{A}|}-1)2^{m+|\mathpzc{B}|}, m+2|\mathpzc{A}|+|\mathpzc{B}|, 2^{m+|\mathpzc{A}|+|\mathpzc{B}|-2}\right]$ LRC with locality $3.$ 
\end{itemize}
\end{thm}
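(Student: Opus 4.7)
The plan is to follow the blueprint of Theorem \ref{Thm5.12} but adapted to the structure of the multiset $\overline{\mathcal{N}}_4.$ Recall that the columns of the spanning matrix $G_{\overline{\mathcal{N}}_4}$ are, by Lemma \ref{Lem8}(b), representatives of pairwise distinct one-dimensional subspaces of $\mathbb{F}_q^{3m};$ in particular, no column is a scalar multiple of another, so the locality is automatically at least $2.$ Combined with Theorem \ref{Th8}, which supplies the length, dimension and Hamming distance, it therefore suffices to establish matching upper bounds: $2$ in case (a), and $3$ in case (b) together with a lower bound strictly greater than $2.$

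For case (a), I fix an arbitrary column $y_\kappa = (w_2 + \omega, w_3, w_1) \in \overline{\mathcal{N}}_4$ with $w_1 \in \Delta_{\mathpzc{A}}^{\ast},$ $w_2 \in \Delta_{\mathpzc{B}},$ $w_3 \in \Delta_{\mathpzc{C}}^c,$ and $\omega \in \{\mathbf{0}, w_1\}.$ Unlike the situation in Theorem \ref{Thm5.12}, both the middle slot (constrained to $\Delta_{\mathpzc{C}}^c$) and the third slot (constrained to $\Delta_{\mathpzc{A}}^{\ast}$) are restricted, so two components must be split simultaneously. Applying Lemma \ref{Lem5.6} to $w_3$ yields linearly independent $u_3, v_3 \in \Delta_{\mathpzc{C}}^c$ with $w_3 = u_3 + v_3,$ while the hypothesis $|\mathpzc{A}| \geq 2$ lets me choose $c_1 \in \Delta_{\mathpzc{A}}^{\ast}$ linearly independent of $w_1$ and set $c_2 = w_1 - c_1 \in \Delta_{\mathpzc{A}}^{\ast},$ so that the pairs $\{c_1, c_2\},$ $\{c_1, w_1\},$ $\{c_2, w_1\}$ are each linearly independent. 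With $\xi \in \mathbb{F}_q$ taken to be $0$ if $\omega = \mathbf{0}$ and $1$ if $\omega = w_1,$ the identity
\[
y_\kappa = (w_2 + \xi c_1,\, u_3,\, c_1) + (\xi c_2,\, v_3,\, c_2)
\]
expresses $y_\kappa$ as the sum of two vectors $y_i, y_j \in \mathcal{N}_4,$ and the pairwise linear independence of $c_1, c_2, w_1$ guarantees that $y_i, y_j, y_\kappa$ represent three distinct one-dimensional subspaces. Consequently the locality equals $2,$ and the stated parameters follow from Theorem \ref{Th8}.

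For case (b), the obstacle is that Lemma \ref{Lem5.6} no longer applies over $\mathbb{F}_2$ when $|\mathpzc{C}^c| = 1,$ and this very failure drives the locality up from $2$ to $3.$ Writing $[m] \setminus \mathpzc{C} = \{\kappa^{\ast}\},$ every vector in $\Delta_{\mathpzc{C}}^c$ has $\kappa^{\ast}$-th coordinate equal to $1;$ hence the middle coordinate of $y_i + y_j$ for any $y_i, y_j \in \mathcal{N}_4$ has $\kappa^{\ast}$-th coordinate $0$ and cannot equal $w_3,$ which shows that no column is a combination of two others and establishes the lower bound $3.$ For the matching upper bound, observe that $|\mathpzc{A}| \geq 2,$ $\mathpzc{A} \cap \mathpzc{B} = \emptyset,$ and $\mathpzc{B} \neq \emptyset$ force $m \geq 3,$ so $|\Delta_{\mathpzc{C}}^c| = 2^{m-1} \geq 4.$ I would then select $a_3, b_3 \in \Delta_{\mathpzc{C}}^c$ such that $a_3, b_3, w_3, a_3 + b_3 + w_3$ are four pairwise distinct elements of $\Delta_{\mathpzc{C}}^c$ (the fourth lies in $\Delta_{\mathpzc{C}}^c$ automatically, since $1 + 1 + 1 = 1$ in $\mathbb{F}_2$) and verify the decomposition
\[
y_\kappa = (\mathbf{0},\, a_3,\, w_1) + (\mathbf{0},\, b_3,\, w_1) + (w_2 + \omega,\, a_3 + b_3 + w_3,\, w_1),
\]
whose three summands are pairwise distinct elements of $\mathcal{N}_4$ and each distinct from $y_\kappa$ thanks to their pairwise distinct middle components. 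The main obstacle is identifying the $\mathbb{F}_2$ parity constraint that rules out any $2$-term decomposition and confirming the combinatorial bound $m \geq 3$ needed for the four-way distinctness; once these are in hand, the decomposition is routine and Theorem \ref{Th8} supplies the parameters.
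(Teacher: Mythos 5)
Your proposal is correct and follows essentially the same route as the paper, whose proof of Theorem \ref{Thm5.13} is just the instruction to argue as in Theorem \ref{Thm5.12} using Lemma \ref{Lem8}(b), Lemma \ref{Lem5.6}, Theorem \ref{Th8} and \eqref{D4}; your explicit two-term decomposition (splitting $w_3$ via Lemma \ref{Lem5.6} and $w_1$ into two nonzero elements of $\Delta_{\mathpzc{A}}$ using $|\mathpzc{A}|\geq 2$) and the three-term decomposition with the parity argument for $q=2,$ $|\mathpzc{C}^c|=1$ are exactly the intended adaptation, carried out with the details the paper leaves implicit.
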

\begin{proof}
Working as in Theorem \ref{Thm5.12} and by applying Lemmas   \ref{Lem8}(b) and \ref{Lem5.6} and Theorem \ref{Th8} and using equation \eqref{D4}, we get the desired result.
\end{proof}

\section{Conclusion and future work}\label{Conclusion}

In this paper, four infinite families of linear codes over the ring $\frac{\mathbb{F}_q[u]}{\langle u^2 \rangle}$ are constructed using defining sets formed from certain non-empty subsets of $\mathcal{R}^m$ associated with three simplicial complexes of $\mathbb{F}_q^m,$ each having a single maximal element, where $\mathcal{R} = \frac{\mathbb{F}_q[u]}{\langle u^2 \rangle} \times \mathbb{F}_q$ is a mixed-alphabet ring. The parameters and Lee weight distributions of these codes are explicitly determined. Through their Gray images, several infinite families of few-weight codes, binary and ternary-self-orthogonal codes as well as an infinite family of minimal, near-Griesmer and  distance-optimal codes over $\mathbb{F}_q$  are obtained. 

Spanning matrices of a linear code over $\frac{\mathbb{F}_q[u]}{\langle u^2 \rangle}$ with defining set $\mathcal{D} \subseteq \mathcal{R}^m$ and of its Gray image are also determined.  Based on this result, two infinite families of projective few-weight codes over $\mathbb{F}_q$ with new parameters are constructed. By examining the duals of these projective codes, two infinite families of binary distance-optimal projective codes are obtained. Furthermore, an infinite family of quaternary projective 3-weight codes is constructed, in which the non-zero Hamming weights sum to $\frac{9}{4}$ times the code length. 
% These codes give rise to strongly $\ell$-walk-regular graphs with new parameters for all odd integers $\ell \geq 3.$ 
As an application of our newly constructed minimal codes over $\mathbb{F}_q,$ the minimal access structures of Massey’s secret sharing schemes based on their duals are investigated, along with the determination of the number of dictatorial participants. Finally, the locality properties of the constructed families of projective codes are investigated, and their localities are explicitly determined. As a consequence, four infinite families of $q$-ary locally repairable codes (LRCs) with locality $2,$ and two infinite families of binary LRCs with locality $3,$ are obtained.

Future work could explore the construction and analysis of linear codes over $\frac{\mathbb{F}_q[u]}{\langle u^2 \rangle}$ using defining sets derived from subsets of $\mathcal{R}^m$ associated with simplicial complexes of $\mathbb{F}_q^m$ having multiple maximal elements. Another promising direction would be the study of codes over the ring $\frac{\mathbb{F}_q[u]}{\langle u^e \rangle},$ for $e \geq 3,$ using defining sets formed from mixed-alphabets of chain rings and simplicial complexes, with the goal of discovering new classes of codes via their Gray images.

\section{Acknowledgements}
The first author gratefully acknowledges the research support by the National Board for Higher Mathematics (NBHM), India,  under Grant no. 0203/13(46)/2021-R\&D-II/13176. The second and third authors acknowledge with gratitude the research support provided by the IHUB-ANUBHUTI-IIITD FOUNDATION, established under the NM-ICPS initiative of the Department of Science and Technology, Government of India, through Grant No. IHUB Anubhuti/Project Grant/12. Additionally, the third author acknowledges with appreciation the support received from the Department of Science and Technology (DST), Government of India, under Grant No. DST/INT/RUS/RSF/P-41/2021 with TPN 65025.

\section{Appendix: Coset graphs and strongly walk-regular graphs}\label{Sec6}

In this appendix, we will first provide an elementary proof of the fact  that the coset graph of a linear code over $\mathbb{F}_q$ is connected. We will further construct strongly $\ell$-walk-regular graphs ($\ell$-SWRGs) for all odd integers $\ell \geq 3,$ using the quaternary projective $3$-weight codes constructed in  Corollary \ref{SHI}. To this end, we begin by recalling some basic definitions and results from graph theory.

A graph $\mathbb{G}$ is defined as an ordered pair $(V,E),$ where $V$ is the set of vertices and $E$ is the set of edges, with each edge being represented as an unordered pair of distinct elements from $V.$ Two vertices are said to be adjacent (or neighbours) if they are connected by an edge.  For a given ordering of the vertex set $V,$ $\ie$ $V=\{t_1,t_2, \ldots, t_{|V|} \},$ the adjacency matrix of the graph $\mathbb{G}$ is defined as a $|V| \times |V|$ matrix $\mathbb{G}_M$ whose $(i,j)$-th entry is $1$ if the vertices $t_i$ and $t_j$ are adjacent, and $0$ otherwise. Eigenvalues of the matrix $\mathbb{G}_M$ are known as the eigenvalues of the graph $\mathbb{G}.$ Furthermore, a spectrum of the graph $\mathbb{G}$ is defined as a multiset consisting of all its eigenvalues, each listed with its respective multiplicity. A walk in a graph $\mathbb{G}$ is defined as a finite sequence  $\{t_{i_1}, t_{i_2}, \ldots, t_{i_\ell}\}$ of vertices such that each  pair $(t_{i_j},t_{i_{j+1}})$ of consecutive vertices is an edge in the graph for  $1 \leq j < \ell.$ The walk $\{t_{i_1}, t_{i_2}, \ldots, t_{i_\ell}\}$ is called a walk of length $\ell-1$ between the vertices $t_{i_1}$ and $t_{i_\ell}.$ A graph is said to be connected if there exists a walk between each pair of distinct vertices.

Next, for a positive integer $r,$ a graph is said to be $r$-regular if every vertex has $r$ neighbours in the graph. 
A strongly regular graph (SRG) is a regular graph with the additional property that the number of common neighbors between any two distinct vertices depends only on whether the vertices are adjacent or not, or equivalently, the number of walks of length $2$ between any pair of distinct vertices depends solely on whether those two vertices are adjacent or not. The concept of strongly $\ell$-walk-regular graphs ($\ell$-SWRGs) is introduced by Dam and Omidi \cite{Dam2013} as a generalization of SRGs, where the condition on walks of length $2$ is extended to walks of length $\ell \geq 2.$  For an integer $\ell \geq 2,$ a graph $\mathbb{G}$ is said to be an $\ell$-SWRG with parameters $(\lambda_\ell,\mu_\ell,\nu_\ell)$ if the number of walks of length $\ell$ between any two vertices of $\mathbb{G}$ is (i) $\lambda_\ell$  if the vertices are adjacent, (ii) $\mu_\ell$ if the vertices are non-adjacent, and (iii) $\nu_\ell$ if the vertices are identical.

Let $\mathcal{C}$ be a linear code of length $\mathrm{n}$ over $\mathbb{F}_q.$ Elements of the quotient space $\mathbb{F}_q^{\mathrm{n}}/\mathcal{C}$ are called the cosets of $\mathcal{C}$ in $\mathbb{F}_q^{\mathrm{n}}.$  Clearly, there are precisely $\frac{q^\mathrm{n}}{|\mathcal{C}|}$ distinct cosets of $\mathcal{C}$ in $\mathbb{F}_q^{\mathrm{n}}.$ A coset leader of a coset of $\mathcal{C}$ is defined as a vector of the smallest Hamming weight in the coset.  Note that a coset leader of a coset of $\mathcal{C}$ need not be unique. The Hamming weight of a coset of $\mathcal{C}$ is defined as the Hamming weight of its coset leader. Now, the coset graph of the linear code $\mathcal{C},$ denoted by $\Gamma_\mathcal{C},$ is defined as a graph whose vertices are the cosets of $\mathcal{C}$ in $\mathbb{F}_q^{\mathrm{n}},$ where  any two cosets of $\mathcal{C}$ are adjacent if and only if they differ by a coset of $\mathcal{C}$ with Hamming weight $1.$  

Shi \etal \cite[pp. 4–5]{Shi2022a} observed that the coset graph $\Gamma_\mathcal{C}$ of $\mathcal{C}$ is isomorphic to its syndrome graph, and that the syndrome graph of $\mathcal{C}$ is a Cayley graph in which the columns of the parity-check matrix generate the full space. From this, they deduced that the coset graph $\Gamma_\mathcal{C}$ is connected. In the following lemma, we present an elementary proof of this result. Although we could not find this proof in the literature, we make no claim regarding its novelty.

\begin{lemma}\label{Lemma5.3}
For any linear code $\mathcal{C}$ of length $\mathrm{n}$ over $\mathbb{F}_q,$  the coset graph $\Gamma_{\mathcal{C}}$ is connected. \end{lemma}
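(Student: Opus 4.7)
The plan is to prove connectivity by showing that every coset of $\mathcal{C}$ in $\mathbb{F}_q^{\mathrm{n}}$ is reachable from the zero coset $\mathcal{C}$ via a walk in $\Gamma_\mathcal{C}.$ Since adjacency is translation-invariant (if $A$ and $B$ are adjacent cosets, then so are $A+X$ and $B+X$ for any coset $X$), once we establish a walk from $\mathcal{C}$ to $v+\mathcal{C}$ for every $v\in\mathbb{F}_q^{\mathrm{n}},$ connectedness follows: any two cosets $v+\mathcal{C}$ and $w+\mathcal{C}$ can be joined by first walking from $v+\mathcal{C}$ back to $\mathcal{C}$ and then from $\mathcal{C}$ to $w+\mathcal{C}$ (or, more directly, by translating a walk from $\mathcal{C}$ to $(w-v)+\mathcal{C}$ by $v$).

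For the key construction, I would fix an arbitrary $v=(v_1,v_2,\ldots,v_{\mathrm{n}})\in\mathbb{F}_q^{\mathrm{n}}$ and let $e_1,e_2,\ldots,e_{\mathrm{n}}$ denote the standard basis of $\mathbb{F}_q^{\mathrm{n}}.$ Writing $v=\sum_{i=1}^{\mathrm{n}} v_i e_i,$ I would consider the sequence of cosets $C_0,C_1,\ldots,C_{\mathrm{n}}$ defined by
\begin{equation*}
C_j = \Big(\sum_{i=1}^{j} v_i e_i\Big) + \mathcal{C}, \quad 0\leq j\leq \mathrm{n},
\end{equation*}
so that $C_0=\mathcal{C}$ and $C_{\mathrm{n}}=v+\mathcal{C}.$ For each $j\in[\mathrm{n}],$ the difference coset $C_j - C_{j-1} = v_j e_j + \mathcal{C}$ either equals $\mathcal{C}$ (when $v_j e_j\in\mathcal{C},$ including the case $v_j=0$), in which case $C_j=C_{j-1}$ and this step contributes nothing, or else it is a non-zero coset containing the weight-$1$ vector $v_j e_j,$ hence a coset of Hamming weight exactly $1.$ Removing the trivial steps yields a valid walk in $\Gamma_\mathcal{C}$ from $\mathcal{C}$ to $v+\mathcal{C}.$

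There is no significant obstacle here; the only subtlety is to notice the distinction between the difference $v_j e_j$ having Hamming weight $1$ as a vector versus the coset $v_j e_j+\mathcal{C}$ having Hamming weight $1$ in the coset-weight sense. Since a coset of weight $0$ is precisely $\mathcal{C}$ itself (and corresponds to a non-step in the walk), the remaining steps automatically have coset weight $1,$ which is exactly what the adjacency condition in $\Gamma_\mathcal{C}$ requires.
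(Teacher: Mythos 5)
Your proposal is correct, and it proves connectivity by a genuinely different and more elementary route than the paper. The paper fixes a coset leader $g_i$ of each coset and strips its support one coordinate at a time, proving at every stage that the truncated vector $g_i^{(a)}$ is again a coset leader of a \emph{new} coset (this requires the weight comparisons $wt_H(g_i^{(a)}+x)\geq wt_H(g_i^{(a)})$ and the verifications $g_i^{(1)}\notin\mathcal{C},$ $g_i-g_i^{(1)}\notin\mathcal{C}$), and thereby produces a chain of genuinely adjacent cosets of strictly decreasing coset weight terminating at $\mathcal{C}.$ You instead build $v+\mathcal{C}$ up from the zero coset coordinate-by-coordinate and observe only that each nontrivial difference coset $v_je_j+\mathcal{C}$ is a non-zero coset containing a weight-$1$ vector, hence has coset weight exactly $1$; steps where $v_je_j\in\mathcal{C}$ are simply deleted, so no coset-leader machinery and no distinctness verifications are needed. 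Your handling of the remaining details is sound: adjacency is symmetric and translation-invariant (the difference coset is unchanged under adding a common coset), so walks can be reversed, concatenated at $\mathcal{C},$ or translated, which settles the general case of two distinct cosets. What the paper's heavier argument buys is the extra structural fact that every coset is joined to $\mathcal{C}$ by a path through cosets whose leaders are nested truncations of the original coset leader (so the walk length equals the coset weight); your argument yields the same length bound if one takes $v$ to be a coset leader, but at the cost of no additional information about the intermediate cosets --- which is irrelevant for the lemma as stated, so for the purpose at hand your proof is the shorter and cleaner one.
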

\begin{proof}
To prove the result, let $ g_0+\mathcal{C}, g_1 + \mathcal{C}, \ldots, g_{N-1} + \mathcal{C}$ be all the distinct cosets of $\mathcal{C}$ in $\mathbb{F}_q^{\mathrm{n}},$ where $N=\frac{q^{\mathrm{n}}}{|\mathcal{C}|}$ and $g_0=\textbf{0},g_1, g_2,\ldots,g_{N-1} \in \mathbb{F}_q^\mathrm{n}.$  We also assume, for $0 \leq i \leq N-1,$ that $g_i$ is a coset leader of the coset $g_i + \mathcal{C},$  $\ie$  we have $wt_H(g_i+x) \geq wt_H(g_i)$ for all $x \in \mathcal{C}.$ Thus, the coset graph $\Gamma_{\mathcal{C}}$ of $\mathcal{C}$ is a graph with vertices $V=\{ g_0+\mathcal{C}, g_1 + \mathcal{C}, \ldots, g_{N-1} + \mathcal{C}\},$ where the vertices $g_i+\mathcal{C}$ and $g_j+\mathcal{C}$ are adjacent if and only if the Hamming weight of the coset $g_i-g_j+\mathcal{C}$ is $1.$

To prove that the graph $\Gamma_{\mathcal{C}}$ is connected, we need to show that there exists a walk between any two distinct vertices of $\Gamma_\mathcal{C}.$ Towards this, we first note,  for $1\leq i \leq N-1,$ that the vertices $g_0+\mathcal{C}$ and $g_i + \mathcal{C}$ are adjacent in $\Gamma_{\mathcal{C}}$ if and only if $wt_H(g_i) =1.$ 
Further, if $g_i + \mathcal{C}$ and $g_j + \mathcal{C}$ are two distinct vertices of $\Gamma_\mathcal{C}$ with $wt_H(g_i)=wt_H(g_j)=1,$ then the sequence $\{g_i+\mathcal{C},g_0+\mathcal{C},g_j+\mathcal{C}\}$ is a walk between  the vertices $g_i + \mathcal{C}$ and $g_j + \mathcal{C}.$

Next, let $g_i + \mathcal{C}$  be a vertex of $\Gamma_\mathcal{C}$ with $wt_H(g_i) =s\geq 2.$  Here, we  assert that there exists another vertex $g_i^{(1)} + \mathcal{C} \in V$ adjacent to $g_i + \mathcal{C}$ with $wt_H(g_i^{(1)}) = s - 1.$ To prove this assertion, let us suppose that  $\supp(g_i)=\{t_1,t_2,\ldots,t_s\},$ and let us define a vector $g_i^{(1)}\in \mathbb{F}_q^\mathrm{n}$ with the $\kappa$-th coordinate 
\begin{equation*}
(g_i^{(1)})_\kappa=\left\{\begin{array}{cl}  (g_i)_\kappa & \text{if } \kappa \in \{t_2,t_3,\ldots,t_s\};\\
0 & \text{otherwise,}
\end{array}\right.
\end{equation*}
 for $1\leq \kappa \leq \mathrm{n}.$ 
   Note that $wt_H(g_i^{(1)}) = s-1.$ We will first show that $g_i^{(1)} \notin \mathcal{C}.$ For, if $g_i^{(1)} \in \mathcal{C},$ then we have $g_i - g_i^{(1)} \in g_i + \mathcal{C}$ and $wt_{H}(g_i - g_i^{(1)}) = 1,$ which contradicts the fact that $g_i$ is a coset leader of  $g_i + \mathcal{C}$ with $wt_H(g_i)=s \geq 2.$ Thus, we have $g_i^{(1)} \notin \mathcal{C},$ and hence $g_i^{(1)} + \mathcal{C}$ is a non-zero coset of $\mathcal{C}.$ We next claim that $g_i^{(1)}$ is a coset leader of  $g_i^{(1)} + \mathcal{C}.$  For this, it is suffices to show that $wt_H(g_i^{(1)}+x) \geq wt_H(g_i^{(1)})=s-1$ for all non-zero $x \in \mathcal{C}.$ To do this, let $x=(x_1,x_2,\ldots,x_\mathrm{n}) \in \mathcal{C}$ be fixed.  For $1 \leq \kappa \leq n,$ the 
   $\kappa$-th coordinates of  $g_i+x$ and $g_i^{(1)}+x$ are given by 
   \begin{equation*}
    (g_i+x)_\kappa=\left\{\begin{array}{cl}  (g_i)_\kappa +x_\kappa& \text{if } \kappa \in \{t_1,t_2,t_3,\ldots,t_s\};\\
x_\kappa & \text{otherwise,}
\end{array}\right. \text{ ~and ~}(g_i^{(1)}+x)_\kappa=\left\{\begin{array}{cl}  (g_i)_\kappa +x_\kappa& \text{if } \kappa \in \{t_2,t_3,\ldots,t_s\};\\
x_\kappa & \text{otherwise,}
\end{array}\right.
\end{equation*}
respectively.  As $g_i$ is a coset leader of $g_i+\mathcal{C}$ with $wt_H(g_i)=s,$ we have $wt_H(g_i+x)\geq wt_H(g_i)=s.$ 

Now, if $x_{t_1}=0,$ then we have  $wt_H(g_i^{(1)}+x)=wt_H(g_i+x)-1\geq s-1.$ On the other hand, if $x_{t_1} \neq 0$ and $(g_{i}+x)_{t_1} =0,$ then we have $wt_H(g_i^{(1)}+x)=wt_H(g_i+x)+1 \geq s+1.$  Finally, when $x_{t_1} \neq 0$ and $(g_{i}+x)_{t_1} \neq 0,$ we have $wt_H(g_i^{(1)}+x)=wt_H(g_i+x) \geq s.$ This shows that $wt_H(g_i^{(1)}+x) \geq s-1=wt_H(g_i^{(1)}),$ and hence  $g_i^{(1)}$ is a coset leader of the coset $g_i^{(1)} + \mathcal{C}.$   We will further show that $g_i-g_i^{(1)} \notin \mathcal{C}.$ For, if $g_i-g_i^{(1)} \in \mathcal{C},$ then we have $g_i^{(1)}=g_i - (g_i-g_i^{(1)}) \in g_i + \mathcal{C},$ which contradicts the fact that  $g_i$ is a coset leader of  $g_i + \mathcal{C}$ with $wt_H(g_i)=s=wt_H(g_i^{(1)})+1.$  Further, using the fact  that $wt_H(g_i - g_i^{(1)})=1,$ we see that the coset $g_i - g_i^{(1)} + \mathcal{C}$ has Hamming weight $1.$ Thus, the vertices $g_i + \mathcal{C}$ and $g_i^{(1)} + \mathcal{C}$ are adjacent in the coset graph $\Gamma_\mathcal{C}$ with $wt_H(g_i)=s \geq 2$ and  $wt_H(g_i^{(1)})=s-1,$ which proves the assertion.

For $s \geq 3$ and  $2 \leq a \leq s-1,$  let us define a word $g_i^{(a)}\in \mathbb{F}_q^{\mathrm{n}}$  with the $\kappa$-th coordinate 
\vspace{-2mm}\begin{equation*}
    (g_i^{(a)})_\kappa=\left\{\begin{array}{cl}  (g_i)_\kappa & \text{if } \kappa \in \{t_{a+1},t_{a+2},\ldots,t_s\};\\
0 & \text{otherwise,}
\end{array}\right.
\vspace{-2mm}\end{equation*} for $1 \leq \kappa \leq \mathrm{n}.$ By repeatedly applying the above assertion, we see, for $2 \leq a \leq s-1,$ that the cosets $g_i^{(a-1)}+\mathcal{C}$ and $g_i^{(a)}+\mathcal{C}$ are adjacent in $\Gamma_{\mathcal{C}},$ where ${wt}_H(g_i^{(a-1)})=s-a+1$ and ${wt}_H(g_i^{(a)})=s-a.$  As ${wt}_H(g_i^{(s-1)})=1,$ the vertices $g_i^{(s-1)}+\mathcal{C}$ and $g_0+\mathcal{C}$ are adjacent $\Gamma_{\mathcal{C}}.$ This shows that the sequence $\{g_i+\mathcal{C}, g_i^{(1)}+\mathcal{C}, g_i^{(2)}+\mathcal{C}, \ldots, g_i^{(s-1)}+\mathcal{C}, g_0+\mathcal{C}\}$ is a  walk between the vertices $g_i+\mathcal{C}$ and $g_0+\mathcal{C}$ in $\Gamma_{\mathcal{C}}$ when ${wt}_H(g_i) =s\geq 2.$ 

Thus, if ${wt}_H(g_i)=s\geq 2$ and ${wt}_H(g_j)=1,$ then the sequence $\{g_i+\mathcal{C},g_i^{(1)}+\mathcal{C}, g_i^{(2)}+\mathcal{C}, \ldots, g_i^{(s-1)}+\mathcal{C} , g_0+\mathcal{C}, g_j+\mathcal{C}\}$ is a walk between the vertices $g_i+\mathcal{C}$ and $g_j+\mathcal{C}$ in $\Gamma_{\mathcal{C}}.$ On the other hand, if  ${wt}_H(g_i)=s\geq 2$ and ${wt}_H(g_j)=r \geq 2,$ then the sequence $\{g_i+\mathcal{C},g_i^{(1)}+\mathcal{C}, g_i^{(2)}+\mathcal{C}, \ldots, g_i^{(s-1)}+\mathcal{C} , g_0+\mathcal{C},g_j^{(r-1)}+\mathcal{C},g_j^{(r-2)}+\mathcal{C}, \ldots, g_j^{(1)}+\mathcal{C}, g_j+\mathcal{C}\}$ is a walk between the vertices $g_i+\mathcal{C}$ and $g_j+\mathcal{C}$ in $\Gamma_{\mathcal{C}}.$ This shows that the coset graph $\Gamma_{\mathcal{C}}$ is connected.
\end{proof}

 Now, the following well-known result provides a construction of $3$-SWRGs from projective $3$-weight codes of length $\mathrm{n}$ over $\mathbb{F}_q$  with non-zero Hamming weights   $\mathrm{w}_1,$ $\mathrm{w}_2$ and $\mathrm{w}_3$ satisfying $\mathrm{w}_1 + \mathrm{w}_2 + \mathrm{w}_3 = \frac{3 \mathrm{n} (q-1)}{q}.$

\begin{thm}\cite{Shi2019,Brouwer1989}\label{Thm5.7}
Let $\mathcal{C}$ be a projective $3$-weight code of length $\mathrm{n}$ over $\mathbb{F}_q$ with Hamming weights $0=\mathrm{w}_0 < \mathrm{w}_1 < \mathrm{w}_2 < \mathrm{w}_3$ satisfying  $\mathrm{w}_1 + \mathrm{w}_2 + \mathrm{w}_3 = \frac{3\mathrm{n}(q-1)}{q}.$ The coset graph $\Gamma_{\mathcal{C}^\perp}$ is a $3$-SWRG and an $\big(\mathrm{n}(q-1)\big)$-regular graph. Moreover, the coset graph $\Gamma_{\mathcal{C}^\perp}$ has eigenvalues   $\mathrm{n}(q-1) ,$ $ \mathrm{n}(q-1) - q\mathrm{w}_1,$ $ \mathrm{n}(q-1) - q\mathrm{w}_2$ and $\mathrm{n}(q-1) - q\mathrm{w}_3$ with multiplicities $A_{\mathrm{w}_0}=1,$ $A_{\mathrm{w}_1},$ $A_{\mathrm{w}_2}$ and $A_{\mathrm{w}_3},$ respectively.
\end{thm}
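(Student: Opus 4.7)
\textbf{Proof proposal for Theorem \ref{Thm5.7}.} The plan is to realize $\Gamma_{\mathcal{C}^\perp}$ as a Cayley graph on the additive group $\mathbb{F}_q^\mathrm{k}$, compute its spectrum by additive characters, and then invoke the algebraic characterization of strongly $3$-walk-regular graphs due to van Dam and Omidi \cite{Dam2013}.

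First, fix a generator matrix $G=[g_1\ g_2\ \cdots\ g_\mathrm{n}]$ of $\mathcal{C}$, which is a parity-check matrix of $\mathcal{C}^\perp$. The syndrome map $v\mapsto Gv^T$ gives an $\mathbb{F}_q$-linear isomorphism $\mathbb{F}_q^\mathrm{n}/\mathcal{C}^\perp\to \mathbb{F}_q^\mathrm{k}$, so the vertex set of $\Gamma_{\mathcal{C}^\perp}$ is identified with $\mathbb{F}_q^\mathrm{k}$. Two cosets are adjacent if and only if their syndromes differ by a vector of the form $\alpha g_i$ with $\alpha\in\mathbb{F}_q^\ast$ and $i\in[\mathrm{n}]$. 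Thus $\Gamma_{\mathcal{C}^\perp}=\operatorname{Cay}(\mathbb{F}_q^\mathrm{k}, S)$ with $S=\{\alpha g_i : \alpha\in\mathbb{F}_q^\ast,\ i\in[\mathrm{n}]\}$. Since $\mathcal{C}$ is projective, the columns $g_1,\ldots,g_\mathrm{n}$ generate pairwise distinct one-dimensional subspaces of $\mathbb{F}_q^\mathrm{k}$, so $|S|=\mathrm{n}(q-1)$, proving the regularity claim. Connectedness of the graph is guaranteed by Lemma~\ref{Lemma5.3}.

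Next, for each $y\in\mathbb{F}_q^\mathrm{k}$, the additive character $\chi_y(x)=\zeta^{Tr_p^q(y\cdot x)}$ furnishes an eigenvector of the Cayley graph with eigenvalue
\[
\Theta(y)=\sum_{s\in S}\chi_y(s)=\sum_{i=1}^{\mathrm{n}}\sum_{\alpha\in\mathbb{F}_q^\ast}\zeta^{Tr_p^q(\alpha\,(y\cdot g_i))}.
\]
Setting $c=yG\in\mathcal{C}$, the inner sum equals $q-1$ when $c_i=0$ and $-1$ otherwise, giving $\Theta(y)=\mathrm{n}(q-1)-q\,wt_H(c)$. As $y$ ranges over $\mathbb{F}_q^\mathrm{k}$, the codewords $c$ range over $\mathcal{C}$, so the spectrum of $\Gamma_{\mathcal{C}^\perp}$ consists of $\mathrm{n}(q-1)-q\mathrm{w}_i$ with multiplicity $A_{\mathrm{w}_i}$ for $i=0,1,2,3$. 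In particular there are exactly four distinct eigenvalues $\theta_0=\mathrm{n}(q-1)>\theta_1>\theta_2>\theta_3$.

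Finally, by van Dam--Omidi \cite{Dam2013}, a connected regular graph with adjacency matrix $A$ is a $3$-SWRG if and only if $A^3=\alpha A+\beta J+\gamma I$ for some scalars $\alpha,\beta,\gamma$. Reading this relation on the $A$-eigenspaces: the trivial character gives $\theta_0^3=\alpha\theta_0+\beta|V|+\gamma$, while for $i\ge 1$ (where $J$ acts as $0$) we obtain $\theta_i^3=\alpha\theta_i+\gamma$. Hence $\theta_1,\theta_2,\theta_3$ are the three roots of $x^3-\alpha x-\gamma=0$, and by Vieta's formula the existence of such a relation is equivalent to $\theta_1+\theta_2+\theta_3=0$. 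Substituting the values computed above, this is precisely
\[
3\mathrm{n}(q-1)-q(\mathrm{w}_1+\mathrm{w}_2+\mathrm{w}_3)=0,
\]
which is the given hypothesis $\mathrm{w}_1+\mathrm{w}_2+\mathrm{w}_3=\tfrac{3\mathrm{n}(q-1)}{q}$. One then solves the resulting linear system in $\alpha,\beta,\gamma$ using $\theta_0^3$ and any one of $\theta_i^3$ to read off explicit formulas, and the parameters $(\lambda_3,\mu_3,\nu_3)=(\alpha+\beta,\beta,\gamma)$ of the $3$-SWRG are determined.

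The main obstacle is the passage from the spectral data to the algebraic identity $A^3=\alpha A+\beta J+\gamma I$: although any polynomial in $A$ of degree $\le 3$ is determined by its action on the four eigenspaces, one must verify that the specific relation involving $J$ (and not $A^2$) is available, which is exactly where connectedness (giving $J$ as a polynomial in $A$) and the Vieta condition $\theta_1+\theta_2+\theta_3=0$ combine. All other steps — the Cayley presentation, the character sum, and the bookkeeping of multiplicities — are routine consequences of the projective and $3$-weight hypotheses.
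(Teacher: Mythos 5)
Your argument is correct, but it is genuinely different in character from the paper's: the paper disposes of Theorem \ref{Thm5.7} purely by citation (Theorems 2 and 5 of Shi and Sol\'e \cite{Shi2019}, Theorem 11.1.11 of \cite{Brouwer1989}, and Theorem 2.5 of Kiermaier \etal\ \cite{Kiermaier2023}), whereas you reconstruct the content of those references from scratch. Your Cayley-graph presentation of $\Gamma_{\mathcal{C}^\perp}$ on $\mathbb{F}_q^{\mathrm{k}}$ via the syndrome map, the character-sum computation giving eigenvalue $\mathrm{n}(q-1)-q\,wt_H(yG)$ for each $y$, and the bookkeeping of multiplicities $A_{\mathrm{w}_i}$ are exactly the substance of the coset/syndrome-graph spectrum results being cited, and your use of projectivity ($d(\mathcal{C}^\perp)\geq 3$, hence pairwise non-proportional nonzero columns) to get $|S|=\mathrm{n}(q-1)$ and loop-freeness is the right place where that hypothesis enters; the passage to the $3$-SWRG property via $A^3=\alpha A+\beta J+\gamma I$, the Vieta condition $\theta_1+\theta_2+\theta_3=0$, and the use of connectedness (Lemma \ref{Lemma5.3}) together with the simplicity of the Perron eigenvalue to solve for $\beta$ is precisely the van Dam--Omidi/Shi--Sol\'e mechanism. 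What your route buys is a self-contained, elementary verification of a theorem the paper treats as a black box; what the paper's route buys is brevity and delegation of the (standard) spectral argument. One small slip worth fixing: with $A^3=\alpha A+\beta J+\gamma I$ the walk parameters are $(\lambda_3,\mu_3,\nu_3)=(\alpha+\beta,\ \beta,\ \beta+\gamma)$, not $(\alpha+\beta,\ \beta,\ \gamma)$, since the diagonal of $J$ contributes to $\nu_3$; this does not affect the statement being proved, which asserts only the $3$-SWRG property, regularity and the spectrum.
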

\begin{proof}
The result follows from Theorems 2 and  5 of Shi and Sol{\'e} \cite{Shi2019}, Theorem 11.1.11 of \cite{Brouwer1989} and Theorem 2.5 of Kiermaier \etal \cite{Kiermaier2023}. 
\end{proof}

In the following theorem, we apply Corollary \ref{SHI}, Lemma \ref{Lemma5.3} and Theorem \ref{Thm5.7} to construct an infinite family of $\ell$-SWRGs for all odd integers $\ell \geq 3.$ 
\begin{thm}\label{Thm5.8} 
    Let $q=4,$ $\mathpzc{B}=\mathpzc{A}$ and $|\mathpzc{A}|=m-1,$ and let us define $\theta = 2^{4m+2|\mathpzc{C}|-3}.$ The following hold.
\begin{itemize}
\item[(a)] The coset graph $\Gamma_{\mathcal{C}_{\overline{\mathcal{N}}_2}^\perp}$ is a $\left(3\theta\right)$-regular graph and has eigenvalues  $3\theta,$ $\theta,$ $0$ and $-\theta$ with respective multiplicities $1,$ $6,$ $4^{2m+|\mathpzc{C}|} - 16$ and $9.$
\item[(b)] For all odd integers $\ell \geq 3,$ the coset graph $\Gamma_{\mathcal{C}_{\overline{\mathcal{N}}_2}^\perp}$ is an $\ell$-SWRG with parameters $(\lambda_\ell,\mu_\ell,\nu_\ell),$ where  $\lambda_\ell = \frac{(3^\ell-3)\theta^\ell}{4^{2m+|\mathpzc{C}|}} + \theta^{\ell-1},$   $\mu_\ell =\frac{(3^\ell-3)\theta^\ell}{4^{2m+|\mathpzc{C}|}}$  and $\nu_\ell =\frac{(3^\ell-3)\theta^\ell}{4^{2m+|\mathpzc{C}|}}.$   
    \end{itemize}
\end{thm}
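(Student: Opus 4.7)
The plan is to deduce part (a) as an immediate consequence of Corollary \ref{SHI} combined with Theorem \ref{Thm5.7}, and then to prove part (b) by a direct spectral computation using the connectivity guaranteed by Lemma \ref{Lemma5.3}.

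For part (a), I would first invoke Corollary \ref{SHI}, which asserts that $\mathcal{C}_{\overline{\mathcal{N}}_2}$ is a projective $3$-weight $[\theta,2m+|\mathpzc{C}|,\theta/2]$-code over $\mathbb{F}_4$ with weights $\mathrm{w}_1=\theta/2$, $\mathrm{w}_2=3\theta/4$, $\mathrm{w}_3=\theta$ and frequencies $6$, $4^{2m+|\mathpzc{C}|}-16$, $9$. Because $q=4$, we have $\mathrm{w}_1+\mathrm{w}_2+\mathrm{w}_3=9\theta/4=3\theta(q-1)/q=3\mathrm{n}(q-1)/q$, so the hypothesis of Theorem \ref{Thm5.7} is satisfied. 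Applying that theorem with $\mathrm{n}=\theta$ shows $\Gamma_{\mathcal{C}_{\overline{\mathcal{N}}_2}^{\perp}}$ is $(3\theta)$-regular with eigenvalues $3\theta-q\mathrm{w}_i$ for $i=0,1,2,3$; substituting $\mathrm{w}_0=0$, $\mathrm{w}_1=\theta/2$, $\mathrm{w}_2=3\theta/4$, $\mathrm{w}_3=\theta$ produces the spectrum $\{3\theta,\theta,0,-\theta\}$ with multiplicities $\{1,6,4^{2m+|\mathpzc{C}|}-16,9\}$.

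For part (b), let $A$ denote the adjacency matrix of $\Gamma_{\mathcal{C}_{\overline{\mathcal{N}}_2}^{\perp}}$, let $N=4^{2m+|\mathpzc{C}|}$ be the total number of cosets, and let $E_0,E_1,E_2,E_3$ be the orthogonal spectral projectors corresponding to the eigenvalues $3\theta,\theta,0,-\theta$ from part (a). By Lemma \ref{Lemma5.3}, the coset graph is connected; together with $(3\theta)$-regularity this forces the top eigenspace to be spanned by the all-ones vector, so $E_0=\tfrac{1}{N}J$. For any odd integer $\ell\geq 3$, the crucial identity $(-\theta)^{\ell}=-\theta^{\ell}$ allows the spectral decomposition to collapse as follows:
\[
A^{\ell}=(3\theta)^{\ell}E_0+\theta^{\ell}E_1+0\cdot E_2-\theta^{\ell}E_3=(3\theta)^{\ell}E_0+\theta^{\ell-1}(\theta E_1-\theta E_3).
\]
Since $A=3\theta E_0+\theta E_1+0\cdot E_2-\theta E_3$, we can rewrite $\theta E_1-\theta E_3=A-3\theta E_0$, yielding
\[
A^{\ell}=\theta^{\ell-1}A+\bigl((3\theta)^{\ell}-3\theta^{\ell}\bigr)E_0=\theta^{\ell-1}A+\frac{(3^{\ell}-3)\theta^{\ell}}{N}J.
\]
Reading off the $(u,v)$-entry of this identity and using $(A)_{uu}=0$, $(A)_{uv}=1$ when $u\sim v$, and $(A)_{uv}=0$ when $u\not\sim v$ and $u\neq v$, I obtain exactly the three constants $\nu_{\ell}$, $\lambda_{\ell}$, $\mu_{\ell}$ stated in the theorem, thereby verifying the $\ell$-SWRG property.

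The argument is largely mechanical once the spectrum is in hand, so the main (and essentially the only) conceptual ingredient is the observation that odd exponents make $x^{\ell}$ an affine function on the set $\{\theta,0,-\theta\}$ of non-principal eigenvalues; this is precisely what forces $A^{\ell}$ to lie in the span of $\{I,A,J\}$ and thus makes the graph strongly $\ell$-walk-regular. The use of Lemma \ref{Lemma5.3} is the one point where one must be careful, as it is needed to guarantee $E_0=J/N$ and hence to convert the spectral expansion into a clean linear combination of $A$ and $J$.
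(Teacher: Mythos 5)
Your argument is correct, and your part (a) is exactly the paper's route: invoke Corollary \ref{SHI} to get the projective $3$-weight $[\theta,2m+|\mathpzc{C}|,\theta/2]$-code over $\mathbb{F}_4$ with $\mathrm{w}_1+\mathrm{w}_2+\mathrm{w}_3=\tfrac{9\theta}{4}=\tfrac{3\mathrm{n}(q-1)}{q}$, then read the regularity, eigenvalues $3\theta-q\mathrm{w}_i$ and multiplicities off Theorem \ref{Thm5.7}. Where you diverge is part (b): the paper disposes of it by citing Propositions 3.1 and 4.2 of van Dam and Omidi \cite{Dam2013} (together with connectivity from Lemma \ref{Lemma5.3}), whereas you re-derive that result in this special case by an explicit spectral computation. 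Your computation checks out: connectivity plus $(3\theta)$-regularity gives $E_0=J/N$ with $N=4^{2m+|\mathpzc{C}|}$ (the correct number of cosets, since $\dim\mathcal{C}_{\overline{\mathcal{N}}_2}=2m+|\mathpzc{C}|$), and for odd $\ell$ the identity $(-\theta)^\ell=-\theta^\ell$ collapses $A^\ell$ to $\theta^{\ell-1}A+\tfrac{(3^\ell-3)\theta^\ell}{N}J$, whose entries reproduce exactly the stated $\lambda_\ell,\mu_\ell,\nu_\ell$. What your approach buys is self-containedness and an explicit derivation of the parameter formulas, which in the paper are inherited from the cited propositions; what the paper's approach buys is brevity and the fact that the Dam--Omidi machinery covers the general "four eigenvalues with the non-principal ones affine under $x\mapsto x^\ell$" situation, of which your argument is the specialization to the spectrum $\{3\theta,\theta,0,-\theta\}$. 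Either justification of $E_0=J/N$ (Perron--Frobenius via Lemma \ref{Lemma5.3}, or the multiplicity-one statement already contained in Theorem \ref{Thm5.7} combined with regularity) is acceptable, so no gap remains.
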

\begin{proof}
To prove the result, we see, by Corollary \ref{SHI}, that when $q=4,$  $\mathpzc{B}=\mathpzc{A}$ and $|\mathpzc{A}|=m-1,$  the  code $\mathcal{C}_{\overline{\mathcal{N}}_2}$ is a projective $3$-weight code over $\mathbb{F}_4$ with parameters  $\big[\theta, {2m+|\mathpzc{C}|},\frac{\theta}{2}\big ]$ and has non-zero Hamming weights $\mathrm{w}_1 = \frac{\theta}{2},$ $\mathrm{w}_2 = \frac{3\theta}{4}$ and $\mathrm{w}_3 = \theta$ with frequencies $A_{\mathrm{w}_1} = 6,$ $A_{\mathrm{w}_2} = 4^{2m+|\mathpzc{C}|} - 16$ and $A_{\mathrm{w}_3} = 9,$ respectively. 
Note that  $\mathrm{w}_1 + \mathrm{w}_2 + \mathrm{w}_3 = \frac{9\theta}{4}.$ By Lemma  \ref{Lemma5.3}, we see that $\Gamma_{\mathcal{C}_{\overline{\mathcal{N}}_2}^\perp}$ is a connected graph. Now, by Theorem \ref{Thm5.7} and applying Propositions 3.1 and 4.2 of Dam and Omidi \cite{Dam2013}, we get the desired result.\end{proof}
The parameters of the $\ell$-SWRGs constructed in the above theorem coincide with those obtained in Corollary 1 of Mondal and Lee\cite{Mondal2024}, upon setting $k = 2m$ in their result.

\end{document}